\theoremstyle{definition}
\newtheorem{theorem}{Theorem}
\newtheorem{lemma}[theorem]{Lemma}
\newtheorem{assumption}{Assumption}
\newtheorem{remark}{Remark}
\newtheorem{remark*}{Remark*}
\newcommand{\blind}{1}
\definecolor{mypurple}{RGB}{203, 66, 244}
\definecolor{mygray}{RGB}{145,145,145}
\definecolor{mygreen}{RGB}{0,128,0}
\newcommand{\revisionadded}[1]{{#1}}
\newcommand{\revisionaddedtwo}[1]{{#1}}
\newcommand{\indic}{{\mathds 1}}
\newcommand{\eff}{\text{eff}}
\newcommand{\expit}{\text{expit}}
\newcommand{\prob}{\mbox{Pr}}
\newcommand{\PP}{\mathbb{P}}
\newcommand{\var}{\mbox{Var}}
\newcommand{\ba}{\bar{a}}
\newcommand{\bA}{\bar{A}}
\newcommand{\bX}{\bar{X}}
\begin{document}

\def\spacingset#1{\renewcommand{\baselinestretch}%
{#1}\small\normalsize} \spacingset{1}


\if1\blind
{
  \title{\bf Estimating Time-Varying Causal Excursion Effect in Mobile Health with Binary Outcomes}
  \author{
  Tianchen Qian\thanks{Department of Statistics, University of California Irvine. tiancq1@uci.edu},
  Hyesun Yoo\thanks{Department of Statistics, University of Michigan},
  Predrag Klasnja\thanks{School of Information, University of Michigan},
  Daniel Almirall\thanks{Department of Statistics, University of Michigan},\\and
  Susan A. Murphy\thanks{Department of Statistics, Harvard University}}
  \maketitle
} \fi

\if0\blind
{
  \bigskip
  \bigskip
  \bigskip
  \begin{center}
    {\LARGE\bf Title}
\end{center}
  \medskip
} \fi

\bigskip

\begin{abstract}
Advances in wearables and digital technology now make it possible to deliver behavioral mobile health interventions to individuals in their everyday life. The micro-randomized trial (MRT) is increasingly used to provide data to inform the construction of these interventions. \revisionadded{In an MRT, each individual is repeatedly randomized among multiple intervention options, often hundreds or even thousands of times, over the course of the trial.} This work is motivated by multiple MRTs that have been conducted, or are currently in the field, \revisionadded{in which the primary outcome is a longitudinal binary outcome. The primary aim of such MRTs is to examine whether a particular time-varying intervention has an effect on the longitudinal binary outcome, often marginally over all but a small subset of the individual's data. We propose the definition of causal excursion effect that can be used in such primary aim analysis for MRTs with binary outcomes.} Under rather restrictive assumptions one can, based on existing literature, derive a semiparametric, locally efficient estimator of the causal effect. We, starting from this estimator, develop an estimator that can be used as the basis of a primary aim analysis under more plausible assumptions. Simulation studies are conducted to compare the estimators. We illustrate the developed methods using data from the MRT, BariFit. In BariFit, the goal is to support weight maintenance for individuals who received bariatric surgery.
\end{abstract}

\noindent%
{\it Keywords:}  Binary outcome; Causal excursion effect; Causal inference; Longitudinal data; Micro-randomized trials; Mobile health; Relative risk; Semiparametric efficiency theory.
\vfill

\newpage
\spacingset{1.45} 

\section{Introduction}
\label{sec:introduction}


In mobile health (mHealth), mobile devices (including smart phones and wearable devices) are used to deliver interventions intended to promote healthy behaviors and health-related behavioral change \citep{free2013effectiveness}. Treatments include prompts to self-monitor, cognitive interventions to promote reflection and goal setting as well as suggestions of ways to enact healthy behavior changes. These treatments are delivered to the individual via the individual's phone or a wearable. 
An increasingly common trial, called the micro-randomized trial (MRT), is being used to inform the development of mHealth interventions \citep{liao2016sample,  klasnja2015, law2016use, klasnja2018efficacy, kramer2019investigating, Tate2019trial}. 
In an MRT, each individual is repeatedly randomized among the multiple options for a treatment, often hundreds or even thousands of times, over the course of the trial. In all cases the randomization probabilities are determined as part of the design of the trial and are thus known.
Between randomizations, covariate data is collected on the individual's current/recent context via sensors and/or self-report, and after each randomization a ``proximal,'' near-time outcome is collected. 
The time-varying treatments and covariates as well as the time-varying proximal outcome comprise the longitudinal data for use in assessing if a treatment has an effect on the proximal outcome and/or in which contexts this effect may be greater or smaller. 

This paper is motivated by our involvement in a number of MRTs in which the primary longitudinal outcome (i.e., the time-varying proximal outcome) is binary. Schematics of these trials can be found at the website of the Methodology Center at the Pennsylvania State University \citep{MethodologyCenter}. \revisionadded{For example, in the Substance Abuse Research Assistance study \citep{rabbi2018sara}, the primary outcome is whether or not the user completed a daily survey. In three other MRTs---namely, the Smart Weight Loss Management MRT \citep{SmartWeightLoss} which is currently in the field, a previous MRT conducted by JOOL Health \citep{bidargaddi2018prompt}, and in the BariFit \citep{BariFit} MRT which we describe in more detail below---the primary outcome is whether or not the user self-monitored on a daily basis.}


\revisionadded{An important second feature of the MRTs that further motivates this paper is that the primary and secondary aims of the trials---i.e., the comparisons that are of primary and secondary interest to the domain scientists---are often marginal comparisons. Here, marginal means, in part, that the treatment comparisons do not condition on the individual's full history of data (or the full history of prior treatment). (In Section \ref{sec:definition} we provide a precise definition.) For example, in the BariFit MRT, the primary aim is to examine the effect of a daily text message reminder on whether or not the user self-monitored their food intake via a food log by the end of each day. Secondary aims in BariFit also focuses on marginal comparisons, such as whether the effect of a daily text message reminder differs depending on whether the user self-monitored on the previous day. In each of these two examples, the domain scientists are interested in a comparison that is marginal over some aspect of the individual's full history of data (including prior treatments).}

\revisionadded{For the domain scientists, estimates of these marginal comparisons are crucial for informing decisions regarding whether to include the treatment in an mHealth intervention package, as are more complex (conditional) treatment effect comparisons that garner understanding of the contexts in which the treatment might be more (or less) effective. Thus, it is important to develop estimators that enable domain scientists to answer any number of scientific questions, from those that are marginal to those that condition on the user's full history.}


In this paper, we consider estimation and inference for causal effects that can be used as the basis of these primary and secondary aim comparisons. \revisionadded{One possible causal effect, which is akin to Robins' treatment ``blip'' in structural nested mean models \citep{robins1994snmm, robins1997snmm}, is one that conditions on the individual's full history. Our first contribution is that we propose a definition of marginal generalizations of this effect, which we call 
causal excursion effects. Such effects are marginal over all but a subset of the individual's prior data (with the aforementioned subset chosen by domain scientists), which is well-suited for primary and secondary aim comparisons. Furthermore, such effects can be viewed as ``excursions'' as they represent a causal effect of a time-varying treatment occurring over an interval of time extending into the future. In this case the definition of the excursion effect involves rules for how further treatments, if any, would occur during this interval of time. This can be used to answer questions that naturally arise in MRTs such as ``what is the effect of delivering a treatment now then not delivering any treatment for the next $m$ time points''. Lastly, these causal effects, as they are marginal over prior treatment assignments, can be interpreted as contrasts between excursions from the treatment protocol as specified by the micro-randomization. This informs how the current treatment protocol might be improved via moderation analysis on how these causal effects differ by individual's contexts.}

We, based on \citet{robins1994snmm}, provide a semiparametric, locally efficient estimator for the causal effect that is conditional on the full history. \revisionadded{Our second contribution is that we, starting from this estimator, develop an estimator that consistently estimates the causal excursion effect, which can be conditional on an arbitrary subset of the history.} The estimator is robust in the sense that, for consistency, it does not require that the model for the proximal outcome under no treatment to be correctly specified. We propose to use this estimator as the basis of primary and secondary aim comparisons for MRTs with binary outcomes.

\section{Preliminaries}
\label{sec:prelim}


\subsection{Micro-randomized trials and BariFit}
\label{subsec:MRT-BariFit}

As introduced in Section \ref{sec:introduction}, micro-randomized trials (MRTs) provide longitudinal data for use in developing mHealth interventions \citep{liao2016sample, dempsey2015, klasnja2015}.  BariFit, for example, is an MRT that was conducted to aid in the process of developing an mHealth intervention for promoting weight maintenance among individuals who received bariatric surgery \citep{BariFit}.
In this study a daily text reminder might be sent to encourage the participant to self-monitor their food intake via a food log; we will refer to this daily text reminder as the food tracking reminder.

In an MRT each participant is randomized, with known probabilities, between the treatment options at predetermined time points. In BariFit, the food tracking reminder is randomized with probability 0.5 between deliver versus do not deliver every morning for 112 days. In general, the randomization probability can vary depending on the individual's data observed up to that time. 

 In BariFit, the proximal outcome for the food tracking reminder is whether the participant completes their food log on that day. The analysis method developed here focuses on this proximal outcome.  However, it is conjectured that these reminders will assist the individual in building up healthy habits, so that longer term effects are desired. Thus, in defining the causal effects below, we do not assume that longer term effects are absent.


Because treatments are delivered to individuals during their everyday life, there may be unethical or unsafe times at which it is inappropriate or deemed excessively burdensome to deliver a treatment. For example, if the treatment is a smartphone notification that audibly pings and makes the phone light up, it is inappropriate to deliver the smartphone notification when the individual might be operating a motor vehicle \citep{klasnja2018efficacy}. \revisionadded{In these moments the individual is deemed ``unavailable'' for treatment.} Randomization occurs only at available time points, and the causal effect is conditional on the available times \citep{boruvka2018}. 
Due to the fact that many MRTs involve considerations of availability, the methods developed below accommodate this. However, in the case of the BariFit food tracking reminders, they were sent, if at all, early in the morning and, as text messages remain in the phone, the participant is able to read them at a time s/he deems convenient. Thus in the BariFit study, lack of availability is not a consideration.

\subsection{Related literature and our contribution}
\label{subsec:literature}




Because data from mHealth studies are often longitudinal, generalized estimating equations \citep{liang1986longitudinal} and random effects models \citep{laird1982random} are the most commonly used methods for modeling the time-varying association between two or more variables in mHealth studies \citep{schwartz2007analysis, bolger2013intensive}. However, in the presence of time-varying treatments or time-varying covariates, it is well known that the use of these methods can result in biased causal effect estimates without strong and often unrealistic assumptions \citep{pepeanderson1994, schildcrout2005regression}.

Structural nested mean models (SNMMs) and marginal structural models (MSMs) are two classes of models that facilitate estimation of causal effects of a time-varying treatment on a time-varying outcome, where the treatment assignment mechanism may depend on history variables \citep{robins1994snmm, robins1997snmm, robins2000marginal, robinsetal2000marginal}.
In an SNMM, the effect of sequentially removing an amount of treatment on future outcomes, after having removed all future treatments, is modeled. This effect is conditional on all the history information up to that time.
In an MSM, the expectation of the time-varying outcome under a fixed treatment trajectory (possibly conditional on a subset of baseline covariates) is modeled as a function of the treatment trajectory (and the subset of baseline covariates).

The causal excursion effect we considered can be conditional on an arbitrary subset of the history (with the aforementioned subset chosen by domain scientists). Unlike MSM, our approach allows estimation of causal effect modification by time-varying covariates. Unlike SNMM, our causal excursion effect is marginalized over variables not in the subset of the history, i.e., possibly marginal over a large part of the history variables. This makes the estimand coherent with the goal of primary and secondary aim comparisons, and avoids modeling the relationship between the time-varying outcome and the past history in MRTs, where the number of time points can be numerous and the history can be high-dimensional. \revisionadded{\citet[Section 7]{robins2004optimal} considered a related marginalization idea for SNMM, the so-called ``marginal SNMM''. The difference between our causal excursion effect and the effect in Robins' marginal SNMM is that a marginal SNMM considers a conditional set that grows over time, and the goal of the marginal SNMM is to estimate optimal treatment regime with respect to the conditional set that may not be the full history. Our consideration of the marginalization is for primary and secondary analyses, and we consider general conditional sets that might not be nested.} A related marginalization idea was also considered by \citet{neugebauer2007hrmsm} in the ``history-restricted'' extension of MSM. Furthermore, the causal excursion effect can be defined as a contrast between two treatment excursions (i.e., time-varying treatments over an interval of time) extended into the future; this excursion aspect was not considered in either SNMM or MSM.

In the previous work on data analytic methods for MRTs, \citet{boruvka2018} and \citet{ dempsey2017stratified} considered estimation of causal effects of mHealth interventions, where the outcome is continuous.
In this paper we consider binary outcome, and we address the unique challenges in the binary outcome by considering a log relative risk model for the causal excursion effect and by developing a novel estimator.


\section{Definition and assumptions}
\label{sec:definition}

\subsection{Notation and observed data}

Suppose that for each individual, there are $T$ time points at which the treatment can be delivered ($T$ doesn't need to be the same for each individual). For simplicity we assume that there are two treatment options which we will call treatment and no treatment. Thus,   the treatment assignment at time $t$, $A_t$, is binary, where $1$ means treatment and $0$ means no treatment.
Denote by $X_t$ the vector of observations collected after time $t-1$ and up to time $t$; $X_1$ includes baseline covariates. $X_t$ contains the availability indicator, $I_t$: $I_t = 1$ if the individual is available for treatment at time $t$ and $I_t = 0$ otherwise. If $I_t = 0$, randomization will not occur at time $t$ and $A_t = 0$.
We use overbar to denote a sequence of variables up to a time point; for example $\bA_t = (A_1, \ldots, A_t)$. Information accrued up to time $t$ is represented by the history $H_t = (X_1, A_1, X_2, A_2, \ldots, X_{t-1}, A_{t-1}, X_t) = (\bX_t, \bA_{t-1})$. The randomization probability for $A_t$ can depend on $H_t$, and is denoted by $p_t(H_t) = P(A_t=1|H_t)$; $p_t(\cdot)$ is known by the MRT design.
The observed data on a generic individual, ordered in time, is $O= (X_1, A_1, \ldots, X_T, A_T, X_{T+1})$.
We assume that the data from different individuals are independent and identically distributed draws from an unknown distribution $P_0$. Unless noted otherwise, all expectations are taken with respect to $P_0$.

The proximal outcome, $Y_{t,\Delta}$, following the treatment assignment at time $t$, is a known function of the individual's data within a subsequent window of length $\Delta$, where $\Delta \geq 1$ is a positive integer; i.e., $Y_{t,\Delta} = y(X_{t+1}, A_{t+1}, \ldots, X_{t+\Delta-1}, A_{t+\Delta-1}, X_{t+\Delta})$ for some known function $y(\cdot)$. In this paper $Y_{t,\Delta}$ is binary. For example, in a smoking cessation study where the treatment is a push notification that reminds the individual to practice stress-reduction exercises \citep{Sense2Stop}, the treatment is randomized every minute (albeit with very low probability of sending a push notification at any given minute), and the proximal outcome is whether the individual experiences a stress episode during the 120-minute window following a treatment. In this example, $t$ is every minute, and $\Delta = 120$. A simpler setting with $\Delta = 1$ is where the proximal outcome cannot depend on future treatment and is given by $Y_{t,1}=y(X_{t+1})$; an example is the BariFit MRT described in Section \ref{subsec:MRT-BariFit}, where the randomization occurs once a day, and the proximal outcome is measured in the same day. The estimator we propose in Section \ref{sec:marginal-estimator} allows for general $\Delta$.

For an arbitrary function $f(\cdot)$ of the generic observed data $O$, denote by $\mathbb{P}_n f(O)$ the sample average $\frac{1}{n}\sum_{i=1}^n f(O_i)$ where $O_i$ denotes the $i$th individual's observed data. We omit the subscript $i$ for the $i$th individual throughout the paper unless necessary. We use $\indic$ to denote the indicator function. 

\subsection{Potential outcomes and causal excursion effect}


To define treatment effects, we use the potential outcomes framework \citep{rubin1974estimating,robins1986new}. For an individual, let $X_t(\ba_{t-1})$ and $A_t(\ba_{t-1})$ be the observation that would have been observed and the $t$th treatment that would have been assigned, respectively, if s/he were assigned the treatment sequence $\ba_{t-1}$. Then the potential outcomes are defined as
\begin{equation}
\{X_1, A_1, X_2(a_1), A_2(a_1), X_3(\ba_2),\ldots, A_T(\ba_{T-1}), X_{T+1}(\ba_{T}) \mbox{ for all } \ba_T \in \{0,1\}^{\otimes T} \}, \label{eq:potential-outcome}
\end{equation}
where $\otimes$ denotes the Cartesian product. The potential outcome for the proximal outcome is $Y_{t,\Delta}(\ba_{t+\Delta - 1})$. The treatment at time $t$ in \eqref{eq:potential-outcome} is indexed by past treatments because in an MRT the randomization probabilities can depend on the participant's past treatment. However, for notational simplicity, which will be further justified by Assumption \ref{assumption:consistency} in Section \ref{subsec:identifiability}, henceforth denote $A_2(A_1)$ by $A_2$ and so on with $A_t(\bA_{t-1})$ by $A_t$. The potential history under the observed treatment sequence at time $t$ is $H_t(\bA_{t-1}) = (X_1, A_1, X_2(A_1), A_2, X_3(\bA_2),\ldots, X_t(\bA_{t-1}))$.

We define the causal effect of $A_t$ on $Y_{t,\Delta}$ using the log relative risk scale:
\begin{equation}
\beta_M\{t, S_t(\bA_{t-1})\} = \log \frac{ E \{ Y_{t,\Delta}(\bA_{t-1}, 1, \bar{0}) \mid S_t(\bA_{t-1}), I_t(\bA_{t-1}) = 1 \} }{E \{ Y_{t,\Delta}(\bA_{t-1}, 0, \bar{0}) \mid S_t(\bA_{t-1}), I_t(\bA_{t-1}) = 1 \}}, \label{eq:def-effect-marginal}
\end{equation}
where $S_t(\bA_{t-1})$ is a vector of summary variables formed from $H_t(\bA_{t-1})$, and $\bar{0}$ is a vector of length $\Delta - 1$.
\revisionaddedtwo{We omit the notational dependence of $\beta_M$ on $\Delta$ for simplicity.}
\revisionadded{Expression \eqref{eq:def-effect-marginal} denotes the contrast of the expected outcome under two ``excursions'' from the current treatment protocol: treatment at time $t$ and no treatment for the next $\Delta - 1$ time points, versus no treatment at time $t$ and no treatment for the next $\Delta - 1$ time points. In both excursions the treatment assignment up to time $t$ (i.e., $\bA_{t-1}$) is stochastic and follows the current treatment protocol of the MRT; i.e., the way the treatments are sequentially randomized with randomization probability $p_t(H_t)$.} We call $\beta_M\{t, S_t(\bA_{t-1})\}$ a causal excursion effect. The expectation in \eqref{eq:def-effect-marginal} marginalizes over the randomization distribution of $\bA_{t-1}$ that are not included in $S_t(\bA_{t-1})$. In other words, the meaning of an excursion is relative to how treatments were assigned in the past: at time $t$, we are considering excursions from the current protocol of assigning treatment. The methods developed below generalize to other types of excursions, such as excursions that specify a decision rule at each time between time $t$ and time $t+\Delta-1$.


\revisionaddedtwo{$\beta_M\{t, S_t(\bA_{t-1})\}$ is a marginal generalization of the treatment ``blips''  in a structural nested mean model \citep{robins1994snmm,robins1997snmm}, where $S_t(\bA_{t-1})$ is set to be $H_t(\bA_{t-1})$;} hence the subscript ``M'' in $\beta_M\{t, S_t(\bA_{t-1})\}$. 
We are particularly interested in the marginal effect, because the primary, pre-specified analysis in an MRT usually aim to assess whether a particular intervention component has a marginal effect on the proximal outcome. For such an analysis, one would set $S_t(\bA_{t-1}) = 1$; i.e., the treatment effect is fully marginal. Subsequent analyses usually have a hierarchy of increasingly complex $S_t(\bA_{t-1})$ which includes variables that may modify the treatment effect. In this paper we sometimes call $\beta_M\{t, S_t(\bA_{t-1})\}$ a ``marginal excursion effect'' to emphasize its marginal aspect. \revisionadded{We discuss in more detail the advantage and limitation of our choice of the causal effect definition and why it is well-suited for mHealth and MRTs in Section \ref{sec:discussion}.}

\revisionaddedtwo{A special case of $\beta_M\{t, S_t(\bA_{t-1})\}$ is when $\Delta = 1$ and $S_t(\bA_{t-1})$ is set to $H_t(\bA_{t-1})$:
\begin{equation}
\beta_C\{t, H_t(\bA_{t-1})\} = \log \frac{ E \{ Y_{t,1}(\bA_{t-1}, 1) \mid H_t(\bA_{t-1}), I_t(\bA_{t-1}) = 1 \} }{E \{ Y_{t,1}(\bA_{t-1}, 0) \mid H_t(\bA_{t-1}), I_t(\bA_{t-1}) = 1 \}}. \label{eq:def-effect-conditional}
\end{equation}
As we will see in Section \ref{sec:semiparametric}, an estimator for this fully conditional case can be derived based on the semiparametric efficiency literature, which motivates our proposed estimator for the general $\beta_M\{t, S_t(\bA_{t-1})\}$.}


There has been much debate over the choice of association measure for binary outcomes in the literature, and reasons to prefer relative risk over odds ratio include its interpretability and conditions for collapsibility \citep{greenland1987,lumley2006}. \revisionadded{A drawback of using the relative risk as opposed to odds ratio is that in general the estimated probability of success is not guaranteed to lie in the interval $[0,1]$, unless some alternative parameterization such as \citet{richardson2017} is used.} Nonetheless, we chose to define \eqref{eq:def-effect-marginal} on the relative risk scale, both for interpretability and modeling ease.  See Section \ref{sec:discussion} for further discussion concerning this modeling choice.

\subsection{Identification of parameters}
\label{subsec:identifiability}

To express the causal excursion effect in terms of the observed data, we make the following assumptions.

\begin{assumption}[Consistency]
\label{assumption:consistency}
The observed data equals the potential outcome under observed treatment assignment. In particular, $X_2 = X_2(A_1)$, $A_2 = A_2(A_1)$, and for each subsequent $t \leq T$, $X_t = X_t(\bA_{t-1})$, $A_t = A_t(\bA_{t-1})$, and lastly, $X_{T+1} = X_{T+1}(\bA_T)$. This implies $Y_{t,\Delta} = Y_{t,\Delta}(\bA_{t+\Delta-1})$.
\end{assumption}

\begin{assumption}[Positivity]
\label{assumption:positivity}
If $\prob(H_t = h_t, I_t = 1) > 0$, then $\prob(A_t = a \mid H_t = h_t, I_t = 1) > 0$ for $a \in \{0,1\}$.
\end{assumption}

\begin{assumption}[Sequential ignorability]
\label{assumption:seqignorability}
For $1\leq t \leq T$, the potential outcomes $\{ X_{t+1}(\ba_t),$ $A_{t+1}(\ba_t), \ldots, X_{T+1}(\ba_T): \ba_T \in \{0,1\}^{\otimes T} \}$ are independent of $A_t$ conditional on $H_t$.
\end{assumption}

In an MRT, because the treatment is sequentially randomized with known probabilities bounded away from $0$ and $1$, Assumptions \ref{assumption:positivity} and \ref{assumption:seqignorability} are satisfied by design. Assumption \ref{assumption:consistency} may fail to hold if there is peer influence or social interaction between individuals; for example, in mHealth interventions with social media components, one individual's proximal outcome may be dependent on another individual's treatment assignment, which violates Assumption \ref{assumption:consistency}.
In those cases, a causal inference framework that incorporates interference needs to be used \citep{hong2006evaluating,hudgens2008toward}.
To maintain the focus of this paper we do not consider such settings here. 


We show in Appendix \ref{appen:identifiability} that under Assumptions \ref{assumption:consistency} - \ref{assumption:seqignorability},
the causal excursion effect \eqref{eq:def-effect-marginal} can be written in terms of the observed data distribution:
\begin{align}
\beta_M\{t, S_t(\bA_{t-1})\} = \log \frac{
E \left[ E \left\{ \prod_{j=t+1}^{t+\Delta-1} \frac{\indic(A_j = 0)}{1 - p_j(H_j)} Y_{t,\Delta} \Big| A_t = 1, H_t, I_t = 1 \right\} \Big| S_t, I_t = 1 \right]
}{
E \left[ E \left\{ \prod_{j=t+1}^{t+\Delta-1} \frac{\indic(A_j = 0)}{1 - p_j(H_j)} Y_{t,\Delta} \Big| A_t = 0, H_t , I_t = 1 \right\} \Big| S_t, I_t = 1 \right]
}, \label{eq:identifiability}
\end{align}
where we define $\prod_{j=t+1}^{t+\Delta-1} \frac{\indic(A_j = 0)}{1 - p_j(H_j)} = 1$ if $\Delta = 1$.
With a slight abuse of notation, we denote the right hand side of \eqref{eq:identifiability} by $\beta_M(t, S_t)$. Similarly the treatment effect conditional on full history with $\Delta = 1$ given in (\ref{eq:def-effect-conditional}) can be written as
\begin{align}
\beta_C(t, H_t) = \log \frac{E ( Y_{t,1} \mid A_t = 1, H_t, I_t=1 )}{E ( Y_{t,1} \mid A_t = 0, H_t , I_t=1 )}. \label{eq:blipidentifiability}
\end{align}

\section{A semiparametric, locally efficient estimator}
\label{sec:semiparametric}



To motivate the estimator for the marginal excursion effect $\beta_M(t, S_t)$, we first consider the special case where the treatment effect is conditional on the full history $H_t$ and the proximal outcome is defined with $\Delta = 1$; that is, consider (\ref{eq:blipidentifiability}). 
Using techniques in \cite{robins1994snmm}, the semiparametric efficient score \citep{newey1990} can be derived; a proof is provided in Appendix \ref{appen:proof-efficient-score}. 

\begin{theorem} \label{thm:efficient-score}
Suppose $f(\cdot)$ is a known deterministic function such that for $1 \leq t \leq T$,
\begin{align}
  \beta_C(t, H_t) =\log \frac{E ( Y_{t,1} \mid A_t = 1, H_t, I_t=1 )}{E ( Y_{t,1} \mid A_t = 0, H_t , I_t=1 )} = f(H_t)^T \psi \label{eq:assumption-conditional}
\end{align}
holds for some unknown $p$-dimensional parameter $\psi$, where the expectation is taken over the true distribution of the data, $P_0$. In the semiparametric model characterized by \eqref{eq:assumption-conditional} and Assumptions \ref{assumption:consistency}, \ref{assumption:positivity} and \ref{assumption:seqignorability}, the efficient score for $\psi$ is
\begin{align}
  S_{\eff}(\psi) = \sum_{t=1}^{T} I_t e^{-A_t f(H_t)^T \psi} \{Y_{t,1}-e^{\mu(H_t)+A_t f(H_t)^T \psi } \} 
  K_t\{A_t-p_t(H_t)\} f(H_t), \label{eq:eff-score-binaryA}
\end{align}
where
\begin{align*}
\mu(H_t) &= \log E(Y_{t,1} \mid H_t, A_t = 0, I_t = 1), \\
 K_t & =\frac{e^{f(H_t)^T \psi}}
{ e^{f(H_t)^T \psi} \{ 1-e^{\mu(H_t)} \} p_t(H_t)+ \{ 1-e^{\mu(H_t)+f(H_t)^T \psi} \} \{1-p_t(H_t)\} }.
\end{align*}
\end{theorem}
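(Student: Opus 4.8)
The plan is to follow the projection argument of \citet{robins1994snmm}, adapted to a binary outcome with the log relative-risk link. I would begin from the sequential factorization of the observed-data density, $p(O)=p(X_1)\prod_{t=1}^{T}\{p_t(H_t)^{A_t}(1-p_t(H_t))^{1-A_t}\}\prod_{t=1}^{T}p(X_{t+1}\mid H_t,A_t)$, and record which pieces are free. The treatment factors are known by design, so regular parametric submodels cannot perturb them and they contribute no score; under \eqref{eq:assumption-conditional} the only restriction on the remaining factors is that, on $\{I_t=1\}$, the conditional law of $Y_{t,1}=y(X_{t+1})$ given $(H_t,A_t)$ is Bernoulli with mean $m_{A_t}(H_t):=e^{\mu(H_t)+A_t f(H_t)^T\psi}$ (write $m_a=m_a(H_t)$); everything else---the law of $X_1$ and of $H_t$, the conditional law of the non-$Y_{t,1}$ coordinates of $X_{t+1}$ given $(Y_{t,1},H_t,A_t)$, the law of $X_{t+1}$ given $(H_t,A_t)$ on $\{I_t=0\}$, and the function $\mu(\cdot)$---is an unrestricted nuisance. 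Since $Y_{t,1}$ is binary its conditional law is pinned down by its mean, so the only nuisance direction inside the $Y_{t,1}$-factor is the one that perturbs $\mu$; differentiating that factor of the log-likelihood gives the ordinary score $S_\psi=\sum_{t=1}^{T}S_\psi^{(t)}$ with $S_\psi^{(t)}=I_t A_t f(H_t)\{Y_{t,1}-m_{A_t}\}/\{1-m_{A_t}\}$, and the $\mu$-tangent space at time $t$ is $\Lambda_t=\{I_t h(H_t)(Y_{t,1}-m_{A_t})/(1-m_{A_t}):h\}$.

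Two observations then organize the calculation. First, because $p_t(\cdot)$ is known, the direction $\{I_t g(H_t)(A_t-p_t(H_t)):g\}$ is \emph{not} contained in the nuisance tangent space; this is precisely what will let the efficient score retain the factor $A_t-p_t(H_t)$ and be robust to misspecification of the treatment-free mean. Second, a sequential-orthogonality argument: $S_\psi^{(t)}$ and every $\mu$-score at time $t$ are functions of $(H_t,A_t,Y_{t,1})$ with conditional mean zero given $(H_t,A_t,I_t)$, while every nuisance score tied to data strictly after $Y_{t,1}$ (or to the non-$Y_{t,1}$ coordinates of $X_{t+1}$, or to $\mu$ at a different time, or to the law of $H_t$) has conditional mean zero given a $\sigma$-field containing $(H_t,A_t,Y_{t,1})$; hence all cross-time inner products and all inner products with the irrelevant nuisance directions vanish. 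Consequently the projection of $S_\psi$ onto the orthogonal complement of the full nuisance tangent space decouples, $S_{\eff}(\psi)=\sum_{t=1}^{T}\bigl\{S_\psi^{(t)}-\Pi(S_\psi^{(t)}\mid\Lambda_t)\bigr\}$, reducing everything to $T$ separate Hilbert-space projections.

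Each of those projections is a routine normal-equation computation. I would seek $h^*(H_t)$ so that $S_\psi^{(t)}-I_t h^*(H_t)(Y_{t,1}-m_{A_t})/(1-m_{A_t})$ is orthogonal to $I_t h(H_t)(Y_{t,1}-m_{A_t})/(1-m_{A_t})$ for every $h$; taking conditional expectations given $H_t$, using $\var(Y_{t,1}\mid H_t,A_t,I_t=1)=m_{A_t}(1-m_{A_t})$ and then $E(A_t\mid H_t)=p_t(H_t)$ on $\{I_t=1\}$, yields $h^*(H_t)=f(H_t)\,p_t(H_t)\tfrac{m_1}{1-m_1}\big/\bigl\{p_t(H_t)\tfrac{m_1}{1-m_1}+(1-p_t(H_t))\tfrac{m_0}{1-m_0}\bigr\}$, so that $S_\psi^{(t)}-\Pi(S_\psi^{(t)}\mid\Lambda_t)=I_t f(H_t)\tfrac{Y_{t,1}-m_{A_t}}{1-m_{A_t}}\bigl(A_t-h^*(H_t)/f(H_t)\bigr)$. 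A short algebraic rearrangement using the identity $e^{-f(H_t)^T\psi}m_1=m_0$ puts the residual into the form $e^{-A_t f(H_t)^T\psi}\{Y_{t,1}-e^{\mu(H_t)+A_t f(H_t)^T\psi}\}$, isolates the scalar weight $K_t$, and produces the factor $A_t-p_t(H_t)$; summing over $t$ gives \eqref{eq:eff-score-binaryA}. Finally, being the residual of $S_\psi$ after $L_2$-projection onto the nuisance tangent space, the resulting function lies in the model tangent space and is orthogonal to the nuisance directions, which by the standard characterization of semiparametric efficiency \citep{newey1990} identifies it as the efficient score for $\psi$.

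The step I expect to be the main obstacle is pinning down the nuisance tangent space exactly: rigorously arguing that the known randomization probabilities exclude the $\{I_t g(H_t)(A_t-p_t(H_t))\}$ directions, that among the $Y_{t,1}$-factor perturbations only the $\mu$-direction survives (which relies on $Y_{t,1}$ being binary), and carefully handling the availability stratification so that all information about $\psi$ lives on $\{I_t=1\}$ with $A_t\equiv 0$ off it; together with the sequential-orthogonality bookkeeping that makes the $T$-term projection decouple. Once those are in place, the per-$t$ projection and the final algebraic simplification into the stated $K_t$ form are mechanical.
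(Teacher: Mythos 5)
Your proposal is correct and arrives at the stated form of $S_{\eff}(\psi)$, but it takes a genuinely different route from the paper's proof. The paper follows Robins' structural nested mean model machinery: it changes variables to the blipped-down residual $\sigma_{t+1}=U_{t+1}(\psi_0)-E\{U_{t+1}(\psi_0)\mid H_t\}$ with $U_{t+1}(\psi)=Y_{t,1}e^{-A_tf(H_t)^T\psi}$, decomposes the nuisance tangent space into six subspaces, orthogonalizes them through a long sequence of Hilbert-space projections to obtain $\Lambda_t^{\perp}=\{d(V_t)\sigma_{t+1}:E[d(V_t)\mid H_t]=0\}$ and a general efficient-score formula (Lemma \ref{lem:EIF}), and only at the end specializes to the Bernoulli case by computing $W_t=\var\{U_{t+1}(\psi_0)\mid V_t\}^{-1}$ and applying the binary-centering identity (Lemma \ref{lem:binary-action-centering}). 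You instead exploit binaryness from the outset: the conditional law of $Y_{t,1}$ given $(H_t,A_t)$ is pinned down by its mean, so the only nuisance direction inside that likelihood factor is the $\mu$-perturbation, and the whole derivation collapses to projecting the ordinary likelihood score $S_\psi^{(t)}=I_tA_tf(H_t)(Y_{t,1}-m_{A_t})/(1-m_{A_t})$ onto $\{I_th(H_t)(Y_{t,1}-m_{A_t})/(1-m_{A_t})\}$ via a single normal equation; I verified that your $h^{*}$ and the resulting residual reproduce \eqref{eq:eff-score-binaryA} exactly (checking $A_t=0$ and $A_t=1$ separately, using $m_0=e^{-f(H_t)^T\psi}m_1$). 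Your cross-time orthogonality bookkeeping is the same device the paper uses in Lemma \ref{lem:aux-3-orthogonal-Lambdas} and the proof of Lemma \ref{lem:EIF}, and it does justify the decoupling into $T$ separate projections. What each approach buys: yours is far shorter and makes transparent where the blipping-down factor, the weight $K_t$, and the centering at $p_t(H_t)$ each come from; the paper's is outcome-distribution-agnostic until the final step (its general lemmas would yield the efficient score for any conditional law of $Y_{t,1}$ satisfying the multiplicative mean model) and connects directly to the g-estimation/blipped-down-outcome structure underlying the robustness property in Remark \ref{rem:robust-m-C}. One small framing point: whether the directions $\{I_tg(H_t)(A_t-p_t(H_t))\}$ are excluded from the nuisance tangent space is immaterial here --- the paper in fact treats the treatment mechanism as unknown nuisance (its $\Gamma_t^{3}$) --- because $S_\psi$ is already orthogonal to those directions, so the efficient score is unchanged either way; your derivation does not actually depend on that exclusion.
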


It follows from semiparametric efficiency theory that the solution $\hat\psi$ to $\PP_n{S_{\eff}(\psi)}=0$ achieves the semiparametric efficiency bound; i.e., it has the smallest asymptotic variance among all semiparametric regular and asymptotically linear estimators for $\psi$ \citep{newey1990, tsiatis2007semiparametric}. Of course this estimator is not practical because $S_{\eff}$ depends on an unknown quantity $\mu(H_t)$. In practice, one can replace $\mu(H_t)$ with a parametric working model and solve for the estimating equation \revisionadded{or construct a two-step estimator where in the first step $\mu(H_t)$ is estimated and its estimate is plugged into $S_{\eff}$ in the second step to form an estimating equation for $\psi$}. $S_{\eff}(\psi)$ is robust to misspecified model for $\mu(H_t)$ (i.e., it has expectation 0 if one replaces $\mu(H_t)$ by an arbitrary function of $H_t$), and the resulting estimator is semiparametric locally efficient, in the sense that it is consistent and when the working model for $\mu(H_t)$ (with all $1\leq t \leq T$) is correctly specified it attains the semiparametric efficiency bound.  

Here we describe a particular implementation of this efficient score; this implementation mainly serves to motivate the proposed method in Section \ref{sec:marginal-estimator}, where we consider estimation of a causal excursion effect in which $\Delta \geq 1$ and for which the causal excursion effect is marginal. 
Let the working model for $\mu(H_t)$ be $g(H_t)^T \alpha$, where $g(H_t)$ is a vector of features constructed from $H_t$ and $\alpha$ is a finite-dimensional parameter that is variationally independent of $\beta$. We combine the resulting estimating function from (\ref{eq:eff-score-binaryA})
with an estimating function for $\alpha$ in the working model to obtain
\begin{align}
m_C(\alpha, \psi) = & \sum_{t=1}^{T} I_t e^{-A_t f(H_t)^T \psi}\{Y_{t,1} - e^{g(H_t)^T \alpha + A_t f(H_t)^T \psi}\} \tilde{K}_t
\begin{bmatrix}
g(H_t) \\
\{ A_t - p_t(H_t) \} f(H_t)
\end{bmatrix}, \label{eq:ee-conditional}
\end{align}
where 
\begin{align*}
 \tilde{K}_t = \frac{e^{f(H_t)^T \psi}}
{ e^{f(H_t)^T \psi}\{ 1-e^{g(H_t)^T\alpha} \} p_t(H_t)+ \{1-e^{g(H_t)^T\alpha+f(H_t)^T \psi} \} \{1-p_t(H_t)\} }.
\end{align*}
\revisionadded{Equation \eqref{eq:ee-conditional} is similar to a g-estimating equation \citep{robins1994snmm}, except that in \eqref{eq:ee-conditional} we are only estimating the immediate effect of a time-varying treatment (rather than estimating the effect of all past treatments) and that we are estimating the nuisance parameter $\alpha$ simultaneously (rather than estimating $\alpha$ in the first step of a two-step estimator.} In Appendix \ref{appen:proof-asymptotics-conditional} we prove the following result.

\begin{theorem} \label{thm:asymptotics-conditional}
Suppose \eqref{eq:assumption-conditional} and Assumptions \ref{assumption:consistency}, \ref{assumption:positivity}, and \ref{assumption:seqignorability} hold, and that the randomization probability $p_t(H_t)$ is known. Let $\dot{m}_C$ be the derivative of $m_C(\alpha, \psi)$ with respect to $(\alpha, \psi)$. Let $(\hat\alpha, \hat\psi)$ be a solution to $\PP_n m_C(\alpha,\psi) = 0$. Suppose $\psi^*$ is the value of $\psi$ corresponding to the data generating distribution, $P_0$.
Under regularity conditions, $\sqrt{n}(\hat\psi - \psi^*)$ is asymptotically normal with mean zero and variance-covariance matrix $\Sigma_C$. A consistent estimator for $\Sigma_C$ is the lower block diagonal $(p\times p)$ entry of the matrix
$\{\PP_n \dot{m}_C(\hat\alpha, \hat\psi)\}^{-1}
\{\PP_n m_C(\hat\alpha, \hat\psi) m_C(\hat\alpha, \hat\psi)^T\}$
$\{\PP_n \dot{m}_C(\hat\alpha, \hat\psi)\}^{-1^T}$.
Furthermore, when $g(H_t)^T \alpha$ is a correct model for $\mu(H_t)$ in the sense that there exists $\alpha^*$ such that $g(H_t)^T \alpha^* = \log E(Y_{t,1} \mid H_t, A_t = 0, I_t = 1)$, $\hat\psi$ achieves the semiparametric efficiency bound of the semiparametric model defined in Theorem \ref{thm:efficient-score}.
\end{theorem}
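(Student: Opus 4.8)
The plan is to treat $(\hat\alpha,\hat\psi)$ as a Z-estimator solving an estimating equation built from i.i.d.\ data across individuals, to obtain the asymptotic normality and sandwich variance from the generic M-estimation argument, and to layer the local-efficiency claim on top. The first task is to identify the population target. Write $m_C=(m_{C,\alpha}^T,m_{C,\psi}^T)^T$ for the two blocks of \eqref{eq:ee-conditional} corresponding to the rows $g(H_t)$ and $\{A_t-p_t(H_t)\}f(H_t)$. The central algebraic fact I would establish is that, \emph{for every value of $\alpha$}, the $\psi$-block has conditional mean zero given $(H_t,I_t=1)$ when $\psi=\psi^*$: each summand contains the factor $e^{-A_tf(H_t)^T\psi}\{Y_{t,1}-e^{g(H_t)^T\alpha+A_tf(H_t)^T\psi}\}\{A_t-p_t(H_t)\}$, and taking the conditional expectation over $(A_t,Y_{t,1})$ while substituting $E(Y_{t,1}\mid A_t=1,H_t,I_t=1)=e^{\mu(H_t)+f(H_t)^T\psi^*}$ --- which is exactly the content of \eqref{eq:assumption-conditional} --- makes the $A_t=1$ and $A_t=0$ contributions cancel, because $E\{A_t-p_t(H_t)\mid H_t,I_t=1\}=0$ in an MRT (here $I_t$ is part of $H_t$). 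Thus $E\,m_{C,\psi}(\alpha,\psi^*)=0$ for all $\alpha$, which is precisely the claimed robustness to misspecification of $\mu$; letting $\bar\alpha$ be the root of the remaining equation $E\,m_{C,\alpha}(\alpha,\psi^*)=0$ (equal to $\alpha^*$ when the working model is correct, and assumed a well-separated unique root under the regularity conditions) gives $E\,m_C(\bar\alpha,\psi^*)=0$.

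Next I would invoke standard estimating-equation theory. Under the regularity conditions (identifiability and separation of $(\bar\alpha,\psi^*)$, smoothness of $m_C$, nonsingularity of $M:=E\,\dot{m}_C(\bar\alpha,\psi^*)$, and integrable envelopes on a neighbourhood), consistency $(\hat\alpha,\hat\psi)\to_p(\bar\alpha,\psi^*)$ follows, and a one-term Taylor expansion of $0=\PP_n m_C(\hat\alpha,\hat\psi)$ about $(\bar\alpha,\psi^*)$ gives $\sqrt n\{(\hat\alpha,\hat\psi)-(\bar\alpha,\psi^*)\}=-\{\PP_n\dot{m}_C(\tilde\alpha,\tilde\psi)\}^{-1}\sqrt n\,\PP_n m_C(\bar\alpha,\psi^*)$ for an intermediate point $(\tilde\alpha,\tilde\psi)$. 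Since the $O_i$ are i.i.d.\ and $E\,m_C(\bar\alpha,\psi^*)=0$, the multivariate CLT applies to $\sqrt n\,\PP_n m_C(\bar\alpha,\psi^*)$ and a uniform law of large numbers gives $\PP_n\dot{m}_C(\tilde\alpha,\tilde\psi)\to_p M$, so the whole vector is asymptotically normal with variance $M^{-1}E\{m_C m_C^T\}(M^{-1})^T$; $\Sigma_C$ is its lower-right $(p\times p)$ block. Consistency of the displayed sandwich estimator then follows from the continuous mapping theorem together with uniform laws of large numbers for $\dot{m}_C$ and $m_C m_C^T$ near $(\bar\alpha,\psi^*)$.

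For local efficiency I would specialize to $g(H_t)^T\alpha^*=\mu(H_t)$ for all $t$ and use two facts. First, at $(\alpha^*,\psi^*)$ the working weight $\tilde K_t$ coincides with the weight $K_t$ of Theorem~\ref{thm:efficient-score} and $e^{g(H_t)^T\alpha^*+A_tf(H_t)^T\psi}=e^{\mu(H_t)+A_tf(H_t)^T\psi}$, so the $\psi$-block of \eqref{eq:ee-conditional} evaluated at $\alpha^*$ is identically $S_{\eff}(\cdot)$ of \eqref{eq:eff-score-binaryA} as a function of $\psi$. Second, the cross block $M_{\psi\alpha}=E\{\partial m_{C,\psi}/\partial\alpha\}$ vanishes at $\psi=\psi^*$ (for any $\alpha$, not just $\alpha^*$): differentiating $m_{C,\psi}$ in $\alpha$ produces one term in which the $\{A_t-p_t(H_t)\}$ factor is left untouched and hence has conditional mean zero, and one term proportional to $\partial\tilde K_t/\partial\alpha$ times the quantity already shown (in the first paragraph) to have conditional mean zero given $(H_t,I_t=1)$. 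Block inversion of $M$ therefore yields $\sqrt n(\hat\psi-\psi^*)=-M_{\psi\psi}^{-1}\sqrt n\,\PP_n S_{\eff}(\psi^*)+o_p(1)$ --- estimating $\alpha$ is asymptotically costless for $\hat\psi$ --- so $\hat\psi$ is asymptotically equivalent to the infeasible solution of $\PP_n S_{\eff}(\psi)=0$ with $\mu$ known, which attains the semiparametric efficiency bound $(E\{S_{\eff}S_{\eff}^T\})^{-1}$ of the model in Theorem~\ref{thm:efficient-score} by standard semiparametric efficiency theory \citep{newey1990, tsiatis2007semiparametric}; hence $\Sigma_C$ equals that bound.

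I expect the main obstacle to be the conditional-mean-zero calculation and its derivative version: one must check that every term carrying the factor $\{A_t-p_t(H_t)\}$ --- including those hidden inside $\partial\tilde K_t/\partial\alpha$ and inside the $Y_{t,1}$ residual --- has conditional mean zero given $(H_t,I_t=1)$ once the structural model \eqref{eq:assumption-conditional} is applied at $\psi=\psi^*$. This single identity is what simultaneously supplies robustness to the working model for $\mu$, the centering needed for the CLT, and the Neyman-orthogonality ($M_{\psi\alpha}=0$) that makes the asymptotic variance of $\hat\psi$ insensitive to $\hat\alpha$. The remaining ingredients --- existence and separation of the population root, validity of the Taylor expansion and the interchange of limits and derivatives, and consistency of the sandwich --- are routine given the usual regularity conditions, which is why they are bundled into the hypothesis rather than spelled out.
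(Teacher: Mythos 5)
Your proposal is correct and follows essentially the same route as the paper: the paper's Appendix proof rests on exactly your central identity (its Lemma on the unbiasedness of the $\psi$-block at $\psi^*$ for \emph{arbitrary} $\alpha$, proved by conditioning on $H_t$ and using that the blipped-down residual $e^{-A_tf(H_t)^T\psi^*}Y_{t,1}-e^{g(H_t)^T\alpha}$ has conditional mean $e^{\mu(H_t)}-e^{g(H_t)^T\alpha}$ free of $A_t$, so that multiplying by $A_t-p_t(H_t)$ kills the expectation), followed by the standard Z-estimator consistency/normality theorems and the sandwich formula. The one place you go beyond the paper is the local-efficiency step: the paper simply asserts that it ``follows from Theorem 1,'' whereas you supply the actual mechanism --- that $m_{C,\psi}(\alpha^*,\cdot)$ coincides with $S_{\eff}$ when $g(H_t)^T\alpha^*=\mu(H_t)$, and that the cross-derivative block $E\{\partial m_{C,\psi}/\partial\alpha\}$ vanishes at $\psi^*$ so that estimating $\alpha$ is asymptotically costless --- and both claims check out against the form of \eqref{eq:ee-conditional}; this is a welcome tightening of an argument the paper leaves implicit.
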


\begin{remark} \label{rem:robust-m-C}
The resulting estimator $\hat\psi$ is robust in the sense that it is consistent even if $\exp\{g(H_t)^T \alpha\}$ is a misspecified model for $E(Y_{t,1} \mid H_t, I_t = 1, A_t = 0)$. This robustness results from the orthogonality between the so-called ``blipped-down outcome'' \citep{robins1997snmm}, $\exp\{-A_t f(H_t)^T \psi^*\}Y_{t,1}$, and the centered treatment indicator, $A_t - p_t(H_t)$---i.e.,\\ $E[\exp\{-A_t f(H_t)^T \psi^*\}Y_{t,1} \{A_t - p_t(H_t)\} \mid H_t] = 0$---which follows from an important property of the blipped-down outcome: $E[ \exp\{-A_t f(H_t)^T \psi^*\}Y_{t,1} \mid H_t, A_t] = E\{ Y_{t,1}(\bA_{t-1}, 0)\mid H_t, A_t\}$.
This property plays a key role in the robustness of both the estimator in Theorem \ref{thm:asymptotics-conditional} and the estimator we develop in Section \ref{sec:marginal-estimator}.
\end{remark}

\begin{remark} \label{rem:projection-conditional}
\revisionadded{Consistency of the estimator $\hat\psi$ requires that \eqref{eq:assumption-conditional} holds; in other words, it requires that the analysis model $f(H_t)^T\psi$ is a correctly specified model for the true conditional treatment effect $\beta_C(t,H_t)$. When $f(H_t)^T\psi$ is an incorrect model for the true $\beta_C(t,H_t)$, $\hat\psi$ converges in probability to some $\psi'$ whose form is given in Appendix \ref{appen:projection-conditional}.}
\end{remark}

\begin{remark} \label{rem:misspecified-conditional}
\revisionadded{If $g(H_t)$ includes unbounded covariates, then the working model $g(H_t)^T \alpha$ may always be a misspecified model for $\log E(Y_{t,1} \mid H_t, A_t = 0, I_t = 1)$, whose range is $(-\infty, 0]$. One solution is to transform the unbounded covariates before including them in the working model.}
\end{remark}

\section{Estimator for the marginal excursion effect}
\label{sec:marginal-estimator}


Now we focus on estimation of $\beta_M(t, S_t)$ where $S_t$ is an arbitrary subset of $H_t$. Suppose $\Delta \geq 1$ is a positive integer. Recall that
\begin{align*}
  \beta_M(t, S_t) =\log \frac{
E \left[ E \left\{ \prod_{j=t+1}^{t+\Delta-1} \frac{\indic(A_j = 0)}{1 - p_j(H_j)} Y_{t,\Delta} \Big| A_t = 1, H_t, I_t = 1 \right\} \Big| S_t, I_t = 1 \right]
}{
E \left[ E \left\{ \prod_{j=t+1}^{t+\Delta-1} \frac{\indic(A_j = 0)}{1 - p_j(H_j)} Y_{t,\Delta} \Big| A_t = 0, H_t, I_t = 1 \right\} \Big| S_t, I_t = 1 \right]
}.
\end{align*}
\revisionadded{Note that unlike $\beta_C(t, H_t)$ whose definition is independent of the choice of $f(H_t)$, the definition of $\beta_M(t, S_t)$ is dependent on the choice of $S_t$.}
We make a parametric assumption on $\beta_M(t, S_t)$. Suppose that for $1 \leq t \leq T$,
\begin{equation}
\beta_M(t, S_t) = S_t^T \beta \label{eq:model-maginal-binary}
\end{equation}
holds for some $p$-dimensional parameter $\beta$.  Note this model allows for time-dependent effects; $S_t$ could include a vector of basis functions of $t$. The estimation method described below readily generalizes to situations where the parametric model has a known functional form that may be nonlinear; the use of a linear model here enhances presentation clarity.


We propose to use a marginal generalization of the estimating function \eqref{eq:ee-conditional} to estimate $\beta$. The proposed estimating function is
\begin{align}
m_M(\alpha,\beta) = \sum_{t=1}^{T + \Delta - 1} I_t e^{-A_t S_t^T \beta} \{ Y_{t,\Delta} - e^{g(H_t)^T \alpha + A_t S_t^T \beta} \} J_t
\begin{bmatrix}g(H_t)\\
\{A_t - \tilde{p}_t(S_t)\} S_t
\end{bmatrix}, \label{eq:ee-marginal}
\end{align}
where $\exp\{g(H_t)^T \alpha\}$ is a working model for $E\{Y_{t,\Delta}(\bA_{t-1}, 0, \bar{0}) \mid H_t, I_t = 1, A_t = 0\}$. 
Because the model is now on the marginal effect, we apply a weighting and centering technique similar to \citet{boruvka2018}. 
The weight at time $t$ is 
\begin{align}
J_t = \bigg\{ \frac{\tilde{p}_t(S_t)}{p_t(H_t)} \bigg\}^{A_t} \bigg\{ \frac{1 - \tilde{p}_t(S_t)}{1 - p_t(H_t)} \bigg\}^{1 - A_t} \times \prod_{j=t+1}^{t+\Delta-1} \frac{\indic(A_j = 0)}{1 - p_j(H_j)},
\end{align}
where $\tilde{p}_t(S_t) \in (0,1)$ is arbitrary as long as it does not depend on terms in $H_t$ other than $S_t$. The product, $\prod_{j=t+1}^{t+\Delta-1} \indic(A_j = 0)/\{1 - p_j( H_j)\}$, is standard inverse probability weighting for settings with $\Delta>1$. The ratio of probabilities, $\{\tilde{p}_t(S_t)/p_t(H_t) \}^{A_t} [ \{1 - \tilde{p}_t(S_t)\} / \{1 - p_t(H_t)\} ]^{1 - A_t}$, can be viewed as a change of probability: Intuitively, the ratio transforms the data distribution in which $A_t$ is randomized with probability ${p}_t(H_t)$ to a distribution acting as if $A_t$ were randomized with probability $\tilde{p}_t(S_t)$. We thus center $A_t$ with $\tilde{p}_t(S_t)$; this  centering results in orthogonality between the estimation of $\beta$ and the estimation of the nuisance parameter, $\alpha$. The weighting and centering, together with the factor $\exp(-A_t S_t^T \beta)$, makes the resulting estimator for $\beta$ consistent even when the working model $\exp\{g(H_t)^T \alpha\}$ is misspecified.


In Appendix \ref{appen:proof-asymptotics-marginal} we prove the following result.

\begin{theorem} \label{thm:asymptotics-marginal}
Suppose \eqref{eq:model-maginal-binary} and Assumptions \ref{assumption:consistency}, \ref{assumption:positivity}, and \ref{assumption:seqignorability} hold, and that the randomization probability $p_t(H_t)$ is known. Suppose $\beta^*$ is the value of $\beta$ corresponding to the data generating distribution, $P_0$.
Let $\dot{m}_M$ be the derivative of $m_M(\alpha, \beta)$ with respect to $(\alpha, \beta)$. Let $(\hat\alpha, \hat\beta)$ be a solution to $\PP_n m_M(\alpha,\beta) = 0$.
Under regularity conditions, $\sqrt{n}(\hat\beta - \beta^*)$ is asymptotically normal with mean zero and variance-covariance matrix $\Sigma_M$. A consistent estimator for $\Sigma_M$ is the lower block diagonal $(p\times p)$ entry of the matrix
$\{\PP_n \dot{m}_M(\hat\alpha, \hat\beta)\}^{-1}
\{\PP_n m_M(\hat\alpha, \hat\beta) m_M(\hat\alpha, \hat\beta)^T\}$
$\{\PP_n \dot{m}_M(\hat\alpha, \hat\beta)\}^{-1^T}$.
\end{theorem}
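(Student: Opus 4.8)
The plan is to recognize $(\hat\alpha,\hat\beta)$ as an M-estimator (Z-estimator) solving $\PP_n m_M(\alpha,\beta)=0$, and to apply standard estimating-equation asymptotics (as in \citealp{tsiatis2007semiparametric}) once the key population-level identity has been established. So the first and most important step is to show that the estimating function is unbiased at the truth: there exists $\alpha^*$ (possibly corresponding to a misspecified working model for $E\{Y_{t,\Delta}(\bA_{t-1},0,\bar 0)\mid H_t,I_t=1,A_t=0\}$) such that $E\{m_M(\alpha^*,\beta^*)\}=0$, where $\beta^*$ satisfies \eqref{eq:model-maginal-binary}. This is where the weighting-and-centering construction earns its keep. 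I would condition sequentially, innermost first: for fixed $t$, condition on $(H_t, A_t)$ and use the inverse-probability factor $\prod_{j=t+1}^{t+\Delta-1}\indic(A_j=0)/\{1-p_j(H_j)\}$ together with sequential ignorability (Assumption \ref{assumption:seqignorability}) and consistency (Assumption \ref{assumption:consistency}) to collapse $J_t Y_{t,\Delta}$ down to $Y_{t,\Delta}(\bA_{t-1},A_t,\bar 0)$ in expectation; this is the $\Delta$-step generalization of the blipped-down-outcome property quoted in Remark \ref{rem:robust-m-C}. Then the factor $e^{-A_t S_t^T\beta^*}$ converts the potential outcome under $A_t$ into the one under $A_t=0$ in expectation given $(S_t,I_t=1)$, by definition \eqref{eq:def-effect-marginal}–\eqref{eq:model-maginal-binary} of $\beta_M(t,S_t)=S_t^T\beta$. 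What remains inside the bracket is $\{$something with mean $\exp(g(H_t)^T\alpha^*)$ under $A_t=0\} - \exp\{g(H_t)^T\alpha^*\}$ times the two stacked score components; the $g(H_t)$-block has mean zero by the defining equation for $\alpha^*$, and the $\{A_t-\tilde p_t(S_t)\}S_t$-block has mean zero because, after the change-of-probability factor in $J_t$, $A_t$ behaves as if drawn from $\tilde p_t(S_t)$, which depends on $H_t$ only through $S_t$; hence conditioning on $(S_t,I_t=1)$ gives $E[\{A_t-\tilde p_t(S_t)\}\mid S_t,I_t=1]=0$ regardless of whether the working model $g(H_t)^T\alpha$ is correct. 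This simultaneously delivers unbiasedness and the robustness claim.

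Once $E\{m_M(\alpha^*,\beta^*)\}=0$ is in hand, the remaining steps are routine. Second, I would verify that the stacked derivative matrix $E\{\dot m_M(\alpha^*,\beta^*)\}$ is nonsingular; its block-lower-triangular structure (the $g(H_t)$-block equation does not involve $\beta$ after differentiation at the truth, thanks to the centering) means invertibility reduces to invertibility of the $\alpha\alpha$-block and the $\beta\beta$-block separately, which I fold into the stated regularity conditions together with the usual moment/dominated-convergence requirements and an identifiability assumption ensuring $(\alpha^*,\beta^*)$ is the unique zero in a neighborhood. Third, consistency of $(\hat\alpha,\hat\beta)$ follows from a standard Z-estimator argument (uniform convergence of $\PP_n m_M$ to its mean plus well-separatedness of the root). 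Fourth, a one-term Taylor expansion of $0=\PP_n m_M(\hat\alpha,\hat\beta)$ around $(\alpha^*,\beta^*)$ gives
\[
\sqrt n\begin{pmatrix}\hat\alpha-\alpha^*\\ \hat\beta-\beta^*\end{pmatrix}
= -\{\PP_n\dot m_M(\bar\alpha,\bar\beta)\}^{-1}\sqrt n\,\PP_n m_M(\alpha^*,\beta^*),
\]
and the CLT applied to the i.i.d.\ sum $\sqrt n\,\PP_n m_M(\alpha^*,\beta^*)$, combined with the law of large numbers for the (continuous) derivative matrix at an intermediate point and Slutsky's theorem, yields joint asymptotic normality with sandwich variance $\{E\dot m_M\}^{-1}E\{m_M m_M^T\}\{E\dot m_M\}^{-1\,T}$. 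Reading off the lower-right $p\times p$ block gives $\Sigma_M$, and replacing population expectations by $\PP_n$ evaluated at $(\hat\alpha,\hat\beta)$ gives the stated consistent variance estimator by the continuous mapping theorem.

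The main obstacle is squarely the first step: carefully iterating the conditioning over the $\Delta-1$ future time points while juggling three distinct devices at once---the inverse-probability weight for the future zeros, the change-of-probability factor that re-centers $A_t$ on $\tilde p_t(S_t)$, and the exponential tilt $e^{-A_t S_t^T\beta}$---and making sure the availability indicators $I_t$ (and the convention $A_t=0$ when $I_t=0$) are handled consistently throughout. In particular one must check that the "$t$ runs to $T+\Delta-1$" in the sum in \eqref{eq:ee-marginal} interacts correctly with the definition of $Y_{t,\Delta}$ (so that $H_{t+\Delta}$ is well-defined or the corresponding terms vanish), and that the change-of-probability factor is a genuine likelihood ratio so the nested conditional expectations telescope. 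Everything after that is the standard M-estimation machinery and I would only state, not re-derive, the regularity conditions (finite second moments of $m_M(\alpha^*,\beta^*)$, a dominating integrable envelope for $\dot m_M$ in a neighborhood, nonsingularity of $E\{\dot m_M(\alpha^*,\beta^*)\}$, and local uniqueness of the root), deferring the weighting-and-centering verification to Appendix \ref{appen:proof-asymptotics-marginal} with a pointer back to the $\Delta=1$ computation behind Theorem \ref{thm:asymptotics-conditional} and Remark \ref{rem:robust-m-C}.
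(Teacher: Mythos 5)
Your plan matches the paper's proof: Appendix C establishes exactly your first step as a standalone lemma (unbiasedness of the $\{A_t-\tilde{p}_t(S_t)\}S_t$ block at $\beta^*$ for \emph{arbitrary} $\alpha$, proved by conditioning on $H_t$, splitting on $A_t=1$ versus $A_t=0$ so that the change-of-probability and inverse-probability factors cancel the randomization probabilities, and then conditioning on $S_t$), and then invokes standard Z-estimator asymptotics (van der Vaart, Theorems 5.9 and 5.21) for consistency, asymptotic normality, and the sandwich variance, just as you describe. One small caution on the step you flag as the main obstacle: the mean-zero property of the centered block is not simply that $E[A_t-\tilde{p}_t(S_t)\mid S_t, I_t=1]=0$ under the tilted measure, since the centered indicator multiplies terms that also depend on $A_t$; the argument needs the explicit two-case computation so that what survives is a function of $S_t$ alone times the contrast that \eqref{eq:model-maginal-binary} sets to zero---exactly the verification you defer to the appendix.
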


\begin{remark}
The consistency of $\hat \beta$ does not require the working model $\exp\{g(H_t)^T \alpha\}$ to be correctly specified. This robustness property is desirable because $H_t$ can be high dimensional in an MRT (where the total number of time points, $T$, can be in the hundreds or even thousands), which makes it difficult to model $E\{Y_{t,\Delta}(\bA_{t-1}, 0, \bar{0}) \mid H_t, I_t = 1, A_t = 0\}$ correctly.
\end{remark}

\begin{remark}
\label{rem:projection-marginal}
Under the assumptions in Theorem \ref{thm:asymptotics-marginal}, the choice of $\tilde{p}_t(S_t)$ doesn't affect the consistency of $\hat\beta$ as long as it depends at most on $S_t$ and it lies in $(0,1)$; it affects the asymptotic variance of $\hat\beta$. On the other hand, when the analysis model $S_t^T\beta$ is an incorrect model for the true $\beta_M(t, S_t)$, $\tilde{p}_t(S_t)$ determines the probability limit of $\hat\beta$. \revisionadded{For example, suppose the data analyst chooses $\Delta = 1$ and $S_t = \emptyset$, i.e., the analysis model is a constant over time, $\beta_0$. Suppose, however, that the true treatment effect $\beta_M(t, \emptyset)$ is not a constant over time. In this case, when we set $\tilde{p}_t(S_t)$ to be any constant in $(0,1)$,} $\hat\beta$ converges in probability to
\begin{equation}
\beta'=\log\frac{\sum_{t=1}^{T}E\{E(Y_{t,1}\mid H_{t}, A_{t}=1) \mid I_t = 1 \} E(I_t)}{\sum_{t=1}^{T}E\{E(Y_{t,1}\mid H_{t}, A_{t}=0) \mid I_t = 1\} E(I_t) }, \nonumber 
\end{equation}
which further simplifies to
\begin{equation*}
\log \frac{\sum_{t=1}^{T}E(Y_{t,1}\mid I_t = 1, A_{t}=1) E(I_t)}{\sum_{t=1}^{T}E(Y_{t,1}\mid I_t = 1, A_{t}=0) E(I_t)}
\end{equation*}
if the randomization probability $p_t(H_t)$ is constant. For general $\Delta$ and $S_t$, the form of the probability limit of $\hat\beta$, $\beta'$, is provided in Appendix \ref{appen:projection-marginal}.
\end{remark}

\begin{remark}
\revisionadded{The estimating equation \eqref{eq:ee-marginal} for $\beta$ is motivated by the locally efficient estimating equation \eqref{eq:ee-conditional} for $\psi$ in terms of the use of the ``blipping-down'' factor, $\exp(-A_t S_t^T \beta)$. Note that unlike $\hat\psi$ from \eqref{eq:ee-conditional}, the estimator $\hat\beta$ from \eqref{eq:ee-marginal} may not be semiparametric locally efficient. We were not able to derive the semiparametric efficiency bound for $\beta$ in \eqref{eq:model-maginal-binary}. We suspect that the semiparametric efficient estimating equation for $\beta$ would include an analogue of $\tilde{K}_t$, but it is not straightforward how to obtain the marginal analogue of $\tilde{K}_t$ for estimating $\beta$.}
\end{remark}

\begin{remark}
\revisionadded{In the definition of $\beta_M(t,S_t)$, $\Delta$ characterizes the length of the excursion (i.e., the number of time points) into the future. Since the excursion considered here specifies no treatment at the $\Delta-1$ time points following the current time point, consideration of a  large $\Delta$ may result in instability in the estimator based on \eqref{eq:ee-marginal}.  This would occur if  the randomization probability to treatment at any time point  is much greater than zero. To see this, note that the summand in the estimating equation contributes to the estimation (i.e., is not a constant zero) only if $J_t$ is nonzero, which holds only if $A_j = 0$ for all $ t+1 \leq j \leq t + \Delta - 1$. Therefore, for large $\Delta$, when the randomization probability is close to zero, many of the observed treatment trajectories will contribute to the estimation; when the randomization probability is much greater than zero, only very few treatment trajectories will contribute to the estimation, making the estimator unstable with large variance.}
\end{remark}

\section{Simulation}
\label{sec:simulation}

\subsection{Overview}
\label{subsec:simu-overview}



In the simulation we focus on the causal excursion effect with $\Delta = 1$, and we conduct two simulation studies to evaluate the proposed estimator of the marginal excursion effect (``EMEE'') in Section \ref{sec:marginal-estimator} and the semiparametric, locally efficient estimator of the conditional effect (``ECE'') described in Section \ref{sec:semiparametric}.

Because the sandwich estimator for the variance of EMEE in Theorem \ref{thm:asymptotics-marginal} can be anti-conservative when the sample size is small, we adopt the small sample correction technique in \citet{mancl2001} to modify the term $\PP_n m_M(\hat\alpha, \hat\beta)^{\otimes 2}$ in the variance estimator. In particular, we pre-multiply the vector of each individual's residual, $(Y_{t,1} - \exp\{g(H_t)^T \hat\alpha + A_t S_t^T \hat\beta\}: 1\leq t \leq T)$, by the inverse of the identity matrix minus the leverage for this individual. Also, as in \citet{liao2016sample}, we use critical values from a $t$ distribution. In particular, for a known $p$-dimensional vector $c$, to test the null hypothesis $c^T\beta = 0$ or to form two-sided confidence intervals, we use the critical value $t^{-1}_{n-p-q}(1 - \xi/2)$, where $p,q$ are the dimensions of $\beta,\alpha$, respectively, and $\xi$ is the significance level. Similar small sample corrections are applied to ECE as well.

The numerical algorithm that solves $\PP_n m_C(\alpha,\psi) = 0$ can be unstable when the denominator in $\tilde{K}_t$ gets close to 0. This is because $\exp\{g(H_t)^T \alpha\}$ and $\exp\{g(H_t)^T \alpha + f(H_t)^T \psi\}$ are not constrained within $(0,1)$. In our implementation of ECE, to improve the numerical stability we replace $\tilde{K}_t$ in \eqref{eq:ee-conditional} by
\begin{align}
 \frac{e^{f(H_t)^T \psi}}
{ e^{f(H_t)^T \psi}[ 1- \min\{e^{g(H_t)^T\alpha}, \lambda\} ] p_t(H_t)+ [1- \min\{e^{g(H_t)^T\alpha+f(H_t)^T \psi}, \lambda\} ] \{1-p_t(H_t)\} }, \label{eq:weight-mod}
\end{align}
with the truncation parameter value $\lambda = 0.95$.
\revisionadded{In Appendix \ref{appen:ECE-alternative-implementation} we also provide an alternative, two-step implementation of ECE that do not rely on truncation. The two implementations result in similar performance in terms of standard error and bias, so we conjecture that this truncation-based implementation with $\lambda = 0.95$ incurs at most negligible efficiency loss compared to the original ECE estimator (i.e., the one with $\lambda = \infty$).}


Throughout the simulations, we assume that all individuals are available at all time points, and we omit $I_t = 1$ in writing conditional expectations.

R code \citep{Rsoftware} to reproduce the simulation results can be downloaded at https://github.com/tqian/binary-outcome-mrt.

\subsection{Simulation on consistency}

\label{subsec:simulation-consistency}

\revisionadded{
We consider a generative model where an important moderator exists. We illustrate that EMEE consistently estimates both the marginal excursion effect when the moderator is not included in $S_t$ (hence the effect is averaged over the distribution of the moderator) and the causal excursion effect moderation when the moderator is included in $S_t$. On the other hand, ECE only consistently estimates the treatment effect conditional on the full history (hereafter referred to as the ``conditional treatment effect'') when the moderator is included in $f(H_t)$; incorrect use of ECE to estimate the marginal excursion effect by excluding the moderator in $f(H_t)$ results in inconsistent estimates.}


We use the following generative model. The time-varying covariate, $Z_t$, is independent of all variables observed before $Z_t$, and it takes three values $0,1,2$ with equal probability. The randomization probability is constant with $p_t(H_t) = 0.2$. The outcome $Y_{t,1}$ is generated from a Bernoulli distribution with
\begin{align*}
  E(Y_{t,1} \mid H_t, A_t) = \big\{0.2 \indic_{Z_t = 0} + 0.5 \indic_{Z_t = 1} + 0.4 \indic_{Z_t = 2} \big\} e^{A_t (0.1 + 0.3 Z_t)}.
\end{align*}
Here, $Z_t$ moderates the conditional treatment effect: The true conditional treatment effect $\beta_C(t,H_t)$ equals $0.1 + 0.3 Z_t$.

We first consider estimating the fully marginal excursion effect, which equals
\begin{align*}
\beta_0 = \log \frac{E\{ E(Y_{t,1} \mid H_t, A_t = 1) \}}{ E \{ E(Y_{t,1} \mid H_t, A_t = 0) \}} = 0.477.
\end{align*}
This is the setting of a typical primary analysis of MRT. \revisionadded{In order to estimate $\beta_0$, by Theorem \ref{thm:asymptotics-marginal} it is appropriate to use the EMEE estimator with $S_t = 1$. For illustration purpose, we also consider using the ECE estimator with $f(H_t) = 1$ in the simulation. Note that this choice of $f(H_t)$ corresponds to a misspecified model for $\beta_C(t, H_t)$, because the true conditional treatment effect is $\beta_C(t,H_t) = 0.1 + 0.3 Z_t$. (For this generative model, the ECE estimator is appropriate (i.e., its in-probability limit is easily interpretable) only if $Z_t$ is included in $f(H_t)$.)} For comparison, we also include the generalized estimating equations (GEE) estimator for binary outcome with log link in the simulation, because GEE is widely used in analyzing mHealth data \citep{schwartz2007analysis, bolger2013intensive}. We use independence (``GEE.ind'') and exchangeable (``GEE.exch'') as working correlation structures for GEE. We use the working model $g(H_t)^T \alpha = \alpha_0 + \alpha_1 Z_t$ for log of the expected outcome under no treatment, which is misspecified for all estimators.



The simulation result for estimating $\beta_0$ is given in Table \ref{tab:simulation1}; the total number of time points is $T=30$ for each individual. The bias, standard deviation (SD), root mean squared error (RMSE), 95\% confidence interval coverage probability before small sample correction (CP (unadj)) and after small sample correction (CP (adj)) are all computed based on 1000 replicates. As expected, EMEE consistently estimates $\beta_0$, and the incorrect use of ECE (due to misspecification of the conditional treatment effect model with $f(H_t) = 1$) results in an inconsistent estimator for $\beta_0$. The consistency of GEE generally requires the working model $g(H_t)^T \alpha$ to be correct; in other words, it does not have the robustness property as EMEE. The result shows that both GEE.ind and GEE.exch are inconsistent. We also see that small sample correction helps to improve the confidence interval coverage for EMEE.

\begin{table}[htbp]

\caption{\label{tab:simulation1} Performance of EMEE, ECE, GEE.ind, and GEE.exch for the marginal excursion effect $\beta_0$.}
\begin{center}
\begin{tabular}[t]{ccccccc}
\toprule
Estimator & Sample size & Bias & SD & RMSE & CP (unadj) & CP (adj)\\
\midrule
 & 30 & 0.000 & 0.077 & 0.077 & \textbf{0.93} & 0.94\\

 & 50 & 0.001 & 0.057 & 0.057 & 0.94 & 0.95\\

\multirow{-3}{*}{\centering\arraybackslash EMEE} & 100 & 0.000 & 0.041 & 0.041 & 0.95 & 0.95\\
\cmidrule{1-7}
 & 30 & \textbf{0.048} & 0.075 & 0.089 & \textbf{0.85} & \textbf{0.88}\\

 & 50 & \textbf{0.049} & 0.055 & 0.074 & \textbf{0.84} & \textbf{0.85}\\

\multirow{-3}{*}{\centering\arraybackslash ECE} & 100 & \textbf{0.048} & 0.040 & 0.063 & \textbf{0.75} & \textbf{0.76}\\
\cmidrule{1-7}


 & 30 & \textbf{0.041} & 0.073 & 0.084 & \textbf{0.88} & \textbf{0.89}\\

 & 50 & \textbf{0.042} & 0.054 & 0.069 & \textbf{0.86} & \textbf{0.87}\\

\multirow{-3}{*}{\centering\arraybackslash GEE.ind} & 100 & \textbf{0.041} & 0.039 & 0.056 & \textbf{0.80} & \textbf{0.81}\\
\cmidrule{1-7}
 & 30 & \textbf{0.041} & 0.073 & 0.084 & \textbf{0.87} & \textbf{0.89}\\

 & 50 & \textbf{0.042} & 0.054 & 0.069 & \textbf{0.86} & \textbf{0.88}\\

\multirow{-3}{*}{\centering\arraybackslash GEE.exch} & 100 & \textbf{0.041} & 0.039 & 0.056 & \textbf{0.80} & \textbf{0.81}\\
\bottomrule
\end{tabular}
\end{center}
\small{* EMEE: the estimator of the marginal excursion effect proposed in Section \ref{sec:marginal-estimator}. ECE: the semiparametric, locally efficient estimator of the conditional effect described in Section \ref{sec:semiparametric}. GEE.ind: GEE with independence working correlation structure. GEE.exch: GEE with exchangeable working correlation structure.
SD: standard deviation. RMSE: root mean squared error. CP: 95\% confidence interval coverage probability, before (unadj) and after (adj) small sample correction. Boldface indicates when Bias or CP are significantly different, at the 5\% level, from 0 or 0.95, respectively. Sample size refers to the number of individuals in each simulated trial.}
\end{table}

\subsection{Simulation on efficiency}

\label{subsec:simulation-efficiency}

\revisionadded{Using the same generative model as in Section \ref{subsec:simulation-consistency}, we now consider estimating the excursion effect moderation by $Z_t$, which can occur in a typical secondary analysis. We set $S_t = Z_t$ in EMEE and $f(H_t) = Z_t$ in ECE. Because the generative model implies that
\begin{align*}
    \log \frac{E\{E(Y_{t,1} \mid H_t, A_t = 1) \mid Z_t\} }{ E\{E(Y_{t,1} \mid H_t, A_t = 0) \mid Z_t\} } = \log \frac{E(Y_{t,1} \mid H_t, A_t = 1) }{ E(Y_{t,1} \mid H_t, A_t = 0) } = 0.1 + 0.3 Z_t,
\end{align*}
in this case the parameter value in the causal excursion effect and in the conditional treatment effect coincides. In other words, the analysis models for EMEE and ECE are both correct, and hence both estimators should be consistent for $\beta_0 = 0.1$ and $\beta_1 = 0.3$. }

\revisionadded{To assess the relative efficiency between ECE and EMEE, we consider two ways to specify the control variables: an incorrectly specified working model ($g(H_t)^T \alpha = \alpha_0 + \alpha_1 Z_t$), and a correctly specified working model ($g(H_t)^T \alpha = \alpha_0 + \alpha_1 Z_t + \alpha_2 \indic_{Z_t = 2}$). We will assess the relative efficiency between EMEE and ECE as they are both consistent for the same estimands. We also included GEE.ind and GEE.exch for comparison. Because consistency of GEE replies on correct specification of not only the treatment effect model but also the control variables, we expect GEE.ind and GEE.exch to be consistent only when $g(H_t)^T \alpha = \alpha_0 + \alpha_1 Z_t + \alpha_2 \indic_{Z_t = 2}$.}


\revisionadded{The results are given in Tables \ref{tab:simulation2} and \ref{tab:simulation2.5}. In both tables, the total number of time points is $30$ for each individual. The bias, standard deviation (SD), root mean squared error (RMSE), 95\% confidence interval coverage probability before small sample correction (CP (unadj)) and after small sample correction (CP (adj)) are all computed based on 1000 replicates. }

\revisionadded{Table \ref{tab:simulation2} shows the simulation result for EMEE with $S_t = Z_t$ and ECE with $f(H_t) = Z_t$, both with incorrectly specified control variables ($g(H_t)^T \alpha = \alpha_0 + \alpha_1 Z_t$). 
As expected, EMEE and ECE are consistent for $\beta_0$ and $\beta_1$, and GEE.ind and GEE.exch are inconsistent. We also see that ECE is slightly more efficient than EMEE. For example, the relative efficiency between ECE and EMEE for estimating $\beta_0$ with sample size 30 is $(0.20 / 0.18)^2 = 1.23$.}

\revisionadded{Table \ref{tab:simulation2.5} shows the simulation result for EMEE with $S_t = Z_t$ and ECE with $f(H_t) = Z_t$, both with correctly specified control variables ($g(H_t)^T \alpha = \alpha_0 + \alpha_1 Z_t + \alpha_2 \indic_{Z_t = 2}$). 
As expected, all four estimators are consistent for $\beta_0$ and $\beta_1$. In this case, ECE is much more efficient than EMEE due to ECE achieving the semiparametric efficiency bound under correctly specified control variables. For example, the relative efficiency between ECE and EMEE for estimating $\beta_0$ with sample size 30 is $(0.21 / 0.17)^2 = 1.53$.}

\revisionadded{The above results indicate that when the working model $g(H_t)^T\alpha$ is misspecified, there could be slight efficiency gain by using ECE over EMEE when both estimators are consistent for the causal excursion effect. When the working model is correctly specified, the efficiency gain by using ECE can be significant. Additional simulations in Appendix \ref{appen:simulation-all} under other generative models also support this conclusion. Thus if one had adequate data so as to consistently estimate the potentially complex, high dimensional $E(Y_{t,1} \mid H_t, A_t = 0)$ and one felt confident that there are no other covariates in $H_t$ than $S_t$ that interact with treatment (so that the conditional treatment effect and the causal excursion effect are equal with $f(H_t) = S_t$), then it could be worthwhile to use ECE to estimate the causal excursion effect.}

\begin{table}[htbp]

\caption{\label{tab:simulation2} \revisionadded{Performance of EMEE, ECE, GEE.ind, and GEE.exch for the treatment effect modification ($S_t = f(H_t) = Z_t$), when the working model is misspecified ($g(H_t)^T \alpha = \alpha_0 + \alpha_1 Z_t$).}}
\begin{center}
\resizebox{\linewidth}{!}{\begin{tabular}[t]{cccccccccccc}
\toprule
\multicolumn{1}{c}{ } & \multicolumn{1}{c}{ } & \multicolumn{5}{c}{$\beta_0$} & \multicolumn{5}{c}{$\beta_1$} \\
\cmidrule(l{2pt}r{2pt}){3-7} \cmidrule(l{2pt}r{2pt}){8-12}
Estimator & Sample size & Bias & RMSE & SD & CP (unadj) & CP (adj) & Bias & RMSE & SD & CP (unadj) & CP (adj)\\
\midrule
 & 30 & -0.02 & 0.20 & 0.20 & 0.94 & 0.95 & 0.01 & 0.13 & 0.13 & 0.94 & 0.95\\

 & 50 & -0.01 & 0.16 & 0.16 & 0.95 & 0.96 & 0.01 & 0.11 & 0.11 & 0.94 & 0.95\\

\multirow{-3}{*}{\centering\arraybackslash EMEE} & 100 & -0.01 & 0.11 & 0.11 & 0.96 & 0.96 & 0.01 & 0.07 & 0.07 & 0.95 & 0.96\\
\cmidrule{1-12}
 & 30 & -0.02 & 0.18 & 0.18 & 0.94 & 0.95 & 0.01 & 0.12 & 0.12 & \textbf{0.93} & 0.94\\

 & 50 & -0.01 & 0.15 & 0.15 & 0.94 & 0.95 & 0.00 & 0.09 & 0.09 & 0.94 & 0.94\\

\multirow{-3}{*}{\centering\arraybackslash ECE} & 100 & -0.01 & 0.10 & 0.10 & 0.96 & 0.96 & 0.01 & 0.06 & 0.06 & 0.94 & 0.95\\
\cmidrule{1-12}
 & 30 & \textbf{0.14} & 0.21 & 0.15 & \textbf{0.82} & \textbf{0.85} & \textbf{-0.12} & 0.15 & 0.08 & \textbf{0.75} & \textbf{0.78}\\

 & 50 & \textbf{0.15} & 0.19 & 0.12 & \textbf{0.75} & \textbf{0.77} & \textbf{-0.12} & 0.14 & 0.07 & \textbf{0.60} & \textbf{0.63}\\

\multirow{-3}{*}{\centering\arraybackslash GEE.ind} & 100 & \textbf{0.15} & 0.17 & 0.08 & \textbf{0.57} & \textbf{0.58} & \textbf{-0.12} & 0.13 & 0.05 & \textbf{0.33} & \textbf{0.34}\\
\cmidrule{1-12}
 & 30 & \textbf{0.14} & 0.21 & 0.15 & \textbf{0.82} & \textbf{0.85} & \textbf{-0.12} & 0.15 & 0.08 & \textbf{0.75} & \textbf{0.77}\\

 & 50 & \textbf{0.15} & 0.19 & 0.12 & \textbf{0.75} & \textbf{0.77} & \textbf{-0.12} & 0.14 & 0.07 & \textbf{0.60} & \textbf{0.62}\\

\multirow{-3}{*}{\centering\arraybackslash GEE.exch} & 100 & \textbf{0.15} & 0.17 & 0.08 & \textbf{0.57} & \textbf{0.58} & \textbf{-0.12} & 0.13 & 0.05 & \textbf{0.33} & \textbf{0.34}\\
\bottomrule
\end{tabular}}
\end{center}
\small{* EMEE: the estimator of the marginal excursion effect proposed in Section \ref{sec:marginal-estimator}. ECE: the semiparametric, locally efficient estimator of the conditional effect described in Section \ref{sec:semiparametric}. GEE.ind: GEE with independence working correlation structure. GEE.exch: GEE with exchangeable working correlation structure.
SD: standard deviation. RMSE: root mean squared error. CP: 95\% confidence interval coverage probability, before (unadj) and after (adj) small sample correction. Boldface indicates when Bias or CP are significantly different, at the 5\% level, from 0 or 0.95, respectively. Sample size refers to the number of individuals in each simulated trial.}
\end{table}

\begin{table}[htbp]

\caption{\label{tab:simulation2.5} \revisionadded{Performance of EMEE, ECE, GEE.ind, and GEE.exch for the treatment effect modification ($S_t = f(H_t) = Z_t$), when the working model is correctly specified ($g(H_t)^T \alpha = \alpha_0 + \alpha_1 Z_t + \alpha_2 \indic_{Z_t = 2}$)}}
\begin{center}
\resizebox{\linewidth}{!}{
\begin{tabular}{cccccccccccc}
\toprule
\multicolumn{1}{c}{ } & \multicolumn{1}{c}{ } & \multicolumn{5}{c}{$\beta_0$} & \multicolumn{5}{c}{$\beta_1$} \\
\cmidrule(l{3pt}r{3pt}){3-7} \cmidrule(l{3pt}r{3pt}){8-12}
Estimator & Sample size & Bias & RMSE & SD & CP (unadj) & CP (adj) & Bias & RMSE & SD & CP (unadj) & CP (adj)\\
\midrule
 & 30 & 0.00 & 0.21 & 0.21 & \textbf{0.93} & 0.94 & 0.00 & 0.13 & 0.13 & 0.94 & 0.95\\

 & 50 & 0.00 & 0.16 & 0.16 & 0.94 & 0.95 & 0.00 & 0.11 & 0.11 & 0.94 & 0.95\\

\multirow{-3}{*}{\centering\arraybackslash EMEE} & 100 & 0.00 & 0.11 & 0.11 & 0.95 & 0.95 & 0.00 & 0.07 & 0.07 & 0.95 & 0.95\\
\cmidrule{1-12}
 & 30 & 0.01 & 0.17 & 0.17 & 0.94 & 0.95 & -0.01 & 0.12 & 0.12 & \textbf{0.93} & 0.94\\

 & 50 & 0.00 & 0.13 & 0.13 & 0.94 & 0.95 & 0.00 & 0.09 & 0.09 & 0.94 & 0.95\\

\multirow{-3}{*}{\centering\arraybackslash ECE} & 100 & 0.00 & 0.10 & 0.10 & 0.94 & 0.95 & 0.00 & 0.06 & 0.06 & 0.95 & 0.95\\
\cmidrule{1-12}
 & 30 & 0.01 & 0.17 & 0.17 & 0.94 & 0.95 & -0.01 & 0.12 & 0.12 & \textbf{0.93} & 0.94\\

 & 50 & 0.00 & 0.13 & 0.13 & 0.94 & 0.95 & 0.00 & 0.09 & 0.09 & 0.94 & 0.95\\

\multirow{-3}{*}{\centering\arraybackslash GEE.ind} & 100 & 0.00 & 0.10 & 0.10 & 0.94 & 0.94 & 0.00 & 0.06 & 0.06 & 0.94 & 0.95\\
\cmidrule{1-12}
 & 30 & 0.01 & 0.17 & 0.17 & 0.94 & 0.95 & -0.01 & 0.12 & 0.12 & \textbf{0.93} & 0.94\\

 & 50 & 0.00 & 0.13 & 0.13 & 0.94 & 0.95 & 0.00 & 0.09 & 0.09 & 0.94 & 0.95\\

\multirow{-3}{*}{\centering\arraybackslash GEE.exch} & 100 & 0.00 & 0.10 & 0.10 & 0.94 & 0.94 & 0.00 & 0.06 & 0.06 & 0.95 & 0.95\\
\bottomrule
\end{tabular}}
\end{center}
\small{* EMEE: the estimator of the marginal excursion effect proposed in Section \ref{sec:marginal-estimator}. ECE: the semiparametric, locally efficient estimator of the conditional effect described in Section \ref{sec:semiparametric}. GEE.ind: GEE with independence working correlation structure. GEE.exch: GEE with exchangeable working correlation structure.
SD: standard deviation. RMSE: root mean squared error. CP: 95\% confidence interval coverage probability, before (unadj) and after (adj) small sample correction.
Boldface indicates when Bias or CP are significantly different, at the 5\% level, from 0 or 0.95, respectively.
Sample size refers to the number of individuals in each simulated trial.}
\end{table}

\section{Application}
\label{sec:application}


BariFit is a 16-week MRT conducted in 2017 by Kaiser Permanente, which aimed to promote weight maintenance for those who went through Bariatric surgery \citep{BariFit}. In this section, we assess the effect of the food tracking reminder on individual's food log completion rate using estimation methods proposed in this paper. The data set contains 45 participants. The food tracking reminder was randomly delivered to each participant with probability 0.5 every morning as a text message. Because of the form of the intervention, all participants were available for this intervention throughout the study; i.e., $I_t \equiv 1$. The binary proximal outcome, food log completion, is coded as 1 for a day if a participant logged $>0$ calories in the Fitbit app on that day, and 0 otherwise. \revisionadded{The food log completion rate averaged over all participant-days where a reminder is delivered is 0.512, and the food log completion rate averaged over all participant-days where a reminder is not delivered is 0.5118. An exploratory analysis indicates that the effect of the reminder seems to vary greatly among participants, and although there is an overall decreasing trend of the food log completion rate with day-in-study, there is no obvious pattern in terms of how the effect of the reminder varies with day-in-study. Details of the exploratory analysis are in Appendix \ref{appen:eda-barifit}.}


\revisionadded{We used EMEE for assessing the marginal excursion effect as well as the effect moderation by certain baseline and time-varying covariates. For an individual, denote by ``$\text{Day}_t$'' the day-in-study of the $t$-th time point (coded as $0,1,\ldots,111$; i.e., $\text{Day}_t = t - 1$), ``$\text{Gen}$'' the gender of the individual, and $Y_{t,1}$ the indicator of the individual completing their food log for $\text{Day}_t$. For all the analyses in this section, we always included $\text{Day}_t$, $\text{Gen}$, and $Y_{t-1,1}$ (lag-1 outcome) in the control variables $g(H_t)$, as they are prognostic of $Y_{t,1}$ in a preliminary generalized estimating equation fit (results not presented here).}

\revisionadded{We analyze the marginal excursion effect $\beta_0$ of the food tracking reminder on food log completion by setting $S_t = 1$ with the analysis model
\begin{equation*}
    \log \frac{E\{ E(Y_{t,1} \mid H_t, A_t = 1) \}}{ E \{ E(Y_{t,1} \mid H_t, A_t = 0) \}} = \beta_0.
\end{equation*}
We analyze the effect moderation by $S_t$ with $S_t = \text{Day}_t$, $S_t = \text{Gen}$, and $S_t = Y_{t-1,1}$, respectively, with the analysis model
\begin{equation*}
    \log \frac{E\{ E(Y_{t,1} \mid H_t, A_t = 1) \mid S_t \}}{ E \{ E(Y_{t,1} \mid H_t, A_t = 0) \mid S_t \}} = \beta_0 + \beta_1 S_t.
\end{equation*}
Results of the analysis are presented in Table \ref{tab:data-analysis-full}. Neither the marginal excursion effect nor the effect moderation by any of the three moderators are significantly different from zero.}

\begin{table}[htbp]
\caption{\label{tab:data-analysis-full} Analysis result for marginal excursion effect and effect moderation for the effect of food tracking reminder on food log completion rate in BariFit MRT. Estimates reported are on the log relative risk scale.}
\begin{center}
\resizebox{\linewidth}{!}{\begin{tabular}[t]{lcccccccc}
\toprule
\multicolumn{1}{c}{ } & \multicolumn{4}{c}{$\beta_0$} & \multicolumn{4}{c}{$\beta_1$} \\
\cmidrule(l{2pt}r{2pt}){2-5} \cmidrule(l{2pt}r{2pt}){6-9}
Analysis model & Estimate & SE &  95\% CI & $p$-value & Estimate & SE &  95\% CI & $p$-value\\
\midrule
$S_t = 1$            & 0.014 & 0.021 & (-0.028, 0.056) & 0.50 & - & - & - & - \\
$S_t = \text{Day}_t$ & 0.035 & 0.031 & (-0.028, 0.098) & 0.27 & -0.0005 & 0.0007 & (-0.0018, 0.0009) & 0.49 \\
$S_t = \text{Gen}$   & -0.006 & 0.017 & (-0.041, 0.029) & 0.75 & 0.026 & 0.032 & (-0.039, 0.091) & 0.43 \\
$S_t = Y_{t-1,1}$    & 0.017 & 0.095 & (-0.175, 0.209) & 0.86 & -0.003 & 0.094 & (-0.193, 0.186) & 0.97 \\
\bottomrule
\end{tabular}}
\end{center}
\small{* SE: standard error. 95\% CI: 95\% confidence interval. SE, 95\% CI and $p$-value are based on small sample correction described in Section \ref{subsec:simu-overview}.}
\end{table}



The result indicates that no effect of the food tracking reminder is detectable from the data. There are two possible reasons for the result, which are interrelated. One is an insufficient sample size; this study was not sized to test this particular hypothesis (instead, it was sized to test for other intervention components).  The other reason is that the true effect may be small or zero. These findings may inform the next iteration of BariFit study in the following ways. If the researchers want to improve the effectiveness of the food tracking reminder, they may consider implementing it as a notification with a smartphone app. The current reminder is sent as text message, which cannot be tailored to the individual's current context such as location or weather. Such tailoring may improve effectiveness of the reminder. Alternatively, if the researchers no longer wish to investigate the proximal effect of the food tracking reminder, they may choose not to randomize it in the next iteration of BariFit. This might be done by either combining the food tracking reminder with other messages that will be sent in the morning, or to remove the food tracking reminder completely from the intervention. This can help to reduce the burden of the mHealth intervention on the individual.

\section{Discussion}
\label{sec:discussion}

The causal excursion effect $\beta_M(t, S_t)$ defined in this paper is different from the majority of the literature on causal inference in the longitudinal setting \citep{robins1994snmm, robins2000marginal, van2003unified}. Rather than a contrast of the expected outcome under two fixed treatment trajectories, the causal excursion effect is a contrast of two ``excursions'' from the current treatment protocol into the future. In the two excursions, the past treatments are stochastic (with randomization probability determined by the study design), and their distribution is usually integrated over in the marginalization. We argue the causal excursion effect is a suitable estimand for the primary and secondary analyses in MRT for the following reasons. 
\revisionaddedtwo{
First, the causal excursion effect approximates an effect in real world implementation of an mHealth intervention. Because designing an MRT incorporates considerations regarding how the time-varying treatment would be implemented in real life (via considerations of user burden, habituation, etc.), the treatment assignment protocol in an MRT should be a protocol that might be plausibly implemented. For example, domain scientists may aim to deliver around an average of one push notification every other day based on burden considerations; this then is reflected in the MRT treatment protocol via the choice of randomization probability. 
Second, the causal excursion effect provides an indication of effective ``deviations'' from the current treatment protocol, or how it might be improved. Because of its interpretation as ``excursions'' from the current treatment protocol, the causal excursion effect tells us whether a treatment is worth further consideration (via the analysis of the fully marginal effect) and gives an indication of  whether the treatment protocol should be further modified depending on time-varying covariates (via the analysis of effect modification). 
Third, it naturally extends the traditional analyses used in (screening) fractional factorial designs to include marginalization not only over other factors/covariates but also over time.
Lastly, the marginal aspect of the causal excursion effect allows for us to design trials with higher power to detect a meaningful effect with a practical sample size, than if we focus on the alternative meaningful effect conditional on the full history, $\beta_C(t, H_t)$.}

\revisionadded{When is the other effect considered in the paper, the treatment effect conditional on the full history $\beta_C(t, H_t)$, relevant in the analysis of MRT? Because consistent estimation for and interpretation of $\beta_C(t, H_t)$ rely on model \eqref{eq:assumption-conditional} being correct, it is usually not suitable for (often pre-specified) primary and secondary analyses. However, it can be useful for exploratory analysis and hypothesis generation when flexible models such as splines or machine learning algorithms may be used to analyze MRT data, especially if the sample size is large (e.g., when an MRT is deployed through publicly available smartphone applications with thousands of users or more).}

Throughout we treated the model for the proximal outcome under no treatment, $E\{Y_{t,\Delta}(\bA_{t-1},0,\bar{0}) \mid H_t, I_t = 1, A_t = 0\}$, as a nuisance parameter, and used a working model $\exp\{g(H_t)^T \alpha\}$ for this nuisance parameter to reduce noise. In a series of works considering modeling of the treatment effect on a binary outcome in both cross-sectional \citep{richardson2017} and longitudinal settings \citep{Wang2017arXiv}, those authors propose to instead use log odds-product as the nuisance parameter. This way the nuisance parameter is no longer constrained by the treatment effect model on the relative risk scale. (As discussed by these authors, the valid range of $E\{Y_{t,\Delta}(\bA_{t-1},0,\bar{0}) \mid H_t, I_t = 1, A_t = 0\}$ is constrained by the treatment effect model on the relative risk scale, because $E\{Y_{t,\Delta}(\bA_{t-1},1,\bar{0}) \mid H_t, I_t = 1, A_t = 1\}$ must be within $[0,1]$.) We agree that this congeniality issue is critical when prediction is the goal as the nuisance part of the model would then be of interest, or when the consistency of the estimator for the parameters in the treatment effect depends on the correct specification of the nuisance part of the model. In the analysis of MRT data, however, the nuisance part of the model is of minimal interest, and more importantly consistency of the estimation methods developed in this paper do not depend on the correct specification of the nuisance part of the model. Therefore, since the purpose of modeling the nuisance parameter is to reduce noise, we choose to treat $E\{Y_{t,\Delta}(\bA_{t-1},0,\bar{0}) \mid H_t, A_t = 0\}$ as a nuisance parameter, because the interpretability makes it easier for domain scientists to model. \revisionadded{Admittedly, the estimated probability exceeding $[0,1]$ can sometimes cause numerical instability in the semiparametric, locally efficient estimator described in Section \ref{sec:semiparametric}. We addressed this by using a truncation-based implementation in \eqref{eq:weight-mod} or a two-step implementation in Appendix \ref{appen:ECE-alternative-implementation}. Alternative solutions include using congeniel parametrization such as that in \citet{richardson2017}.}

\revisionadded{As pointed out by a reviewer, which we also found in our communication with domain scientists, that the consideration of availability may raise concern of generalizability of the estimated effect. In an MRT, availability, just as the time-varying proximal outcome, is a time-varying outcome. To check for generalizability, baseline variables that are thought to be related to the time-varying outcomes (including the proximal outcome and availability) will be collected in the MRT, and the distribution of those baseline variables will be compared to the distribution of them in the target population. This is similar to what one would do if there are concerns about the generalizability of results of a standard clinical trial to a particular population.}

There are a few directions for future research. First, we have assumed binary treatment in the paper. Extension to treatment with multiple levels could involve modeling the treatment effect (defined as contrast to a reference level) as a function of the treatment level. Second, we have focused on estimating the marginal excursion effect. An interesting extension is to introduce random effects to the excursion effect and allow person-specific predictions. \revisionadded{For example, the exploratory analysis in Appendix \ref{appen:eda-barifit} implies possible treatment effect heterogeneity among individuals in the BariFit data set, and random effects models can be one way to account for such heterogeneity.} With random effects it would be nontrivial to deal with both the nonlinear link function as well as the marginalization.
Third, since there are numerous variables that can be potentially included in $g(H_t)$ for noise reduction,  one could, because of the high dimensionality of $H_t$, consider penalization methods for model selection in building the working model $g(H_t)^T \alpha$.
\revisionadded{Fourth, it would be useful to formulate how to best use the analyses proposed here to inform decision making for future implementation of the mHealth intervention, such as forming a warm-start policy for the mHealth intervention that utilizes reinforcement learning. For example, suppose a moderation analysis finds that the treatment effect is higher on weekends than on weekdays, then one might use weekend/weekday as a tailoring variable of a decision rule.  An active area of current research is how to best use this type of data to form a treatment policy \citep{luckett2019estimating}.}
\revisionadded{Fifth, it would be useful to develop an easily available sample size calculator that provides the  sample size required to achieve a desired power to test for marginal excursion effects.}


\revisionadded{As suggested by a reviewer, we point out that the purpose of focusing on causal excursion effects is for the primary and secondary analyses. Absent additional assumptions, the analysis results cannot be directly used, in general, to form the optimal treatment strategy (``optimal'' in the sense of maximizing certain summary measure at the end of the trial such as the sum of the proximal outcomes over the time points) because the marginal effects do not condition on the full past history. Other methods such as reinforcement learning may be used to estimate the optimal treatment strategy, and this is an active area of research in mHealth.}

Finally, we note that we used an preliminary version of the estimator for the marginal excursion effect in analyzing the effect of push notification on user engagement in \citet{bidargaddi2018prompt}.


\section*{Acknowledgement}

Research reported in this paper was supported by National Institute on Alcohol Abuse and Alcoholism (NIAAA) of the National Institutes of Health under award number R01AA23187, National Institute on Drug Abuse (NIDA) of the National Institutes of Health under award numbers P50DA039838 and R01DA039901, National Institute of Biomedical Imaging and Bioengineering (NIBIB) of the National Institutes of Health under award number U54EB020404, National Cancer Institute (NCI) of the National Institutes of Health under award number U01CA229437, and National Heart, Lung, and Blood Institute (NHLBI) of the National Institutes of Health under award number R01HL125440. The content is solely the responsibility of the authors and does not necessarily represent the official views of the National Institutes of Health.

\section*{Supplementary material}

Supplementary material includes proof of the identifiability result \eqref{eq:identifiability} (Appendix \ref{appen:identifiability}), proof of Theorem \ref{thm:asymptotics-conditional} (Appendix \ref{appen:proof-asymptotics-conditional}), proof of Theorem \ref{thm:asymptotics-marginal} (Appendix \ref{appen:proof-asymptotics-marginal}), the form of the limit of $\hat\psi$ when the conditional treatment effect model for $\beta_C(t,H_t)$ is misspecified (Appendix \ref{appen:projection-conditional}, the form of the limit of $\hat\beta$ in Remark \ref{rem:projection-marginal} for general $\Delta$ (Appendix \ref{appen:projection-marginal}), additional simulation study results (Appendix \ref{appen:simulation-all}), an alternative, two-step implementation of the ECE estimator that does not rely on weight truncation (Appendix \ref{appen:ECE-alternative-implementation}), exploratory analysis result for the BariFit data set (Appendix \ref{appen:eda-barifit}), and proof of Theorem \ref{thm:efficient-score} (Appendix \ref{appen:proof-efficient-score}).

\bibliographystyle{biometrika}

\bibliography{binary-outcome-ref}

\newpage

\appendix

\section*{Appendix}

\numberwithin{equation}{section}
\numberwithin{theorem}{section}
\numberwithin{assumption}{section}
\numberwithin{remark}{section}
\numberwithin{table}{section}
\numberwithin{figure}{section}

\section{Proof of identifiability result \eqref{eq:identifiability}}
\label{appen:identifiability}

\begin{lemma} \label{lem:iden-proofuse}
For any $1 \leq k \leq \Delta$, we have
\begin{align}
& E\{Y_{t,\Delta}(\bA_{t-1},a,\bar{0})\mid H_t,A_t=a,I_t=1 \} \nonumber \\
= & E\bigg\{\prod_{j=t+1}^{t + k - 1}\frac{\indic(A_{j}=0)}{1-p_j(H_{j})}Y_{t,\Delta}(\bA_{t-1},a,\bar{0})\bigg|A_t=a,H_t,I_t=1\bigg\}. \label{eq:lem-proofuse}
\end{align}
\end{lemma}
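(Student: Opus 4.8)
The plan is to prove \eqref{eq:lem-proofuse} by induction on $k$, peeling off one inverse‑probability‑weighting factor at a time. This is a version of the standard fact that, under sequential ignorability, one may reweight a conditional expectation of a potential outcome by inverse randomization probabilities without changing its value; the role of Assumption \ref{assumption:positivity} here is simply to guarantee that every $1-p_j(H_j)$ is bounded away from $0$, so the weights are well defined.

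For the base case $k=1$ the product $\prod_{j=t+1}^{t}$ is empty and equals $1$ by the convention adopted in the paper, so \eqref{eq:lem-proofuse} holds trivially. For the inductive step, suppose the identity holds for some $1\le k\le\Delta-1$, and write $W_k=\prod_{j=t+1}^{t+k-1}\indic(A_j=0)/\{1-p_j(H_j)\}$, so that $W_{k+1}=W_k\,\indic(A_{t+k}=0)/\{1-p_{t+k}(H_{t+k})\}$. Since $H_{t+k}=(\bX_{t+k},\bA_{t+k-1})$ contains $H_t$, the indicator $A_t$, the weight $W_k$, and $\{1-p_{t+k}(H_{t+k})\}^{-1}$, the tower property gives
\[
E\{W_{k+1}Y_{t,\Delta}(\bA_{t-1},a,\bar{0})\mid A_t=a,H_t,I_t=1\}
= E\!\left[\frac{W_k}{1-p_{t+k}(H_{t+k})}\,E\{\indic(A_{t+k}=0)Y_{t,\Delta}(\bA_{t-1},a,\bar{0})\mid H_{t+k}\}\,\Big|\,A_t=a,H_t,I_t=1\right].
\]
Hence it suffices to show that, on the event $\{A_t=a\}\cap\{A_{t+1}=\cdots=A_{t+k-1}=0\}$ (where $W_k\neq0$), the inner conditional expectation equals $\{1-p_{t+k}(H_{t+k})\}\,E\{Y_{t,\Delta}(\bA_{t-1},a,\bar{0})\mid H_{t+k}\}$; substituting this collapses $W_{k+1}$ back to $W_k$, and the induction hypothesis finishes the argument.

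To establish that reduction I would use Assumption \ref{assumption:consistency} to rewrite $Y_{t,\Delta}(\bA_{t-1},a,\bar{0})$, on the relevant event, in terms of the observed history $H_{t+k}$ (already conditioned upon) together with potential outcomes $X_{t+k+1}(\ba_{t+k}),A_{t+k+1}(\ba_{t+k}),\dots$ lying strictly in the future of time $t+k$; Assumption \ref{assumption:seqignorability} then says these are independent of $A_{t+k}$ given $H_{t+k}$, so $Y_{t,\Delta}(\bA_{t-1},a,\bar{0})$ is conditionally independent of $A_{t+k}$ given $H_{t+k}$ on that event. Combined with $p_{t+k}(H_{t+k})=P(A_{t+k}=1\mid H_{t+k})$ (known by the MRT design), this yields $E\{\indic(A_{t+k}=0)Y_{t,\Delta}(\bA_{t-1},a,\bar{0})\mid H_{t+k}\}=\{1-p_{t+k}(H_{t+k})\}\,E\{Y_{t,\Delta}(\bA_{t-1},a,\bar{0})\mid H_{t+k}\}$, as needed.

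The main obstacle is the bookkeeping around potential‑outcome notation with a stochastic treatment index: making rigorous the consistency substitution that identifies the observed $H_{t+k}$ with the potential history under the path $(\bA_{t-1},a,0,\dots,0)$ on the support of $W_k$ (using that $\indic(A_{t+1}=\cdots=A_{t+k-1}=0)$ is $H_{t+k}$‑measurable), and splitting $Y_{t,\Delta}(\bA_{t-1},a,\bar{0})$ into an $H_{t+k}$‑measurable part and a genuinely post‑$(t+k)$ part to which sequential ignorability applies. Everything else — the tower‑property manipulation, the empty‑product base case, and the cancellation of the extra weight factor — is routine.
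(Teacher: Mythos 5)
Your proposal is correct and follows essentially the same route as the paper's proof: induction on $k$ with the empty-product convention handling the base case, the tower property conditioning on $H_{t+k}$, and sequential ignorability (Assumption \ref{assumption:seqignorability}) to trade the indicator $\indic(A_{t+k}=0)$ for the factor $1-p_{t+k}(H_{t+k})$, thereby absorbing or peeling off one inverse-probability weight per step. The paper runs the induction in the opposite direction (inserting a weight whose conditional expectation is one rather than integrating one out) and is terser about the consistency bookkeeping you flag, but these are presentational differences, not mathematical ones.
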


\begin{proof}[of Lemma \ref{lem:iden-proofuse}]
For $k=1$, \eqref{eq:lem-proofuse} holds because we defined $\prod_{j=t+1}^{t} \frac{\indic(A_j = 0)}{1 - p_j(H_j)} = 1$. In the following we assume $\Delta \geq 2$, and we prove the lemma by induction on $k = 1, \ldots, \Delta$.

Suppose \eqref{eq:lem-proofuse} holds for $k = k_0$ for some $1 \leq k_0 \leq \Delta - 1$. 
Denote by $\zeta = \prod_{j=t+1}^{t + k_0 - 1}\frac{\indic(A_{j}=0)}{1-p_j(H_{j})}Y_{t,\Delta}(\bA_{t-1},a,\bar{0})$. We have
\begin{align}
    & E(\zeta \mid H_{t+k_0}, A_t = a, I_t = 1) \nonumber \\
    = & E(\zeta \mid H_{t+k_0}, A_t = a, I_t = 1) \frac{E\{\indic(A_{t+k_0} = 0) \mid H_{t+k_0}, A_t = a, I_t = 1 \} }{1 - p_{t+k_0}(H_{t+k_0}, A_t = a, I_t = 1)} \nonumber \\
    = & E\bigg\{ \zeta \times \frac{\indic(A_{t+k_0} = 0)}{1 - p_{t+k_0}(H_{t+k_0}, A_t = a, I_t = 1)} \bigg| H_{t+k_0}, A_t = a, I_t = 1 \bigg\} \label{proofuse-iden-11} \\
    = & E\bigg\{ \prod_{j=t+1}^{t + k_0}\frac{\indic(A_{j}=0)}{1-p_j(H_{j})}Y_{t,\Delta}(\bA_{t-1},a,\bar{0}) \bigg| H_{t+k_0}, A_t = a, I_t = 1 \bigg\}, \nonumber
\end{align}
where \eqref{proofuse-iden-11} follows from sequential ignorability (Assumption \ref{assumption:seqignorability}). Therefore, by the induction hypothesis and the law of iterated expectation we have
\begin{align}
    & E\{Y_{t,\Delta}(\bA_{t-1},a,\bar{0})\mid H_t,A_t=a,I_t=1 \} = E(\zeta \mid H_t, A_t = a, I_t = 1) \nonumber \\
    = & E\bigg\{ \prod_{j=t+1}^{t + k_0}\frac{\indic(A_{j}=0)}{1-p_j(H_{j})}Y_{t,\Delta}(\bA_{t-1},a,\bar{0}) \bigg| H_t, A_t = a, I_t = 1 \bigg\},
\end{align}
i.e., we showed that \eqref{eq:lem-proofuse} holds for $k = k_0 + 1$. This completes the proof.
\end{proof}

\begin{proof}[of identifiability result \eqref{eq:identifiability}]

It suffices to show that under Assumptions 1-3, we have
\begin{align}
& E\{Y_{t,\Delta}(\bA_{t-1},a,\bar{0})\mid S_t(\bA_{t-1}),I_t(\bA_{t-1})=1\} \nonumber \\
= & E\bigg[E\bigg\{\prod_{j=t+1}^{t+\Delta-1}\frac{1(A_{j}=0)}{1-p_j(H_{j})}Y_{t,\Delta}\bigg|A_t=a,H_t, I_t = 1\bigg\}\bigg|S_t,I_t=1\bigg]. \label{proofuse-iden-0}
\end{align}

We have the following sequence of equality:
\begin{align}
& E\{Y_{t,\Delta}(\bA_{t-1},a,\bar{0})\mid S_t(\bA_{t-1}),I_t(\bA_{t-1})=1\} \nonumber \\
= & E[E\{Y_{t,\Delta}(\bA_{t-1},a,\bar{0})\mid H_t(\bA_{t-1}), I_t(\bA_{t-1})=1\}\mid S_t(\bA_{t-1}),I_t(\bA_{t-1})=1]   \label{proofuse-iden-1} \\
= & E[E\{Y_{t,\Delta}(\bA_{t-1},a,\bar{0})\mid H_t,I_t=1\}\mid S_t,I_t=1]   \label{proofuse-iden-2} \\
= & E[E\{Y_{t,\Delta}(\bA_{t-1},a,\bar{0})\mid H_t,A_t=a,I_t=1\}\mid S_t,I_t=1]   \label{proofuse-iden-3} \\
= & E\bigg[E\bigg\{\prod_{j=t+1}^{t+\Delta-1}\frac{1(A_{j}=0)}{1-p_j(H_{j})}Y_{t,\Delta}\bigg|A_t=a,H_t, I_t = 1\bigg\}\bigg|S_t,I_t=1\bigg], \label{proofuse-iden-4}
\end{align}
where \eqref{proofuse-iden-1} follows from the law of iterated expectation, \eqref{proofuse-iden-2} follows from consistency (Assumption \ref{assumption:consistency}), \eqref{proofuse-iden-3} follows from sequential ignorability (Assumption \ref{assumption:seqignorability}), and \eqref{proofuse-iden-4} follows from Lemma \ref{lem:iden-proofuse}. This completes the proof.

\end{proof}

\section{Proof of Theorem \ref{thm:asymptotics-conditional}}
\label{appen:proof-asymptotics-conditional}

To establish Theorem \ref{thm:asymptotics-conditional}, we assume the following regularity conditions.

\begin{assumption} \label{assp:regularity-conditional}
Suppose $(\alpha, \psi) \in \Theta$, where $\Theta$ is a compact subset of a Euclidean space. Suppose there exists unique $(\alpha', \psi') \in \Theta$ such that $E\{m_C(\alpha', \psi')\} = 0$.
\end{assumption}

\begin{assumption} \label{assp:regularity-conditional-moment}
Suppose $f(H_t)$ and $g(H_t)$ are bounded for all $t$.
\end{assumption}


\begin{lemma} \label{lem:exp-0-conditional}
Suppose \eqref{eq:assumption-conditional} and Assumptions \ref{assumption:consistency}, \ref{assumption:positivity} and \ref{assumption:seqignorability} hold. Suppose $\psi^*$ is the value of $\psi$ corresponding to the data generating distribution, $P_0$. For an arbitrary $\alpha$, we have
\begin{align}
    E[I_t e^{-A_t f(H_t)^T \psi^*}\{ Y_{t,1} - e^{g(H_t)^T \alpha + A_t f(H_t)^T \psi^*}\} \tilde{K}_t \{ A_t - p_t(H_t) \} f(H_t)] = 0.
\end{align}
\end{lemma}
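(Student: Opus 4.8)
The plan is to condition on $H_t$, observe that every factor in the summand other than the ``blipped-down residual'' $e^{-A_t f(H_t)^T\psi^*}\{Y_{t,1}-e^{g(H_t)^T\alpha+A_tf(H_t)^T\psi^*}\}$ and the centered indicator $A_t-p_t(H_t)$ is $H_t$-measurable, and then perform a further tower step over $A_t$ so that the centering annihilates the expression. Concretely, note that $I_t$ is a component of $X_t$ (hence of $H_t$), and that $f(H_t)$, $g(H_t)$, $p_t(H_t)$, and therefore $\tilde{K}_t$ (a fixed deterministic function of these together with the fixed $\alpha$ and $\psi^*$) are all functions of $H_t$. By the law of iterated expectation the left-hand side equals
\begin{align*}
E\Big[ I_t \tilde{K}_t f(H_t)\, E\big[ e^{-A_t f(H_t)^T\psi^*}\{Y_{t,1} - e^{g(H_t)^T\alpha + A_t f(H_t)^T\psi^*}\}\{A_t - p_t(H_t)\} \,\big|\, H_t \big] \Big].
\end{align*}
On $\{I_t=0\}$ the outer factor $I_t$ vanishes, so it suffices to show the inner conditional expectation is $0$ on $\{I_t=1\}$.

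Next I would condition further on $A_t$, writing the inner conditional expectation on $\{I_t=1\}$ as
\begin{align*}
E\big[ \{A_t - p_t(H_t)\}\, E[ e^{-A_t f(H_t)^T\psi^*}\{Y_{t,1} - e^{g(H_t)^T\alpha + A_t f(H_t)^T\psi^*}\} \mid H_t, A_t, I_t = 1] \,\big|\, H_t, I_t = 1 \big].
\end{align*}
Model \eqref{eq:assumption-conditional} says $E(Y_{t,1}\mid A_t=1, H_t, I_t=1) = e^{f(H_t)^T\psi^*} E(Y_{t,1}\mid A_t=0, H_t, I_t=1)$, from which a one-line check gives $E[e^{-A_t f(H_t)^T\psi^*}Y_{t,1}\mid H_t, A_t, I_t=1] = e^{\mu(H_t)}$ for both values of $A_t$ — this is precisely the blipped-down-outcome property recorded in Remark \ref{rem:robust-m-C} — while trivially $E[e^{-A_t f(H_t)^T\psi^*}e^{g(H_t)^T\alpha + A_t f(H_t)^T\psi^*}\mid H_t, A_t, I_t=1] = e^{g(H_t)^T\alpha}$. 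Hence the inner conditional expectation equals $e^{\mu(H_t)} - e^{g(H_t)^T\alpha}$, which is free of $A_t$, so it factors out of the outer conditional expectation, leaving $\{e^{\mu(H_t)} - e^{g(H_t)^T\alpha}\}\, E[A_t - p_t(H_t)\mid H_t, I_t=1] = 0$ because $E(A_t\mid H_t, I_t=1) = p_t(H_t)$.

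I do not anticipate a genuine obstacle; the only points requiring care are bookkeeping ones — verifying the $H_t$-measurability of $I_t$, $\tilde{K}_t$, and $f(H_t)$ so they pass outside the conditional expectation, handling the $\{I_t=0\}$ event via the explicit $I_t$ factor, and invoking \eqref{eq:assumption-conditional} at the right place to see that the conditional mean of the blipped-down residual given $(H_t,A_t)$ does not depend on $A_t$. I would emphasize in the write-up that the argument holds for \emph{every} $\alpha$ and uses no correctness of the working model $\exp\{g(H_t)^T\alpha\}$; this is exactly the robustness property that underlies Theorem \ref{thm:asymptotics-conditional} and Remark \ref{rem:robust-m-C}.
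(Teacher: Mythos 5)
Your proposal is correct and follows essentially the same route as the paper's proof: both iterate expectations first over $H_t$ and then over $A_t$, and both invoke model \eqref{eq:assumption-conditional} at exactly the point where the blipped-down outcome's conditional mean is shown to be $e^{\mu(H_t)}$ regardless of $A_t$. The only difference is organizational --- the paper expands the inner expectation into its $A_t=1$ and $A_t=0$ pieces and cancels the $e^{g(H_t)^T\alpha}$ contributions explicitly, whereas you observe that the whole residual has conditional mean $e^{\mu(H_t)}-e^{g(H_t)^T\alpha}$ free of $A_t$ and let the centering $A_t-p_t(H_t)$ annihilate it, which is a slightly cleaner packaging of the same computation.
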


\begin{proof}[of Lemma \ref{lem:exp-0-conditional}]
By the law of iterated expectation we have
\begin{align}
    & E[I_t e^{-A_t f(H_t)^T \psi^*}\{ Y_{t,1} - e^{g(H_t)^T \alpha + A_t f(H_t)^T \psi^*}\} \tilde{K}_t \{ A_t - p_t(H_t) \} f(H_t)] \nonumber \\
    = & E\big( E[I_t e^{-A_t f(H_t)^T \psi^*}\{ Y_{t,1} - e^{g(H_t)^T \alpha + A_t f(H_t)^T \psi^*}\} \tilde{K}_t \{ A_t - p_t(H_t) \} f(H_t) \mid H_t] \big) \nonumber \\
    = & E\big(E[e^{-A_t f(H_t)^T \psi^*}\{ Y_{t,1} - e^{g(H_t)^T \alpha + A_t f(H_t)^T \psi^*}\} \{ A_t - p_t(H_t) \} \mid H_t, I_t = 1] I_t \tilde{K}_t f(H_t) \big) \nonumber \\
    = & E\big(E[e^{- f(H_t)^T \psi^*}\{ Y_{t,1} - e^{g(H_t)^T \alpha + f(H_t)^T \psi^*}\} \{ 1 - p_t(H_t) \} \mid H_t, I_t = 1, A_t = 1] I_t p_t(H_t) \tilde{K}_t f(H_t) \big) \nonumber \\
    & - E\big(E[\{ Y_{t,1} - e^{g(H_t)^T \alpha}\} p_t(H_t) \mid H_t, I_t = 1, A_t = 0]\{1 - p_t(H_t)\} I_t \tilde{K}_t f(H_t) \big) \nonumber \\
    = & E [ \{ e^{- f(H_t)^T \psi^*} E(  Y_{t,1} \mid H_t, I_t = 1, A_t = 1 ) - E( Y_{t,1} \mid H_t, I_t = 1, A_t = 0 )\} \nonumber \\
    & \times I_t p_t(H_t) \{1 - p_t(H_t)\} \tilde{K}_t f(H_t) ] \label{eq:projection-derivation-conditional} \\
    = & 0, \nonumber
\end{align}
where the last equality follows from \eqref{eq:assumption-conditional}. This completes the proof.
\end{proof}

\begin{proof}[of Theorem \ref{thm:asymptotics-conditional}]

Assumption \ref{assp:regularity-conditional} implies that $(\hat\alpha, \hat\psi)$ converges in probability to $(\alpha', \psi')$, by Theorem 5.9 and Problem 5.27 of \citet{van2000asymptotic}. Because $m_C(\alpha,\psi)$ is continuously differentiable and hence Lipschitz continuous, Theorem 5.21 of \citet{van2000asymptotic} implies that $\sqrt(n)\{(\hat\alpha, \hat\psi) - (\alpha', \psi')\}$ is asymptotically normal with mean zero and covariance matrix $[E \{\dot{m}_C(\alpha', \psi')\}]^{-1} E \{m_C(\alpha', \psi') m_C(\alpha', \psi')^T\} [E \{\dot{m}_C(\alpha', \psi')\}]^{-1^T}$. By the law of large numbers and Slutsky's theorem, this covariance matrix can be consistently estimated by $\{\PP_n \dot{m}_C(\hat\alpha, \hat\psi)\}^{-1} \{\PP_n m_C(\hat\alpha, \hat\psi) m_C(\hat\alpha, \hat\psi)^T\} \{\PP_n \dot{m}_C(\hat\alpha, \hat\psi)\}^{-1^T}$. Furthermore, Assumption \ref{assp:regularity-conditional} and Lemma \ref{lem:exp-0-conditional} imply that $\psi^* = \psi'$, so we proved the asymptotic normality of $\hat\psi$. When $g(H_t)^T \alpha$ is a correct model for $\mu(H_t)$, that $\hat\psi$ attains the semiparametric efficiency bound follows from Theorem \ref{thm:efficient-score}. This completes the proof.

\end{proof}

\section{Proof of Theorem \ref{thm:asymptotics-marginal}}
\label{appen:proof-asymptotics-marginal}

The proof of Theorem \ref{thm:asymptotics-marginal} is similar to the proof of Theorem \ref{thm:asymptotics-conditional}. To establish Theorem \ref{thm:asymptotics-marginal}, we assume the following regularity conditions.

\begin{assumption} \label{assp:regularity-marginal}
Suppose $(\alpha, \beta) \in \Theta$, where $\Theta$ is a compact subset of a Euclidean space. Suppose there exists unique $(\alpha', \beta') \in \Theta$ such that $E\{m_M(\alpha', \beta')\} = 0$.
\end{assumption}

\begin{assumption} \label{assp:regularity-marginal-moment}
Suppose $S_t$, $\exp(S_t)$, $g(H_t)$ and $\exp\{g(H_t)\}$ all have finite forth moment.
\end{assumption}

\begin{lemma} \label{lem:exp-0-marginal}
Suppose \eqref{eq:model-maginal-binary} and Assumptions \ref{assumption:consistency}, \ref{assumption:positivity} and \ref{assumption:seqignorability} hold. Suppose $\beta^*$ is the value of $\beta$ corresponding to the data generating distribution, $P_0$. For an arbitrary $\alpha$, we have
\begin{align}
    E[I_t e^{-A_t S_t^T \beta^*}\{ Y_{t,1} - e^{g(H_t)^T \alpha + A_t S_t^T \beta^*}\} J_t \{ A_t - \tilde{p}_t(S_t) \} S_t] = 0.
\end{align}
\end{lemma}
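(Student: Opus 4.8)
The plan is to mirror the proof of Lemma~\ref{lem:exp-0-conditional}: apply the law of iterated expectations, conditioning first on $H_t$ (which contains $I_t$), and exploit the fact that the weight $J_t$ is constructed precisely so as to replace the true randomization probability $p_t(H_t)$ by $\tilde{p}_t(S_t)$. First I would note that the summand vanishes on $\{I_t = 0\}$, since there $A_t = 0$ and the leading factor $I_t$ is zero, so it suffices to compute the conditional expectation given $H_t$ on the event $\{I_t = 1\}$. Because $S_t$ and $\tilde{p}_t(S_t)$ are functions of $H_t$, I would factor them out and split the remaining expectation over $A_t \in \{0,1\}$ (with $A_t \mid H_t, I_t = 1 \sim \text{Bernoulli}(p_t(H_t))$). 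In the $A_t = 1$ branch the factor $\tilde{p}_t(S_t)/p_t(H_t)$ in $J_t$ cancels against $P(A_t = 1 \mid H_t, I_t = 1) = p_t(H_t)$, and likewise in the $A_t = 0$ branch $\{1 - \tilde{p}_t(S_t)\}/\{1 - p_t(H_t)\}$ cancels against $1 - p_t(H_t)$; what remains in the two branches are the weights $+\tilde{p}_t(S_t)\{1 - \tilde{p}_t(S_t)\}$ and $-\tilde{p}_t(S_t)\{1 - \tilde{p}_t(S_t)\}$ (the latter because $A_t - \tilde{p}_t(S_t) = -\tilde{p}_t(S_t)$), multiplying $e^{-A_t S_t^T \beta^*}\{Y_{t,1} - e^{g(H_t)^T \alpha + A_t S_t^T \beta^*}\}$ with $Y_{t,1}$ replaced by its conditional mean.

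The key algebraic point is the cancellation of the nuisance term: after $e^{-A_t S_t^T\beta^*}$ cancels the $e^{A_t S_t^T\beta^*}$ inside the working model, each branch contributes a summand proportional to $e^{g(H_t)^T \alpha}$, with opposite signs, and these cancel when the two branches are added; this is exactly the robustness to misspecification of $\alpha$. What survives is
\begin{align*}
& E\big[ I_t e^{-A_t S_t^T \beta^*}\{ Y_{t,1} - e^{g(H_t)^T \alpha + A_t S_t^T \beta^*}\} J_t \{ A_t - \tilde{p}_t(S_t) \} S_t \,\big|\, H_t \big] \\
& \qquad = \indic(I_t = 1)\, \tilde{p}_t(S_t)\{1 - \tilde{p}_t(S_t)\}\, S_t \big\{ e^{-S_t^T \beta^*} E(Y_{t,1} \mid H_t, A_t = 1, I_t = 1) - E(Y_{t,1} \mid H_t, A_t = 0, I_t = 1) \big\}.
\end{align*}
For $\Delta > 1$ the extra factor $\prod_{j=t+1}^{t+\Delta-1}\indic(A_j=0)/\{1-p_j(H_j)\}$ in $J_t$ is simply carried along, and the two conditional means become $E\{\prod_{j=t+1}^{t+\Delta-1}\indic(A_j=0)/(1-p_j(H_j))\,Y_{t,\Delta} \mid H_t, A_t = a, I_t = 1\}$, which is still a function of $H_t$; the rest of the argument is unchanged.

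To conclude, I would take the outer expectation and condition once more on $(S_t, I_t)$. Since $\tilde{p}_t(S_t)$ and $S_t$ are $\sigma(S_t)$-measurable they come outside the inner conditional expectation, leaving the factor $e^{-S_t^T \beta^*} E\{E(Y_{t,1} \mid H_t, A_t = 1, I_t = 1) \mid S_t, I_t = 1\} - E\{E(Y_{t,1} \mid H_t, A_t = 0, I_t = 1) \mid S_t, I_t = 1\}$. By the identification formula \eqref{eq:identifiability} with $\Delta = 1$, $e^{\beta_M(t, S_t)}$ equals the ratio of $E\{E(Y_{t,1} \mid H_t, A_t = 1, I_t = 1) \mid S_t, I_t = 1\}$ to $E\{E(Y_{t,1} \mid H_t, A_t = 0, I_t = 1) \mid S_t, I_t = 1\}$, and the model assumption \eqref{eq:model-maginal-binary} gives $\beta_M(t, S_t) = S_t^T \beta^*$; combining the two shows that this factor is identically zero, hence the whole expectation is $0$, as claimed.

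The computation is essentially bookkeeping, so I do not expect a genuine obstacle; the steps that require care are (i) handling the availability indicator $I_t$ so that every conditional expectation given $\{A_t = a, H_t, I_t = 1\}$ is well defined (this uses Assumption~\ref{assumption:positivity} together with the convention that $A_t = 0$ whenever $I_t = 0$), and (ii) verifying that $\tilde{p}_t(S_t)$ is $\sigma(S_t)$-measurable so that it survives both the conditioning on $H_t$ and the subsequent conditioning on $(S_t, I_t)$. For general $\Delta$ one additionally invokes the Lemma~\ref{lem:iden-proofuse}-type identity to rewrite the inverse-probability weight cleanly, but that too is routine.
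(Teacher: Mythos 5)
Your proposal is correct and follows essentially the same route as the paper's proof: condition on $H_t$, split over $A_t\in\{0,1\}$ so that the change-of-probability factors in $J_t$ cancel the true randomization probabilities and leave $\pm\tilde{p}_t(S_t)\{1-\tilde{p}_t(S_t)\}$, observe that the $e^{g(H_t)^T\alpha}$ terms cancel between branches, and then condition on $(S_t,I_t=1)$ and invoke \eqref{eq:identifiability} together with \eqref{eq:model-maginal-binary}. The only cosmetic difference is that the paper carries the general-$\Delta$ inverse-probability product through the whole derivation rather than treating $\Delta=1$ first and appending it afterward.
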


\begin{proof}[of Lemma \ref{lem:exp-0-marginal}]
By the law of iterated expectation we have
\begin{align}
    & E[I_t e^{-A_t S_t^T \beta^*}\{ Y_{t, \Delta} - e^{g(H_t)^T \alpha + A_t S_t^T \beta^*}\} J_t \{ A_t - \tilde{p}_t(S_t) \} S_t] \nonumber \\
    = & E\big( E[I_t e^{-A_t S_t^T \beta^*}\{ Y_{t, \Delta} - e^{g(H_t)^T \alpha + A_t S_t^T \beta^*}\} J_t \{ A_t - \tilde{p}_t(S_t) \} S_t \mid H_t ] \big) \nonumber \\
    = & E\big(E[e^{-A_t S_t^T \beta^*}\{ Y_{t, \Delta} - e^{g(H_t)^T \alpha + A_t S_t^T \beta^*}\} J_t \{ A_t - \tilde{p}_t(S_t) \} \mid H_t, I_t = 1] I_t S_t \big) \nonumber \\
    = & E\bigg(E \bigg[e^{- S_t^T \beta^*}\{ Y_{t, \Delta} - e^{g(H_t)^T \alpha + S_t^T \beta^*}\} \{ 1 - \tilde{p}_t(S_t) \} \prod_{j=t+1}^{t+\Delta-1} \frac{\indic(A_j = 0)}{1 - p_j(H_j)}  \bigg| H_t, I_t = 1, A_t = 1 \bigg] I_t \tilde{p}_t(S_t) S_t \bigg) \nonumber \\
    & - E\bigg(E \bigg[\{ Y_{t, \Delta} - e^{g(H_t)^T \alpha}\} \tilde{p}_t(S_t) \prod_{j=t+1}^{t+\Delta-1} \frac{\indic(A_j = 0)}{1 - p_j(H_j)}  \bigg| H_t, I_t = 1, A_t = 0 \bigg] I_t \{1 - \tilde{p}_t(S_t)\} S_t \bigg) \nonumber \\
    = & E \bigg[ \bigg\{ e^{- S_t^T \beta^*} E\bigg( \prod_{j=t+1}^{t+\Delta-1} \frac{\indic(A_j = 0)}{1 - p_j(H_j)} Y_{t, \Delta} \bigg| H_t, I_t = 1, A_t = 1 \bigg) \nonumber \\
    & - E\bigg( \prod_{j=t+1}^{t+\Delta-1} \frac{\indic(A_j = 0)}{1 - p_j(H_j)} Y_{t, \Delta} \bigg| H_t, I_t = 1, A_t = 0 \bigg)\bigg\}  I_t \tilde{p}_t(S_t) \{1 - \tilde{p}_t(S_t)\} S_t \bigg] \label{eq:projection-derivation} \\
    = & 0, \nonumber
\end{align}
where the last equality follows from \eqref{eq:model-maginal-binary}. This completes the proof.
\end{proof}

\begin{proof}[of Theorem \ref{thm:asymptotics-marginal}]

Assumption \ref{assp:regularity-marginal} implies that $(\hat\alpha, \hat\beta)$ converges in probability to $(\alpha', \beta')$, by Theorem 5.9 and Problem 5.27 of \citet{van2000asymptotic}. Because $m_M(\alpha,\beta)$ is continuously differentiable and hence Lipschitz continuous, Theorem 5.21 of \citet{van2000asymptotic} implies that $\sqrt(n)\{(\hat\alpha, \hat\beta) - (\alpha', \beta')\}$ is asymptotically normal with mean zero and covariance matrix $[E \{\dot{m}_M(\alpha', \beta')\}]^{-1} E \{m_M(\alpha', \beta') m_M(\alpha', \beta')^T\} [E \{\dot{m}_M(\alpha', \beta')\}]^{-1^T}$. By the law of large numbers and Slutsky's theorem, this covariance matrix can be consistently estimated by $\{\PP_n \dot{m}_M(\hat\alpha, \hat\beta)\}^{-1} \{\PP_n m_M(\hat\alpha, \hat\beta) m_M(\hat\alpha, \hat\beta)^T\} \{\PP_n \dot{m}_M(\hat\alpha, \hat\beta)\}^{-1^T}$. Furthermore, Assumption \ref{assp:regularity-conditional} and Lemma \ref{lem:exp-0-conditional} imply that $\beta^* = \beta'$. This completes the proof.

\end{proof}

\section{Limit of $\hat\psi$ when conditional treatment effect model \eqref{eq:assumption-conditional} is misspecified}
\label{appen:projection-conditional}

\revisionadded{
When \eqref{eq:assumption-conditional} is misspecifed, the limit of $\hat\psi$ is $\psi'$ that satisfies the following equation:
\begin{align}
& E [ \{ e^{- f(H_t)^T \psi' } E(  Y_{t,1} \mid H_t, I_t = 1, A_t = 1 ) - E( Y_{t,1} \mid H_t, I_t = 1, A_t = 0 )\} \nonumber \\
    & \times I_t p_t(H_t) \{1 - p_t(H_t)\} \tilde{K}_t f(H_t) ] = 0. \nonumber
\end{align}
This is derived in \eqref{eq:projection-derivation-conditional} in the proof of Lemma \ref{lem:exp-0-conditional}.
}

\section{Limit of $\hat\beta$ in Remark \ref{rem:projection-marginal} for general $\Delta$}
\label{appen:projection-marginal}

When \eqref{eq:model-maginal-binary} is misspecifed, the limit of $\hat\beta$ is $\beta'$ that satisfies the following equation:
\begin{align}
    & \sum_{t=1}^{T}E\bigg[\bigg(E\bigg\{ Y_{t,\Delta}\prod_{j=t+1}^{t+\Delta-1}\frac{\indic(A_j=0)}{1 - p_j(H_j)}\bigg| H_t, I_t = 1,A_t=1\bigg\} e^{-S_t^{T}\beta'} \nonumber \\
    & - E\bigg\{ Y_{t,\Delta}\prod_{j=t+1}^{t+\Delta-1}\frac{\indic(A_j=0)}{1 - p_j(H_j)}\bigg| H_t, I_t = 1,A_t=0\bigg\} \bigg)I_t \tilde{p}_t(S_t)\{1-\tilde{p}_t(S_t)\}S_t\bigg]=0.
\end{align}
This is derived in \eqref{eq:projection-derivation} in the proof of Lemma \ref{lem:exp-0-marginal}.

\section{Additional simulation results}
\label{appen:simulation-all}

\revisionadded{Here we present additional simulation results on the relative efficiency between EMEE and ECE when the marginal excursion effect equals the conditional effect, in which case both estimators are consistent. We consider realistic settings where the true model $E(Y_{t,1} \mid H_t, A_t = 0)$ is complex and the working model for it is misspecified.}

We use the following generative model. The time-varying covariate $Z_t$ is generated from an autoregressive process: $Z_t = 0.5 Z_{t-1} + \epsilon_t$, where $\epsilon_t \sim N(0,1)$ is independent of all the variables observed prior to $Z_t$. The randomization probability is given by $p_t(H_t) = \min[0.8,\max\{0.2,\expit(\eta Z_t)\}]$, where $\expit(x) = \{ 1 + \exp( - x) \}^{-1}$. 
The proximal outcome $Y_{t,1}$ depends on $(A_{t-1}, Y_{t-1,1}, Z_t, A_t)$ through
\begin{align*}
  E(Y_{t,1} \mid H_t, A_t) = q(Z_t, Y_{t-1,1}, A_{t-1};\gamma) e^{\beta_0 A_t}.
\end{align*}
We consider two different $q(Z_t, Y_{t-1,1}, A_{t-1};\gamma)$:
\begin{align*}
  q_{\exp} (Z_t, Y_{t-1,1}, A_{t-1};\gamma) &= \min\left[0.8,\max\left\{0.1,\exp(-0.4 + \gamma (Z_t - 3) + 0.2 Y_{t-1,1} + 0.2 A_{t-1})\right\}\right], \\
  \text{and} \qquad q_{\expit}(Z_t, Y_{t-1,1}, A_{t-1};\gamma) &= \min\left[0.8,\max\left\{0.1,\expit(-0.5 + \gamma Z_t + 0.2 Y_{t-1,1} + 0.2 A_{t-1})\right\}\right].
\end{align*}
 We fix $\beta_0 = 0.1$.

We consider estimation of $\beta_0$ under the class of generative models with $\eta = -0.5, 0, 0.5$ and $\gamma = 0.1, 0.3, 0.5$. The parameter $\eta$ encodes how the randomization probability depends on $Z_t$, and $\gamma$ encodes the impact of $Z_t$ on the proximal outcome $Y_{t,1}$. We set $f(H_t) = 1$ and $S_t = 1$ in the analysis models of ECE and EMEE, respectively. Because in the generative model $\beta_C(t, H_t) = \beta_M(t, S_t) = \beta_0$, both estimators are consistent for $\beta_0$. We use the working model $g(H_t)^T \alpha = \alpha_0 + \alpha_1 Z_t$, which is misspecified, for both estimators.

Figure \ref{fig:rel-eff} shows the relative efficiency under different combinations of $(\eta, \gamma)$ and the two choices of $q(\cdot)$ calculated from 1000 replicates. The sample size is 50, and the total number of time points for each individual is 20. The relative efficiency between the two estimators ranges between $1.11$ and $1.00$.

\begin{figure}[htbp]
\centering
\caption{Relative efficiency between ECE and EMEE, defined as $\var(\text{EMEE}) / \var(\text{ECE})$. \label{fig:rel-eff}}
\includegraphics[width = 0.7\textwidth]{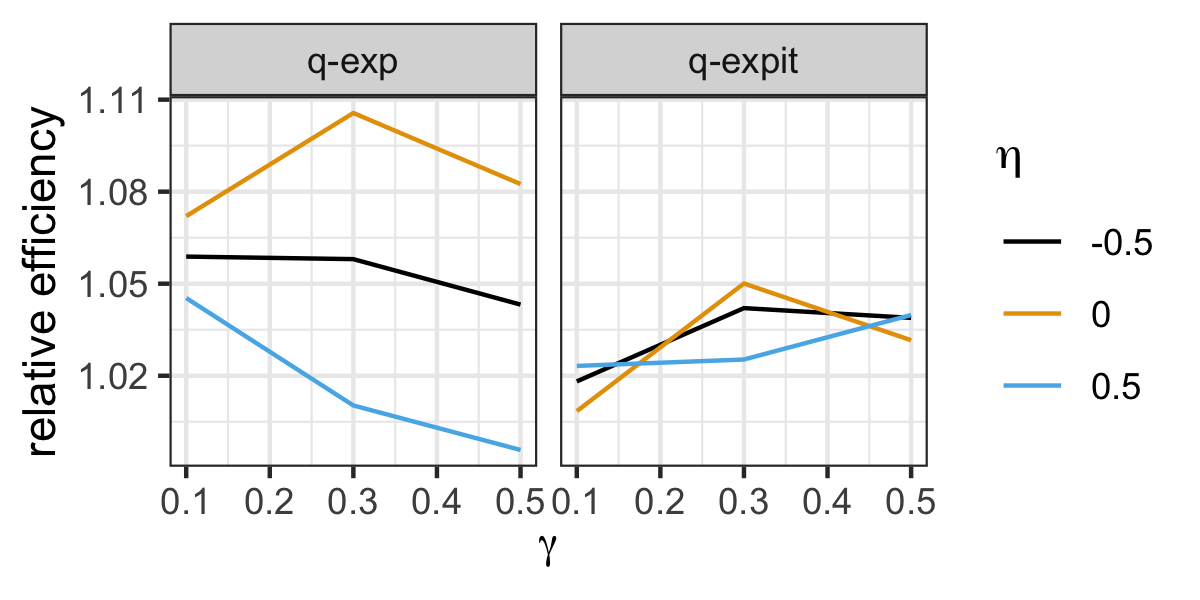}
\end{figure}

\section{An alternative implementation of ECE}
\label{appen:ECE-alternative-implementation}

\revisionadded{In Section \ref{sec:simulation}, we implemented ECE by truncating the terms in the numerator of the optimal weights, $\tilde{K}_t$, for improving stability. A reviewer raised the concern of possible efficiency loss from the truncation. To address this concern, we consider an alternative, two-step implementation of ECE that does not rely on truncating the optimal weights, and we compare its numerical performance with the truncation-based implementation. In this section we use ``ECE.trunc'' to refer to the truncation-based implementation of the ECE estimator (with truncation parameter $\lambda = 0.95$), and we use ``ECE.2step'' to refer to the alternative, two-step implementation. From the simulation results, we conclude that the two implementations of ECE have similar performance in terms of bias and standard deviation. Because ECE.2step does not rely on truncation and is semiparametrically efficient when working models are correct, and ECE.trunc has similar performance to ECE.2step in all settings including when working models are correct, it is likely that the efficiency loss from truncation in ECE.trunc is ignorable. Details are as follows.}

\revisionadded{Given an analysis model $f(H_t)^T \psi$ for the conditional treatment effect $\beta_C(t, H_t)$ and a working model $g(H_t)^T \alpha$ for the outcome under no treatment $E(Y_{t,1}\mid H_t, A_t=0, I_t = 1)$, ECE.2step is computed as follows. First, one computes the maximum likelihood estimator $(\hat\alpha, \hat\psi)$ under the working model $E(Y_{t,1} \mid H_t, A_t, I_t = 1) = \exp\{g(H_t)^T \alpha + A_t f(H_t)^T \psi\}$ and under the working assumption that $(\alpha,\psi)$ is variationally independent with the distribution of the covariate / treatment indicator. Precisely,
\begin{equation*}
    (\hat\alpha, \hat\psi) = \arg\max_{\alpha,\psi} \PP_n \sum_{t=1}^T I_t \bigg[Y_{t,1} \log E(Y_{t,1} \mid H_t, A_t, I_t = 1) + (1-Y_{t,1}) \log \{1 - E(Y_{t,1} \mid H_t, A_t, I_t = 1)\} \bigg].
\end{equation*}
$(\hat\alpha, \hat\psi)$ can be obtained from standard statistical software that implements generalized linear models. Then, one computes the optimal weights using $(\hat\alpha, \hat\psi)$:
\begin{equation*}
     \hat{K}_t = \frac{e^{f(H_t)^T \hat\psi}}
{ e^{f(H_t)^T \hat\psi}\{ 1-e^{g(H_t)^T\hat\alpha} \} p_t(H_t)+ \{1-e^{g(H_t)^T\hat\alpha+f(H_t)^T \hat\psi} \}\{1-p_t(H_t)\} }.
\end{equation*}
Consider the estimating function
\begin{equation*}
    m_{C2}(\psi; \hat\alpha, \hat\psi) = \sum_{t=1}^T I_t \{ e^{-A_t f(H_t)^T \psi} Y_{t,1} - e^{g(H_t)^T \hat\alpha} \} \hat{K}_t
\begin{bmatrix}
g(H_t) \\
\{ A_t - p_t(H_t) \} f(H_t)
\end{bmatrix},
\end{equation*}
where the varying $\psi$ only appears in the blipping-down factor, $e^{-A_t f(H_t)^T \psi}$.
Denote by $\hat\psi_2$ the solution of $\psi$ to the estimating equation $\PP_n m_{C2}(\psi; \hat\alpha, \hat\psi) = 0$. It can be shown under regularity conditions that $\hat\psi_2$ is consistent and asymptotically normal when \eqref{eq:assumption-conditional} and Assumptions \ref{assumption:consistency}, \ref{assumption:positivity}, and \ref{assumption:seqignorability} hold, and that the randomization probability $p_t(H_t)$ is known. Furthermore, $\hat\psi_2$ achieves the semiparametric efficiency bound of the semiparametric model defined in Theorem \ref{thm:efficient-score} when $E(Y_{t,1}\mid H_t, A_t=0, I_t = 1) = g(H_t)^T \alpha^*$ for some $\alpha^*$. The estimator $\hat\psi_2$ is the ECE.2step estimator.}

\revisionadded{To compare the performance of ECE.trunc and ECE.2step, we reuse the simulation settings in Section \ref{subsec:simulation-consistency} and Section \ref{subsec:simulation-efficiency}. Namely, we consider (i) estimating the marginal excursion effect $\beta_0 = 0.477$ with analysis model $f(H_t) = 1$ and control variables $g(H_t)^T\alpha = \alpha_0 + \alpha_1 Z_t$ as in Table \ref{tab:simulation1}, (ii) estimating the causal excursion effect moderation by $Z_t$ with $(\beta_0, \beta_1)^T = (0.1, 0.3)^T$ with analysis model $f(H_t) = (1, S_t)^T$ and control variables $g(H_t)^T\alpha = \alpha_0 + \alpha_1 Z_t$ as in Table \ref{tab:simulation2}, and (iii) estimating the causal excursion effect moderation by $Z_t$ with $(\beta_0, \beta_1)^T = (0.1, 0.3)^T$ with analysis model $f(H_t) = (1, S_t)^T$ and control variables $g(H_t)^T\alpha = \alpha_0 + \alpha_1 Z_t + \alpha_2 \indic_{Z_t = 2}$ as in Table \ref{tab:simulation2.5}. The simulation results are in Table \ref{tab:simulation-ece-2-implementation}, and all simulations are based on 1000 replicates with $T = 30$. We see that ECE.trunc and ECE.2step have almost identical standard deviation for all settings. They also have similar bias for all settings except for when we are estimating the marginal excursion effect $\beta_0$ as in Table \ref{tab:simulation1}, where ECE.trunc has slightly smaller bias than ECE.2step. Therefore, we conclude that the two implementations of ECE have similar performance in terms of bias and standard deviation.}

\begin{table}[htbp]

\caption{\label{tab:simulation-ece-2-implementation} \revisionadded{Comparison of the two implementations of the ECE estimators: ECE.trunc and ECE.2step. The estimand, the analysis model for the conditional treatment effect, and the control variables used in the comparison are those for the ECE estimator described in Tables \ref{tab:simulation1}, \ref{tab:simulation2} and \ref{tab:simulation2.5}.}}
\centering
\resizebox{0.74\linewidth}{!}{\begin{tabular}{cccccccc}
\toprule
Estimand & Sample size & Estimator & Bias & SD & RMSE & CP (unadj) & CP (adj)\\
\midrule
 &  & ECE.2step & 0.063 & 0.075 & 0.098 & 0.81 & 0.84\\

 & \multirow{-2}{*}{\centering\arraybackslash 30} & ECE.trunc & 0.050 & 0.073 & 0.089 & 0.86 & 0.88\\
 \cmidrule(l{2pt}r{2pt}){2-8}

 &  & ECE.2step & 0.060 & 0.058 & 0.083 & 0.79 & 0.80\\

 & \multirow{-2}{*}{\centering\arraybackslash 50} & ECE.trunc & 0.047 & 0.056 & 0.074 & 0.82 & 0.83\\
 \cmidrule(l{2pt}r{2pt}){2-8}

 &  & ECE.2step & 0.060 & 0.039 & 0.072 & 0.65 & 0.67\\

\multirow{-6}{*}{\centering\arraybackslash $\beta_0$ in Table \ref{tab:simulation1}} & \multirow{-2}{*}{\centering\arraybackslash 100} & ECE.trunc & 0.048 & 0.038 & 0.062 & 0.74 & 0.75\\
\cmidrule{1-8}
 &  & ECE.2step & 0.006 & 0.193 & 0.193 & 0.93 & 0.94\\

 & \multirow{-2}{*}{\centering\arraybackslash 30} & ECE.trunc & 0.007 & 0.192 & 0.192 & 0.93 & 0.94\\
 \cmidrule(l{2pt}r{2pt}){2-8}

 &  & ECE.2step & 0.002 & 0.147 & 0.147 & 0.93 & 0.94\\

 & \multirow{-2}{*}{\centering\arraybackslash 50} & ECE.trunc & 0.003 & 0.146 & 0.146 & 0.93 & 0.94\\
 \cmidrule(l{2pt}r{2pt}){2-8}

 &  & ECE.2step & -0.002 & 0.106 & 0.106 & 0.95 & 0.95\\

\multirow{-6}{*}{\centering\arraybackslash $\beta_0$ in Table \ref{tab:simulation2}} & \multirow{-2}{*}{\centering\arraybackslash 100} & ECE.trunc & -0.002 & 0.105 & 0.105 & 0.95 & 0.95\\
\cmidrule{1-8}
 &  & ECE.2step & -0.005 & 0.124 & 0.124 & 0.93 & 0.94\\

 & \multirow{-2}{*}{\centering\arraybackslash 30} & ECE.trunc & -0.006 & 0.122 & 0.123 & 0.93 & 0.94\\
 \cmidrule(l{2pt}r{2pt}){2-8}

 &  & ECE.2step & -0.002 & 0.095 & 0.095 & 0.94 & 0.95\\

 & \multirow{-2}{*}{\centering\arraybackslash 50} & ECE.trunc & -0.003 & 0.094 & 0.094 & 0.94 & 0.95\\
 \cmidrule(l{2pt}r{2pt}){2-8}

 &  & ECE.2step & 0.002 & 0.066 & 0.066 & 0.95 & 0.95\\

\multirow{-6}{*}{\centering\arraybackslash $\beta_1$ in Table \ref{tab:simulation2}} & \multirow{-2}{*}{\centering\arraybackslash 100} & ECE.trunc & 0.002 & 0.066 & 0.066 & 0.95 & 0.96\\
\cmidrule{1-8}
 &  & ECE.2step & 0.006 & 0.175 & 0.175 & 0.94 & 0.95\\

 & \multirow{-2}{*}{\centering\arraybackslash 30} & ECE.trunc & 0.005 & 0.174 & 0.174 & 0.94 & 0.95\\
 \cmidrule(l{2pt}r{2pt}){2-8}

 &  & ECE.2step & 0.000 & 0.134 & 0.134 & 0.94 & 0.95\\

 & \multirow{-2}{*}{\centering\arraybackslash 50} & ECE.trunc & -0.001 & 0.134 & 0.134 & 0.94 & 0.95\\
 \cmidrule(l{2pt}r{2pt}){2-8}

 &  & ECE.2step & 0.001 & 0.096 & 0.096 & 0.94 & 0.95\\

\multirow{-6}{*}{\centering\arraybackslash $\beta_0$ in Table \ref{tab:simulation2.5}} & \multirow{-2}{*}{\centering\arraybackslash 100} & ECE.trunc & 0.001 & 0.096 & 0.096 & 0.94 & 0.95\\
\cmidrule{1-8}
 &  & ECE.2step & -0.006 & 0.117 & 0.117 & 0.93 & 0.94\\

 & \multirow{-2}{*}{\centering\arraybackslash 30} & ECE.trunc & -0.006 & 0.116 & 0.116 & 0.93 & 0.94\\
 \cmidrule(l{2pt}r{2pt}){2-8}

 &  & ECE.2step & -0.001 & 0.090 & 0.090 & 0.94 & 0.95\\

 & \multirow{-2}{*}{\centering\arraybackslash 50} & ECE.trunc & -0.001 & 0.089 & 0.089 & 0.94 & 0.95\\
 \cmidrule(l{2pt}r{2pt}){2-8}

 &  & ECE.2step & 0.000 & 0.063 & 0.063 & 0.94 & 0.95\\

\multirow{-6}{*}{\centering\arraybackslash $\beta_1$ in Table \ref{tab:simulation2.5}} & \multirow{-2}{*}{\centering\arraybackslash 100} & ECE.trunc & 0.000 & 0.063 & 0.063 & 0.95 & 0.95\\
\bottomrule
\end{tabular}}
\end{table}

\section{Exploratory analysis result of BariFit data set}
\label{appen:eda-barifit}

\revisionadded{Here we present the exploratory analysis result of BariFit data, as referred to in Section \ref{sec:application}.}

\revisionadded{We first consider the change of food log completion rate over time. Figure \ref{fig:eda-barifit-by-day} shows the food log completion rate by day-in-study and by treatment. Circles (triangles) represent the food log completion rate of each day-in-study, averaged over all individuals who received treatment (no treatment) on that day. The length of the vertical line segments represents the magnitude of the difference between the circle and the triangle on each day, which is an empirical (crude) estimate for the treatment effect on that day. Solid (dashed) line segments represent positive (negative) treatment effect estimate. We see an overall decreasing trend of food log completion rate for both the treatment-days (circles) and the no-treatment-days (triangles). However, there is no obvious pattern in terms of the empirically estimated day-specific treatment effect, as solid and dashed line segments alternate irregularly.}

\revisionadded{We then consider variation of the food log completion rate across individuals. Figure \ref{fig:eda-barifit-by-user} shows the food log completion rate by individual and by treatment. Circles (triangles) represent the food log completion rate of each individual, averaged over all days when they received treatment (no treatment). The length of the vertical line segments represents the magnitude of the difference between the circle and the triangle for an individual, which is an empirical (crude) estimate for the treatment effect for that individual. Solid (dashed) line segments represent positive (negative) treatment effect estimate. The x-axis is sorted so that the individual-specific empirical treatment effect estimate is decreasing with larger participant id. We see that there is much heterogeneity among individuals in terms of both the completion rate with no treatment (the varying y-location of the triangles) and the empirically estimated treatment effect (the varying length and shape of the line segments).}

\begin{figure}[htp]
    \begin{center}
        \caption{\revisionadded{Food log completion rate by day and treatment.} \label{fig:eda-barifit-by-day}}
        \includegraphics[clip,width=\columnwidth]{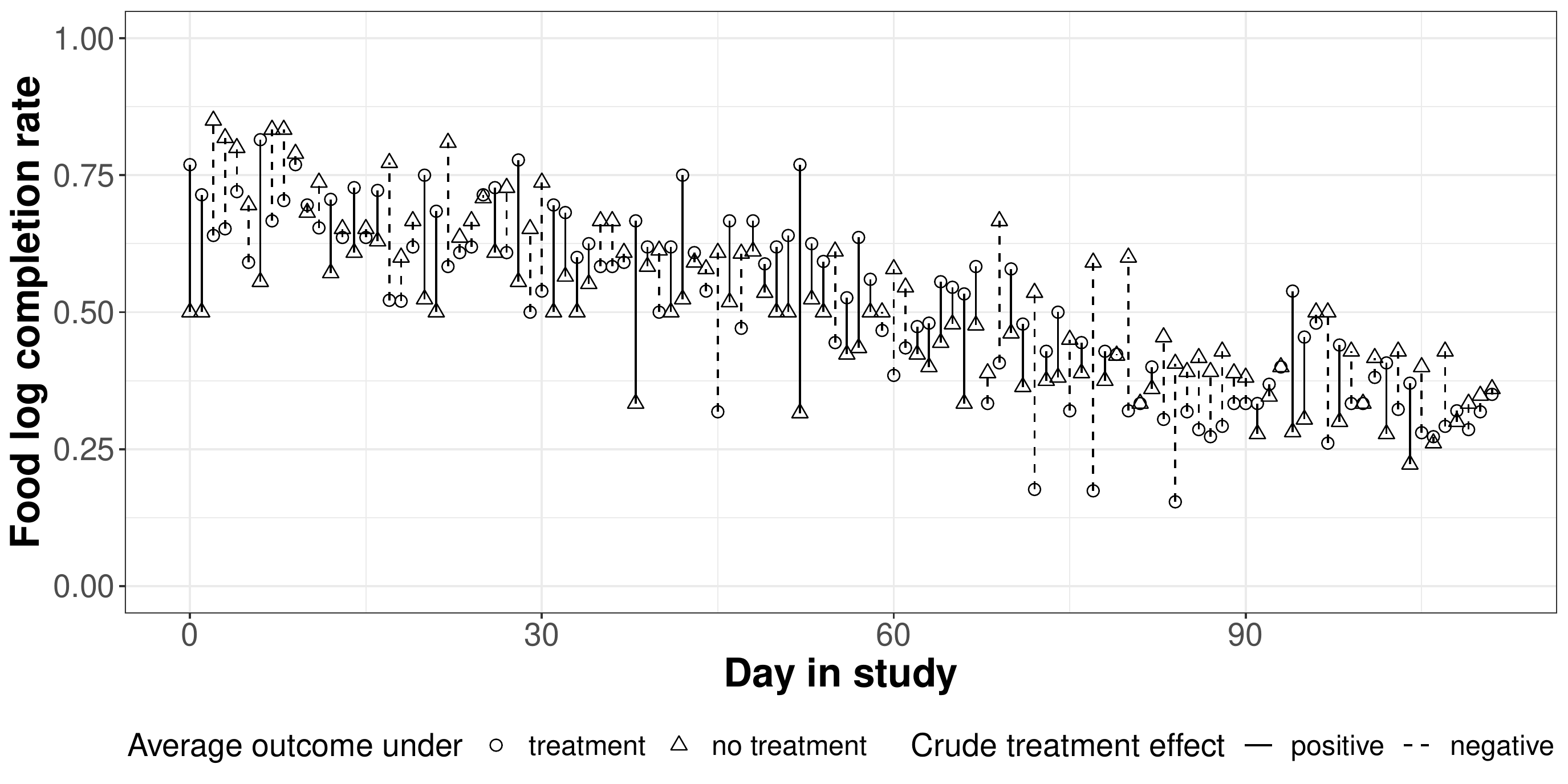}%
    \end{center}
    \small{* Circles (triangles) represent the food log completion rate of each day-in-study, averaged over all individuals who received treatment (no treatment) on that day. The length of the vertical line segments represents the magnitude of the difference between the circle and the triangle on each day, which is an empirical (crude) estimate for the treatment effect on that day. Solid (dashed) line segments represent positive (negative) treatment effect estimate.}
\end{figure}

\begin{figure}[htp]
    \begin{center}
        \caption{\revisionadded{Food log completion rate by individual and treatment.} \label{fig:eda-barifit-by-user}}
        \includegraphics[clip,width=\columnwidth]{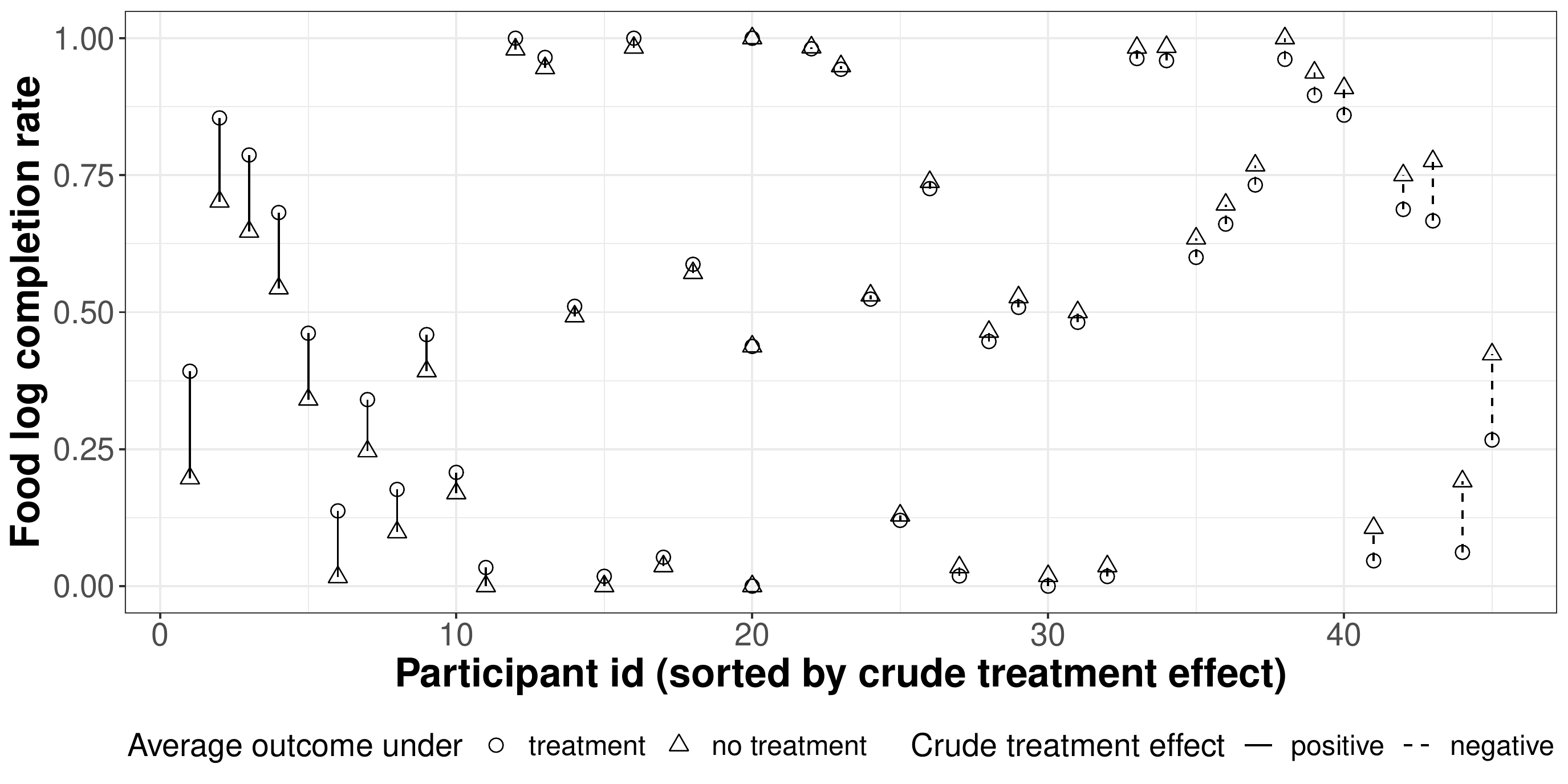}%
    \end{center}
    \small{* Circles (triangles) represent the food log completion rate of each individual, averaged over all days when they received treatment (no treatment). The length of the vertical line segments represents the magnitude of the difference between the circle and the triangle for an individual, which is an empirical (crude) estimate for the treatment effect for that individual. Solid (dashed) line segments represent positive (negative) treatment effect estimate. The x-axis is sorted so that the individual-specific empirical treatment effect estimate is decreasing with larger participant id.}
\end{figure}



\section{Proof of Theorem \ref{thm:efficient-score}}
\label{appen:proof-efficient-score}

\subsection{Overview}

In Section \ref{sec:proof-efficient-score-main}, we present the proof of Theorem \ref{thm:efficient-score} based on a general form of the efficient score using semiparametric efficiency theory developed in Section \ref{sec:main-lemmas}. In Section \ref{sec:asmp}, we give assumptions that characterize the semiparametric model, and we introduce  additional notation that will be used throughout the proof. In Section \ref{sec:main-lemmas}, we derive the general form of the efficient score using semiparametric efficiency theory. For ease of reading the proofs, the supporting technical lemmas that are used in deriving the general form of the efficient score are presented and proved in Section \ref{sec:supporting-lemmas}. For notation simplicity, this entire section is presented in the case where $I_t = 1$ for all $t$, and we omit the notation $I_t$ throughout.

The techniques in Section \ref{sec:main-lemmas} and Section \ref{sec:supporting-lemmas} follow mostly from Robin's derivation of the efficient score for structural nested mean models \citep{robins1994snmm}.

\subsection{Proof of Theorem \ref{thm:efficient-score} from a general form of efficient score}
\label{sec:proof-efficient-score-main}

We first present a useful lemma.

\begin{lemma} \label{lem:binary-action-centering}
Suppose that $B, C$ are two random variables, and that $B$ takes binary value $\{0,1\}$.
Suppose $E\{ S(B, C) \mid C \} = 0$ for some function $S(B, C)$. Then we have
\begin{align}
    S(B, C) = \{S(1, C) - S(0, C)\} \times \{ B - P(B = 1 \mid C) \}. \label{eq:lem:binary-action-centering}
\end{align}
\end{lemma}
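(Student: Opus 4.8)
The plan is to prove the identity \eqref{eq:lem:binary-action-centering} by exploiting the fact that $B$ takes only the two values $0$ and $1$, so that any function $S(B,C)$ can be written as a linear combination of $\indic(B=1)$ and $\indic(B=0)=1-\indic(B=1)$ with coefficients depending only on $C$. Concretely, I would first observe the algebraic decomposition
\begin{align*}
S(B,C) = S(1,C)\,\indic(B=1) + S(0,C)\,\indic(B=0) = S(0,C) + \{S(1,C)-S(0,C)\}\,B,
\end{align*}
which holds for every realization of $(B,C)$ simply because $B\in\{0,1\}$ forces $\indic(B=1)=B$ and $\indic(B=0)=1-B$.

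Next I would impose the hypothesis $E\{S(B,C)\mid C\}=0$. Taking the conditional expectation of the displayed decomposition given $C$ and using $E(B\mid C)=P(B=1\mid C)$ yields
\begin{align*}
0 = E\{S(B,C)\mid C\} = S(0,C) + \{S(1,C)-S(0,C)\}\,P(B=1\mid C),
\end{align*}
so that $S(0,C) = -\{S(1,C)-S(0,C)\}\,P(B=1\mid C)$. Substituting this expression for $S(0,C)$ back into the algebraic decomposition gives
\begin{align*}
S(B,C) = \{S(1,C)-S(0,C)\}\,B - \{S(1,C)-S(0,C)\}\,P(B=1\mid C) = \{S(1,C)-S(0,C)\}\{B - P(B=1\mid C)\},
\end{align*}
which is exactly \eqref{eq:lem:binary-action-centering}.

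There is essentially no obstacle here: the entire argument is a two-line calculation, and the only ``ingredient'' is the binary structure of $B$ together with the definition of conditional expectation. The one point worth stating carefully is that the identity is meant to hold almost surely (both sides agree for each value of $B\in\{0,1\}$ and $C$ on the event where the relevant conditional probabilities are defined), so I would phrase the conclusion as an almost-sure equality. I expect the write-up to be just these three short displays with a sentence of justification between each.
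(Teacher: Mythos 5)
Your proof is correct and follows essentially the same route as the paper's: the same pointwise decomposition $S(B,C)=S(0,C)+\{S(1,C)-S(0,C)\}B$, the same use of the hypothesis to solve for $S(0,C)$, and the same substitution back. The only cosmetic difference is that you compute $E\{S(B,C)\mid C\}$ by linearity and $E(B\mid C)=P(B=1\mid C)$ rather than by splitting on the two values of $B$, which changes nothing of substance.
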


\begin{proof}[of Lemma \ref{lem:binary-action-centering}]
Since $B$ takes binary value, we have
\begin{align}
    S(B, C) & = S(1, C) B + S(0, C) (1-B) \nonumber \\
    & = \{S(1, C) - S(0, C)\} B + S(0, C). \label{proofuse-semi-1001}
\end{align}
We also have
\begin{align}
    E\{ S(B, C) \mid C \} & = E\{ S(B, C) \mid C, B = 1 \} P(B = 1 \mid C) \nonumber \\
    & \phantom{=} + E\{ S(B, C) \mid C, B = 0 \} \times \{1 - P(B = 1 \mid C)\} \nonumber \\
    & = S(1, C) P(B = 1 \mid C) + S(0, C) \{1 - P(B = 1 \mid C)\} \nonumber \\
    & = \{ S(1, C) - S(0, C) \} P(B = 1 \mid C) + S(0, C). \label{proofuse-semi-1002}
\end{align}
Equation \eqref{proofuse-semi-1002} and $E\{ S(B, C) \mid C \} = 0$ imply
\begin{align}
    S(0, C) = - \{ S(1, C) - S(0, C) \} P(B = 1 \mid C). \label{proofuse-semi-1003}
\end{align}
Combining \eqref{proofuse-semi-1001} and \eqref{proofuse-semi-1003} yields \eqref{eq:lem:binary-action-centering}. This completes the proof.
\end{proof}

\begin{proof}[of Theorem \ref{thm:efficient-score}]
To connect Theorem \ref{thm:efficient-score} with the notation used in the rest of this section, let $\psi_0$ be the true value of the parameter $\psi$. Define $V_t = (H_t, A_t)$, $U_{t+1}(\psi) = Y_{t,1}e^{-A_t f(H_t)^T \psi}$, $\dot{U}_{t+1}(\psi) = U_{t+1}(\psi)-E\{U_{t+1}(\psi_{0})\mid H_t\}$, and $W_t = \var\{U_{t+1}(\psi_{0})\mid V_t\}^{-1}$.

By Lemma \ref{lem:EIF}, a general form of the efficient score is
\begin{align}
S_{\eff}(\psi_{0})=-\sum_{t=1}^{T}\rho_t \dot{U}_{t+1}(\psi_{0}), \label{proofuse-semi-1100.2}
\end{align}
where
\begin{align}
    \rho_t = \bigg[E\bigg\{\frac{\partial U_{t+1}(\psi_{0})}{\partial\psi}\mid V_t\bigg\}-E\bigg\{\frac{\partial U_{t+1}(\psi_{0})}{\partial\psi}W_t\mid H_t\bigg\} E(W_t\mid H_t)^{-1}\bigg]W_t.
\end{align}
Note that $E(\rho_t \mid H_t) = 0$; therefore, Lemma \ref{lem:binary-action-centering} implies
\begin{align}
    \rho_t = \{\rho_t(A_t = 1) - \rho_t(A_t = 0) \} \{A_t - p_t(H_t)\}, \label{proofuse-semi-1100.1}
\end{align}
where $\rho_t(A_t = a)$ denotes $\rho_t$ (as a function of $H_t$ and $A_t$) evaluated at $A_t = a$. In the following we calculate corresponding terms in the context of Theorem \ref{thm:efficient-score}.

First, we have
\begin{align}
\frac{\partial U_{t+1} (\psi_0)}{ \partial \psi} = - U_{t+1}(\psi_0) A_t f(H_t), \label{proofuse-semi-1100}
\end{align}
and hence
\begin{align}
    & E\bigg\{ \frac{\partial U_{t+1} (\psi_0)}{ \partial \psi} \bigg| H_t, A_t = 1 \bigg\} = - E \{ U_{t+1}(\psi_0) \mid H_t, A_t = 1 \} f(H_t) \nonumber \\
    = & - E\{ Y_{t,1}(\bA_{t-1},0) \mid H_t \} f(H_t) = - e^{\mu(H_t)} f(H_t), \label{proofuse-semi-1101} \\
    & E\bigg\{ \frac{\partial U_{t+1} (\psi_0)}{ \partial \psi} \bigg| H_t, A_t = 0 \bigg\} = 0. \label{proofuse-semi-1102}
\end{align}
where the second equality in \eqref{proofuse-semi-1101} follows from Lemma \ref{lem:aux-1}.

Second, we have
\begin{align}
    W_t & = \var\{U_{t+1}(\psi_{0})\mid V_t\}^{-1} = \var(Y_{t,1}\mid V_t)^{-1} e^{2 A_t f(H_t)^T \psi_0} \nonumber \\
        & = \Big[ e^{\mu(H_t) + A_t f(H_t)^T \psi_0} \{1 - e^{\mu(H_t) + A_t f(H_t)^T \psi_0}\} \Big]^{-1} e^{2 A_t f(H_t)^T \psi_0} \nonumber \\
        & = \frac{e^{A_t f(H_t)^T \psi_0}}{e^{\mu(H_t)} \{1 - e^{\mu(H_t) + A_t f(H_t)^T \psi_0}\}}, \label{proofuse-semi-1103}
\end{align}
\begin{align}
    W_t(A_t = 1) & = \frac{e^{f(H_t)^T \psi_0}}{e^{\mu(H_t)} \{1 - e^{\mu(H_t) + f(H_t)^T \psi_0}\}}, \label{proofuse-semi-1103.1} \\
    W_t(A_t = 0) & = \frac{1}{e^{\mu(H_t)} \{1 - e^{\mu(H_t)}\}}, \label{proofuse-semi-1103.2}
\end{align}
and
\begin{align}
    & E(W_t \mid H_t) = E(W_t \mid H_t, A_t = 1) p_t(H_t) + E(W_t \mid H_t, A_t = 0) \{1 - p_t(H_t)\} \nonumber \\
    & = \frac{e^{f(H_t)^T \psi_0}}{e^{\mu(H_t)} \{1 - e^{\mu(H_t) + f(H_t)^T \psi_0}\}} p_t(H_t)
    + \frac{1}{e^{\mu(H_t)} \{1 - e^{\mu(H_t)}\}} \{1 - p_t(H_t)\} \nonumber \\
    & = \frac{1}{e^{\mu(H_t)}} \times \frac{ \{ e^{f(H_t)^T \psi_0} - e^{\mu(H_t) + f(H_t)^T \psi_0} \} p_t(H_t) + \{1 - e^{\mu(H_t) + f(H_t)^T \psi_0} \} \{ 1 - p_t(H_t) \} }{ \{ 1 - e^{\mu(H_t)} \} \{ 1 - e^{\mu(H_t) + f(H_t)^T \psi_0} \} }. \label{proofuse-semi-1104}
\end{align}

Third, it follows from \eqref{proofuse-semi-1100} and \eqref{proofuse-semi-1103} that
\begin{align}
    & E \bigg\{ \frac{\partial U_t (\psi_0)}{ \partial \psi} W_t \bigg| H_t \bigg\} = - E\{ U_t(\psi_0) A_t f(H_t) W_t \mid H_t \}  \nonumber \\
    & = - E \{ U_t(\psi_0) W_t \mid H_t, A_t = 1\} p_t(H_t) f(H_t) \nonumber \\
    & = - E \bigg[ Y_{t,1} e^{-f(H_t)^T \psi_0} \times \frac{e^{f(H_t)^T \psi_0}}{e^{\mu(H_t)} \{1 - e^{\mu(H_t) + f(H_t)^T \psi_0}\}} \bigg| H_t, A_t = 1 \bigg] p_t(H_t) f(H_t) \nonumber \\
    & = - e^{\mu_t(H_t)} \frac{e^{f(H_t)^T \psi_0}}{e^{\mu(H_t)} \{1 - e^{\mu(H_t) + f(H_t)^T \psi_0}\}} p_t(H_t) f(H_t) \label{proofuse-semi-1105} \\
    & = - \frac{e^{f(H_t)^T \psi_0}}{1 - e^{\mu(H_t) + f(H_t)^T \psi_0}} p_t(H_t) f(H_t), \label{proofuse-semi-1106}
\end{align}
where \eqref{proofuse-semi-1105} follows from that $E[Y_{t,1} \exp\{-f(H_t)^T\psi_0\} \mid H_t, A_t] = E[Y_{t,1}(\bA_{t-1}, 0) \mid H_t]$, an implication of Lemma \ref{lem:aux-1}.

Because of \eqref{proofuse-semi-1102}, we have
\begin{align}
    \rho_t(A_t = 1) - \rho_t(A_t = 0) = & E\bigg\{ \frac{\partial U_{t+1} (\psi_0)}{ \partial \psi} \bigg| H_t, A_t = 1 \bigg\} W_t(A_t = 1) \nonumber \\
    & - E\bigg\{ \frac{\partial U_{t+1} (\psi_0)}{ \partial \psi} W_t \bigg| H_t \bigg\} E(W_t \mid H_t)^{-1} \{ W_t(A_t = 1) - W_t(A_t = 0) \}. \label{proofuse-semi-1108}
\end{align}
By \eqref{proofuse-semi-1103.1} and \eqref{proofuse-semi-1103.2} we have
\begin{align}
    W_t(A_t = 1) - W_t(A_t = 0) = \frac{e^{f(H_t)^T\psi_0} - 1}{e^{\mu(H_t)} \{ 1 - e^{\mu(H_t)} \} \{ 1 - e^{\mu(H_t) + f(H_t)^T \psi_0} \} }, 
\end{align}
which, combined with \eqref{proofuse-semi-1104}, yields
\begin{align}
    & E(W_t \mid H_t)^{-1} \{ W_t(A_t = 1) - W_t(A_t = 0) \} \nonumber \\
    & = \frac{e^{f(H_t)^T \psi_0} - 1}{\{ e^{f(H_t)^T \psi_0} - e^{\mu(H_t) + f(H_t)^T \psi_0} \} p_t(H_t) + \{1 - e^{\mu(H_t) + f(H_t)^T \psi_0} \} \{ 1 - p_t(H_t) \}}. \label{proofuse-semi-1109}
\end{align}
Plugging \eqref{proofuse-semi-1101}, \eqref{proofuse-semi-1103.1}, and \eqref{proofuse-semi-1109} into \eqref{proofuse-semi-1108} yields
\begin{align}
    & \rho_t(A_t = 1) - \rho_t(A_t = 0) \nonumber \\
    & = - \frac{e^{f(H_t)^T \psi_0}f(H_t)}{\{ e^{f(H_t)^T \psi_0} - e^{\mu(H_t) + f(H_t)^T \psi_0} \} p_t(H_t) + \{1 - e^{\mu(H_t) + f(H_t)^T \psi_0} \} \{ 1 - p_t(H_t) \}}. \label{proofuse-semi-1110}
\end{align}
Therefore, by plugging \eqref{proofuse-semi-1110} into \eqref{proofuse-semi-1100.1}, we have
\begin{align}
    \rho_t = - K_t \{A_t - p_t(H_t)\} f(H_t). \label{proofuse-semi-1120}
\end{align}

On the other hand, by Lemma \ref{lem:aux-1} we have
\begin{align}
    \dot{U}_{t+1}(\psi_0) = U_{t+1}(\psi_0)-E\{U_{t+1}(\psi_{0})\mid H_t\} = e^{-A_t f(H_t)^T \psi_0} Y_{t,1} - e^{\mu(H_t)}. \label{proofuse-semi-1121}
\end{align}
Plugging \eqref{proofuse-semi-1120} and \eqref{proofuse-semi-1121} into \eqref{proofuse-semi-1100.2} gives the form of $S_{\eff}(\psi_{0})$. This completes the proof.


\end{proof}

\subsection{Assumption and Additional Notation} \label{sec:asmp}

In deriving the semiparametric efficient score, we consider the semiparametric model characterized by the following assumptions:

\begin{assumption}\label{assump:additional-semi-2}
For all $1\leq t\leq T$, $E\{Y_{t,1}(\bar{A}_{t-1},0)\mid H_t,A_t\}=E\{Y_{t,1}(\bar{A}_{t-1},0)\mid H_t\}.$
\end{assumption}

\begin{assumption}\label{assump:additional-semi-1}
Assume that there exists a function $\gamma()$ and a true parameter
value $\psi_{0}\in\mathbb{R}^{p}$, such that for any $1\leq t\leq T$,
\begin{equation}
\log\frac{E\{Y_{t,1}(\bar{a}_t)\mid\bar{z}_t,\bar{a}_t\}}{E\{Y_{t,1}(\bar{a}_{t-1},0)\mid\bar{z}_t,\bar{a}_t\}}=\gamma(t+1,\bar{z}_t,\bar{a}_t;\psi_{0}).\label{eq:assump-model}
\end{equation}
\end{assumption}

In the following, we present additional notation that will be used in the proof. Each will be defined as they appear in the proof. Here we gather the definition of all the terms for ease of reading.

\begin{itemize}
\item The longitudinal data is $L_{1},A_{1},Y_{1,1},L_{2},A_{2},Y_{2,1},\ldots,L_{T},A_{T},Y_{T,1}$, where $L_t$ is a time-varying covariate, $A_t$ is the treatment assignment, and $Y_{t,1}$ is the proximal outcome
\item $Y_{0,1}=\emptyset$, $L_{T+1}=\emptyset$, $A_{T+1}=\emptyset$
\item $Z_{t}=(Y_{t-1,1},L_{t})$
\item $H_{t}=(\bar{A}_{t-1},\bar{Z}_{t})$
\item $V_{t}=(\bar{A}_{t},\bar{Z}_{t})=(H_{t},A_{t})$
\item $U_{t+1}(\psi)=Y_{t,1}\exp\{-\gamma(t+1,\bar{z}_{t},\bar{a}_{t};\psi)\}$
\item $\dot{U}_{t+1}(\psi)=U_{t+1}(\psi)-E\{U_{t+1}(\psi_{0})\mid H_{t}\}$

\smallskip

\item $Q_{t}=E\{U_{t+1}(\psi_{0})\mid V_{t}\}-E\{U_{t+1}(\psi_{0})\mid V_{t-1}\}$
\item $S_{t}=\partial\log f(\sigma_{t+1}\mid V_{t}) / \partial\sigma_{t+1}$
\item $W_{t}=\text{Var}(\sigma_{t+1}\mid V_{t})^{-1}$, which will be shown
to be equal to $\text{Var}\{U_{t+1}(\psi_{0})\mid V_{t}\}^{-1}$
\item $T_{t}=E(W_{t}\mid H_{t})$
\item $T_{t}^{\bullet}=E(T_{t}^{-1}\mid V_{t-1})$
\item $\epsilon_{t}=T_{t}^{-1}W_{t}\sigma_{t+1}+Q_{t}$
\item $W_{t,t-1}=\text{Var}(\epsilon_{t}\mid V_{t-1})^{-1}$

\smallskip

\item $\mathcal{H}$: the Hilbert space of all functions of $V_{T+1}$ that have mean zero finite variance.
\item $\Lambda_{t}^{1}=\{A_{t}^{1}=a_{t}^{1}(V_{T+1}):E(A_{t}^{1}\mid V_{t},Y_{t,1})=0\}$
\item $\Lambda_{t}^{2}=\{A_{t}^{2}=a_{t}^{2}(\sigma_{t+1},V_{t}):E(A_{t}^{2}\mid V_{t})=0,E(A_{t}^{2}\sigma_{t+1}\mid V_{t})=0\}$
\item $\Lambda_{t}^{3}=\sum_{m=1}^{t}\Gamma_{m}^{3}$
\item $\Gamma_{m}^{3}=\{A_{m}^{3}=a_{m}^{3}(V_{m}):E(A_{m}^{3}\mid H_{m})=0\}$
\item $\Lambda_{t}^{4}=\Gamma_{t}^{4}+\sum_{m=1}^{t-1}\Lambda_{m}^{\bullet}$
\item $\Lambda_{m}^{\bullet}=\{A_{m}^{\bullet}=a_{m}^{\bullet}(H_{m}):E(A_{m}^{\bullet}\mid V_{m-1})=0\}$
\item $\Gamma_{t}^{4}=\{A_{t}^{\bullet}+S_{t}E(Q_{t}A_{t}^{\bullet}\mid V_{t-1}):A_{t}^{\bullet}\in\Lambda_{t}^{\bullet}\}$
\item $\tilde{\Gamma}_{t}^{4}=\{A_{t}^{\bullet}-E(Q_{t}A_{t}^{\bullet}\mid V_{t-1})(T_{t}^{\bullet})^{-1}T_{t}^{-1}W_{t}\sigma_{t+1}:A_{t}^{\bullet}\in\Lambda_{t}^{\bullet}\}$
\item $\Lambda_{t}^{5}=\{S_{t}A_{t}^{\bullet}:A_{t}^{\bullet}\in\Lambda_{t}^{\bullet}\}$
\item $\tilde{\Lambda}_{t}^{5}=\{A_{t}^{\bullet}W_{t}\sigma_{t+1}:A_{t}^{\bullet}\in\Lambda_{t}^{\bullet}\}$
\item $\Lambda_{t}^{6}=\{a(V_{t-1})S_{t}:a(V_{t-1})\text{ is any function}\in\mathbb{R}^{p}\}$
\item $\tilde{\Lambda}_{t}^{6}=\{a(V_{t-1})\epsilon_{t}:a(V_{t-1})\text{ is any function}\in\mathbb{R}^{p}\}$

\smallskip

\item $D_{t}=E\{h(\sigma_{t+1},V_{t})W_{t}\sigma_{t+1}\mid H_{t}\}$, for
a given $h(\sigma_{t+1},V_{t})\in\mathcal{H}$
\item $R_{t}=E(B\sigma_{t+1}\mid V_{t})$ and $R_{t-1}=E(R_{t}W_{t}T_{t}^{-1}\mid V_{t-1})$,
for a given $B=b(V_{T+1})\in\mathcal{H}$
\end{itemize}

\subsection{Derivation of the general form of the efficient score} \label{sec:main-lemmas}

\begin{lemma}
\label{lem:semimodel-Mt}Let $\mathcal{M}$ denote the semiparametric model defined by consistency (Assumption \ref{assumption:consistency}), positivity (Assumption \ref{assumption:positivity}), (weak) sequential ignorability (Assumption \ref{assump:additional-semi-2}),
and Assumption \ref{assump:additional-semi-1}. Let $\mathcal{M}_t$ denote the semiparametric model defined by consistency, positivity, and the following $t$-specific
version of (weak) sequential ignorability and  (\ref{eq:assump-model}):
for a fixed $t$,
\begin{align*}
\log\frac{E\{Y_{t,1}(\bar{a}_t)\mid\bar{z}_t,\bar{a}_t\}}{E\{Y_{t,1}(\bar{a}_{t-1},0)\mid\bar{z}_t,\bar{a}_t\}} & =\gamma(t+1,\bar{z}_t,\bar{a}_t;\psi_{0}),\\
E\{Y_{t,1}(\bar{A}_{t-1},0)\mid H_t,A_t\} & =E\{Y_{t,1}(\bar{A}_{t-1},0)\mid H_t\}.
\end{align*}
Let $\Lambda$ and $\Lambda_t$ be the nuisance tangent space for
model $\mathcal{M}$ and model $\mathcal{M}_t$, respectively. Then
we have ${\cal M}=\bigcap_{t=1}^{T}\mathcal{M}_t$ and $\Lambda=\bigcap_{t=1}^{T}\Lambda_t$.
\end{lemma}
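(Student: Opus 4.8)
The first identity is immediate. Every $\mathcal{M}_t$ imposes consistency (Assumption~\ref{assumption:consistency}) and positivity (Assumption~\ref{assumption:positivity}), which are also among the defining conditions of $\mathcal{M}$; and the remaining conditions defining $\mathcal{M}$ --- weak sequential ignorability (Assumption~\ref{assump:additional-semi-2}) and the blip model \eqref{eq:assump-model}, each asserted for \emph{all} $t$ --- are exactly the conjunction over $t$ of the $t$-specific conditions defining $\mathcal{M}_t$. Hence $P_0 \in \mathcal{M}$ if and only if $P_0 \in \mathcal{M}_t$ for every $t$, i.e.\ $\mathcal{M} = \bigcap_{t=1}^T \mathcal{M}_t$. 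For the tangent spaces, one inclusion is equally easy: since $\mathcal{M} \subseteq \mathcal{M}_t$, every one-dimensional regular submodel of $\mathcal{M}$ through $P_0$ along which $\psi$ is held at $\psi_0$ is also such a submodel of $\mathcal{M}_t$, so every nuisance score of $\mathcal{M}$ is a nuisance score of $\mathcal{M}_t$; taking closed linear spans gives $\Lambda \subseteq \Lambda_t$ for each $t$, hence $\Lambda \subseteq \bigcap_{t=1}^T \Lambda_t$. The substantive content is the reverse inclusion $\bigcap_{t=1}^T \Lambda_t \subseteq \Lambda$, which I would prove by exploiting that the models $\mathcal{M}_t$ constrain disjoint factors of the likelihood.

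The plan is to use the time-ordered orthogonal decomposition of $\mathcal{H} = L^2_0(P_0)$. Ordering the observed data as $L_1, A_1, Y_{1,1}, L_2, A_2, Y_{2,1}, \ldots, L_T, A_T, Y_{T,1}$ and factoring the density into the corresponding one-step-ahead conditionals, one gets the orthogonal direct sum
\[
\mathcal{H} \;=\; \bigoplus_{t=1}^{T} \bigl( \mathcal{H}^{L}_{t} \oplus \mathcal{H}^{A}_{t} \oplus \mathcal{H}^{Y}_{t} \bigr),
\]
where $\mathcal{H}^{A}_{t}$ is the set of mean-zero functions of $V_t = (H_t, A_t)$ with $E(\cdot \mid H_t) = 0$, $\mathcal{H}^{Y}_{t}$ is the set of mean-zero functions of $(V_t, Y_{t,1})$ with $E(\cdot \mid V_t) = 0$, and $\mathcal{H}^{L}_{t}$ is the analogous innovation subspace associated with revealing $L_t$. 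The three families of conditional laws $\{f(\ell_s \mid \cdot)\}$, $\{f(a_s \mid h_s)\}$, $\{f(y_{s,1}\mid v_s)\}$ are variation independent, so each summand can be perturbed by a parametric submodel without disturbing the others, and a nuisance tangent space is obtained by intersecting $\mathcal{H}$ with the product of the perturbation directions allowed by the constraints in each factor.

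The key step is to check that the defining conditions of $\mathcal{M}_t$, rewritten in terms of the observed-data law, restrict only the single factor $f(y_{t,1}\mid v_t)$. Under consistency, $E\{Y_{t,1}(\bar a_t)\mid\bar z_t,\bar a_t\} = E(Y_{t,1}\mid V_t=(\bar z_t,\bar a_t))$; under the $t$-specific weak sequential ignorability together with consistency, $E\{Y_{t,1}(\bar a_{t-1},0)\mid\bar z_t,\bar a_t\} = E(Y_{t,1}\mid V_t=(\bar z_t,\bar a_{t-1},0))$; so, with $\psi$ fixed at $\psi_0$ for the nuisance-tangent-space computation, condition \eqref{eq:assump-model} and Assumption~\ref{assump:additional-semi-2} at time $t$ amount to a single constraint on the conditional law of $Y_{t,1}$ given $V_t$ and place no restriction on $f(a_t\mid h_t)$ or on any factor at times $s\neq t$. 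Consequently $\Lambda_t = \mathcal{B}_t \oplus \bigoplus_{s\neq t}\mathcal{H}^{Y}_{s} \oplus \bigoplus_{s=1}^{T}\bigl(\mathcal{H}^{L}_{s}\oplus\mathcal{H}^{A}_{s}\bigr)$, where $\mathcal{B}_t \subseteq \mathcal{H}^{Y}_{t}$ is the closed span of nuisance scores obtained by perturbing $f(y_{t,1}\mid v_t)$ subject to that constraint. The identical reasoning applied to $\mathcal{M}$, which imposes that same constraint on every factor $f(y_{t,1}\mid v_t)$ and nothing else, yields $\Lambda = \bigoplus_{t=1}^{T}\mathcal{B}_t \oplus \bigoplus_{s=1}^{T}\bigl(\mathcal{H}^{L}_{s}\oplus\mathcal{H}^{A}_{s}\bigr)$ with the \emph{same} subspaces $\mathcal{B}_t$ (the restriction on the $t$-th factor is literally the same in $\mathcal{M}$ and $\mathcal{M}_t$, and the factors are variation independent). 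Writing $h\in\mathcal{H}$ in its components, $h\in\Lambda_t$ holds iff its $\mathcal{H}^{Y}_{t}$-component lies in $\mathcal{B}_t$; hence $h\in\bigcap_{t}\Lambda_t$ iff all $T$ of its $Y$-components lie in the respective $\mathcal{B}_t$, iff $h\in\Lambda$. The main obstacle is the bookkeeping in this last step --- verifying rigorously that weak sequential ignorability injects no constraint on the treatment mechanism $f(a_t\mid h_t)$ (so that $\Lambda_t$ really is the full space with only the $\mathcal{H}^{Y}_{t}$ summand cut down) and that the identification identities above hold throughout $\mathcal{M}_t$ --- together with justifying the variation independence that makes the intersection distribute over the direct sum; these are exactly the points handled in Robins' derivation of the efficient score for structural nested mean models \citep{robins1994snmm}, which this section follows.
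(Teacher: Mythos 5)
Your proposal is correct, and it supplies substantially more than the paper does: the paper's entire proof of this lemma is the single sentence ``This follows directly from the definition of nuisance tangent space (i.e., $L^{2}$-closure of all parametric submodel nuisance scores),'' which silently treats the inclusion $\bigcap_{t}\Lambda_t\subseteq\Lambda$ as obvious. You correctly identify that this is the only direction with content and give the argument that actually justifies it: the one-step-ahead innovation decomposition of $L^2_0(P_0)$, variation independence of the conditional factors, and the observation that the $t$-specific constraints (via consistency and weak sequential ignorability, i.e., that $E\{Y_{t,1}e^{-\gamma(t+1,H_t,A_t;\psi_0)}\mid H_t,A_t\}$ is free of $A_t$) cut down only the $f(y_{t,1}\mid v_t)$ factor, and cut it down identically in $\mathcal{M}$ and in $\mathcal{M}_t$. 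One point worth reconciling with the paper: your clean form $\Lambda_t=\mathcal{B}_t\oplus\bigoplus_{s\neq t}\mathcal{H}^{Y}_{s}\oplus\bigoplus_{s}(\mathcal{H}^{L}_{s}\oplus\mathcal{H}^{A}_{s})$ looks superficially different from the paper's Lemma \ref{lem:nuisance-space-decomp}, where $\Gamma_t^{4}$ mixes a $Z_t$-innovation term $A_t^{\bullet}$ with the $Y_{t,1}$-innovation term $S_tE(Q_tA_t^{\bullet}\mid V_{t-1})$; but since $E(Q_tA_t^{\bullet}\mid V_{t-1})$ is a function of $V_{t-1}$, that cross term already lies in $\Lambda_t^{6}$, so $\Gamma_t^{4}+\Lambda_t^{6}=\Lambda_t^{\bullet}+\Lambda_t^{6}$ and the two descriptions of $\Lambda_t$ coincide as sets --- the paper's mixing is an artifact of its parametrization of $\sigma_{t+1}$, not a genuine coupling of factors. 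What your route buys is an honest proof of the distributivity $\bigcap_t\Lambda_t=\Lambda$ that the paper only asserts; what it costs is the bookkeeping you flag at the end, which is indeed the content of the paper's Lemmas \ref{lem:aux-1}, \ref{lem:aux-2-change-of-variable} and \ref{lem:nuisance-space-decomp} (following \citet{robins1994snmm}), so nothing beyond the paper's own toolkit is needed to close it.
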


\begin{proof}
This follows directly from the definition of nuisance tangent space
(i.e., $L^{2}$-closure of all parametric submodel nuisance scores).
\end{proof}
\begin{lemma}
\label{lem:nuisance-space-decomp}The nuisance tangent space for model
$\mathcal{M}_t$ is $\Lambda_t=\Lambda_t^{1}+\Lambda_t^{2}+\Lambda_t^{3}+\Lambda_t^{4}+\Lambda_t^{5}+\Lambda_t^{6}$,
where
\begin{align*}
\Lambda_t^{1} & =\{A_t^{1}=a_t^{1}(V_{T+1}):E(A_t^{1}\mid V_t,Y_{t,1})=0\},\\
\Lambda_t^{2} & =\{A_t^{2}=a_t^{2}(\sigma_{t+1},V_t):E(A_t^{2}\mid V_t)=0,E(A_t^{2}\sigma_{t+1}\mid V_t)=0\},\\
\Lambda_t^{3} & =\sum_{m=1}^t\Gamma_{m}^{3},\\
\Lambda_t^{4} & =\Gamma_t^{4}+\sum_{m=1}^{t-1}\Lambda_{m}^{\bullet},\\
\Lambda_t^{5} & =\{S_tA_t^{\bullet}:A_t^{\bullet}\in\Lambda_t^{\bullet}\},\\
\Lambda_t^{6} & =\{a(V_{t-1})S_t:a(V_{t-1})\text{ is any function}\in\mathbb{R}^{p}\},
\end{align*}
where
\begin{align*}
\Gamma_{m}^{3} & =\{A_{m}^{3}=a_{m}^{3}(V_{m}):E(A_{m}^{3}\mid H_{m})=0\},\\
\Lambda_{m}^{\bullet} & =\{A_{m}^{\bullet}=a_{m}^{\bullet}(H_{m}):E(A_{m}^{\bullet}\mid V_{m-1})=0\}\\
\Gamma_t^{4} & =\{A_t^{\bullet}+S_tE(Q_tA_t^{\bullet}\mid V_{t-1}):A_t^{\bullet}\in\Lambda_t^{\bullet}\},
\end{align*}
and
\begin{align*}
Q_t & =E\{U_{t+1}(\psi_{0})\mid V_t\}-E\{U_{t+1}(\psi_{0})\mid V_{t-1}\},\\
S_t & =\frac{\partial\log f(\sigma_{t+1}\mid V_t)}{\partial\sigma_{t+1}}.
\end{align*}
Both $Q_t$ and $S_t$ are evaluated at the truth.
\end{lemma}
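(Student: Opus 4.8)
The plan is to compute $\Lambda_t$ from a variation-independent factorization of the observed-data likelihood for $\mathcal{M}_t$, identifying the score subspace contributed by each factor; $\Lambda_t$ is then the $L^2$-closure of the sum of these subspaces \citep{tsiatis2007semiparametric}. Ordering the data in time and bundling $Z_m=(Y_{m-1,1},L_m)$, with $Z_1=L_1$, $Z_{T+1}=Y_{T,1}$, $H_m=(V_{m-1},Z_m)$ and $V_m=(H_m,A_m)$, the likelihood factors as $\prod_{m=1}^{T+1}f(Z_m\mid V_{m-1})\cdot\prod_{m=1}^{T}f(A_m\mid H_m)$. In $\mathcal{M}_t$ the only restriction beyond positivity concerns the layer-$t$ outcome: combining consistency, weak sequential ignorability $E\{Y_{t,1}(\bA_{t-1},0)\mid H_t,A_t\}=E\{Y_{t,1}(\bA_{t-1},0)\mid H_t\}$, and the blip model \eqref{eq:assump-model}, one obtains $E(Y_{t,1}\mid V_t)=E\{U_{t+1}(\psi_{0})\mid V_t\}$ as a fixed function of $p_t(H_t)$, $\psi_{0}$, and the otherwise free ``reference'' piece determining $E\{U_{t+1}(\psi_{0})\mid H_t\}$ (the function $\mu$ of Theorem~\ref{thm:efficient-score}). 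I would accordingly peel $f(Y_{t,1}\mid V_t)$ out of the block $f(Z_{t+1}\mid V_t)=f(Y_{t,1}\mid V_t)\,f(L_{t+1}\mid V_t,Y_{t,1})$ (for $t<T$; for $t=T$ the second factor is absent), treating all factors except $f(Y_{t,1}\mid V_t)$ as unrestricted.

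For the unrestricted factors the score subspaces are the usual conditional-mean-zero spaces. The treatment factors $f(A_m\mid H_m)$, $1\le m\le t$, give $\Gamma_m^{3}=\{a(V_m):E(a\mid H_m)=0\}$, summing to $\Lambda_t^{3}$. The covariate factors $f(Z_m\mid V_{m-1})$, $1\le m\le t-1$, give $\Lambda_m^{\bullet}=\{a(H_m):E(a\mid V_{m-1})=0\}$, contributing $\sum_{m=1}^{t-1}\Lambda_m^{\bullet}$ to $\Lambda_t^{4}$. Every remaining factor sits strictly after $Y_{t,1}$ in the time order---namely $f(L_{t+1}\mid V_t,Y_{t,1})$, then $f(Z_m\mid V_{m-1})$ and $f(A_m\mid H_m)$ for $m>t$, ending with $f(Y_{T,1}\mid V_T)$---and, being unrestricted and nested with first conditioning set $(V_t,Y_{t,1})$, the sum of their score subspaces equals $\{a(V_{T+1}):E(a\mid V_t,Y_{t,1})=0\}=\Lambda_t^{1}$.

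It remains to treat the restricted factor $f(Y_{t,1}\mid V_t)$, following Robins' derivation for structural nested mean models \citep{robins1994snmm}. One reparametrizes it by the reference piece, i.e.\ the function of $H_t$ specifying $E\{U_{t+1}(\psi_{0})\mid H_t\}$, together with the shape of the conditional law of the residual $\sigma_{t+1}$ given $V_t$; the blip model and weak sequential ignorability then recover $E(Y_{t,1}\mid V_t)$ from this reference piece, $\psi_{0}$ and $p_t(H_t)$. Differentiating: perturbations of the shape of $f(\sigma_{t+1}\mid V_t)$ that keep it a density and leave the conditional mean of $\sigma_{t+1}$ unchanged, i.e.\ satisfy $E(a\mid V_t)=0$ and $E(a\sigma_{t+1}\mid V_t)=0$, give $\Lambda_t^{2}$; perturbing the reference piece along a function of $H_t$ multiplies the location score $S_t=\partial\log f(\sigma_{t+1}\mid V_t)/\partial\sigma_{t+1}$ by a function of $H_t$, and splitting that function into its $V_{t-1}$-conditional mean plus residual yields $\Lambda_t^{6}=\{a(V_{t-1})S_t\}$ and $\Lambda_t^{5}=\{S_tA_t^{\bullet}:A_t^{\bullet}\in\Lambda_t^{\bullet}\}$. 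The one subtlety is that the reference piece, being a function of $H_t=(V_{t-1},Z_t)$, is not variation independent of $f(Z_t\mid V_{t-1})$: expressing a perturbation $A_t^{\bullet}\in\Lambda_t^{\bullet}$ of $f(Z_t\mid V_{t-1})$ in the reparametrized coordinates introduces the innovation $Q_t=E\{U_{t+1}(\psi_{0})\mid V_t\}-E\{U_{t+1}(\psi_{0})\mid V_{t-1}\}$, through the identity that such a perturbation shifts $E\{U_{t+1}(\psi_{0})\mid V_{t-1}\}$ by $E(Q_tA_t^{\bullet}\mid V_{t-1})$; the layer-$t$ covariate block therefore contributes $\Gamma_t^{4}=\{A_t^{\bullet}+S_tE(Q_tA_t^{\bullet}\mid V_{t-1}):A_t^{\bullet}\in\Lambda_t^{\bullet}\}$ in place of $\Lambda_t^{\bullet}$. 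Collecting the blocks gives $\Lambda_t=\Lambda_t^{1}+\Lambda_t^{2}+\Lambda_t^{3}+\Lambda_t^{4}+\Lambda_t^{5}+\Lambda_t^{6}$.

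The main obstacle is exactly this last step: setting up a genuinely variation-independent parametrization of $\mathcal{M}_t$ at layer $t$ and tracking how the entanglement of the reference piece with $f(Z_t\mid V_{t-1})$ forces the tilted form of $\Gamma_t^{4}$ (equivalently the representation $\tilde{\Gamma}_t^{4}$ recorded in the notation). Beyond this one must check that the displayed sum is closed in $L^2$ and record the orthogonality relations among $\Lambda_t^{1},\dots,\Lambda_t^{6}$; these are routine once the reparametrization is in place, and---together with the explicit forms of $W_t$, $T_t$, $T_t^{\bullet}$ and $\epsilon_t$---they are what will make the projection that produces the efficient score in Lemma~\ref{lem:EIF} tractable. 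All of these verifications parallel the corresponding steps in \citet{robins1994snmm} and I would relegate them to the supporting lemmas of Section~\ref{sec:supporting-lemmas}.
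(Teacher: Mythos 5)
Your proposal is correct and follows essentially the same route as the paper's proof: factorize the likelihood in time order, reparametrize the layer-$t$ outcome factor via the change of variables $Y_{t,1}\mapsto\sigma_{t+1}$ with the reference piece split into $\beta_t(V_{t-1})$ plus a centered $q_t^*(H_t)$ (yielding $\Lambda_t^{6}$ and $\Lambda_t^{5}$), and track the entanglement of that centering with $f(Z_t\mid V_{t-1})$, which is exactly what produces the tilted block $\Gamma_t^{4}$ in the paper's score for $\theta_4$. The only cosmetic difference is that you decompose the post-$Y_{t,1}$ factors individually before recombining them into $\Lambda_t^{1}$, whereas the paper bundles them as a single factor $f(V_{T+1}\mid V_t,Y_{t,1})$.
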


\begin{proof}
The likelihood for model $\mathcal{M}_t$ is
\begin{align}
L(\psi,\theta) & =f(V_{T+1}\mid V_t,Y_{t,1})f(Y_{t,1}\mid V_t)\prod_{m=1}^t\{f(A_{m}\mid H_{m})f(Z_{m}\mid V_{m-1})\}\nonumber \\
 & =f(V_{T+1}\mid V_t,Y_{t,1};\theta_{1})\nonumber \\
 & \times\frac{\partial\sigma_{t+1}}{\partial Y_{t,1}}\times f(\sigma_{t+1}(\psi,\theta_{4},\theta_{5},\theta_{6})\mid V_t;\theta_{2})\nonumber \\
 & \times\prod_{m=1}^t\{f(A_{m}\mid H_{m};\theta_{3})f(Z_{m}\mid V_{m-1};\theta_{4})\},\label{eq:nuisance-proofuse-1}
\end{align}
where
\begin{align}
\sigma_{t+1}(\psi,\theta_{4},\theta_{5},\theta_{6}) & =Y_{t,1}e^{-\gamma(t+1,H_t,A_t;\psi)}-\beta_t(V_{t-1};\theta_{6})\nonumber \\
 & -\left\{ q_t^{*}(H_t;\theta_{5})-\int q_t^{*}(z_t,V_{t-1};\theta_{5})dF(z_t\mid V_{t-1};\theta_{4})\right\} .\label{eq:nuisance-proofuse-2}
\end{align}
Here, $\theta=(\theta_{1},\theta_{2},\theta_{3},\theta_{4},\theta_{5},\theta_{6})$,
each is an infinite-dimensional nuisance parameter and are variationally
independent of each other. The second equality in (\ref{eq:nuisance-proofuse-1})
follows from the change of variables
\[
(L_{1},A_{1},Y_{1,1},\ldots,Y_{t,1},\ldots,L_{T},A_{T},Y_{t,1})\to(L_{1},A_{1},Y_{2},\ldots,\sigma_{t+1},\ldots,L_{T},A_{T},Y_{T,1})
\]
which has Jacobian $\partial\sigma_{t+1} / \partial Y_{t,1} = e^{-\gamma(t+1,H_t,A_t;\psi)}$.
By Lemma \ref{lem:aux-2-change-of-variable}, the constraints on model
$\mathcal{M}_t$ is equivalent to $E(\sigma_{t+1}\mid H_t,A_t)=0$,
i.e., $\int tdF(t\mid H_t,A_t)=0$.
There is no restrictions on $q_t^{*}(H_t;\theta_{5})$ and $\beta_t(H_{t-1},A_{t-1};\theta_{6})$.
The constraint $E(q_t\mid H_{t-1},A_{t-1})=0$ has been incorporated
because $q_t^{*}$ is centered in (\ref{eq:nuisance-proofuse-2}).

Below we derive the nuisance tangent space for each nuisance parameter
$(\theta_{1},\theta_{2},\theta_{3},\theta_{4},\theta_{5},\theta_{6})$.

\textbf{Nuisance tangent space $\Lambda_t^{1}$ for $\theta_{1}$.}
This follows from Theorem 4.6 in \citet{tsiatis2007semiparametric}.

\textbf{Nuisance tangent space $\Lambda_t^{2}$ for $\theta_{2}$.}
This follows from Theorem 4.7 in \citet{tsiatis2007semiparametric}.

\textbf{Nuisance tangent space $\Lambda_t^{3}$ for $\theta_{3}$.}
This follows from Theorem 4.6 in \citet{tsiatis2007semiparametric}.

\textbf{Nuisance tangent space $\Lambda_t^{4}$ for $\theta_{4}$.}
The score for $\theta_{4}$ equals
\[
\frac{\partial\log L(\psi,\theta)}{\partial\theta_{4}}=\frac{\partial\log f(\sigma_{t+1}(\psi,\theta_{4},\theta_{5},\theta_{6})\mid V_t;\theta_{2})}{\partial\theta_{4}}+\sum_{m=1}^t\frac{\partial\log f(Z_{m}\mid V_{m-1};\theta_{4})}{\partial\theta_{4}}.
\]
The $\frac{\partial\log f(\sigma_{t+1}(\psi,\theta_{4},\theta_{5},\theta_{6})\mid V_t;\theta_{2})}{\partial\theta_{4}}+\frac{\partial\log f(Z_t\mid V_{t-1};\theta_{4})}{\partial\theta_{4}}$
part correspond to $\Gamma_t^{4}$, which is shown in the proof
of Theorem A4.1 in \citet{robins1994snmm}. The $\sum_{m=1}^{t-1}\frac{\partial\log f(Z_{m}\mid V_{m-1};\theta_{4})}{\partial\theta_{4}}$
part correspond to $\sum_{m=1}^{t-1}\Lambda_{m}^{\bullet}$, which
follows from Theorem 4.6 in \citet{tsiatis2007semiparametric}.

\textbf{Nuisance tangent space $\Lambda_t^{5}$ for $\theta_{5}$.
}The score for $\theta_{5}$ equals
\[
\frac{\partial\log L(\psi,\theta)}{\partial\theta_{5}}=\frac{\partial\log f(\sigma_{t+1}(\psi,\theta_{4},\theta_{5},\theta_{6})\mid V_t;\theta_{2})}{\partial\theta_{5}}.
\]
The form of $\Lambda_t^{5}$ is derived in the proof of Theorem
A4.1 in \citet{robins1994snmm}.

\textbf{Nuisance tangent space $\Lambda_t^{6}$ for $\theta_{6}$.
}The score for $\theta_{6}$ equals
\begin{align*}
\frac{\partial\log L(\psi,\theta)}{\partial\theta_{6}} & =\frac{\partial\log f(\sigma_{t+1}(\psi,\theta_{4},\theta_{5},\theta_{6})\mid V_t;\theta_{2})}{\partial\theta_{6}}\\
 & =-\frac{\partial\log f(\sigma_{t+1}(\psi,\theta_{4},\theta_{5},\theta_{6})\mid V_t;\theta_{2})}{\partial\sigma_{t+1}}\times\frac{\partial\beta_t(V_{t-1};\theta_{6})}{\partial\theta_{6}}.
\end{align*}
Because there is no restriction on $\beta_t(V_{t-1};\theta_{6})$,
$\Lambda_t^{6}=\{a(V_{t-1})S_t:a(V_{t-1})\text{ is any function}\in\mathbb{R}^{p}\}$.
\end{proof}
\begin{lemma}
\label{lem:nuisance-space-orthogonalization}The nuisance tangent
space $\Lambda_t$ in Lemma \ref{lem:nuisance-space-decomp} equals
the direct sum of the following spaces: 
\[
\Lambda_t=\Lambda_t^{1}\oplus\Lambda_t^{2}\oplus\bigoplus_{m=1}^t\Gamma_{m}^{3}\oplus\tilde{\Gamma}_t^{4}\oplus\bigoplus_{m=1}^{t-1}\Lambda_{m}^{\bullet}\oplus\tilde{\Lambda}_t^{5}\oplus\tilde{\Lambda}_t^{6},
\]
where
\begin{align*}
\tilde{\Gamma}_t^{4} & =\{A_t^{\bullet}-E(Q_tA_t^{\bullet}\mid V_{t-1})(T_t^{\bullet})^{-1}T_t^{-1}W_t\sigma_{t+1}:A_t^{\bullet}\in\Lambda_t^{\bullet}\},\\
\tilde{\Lambda}_t^{5} & =\{A_t^{\bullet}W_t\sigma_{t+1}:A_t^{\bullet}\in\Lambda_t^{\bullet}\},\\
\tilde{\Lambda}_t^{6} & =\{a(V_{t-1})\epsilon_t:a(V_{t-1})\text{ is any function}\in\mathbb{R}^{p}\},
\end{align*}
and
\begin{align*}
W_t & =\var(\sigma_{t+1}\mid V_t)^{-1},\\
T_t & =E(W_t\mid H_t),\\
T_t^{\bullet} & =E(T_t^{-1}\mid V_{t-1}),\\
\epsilon_t & =T_t^{-1}W_t\sigma_{t+1}+Q_t.
\end{align*}
\end{lemma}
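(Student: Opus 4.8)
The plan is to prove Lemma~\ref{lem:nuisance-space-orthogonalization} by a block Gram--Schmidt orthogonalization of the decomposition in Lemma~\ref{lem:nuisance-space-decomp}, carried out in the Hilbert space $\mathcal{H}$ of mean-zero, finite-variance functions of $V_{T+1}$ with the usual $L^2$ inner product (two subspaces being orthogonal when the cross-covariance of any two of their elements vanishes). The argument closely follows the orthogonalization in the proof of Theorem~A4.1 of \citet{robins1994snmm}. First I would record the structural facts that make the bookkeeping work. Using the nested conditioning $V_{m-1}\subseteq H_m\subseteq V_m$ and the defining mean-zero constraints, iterated expectations immediately give that $\Lambda_t^1,\Lambda_t^2,\Gamma_1^3,\dots,\Gamma_t^3,\Lambda_1^\bullet,\dots,\Lambda_{t-1}^\bullet$ are already pairwise orthogonal and orthogonal to everything else (e.g., every element of the listed spaces is $(V_t,Y_{t,1})$-measurable, hence orthogonal to $\Lambda_t^1$; and for $m<m'$ an element of $\Gamma_{m'}^3$ has conditional mean zero given $H_{m'}\supseteq V_m$, killing its product with $\Gamma_m^3$). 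Second, the constraint $E(\sigma_{t+1}\mid V_t)=0$ (Lemma~\ref{lem:aux-2-change-of-variable}) and the conditional-score identities $E(S_t\mid V_t)=0$, $E(S_t\sigma_{t+1}\mid V_t)=-1$ yield $S_t=-W_t\sigma_{t+1}+r_t$ with $E(r_t\mid V_t)=E(r_t\sigma_{t+1}\mid V_t)=0$ (the sign being immaterial), so that $r_t$ times any function of $V_t$ lands in $\Lambda_t^2$; also $W_t=\var(\sigma_{t+1}\mid V_t)^{-1}=\var\{U_{t+1}(\psi_0)\mid V_t\}^{-1}$ since $\sigma_{t+1}$ and $U_{t+1}(\psi_0)$ differ by a function of $V_t$. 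Third, and crucially, weak sequential ignorability (Assumption~\ref{assump:additional-semi-2}) combined with the blipping-down identity forces $E\{U_{t+1}(\psi_0)\mid V_t\}=E\{U_{t+1}(\psi_0)\mid H_t\}$, so $Q_t$ is $H_t$-measurable with $E(Q_t\mid V_{t-1})=0$, i.e., $Q_t\in\Lambda_t^\bullet$.

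Using $S_t=-W_t\sigma_{t+1}+r_t$, every generator of $\Gamma_t^4$, $\Lambda_t^5$, $\Lambda_t^6$ changes, modulo an element of $\Lambda_t^2$, into a generator with $S_t$ replaced by $-W_t\sigma_{t+1}$; hence $\Lambda_t^2+\Gamma_t^4+\Lambda_t^5+\Lambda_t^6$ equals $\Lambda_t^2$ plus the three families $\{A_t^\bullet-E(Q_tA_t^\bullet\mid V_{t-1})W_t\sigma_{t+1}\}$, $\tilde\Lambda_t^5=\{A_t^\bullet W_t\sigma_{t+1}\}$ and $\{a(V_{t-1})W_t\sigma_{t+1}\}$. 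A one-line conditioning argument shows $\tilde\Lambda_t^5$ is already orthogonal to $\Lambda_t^1,\Lambda_t^2$, the $\Gamma_m^3$ and the $\Lambda_m^\bullet$ ($m\le t-1$). It then remains to orthogonalize the first and third families against $\tilde\Lambda_t^5$ and against each other. Carrying this out --- projecting the $a(V_{t-1})W_t\sigma_{t+1}$ direction partly onto $\tilde\Lambda_t^5$ and partly onto the $Q_t$-direction (allowed because $Q_t\in\Lambda_t^\bullet$), and using $E(T_t^{-1}W_t\mid H_t)=1$ and $E(T_t^{-1}\mid V_{t-1})=T_t^\bullet$ --- produces precisely $\tilde\Lambda_t^6=\{a(V_{t-1})\epsilon_t\}$ with $\epsilon_t=T_t^{-1}W_t\sigma_{t+1}+Q_t$, and $\tilde\Gamma_t^4=\{A_t^\bullet-E(Q_tA_t^\bullet\mid V_{t-1})(T_t^\bullet)^{-1}T_t^{-1}W_t\sigma_{t+1}\}$; the factors $(T_t^\bullet)^{-1}T_t^{-1}$ and the additive $Q_t$ are exactly what is needed to make these residuals mutually orthogonal.

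I would then verify the required orthogonalities by direct conditioning: $\tilde\Gamma_t^4\perp\tilde\Lambda_t^5$ and $\tilde\Lambda_t^5\perp\tilde\Lambda_t^6$ by conditioning successively on $V_t$, $H_t$, $V_{t-1}$ and invoking $E(\sigma_{t+1}\mid V_t)=0$, $E(T_t^{-1}W_t\mid H_t)=1$, $E(A_t^\bullet\mid V_{t-1})=0$; $\tilde\Gamma_t^4\perp\tilde\Lambda_t^6$ by the same chain, where the four cross-terms collapse to two that cancel via $E(T_t^{-1}\mid V_{t-1})=T_t^\bullet$; and $\tilde\Gamma_t^4,\tilde\Lambda_t^5,\tilde\Lambda_t^6$ orthogonal to $\Lambda_t^1,\Lambda_t^2$, the $\Gamma_m^3$ and the $\Lambda_m^\bullet$ ($m\le t-1$), using $Q_t\in\Lambda_t^\bullet$ and $E(\epsilon_t\mid V_{t-1})=0$. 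Finally, re-expressing the generators of $\tilde\Gamma_t^4+\tilde\Lambda_t^5+\tilde\Lambda_t^6$ back in terms of those of $\Gamma_t^4+\Lambda_t^5+\Lambda_t^6$ modulo $\Lambda_t^2$ and the $\Lambda_m^\bullet$ (all adjustments being built from $S_t$, $W_t\sigma_{t+1}$, $Q_t$ and functions of $H_t$ or $V_{t-1}$) shows the two sums of subspaces coincide, so by Lemma~\ref{lem:nuisance-space-decomp} the orthogonal sum is $\Lambda_t$; closedness of each summand, hence of the direct sum, follows as in \citet{robins1994snmm}. The main obstacle is purely bookkeeping --- keeping the nested conditioning straight through many cross-term computations and pinning down the projection constants $T_t$, $T_t^\bullet$ and the exact placement of $Q_t$ inside $\epsilon_t$, since it is these constants alone that secure the orthogonality --- while every individual step is elementary Hilbert-space projection and iterated expectation.
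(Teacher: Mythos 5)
Your proposal is correct and follows essentially the same route as the paper: it fixes the already-orthogonal blocks $\Lambda_t^{1}$, $\Gamma_{m}^{3}$, $\Lambda_{m}^{\bullet}$ first, then Gram--Schmidt-orthogonalizes $\Lambda_t^{5}$, $\Gamma_t^{4}$, $\Lambda_t^{6}$ in the same order against $\Lambda_t^{2}$, $\tilde{\Lambda}_t^{5}$, and $\tilde{\Gamma}_t^{4}$, using the same identities $E(S_t\mid V_t)=0$, $E(S_t\sigma_{t+1}\mid V_t)=-1$, $E(\sigma_{t+1}\mid V_t)=0$ and the constants $T_t$, $T_t^{\bullet}$. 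Your reformulation of the $\Lambda_t^{2}$-reduction as the decomposition $S_t=-W_t\sigma_{t+1}+r_t$ with $r_t\in\Lambda_t^{2}$ is just a repackaging of the paper's explicit projection computation, so the two arguments coincide.
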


\begin{proof}
In Lemma \ref{lem:aux-3-orthogonal-Lambdas} we show that $\Lambda_t^{1}$,
$\{\Gamma_{m}^{3}\}_{1\le m\leq t}$, $\{\Lambda_{m}^{\bullet}\}_{1\leq m\leq t-1}$
are orthogonal to each other and orthogonal to the rest subspaces
$\Lambda_t^{2},\Gamma_t^{4},\Lambda_t^{5},\Lambda_t^{6}$.
Thus, it suffices to show that $\tilde{\Lambda}_t^{5}=\Pi(\Lambda_t^{5}\mid\Lambda_t^{2,\perp})$,
$\tilde{\Gamma}_t^{4}=\Pi\{\Gamma_t^{4}\mid(\Lambda_t^{2}\oplus\tilde{\Lambda}_t^{5})^{\perp}\}$,
and $\tilde{\Lambda}_t^{6}=\Pi\{\Lambda_t^{6}\mid(\Lambda_t^{2}\oplus\tilde{\Lambda}_t^{5}\oplus\tilde{\Gamma}_t^{4})^{\perp}\}$.

First, we show that $\tilde{\Lambda}_t^{5}=\Pi(\Lambda_t^{5}\mid\Lambda_t^{2,\perp})$.
For any $S_tA_t^{\bullet}\in\Lambda_t^{5}$, because $S_tA_t^{\bullet}$
is a function of $(\sigma_{t+1},V_t)$, Lemma \ref{lem:aux-4-projection-1}
implies that 
\begin{equation}
\Pi(S_tA_t^{\bullet}\mid\Lambda_t^{2})=S_tA_t^{\bullet}-E(S_tA_t^{\bullet}\sigma_{t+1}\mid V_t)\var(\sigma_{t+1}\mid V_t)^{-1}\sigma_{t+1}-E(S_tA_t^{\bullet}\mid V_t).\label{eq:nuisance-orthogonalization-proofuse-1}
\end{equation}
By Lemma \ref{lem:aux-5-integrate-S} we have $E(S_tA_t^{\bullet}\sigma_{t+1}\mid V_t)=A_t^{\bullet}E(S_t\sigma_{t+1}\mid V_t)=-A_t^{\bullet}$
and $E(S_tA_t^{\bullet}\mid V_t)=A_t^{\bullet}E(S_t\mid V_t)=0$,
so (\ref{eq:nuisance-orthogonalization-proofuse-1}) implies
\begin{align*}
\Pi(S_tA_t^{\bullet}\mid\Lambda_t^{2,\perp}) & =S_tA_t^{\bullet}-\Pi(S_tA_t^{\bullet}\mid\Lambda_t^{2})=-A_t^{\bullet}W_t\sigma_{t+1}.
\end{align*}
This gives the form of $\tilde{\Lambda}_t^{5}$.

Second, we show that $\tilde{\Gamma}_t^{4}=\Pi\{\Gamma_t^{4}\mid(\Lambda_t^{2}\oplus\tilde{\Lambda}_t^{5})^{\perp}\}$.
For any $A_t^{\bullet}+S_tE(Q_tA_t^{\bullet}\mid V_{t-1})\equiv g_{1}(\sigma_{t+1},V_t)\in\Gamma_t^{4}$
where $A_t^{\bullet}(H_t)$ satisfies $E(A_t^{\bullet}\mid V_{t-1})=0$.
By Lemma \ref{lem:aux-hilbert-space-sequential-projection}, it suffices
to derive $\Pi\{\Pi(g_{1}\mid\Lambda_t^{2,\perp})\mid\tilde{\Lambda}_t^{5,\perp}\}$.
By Lemma \ref{lem:aux-4-projection-1} we have
\begin{equation}
\Pi\{g_{1}(\sigma_{t+1},V_t)\mid\Lambda_t^{2}\}=g_{1}-E(g_{1}\sigma_{t+1}\mid V_t)\var(\sigma_{t+1}\mid V_t)^{-1}\sigma_{t+1}-E(g_{1}\mid V_t).\label{eq:nuisance-orthogonalization-proofuse-2}
\end{equation}
By Lemma \ref{lem:aux-5-integrate-S} we have $E(g_{1}\mid V_t)=A_t^{\bullet}+E(S_t\mid V_t)E(Q_tA_t^{\bullet}\mid V_{t-1})=A_t^{\bullet}$
and
\[
E(g_{1}\sigma_{t+1}\mid V_t)=A_t^{\bullet}E(\sigma_{t+1}\mid V_t)+E(S_t\sigma_{t+1}\mid V_t)E(Q_tA_t^{\bullet}\mid V_{t-1})=-E(Q_tA_t^{\bullet}\mid V_{t-1}).
\]
These combining with (\ref{eq:nuisance-orthogonalization-proofuse-2})
yields that
\begin{align*}
\Pi\{g_{1}(\sigma_{t+1},V_t)\mid\Lambda_t^{2,\perp}\} & =g_{1}(\sigma_{t+1},V_t)-\Pi\{g_{1}(\sigma_{t+1},V_t)\mid\Lambda_t^{2}\}\\
 & =A_t^{\bullet}-E(Q_tA_t^{\bullet}\mid V_{t-1})W_t\sigma_{t+1}.
\end{align*}
Now, let $g_{2}(\sigma_{t+1},V_t)=A_t^{\bullet}-E(Q_tA_t^{\bullet}\mid V_{t-1})W_t\sigma_{t+1}$.
By Lemma \ref{lem:aux-6-Lambda-tilde-5} we have
\begin{equation}
\Pi\{g_{2}(\sigma_{t+1},V_t)\mid\tilde{\Lambda}_t^{5}\}=\{-E(D_tT_t^{-1}\mid V_{t-1})(T_t^{\bullet}T_t)^{-1}+D_tT_t^{-1}\}W_t\sigma_{t+1},\label{eq:nuisance-orthogonalization-proofuse-3}
\end{equation}
where
\begin{align}
D_t & =E\{g_{2}(\sigma_{t+1},V_t)W_t\sigma_{t+1}\mid H_t\}\nonumber \\
 & =E(A_t^{\bullet}W_t\sigma_{t+1}\mid H_t)-E\{E(Q_tA_t^{\bullet}\mid V_{t-1})W_t^{2}\sigma_{t+1}^{2}\mid H_t\}\nonumber \\
 & =-E(Q_tA_t^{\bullet}\mid V_{t-1})T_t.\label{eq:nuisance-orthogonalization-proofuse-4}
\end{align}
The third equality in (\ref{eq:nuisance-orthogonalization-proofuse-4})
follows from Lemma \ref{lem:aux-comp-useful-result} and the fact
that $E(A_t^{\bullet}W_t\sigma_{t+1}\mid H_t)=E\{A_t^{\bullet}W_tE(\sigma_{t+1}\mid V_t)\mid H_t\}=0$.
Thus, plugging (\ref{eq:nuisance-orthogonalization-proofuse-4}) into
(\ref{eq:nuisance-orthogonalization-proofuse-3}) and we have
\[
\Pi\{g_{2}(\sigma_{t+1},V_t)\mid\tilde{\Lambda}_t^{5}\}=\{E(Q_tA_t^{\bullet}\mid V_{t-1})(T_t^{\bullet}T_t)^{-1}-E(Q_tA_t^{\bullet}\mid V_{t-1})\}W_t\sigma_{t+1}
\]
and
\begin{align*}
\Pi\{g_{2}(\sigma_{t+1},V_t)\mid\tilde{\Lambda}_t^{5,\perp}\} & =g_{2}(\sigma_{t+1},V_t)-\Pi\{g_{2}(\sigma_{t+1},V_t)\mid\tilde{\Lambda}_t^{5}\}\\
 & =A_t^{\bullet}-E(Q_tA_t^{\bullet}\mid V_{t-1})(T_t^{\bullet}T_t)^{-1}W_t\sigma_{t+1}.
\end{align*}
This gives the form of $\tilde{\Gamma}_t^{4}$.

Last, we show that $\tilde{\Lambda}_t^{6}=\Pi\{\Lambda_t^{6}\mid(\Lambda_t^{2}\oplus\tilde{\Lambda}_t^{5}\oplus\tilde{\Gamma}_t^{4})^{\perp}\}$.
For any $a(V_{t-1})S_t\in\Lambda_t^{6}$, by Lemma \ref{lem:aux-hilbert-space-sequential-projection},
it suffices to derive $\Pi(\Pi[\Pi\{a(V_{t-1})S_t\mid\Lambda_t^{2,\perp}\}\mid\tilde{\Lambda}_t^{5,\perp}]\mid\tilde{\Gamma}_t^{4,\perp})$.
By Lemma \ref{lem:aux-4-projection-1} we have 
\begin{align*}
\Pi\{a(V_{t-1})S_t\mid\Lambda_t^{2}\} & =a(V_{t-1})S_t-E\{a(V_{t-1})S_t\sigma_{t+1}\mid V_t\}W_t\sigma_{t+1}-E\{a(V_{t-1})S_t\mid V_t\}\\
 & =a(V_{t-1})S_t+a(V_{t-1})W_t\sigma_{t+1},
\end{align*}
where the second equality follows from Lemma \ref{lem:aux-5-integrate-S}.
Thus $\Pi\{a(V_{t-1})S_t\mid\Lambda_t^{2,\perp}\}=-a(V_{t-1})W_t\sigma_{t+1}\equiv g_{3}(V_t,\sigma_{t+1})$.
By Lemma \ref{lem:aux-6-Lambda-tilde-5} we have 
\[
\Pi\{g_{3}(\sigma_{t+1},V_t)\mid\tilde{\Lambda}_t^{5}\}=\{-E(D_t^{(2)}T_t^{-1}\mid V_{t-1})(T_t^{\bullet}T_t)^{-1}+D_t^{(2)}T_t^{-1}\}W_t\sigma_{t+1},
\]
where, by using Lemma \ref{lem:aux-comp-useful-result}, we have 
\begin{align*}
D_t^{(2)} & =E\{g_{3}(\sigma_{t+1},V_t)W_t\sigma_{t+1}\mid H_t\}\\
 & =-a(V_{t-1})E(W_t^{2}\sigma_{t+1}^{2}\mid H_t)\\
 & =-a(V_{t-1})T_t.
\end{align*}
Thus, 
\[
\Pi\{g_{3}(\sigma_{t+1},V_t)\mid\tilde{\Lambda}_t^{5}\}=\{a(V_{t-1})(T_t^{\bullet}T_t)^{-1}-a(V_{t-1})\}W_t\sigma_{t+1}
\]
and
\[
\Pi\{g_{3}(\sigma_{t+1},V_t)\mid\tilde{\Lambda}_t^{5,\perp}\}=-a(V_{t-1})(T_t^{\bullet}T_t)^{-1}W_t\sigma_{t+1}\equiv g_{4}(\sigma_{t+1},V_t).
\]
Note that $E(g_{4}\mid H_t)=E(g_{4}\mid V_{t-1})=0$, so for $O_{3}^{*}$
and $O_{4}^{*}$ defined in Lemma \ref{lem:aux-6-Lambda-tilde-5}
we have
\begin{align*}
O_{3}^{*}O_{4}^{*}(g_{4}) & =-E\{g(T_t^{\bullet})^{-1}T_t^{-1}W_t\sigma_{t+1}\mid V_{t-1}\}Q_t\\
 & =E\{a(V_{t-1})(T_t^{\bullet})^{-2}T_t^{-2}W_t^{2}\sigma_{t+1}^{2}\mid V_{t-1}\}Q_t\\
 & =a(V_{t-1})(T_t^{\bullet})^{-2}E\{T_t^{-2}E(W_t^{2}\sigma_{t+1}^{2}\mid H_t)\mid V_{t-1}\}Q_t\\
 & =a(V_{t-1})(T_t^{\bullet})^{-2}E(T_t^{-1}\mid V_{t-1})Q_t\\
 & =a(V_{t-1})(T_t^{\bullet})^{-1}Q_t,
\end{align*}
where the third to last equality follows from Lemma \ref{lem:aux-comp-useful-result}.
So by Lemma \ref{lem:aux-6-Lambda-tilde-5} we have
\[
\Pi\{g_{4}(\sigma_{t+1},V_t)\mid\tilde{\Gamma}_t^{4}\}=a(V_{t-1})(T_t^{\bullet})^{-1}Q_t-a(V_{t-1})(T_t^{\bullet})^{-1}E(Q_t^{2}\mid V_{t-1})W_{t,t-1}\epsilon_t,
\]
where $W_{t,t-1}=\var(\epsilon_t\mid V_{t-1})^{-1}$. Hence,
\begin{align*}
 & \Pi\{g_{4}(\sigma_{t+1},V_t)\mid\tilde{\Gamma}_t^{4,\perp}\}\\
= & g_{4}(\sigma_{t+1},V_t)-\Pi\{g_{4}(\sigma_{t+1},V_t)\mid\tilde{\Gamma}_t^{4}\}\\
= & -a(V_{t-1})(T_t^{\bullet})^{-1}(T_t^{-1}W_t\sigma_{t+1}+Q_t)+a(V_{t-1})(T_t^{\bullet})^{-1}E(Q_t^{2}\mid V_{t-1})W_{t,t-1}\epsilon_t\\
= & a(V_{t-1})(T_t^{\bullet})^{-1}\epsilon_t\{W_{t,t-1}E(Q_t^{2}\mid V_{t-1})-1\}\\
= & -a(V_{t-1})W_{t,t-1}\epsilon_t,
\end{align*}
where the last equality follows from Lemma \ref{lem:aux-comp-useful-result}.
Thus, 
\begin{align*}
\tilde{\Lambda}_t^{6} & =\Pi\{\Lambda_t^{6}\mid(\Lambda_t^{2}\oplus\tilde{\Lambda}_t^{5}\oplus\tilde{\Gamma}_t^{4})^{\perp}\}\\
 & =\{-a(V_{t-1})W_{t,t-1}\epsilon_t:a(V_{t-1})\text{ is any function }\in\mathbb{R}^{p}\}\\
 & =\{a(V_{t-1})\epsilon_t:a(V_{t-1})\text{ is any function }\in\mathbb{R}^{p}\},
\end{align*}
where the last equality follows from the fact that $W_{t,t-1}$ is
a function of $V_{t-1}$.

This completes the proof.
\end{proof}
\begin{lemma}
\label{lem:projection-Lambda-t}For any $B=b(V_{T+1})\in\mathcal{H}$,
its projection onto $\Lambda_t^{\perp}$ is
\[
\Pi(B\mid\Lambda_t^{\perp})=\{R_t-T_t^{-1}E(R_tW_t\mid H_t)\}W_t\sigma_{t+1},
\]
where $R_t=E(B\sigma_{t+1}\mid V_t)$, and $W_t,T_t$ are
defined in Lemma \ref{lem:nuisance-space-orthogonalization}.
\end{lemma}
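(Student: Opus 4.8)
The plan is to combine the orthogonalized decomposition of $\Lambda_t$ from Lemma \ref{lem:nuisance-space-orthogonalization} with a parallel orthogonal decomposition of the full Hilbert space $\mathcal{H}$, reducing the projection to a computation inside a small subspace. First I would decompose $\mathcal{H}$ by conditioning recursively along the nested $\sigma$-fields $\sigma(V_{T+1})\supseteq\sigma(V_t,Y_{t,1})=\sigma(V_t,\sigma_{t+1})\supseteq\sigma(V_t)\supseteq\sigma(H_t)\supseteq\sigma(V_{t-1})\supseteq\cdots$: the part of $B$ centered given $(V_t,\sigma_{t+1})$ lies in $\Lambda_t^1$; the remaining mean-zero function of $(V_t,\sigma_{t+1})$ splits into its component uncorrelated with $\sigma_{t+1}$ given $V_t$ (which is $\Lambda_t^2$), a component of the form $a(V_t)\sigma_{t+1}$ (call this subspace $\Lambda_t^\sigma$), and a mean-zero function of $V_t$; and the latter decomposes recursively into $\bigoplus_{m=1}^t\Gamma_m^3\oplus\bigoplus_{m=1}^t\Lambda_m^\bullet$. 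Comparing with Lemma \ref{lem:nuisance-space-orthogonalization}, $\mathcal{H}$ and $\Lambda_t$ share the common orthogonal summand $\mathcal{C}:=\Lambda_t^1\oplus\Lambda_t^2\oplus\bigoplus_{m=1}^t\Gamma_m^3\oplus\bigoplus_{m=1}^{t-1}\Lambda_m^\bullet$, with $\mathcal{H}=\mathcal{C}\oplus\Lambda_t^\bullet\oplus\Lambda_t^\sigma$ and $\Lambda_t=\mathcal{C}\oplus\tilde{\Gamma}_t^4\oplus\tilde{\Lambda}_t^5\oplus\tilde{\Lambda}_t^6$. Consequently $\Lambda_t^\perp\subseteq\mathcal{C}^\perp=\Lambda_t^\bullet\oplus\Lambda_t^\sigma$, and $\Pi(B\mid\Lambda_t^\perp)$ can be computed entirely inside this small space.

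Next I would identify $\Lambda_t^\perp$ explicitly. For a function $h$ of $V_t$ put $\eta_h:=\{h(V_t)-T_t^{-1}E(h(V_t)W_t\mid H_t)\}W_t\sigma_{t+1}$; using $E(\sigma_{t+1}^2\mid V_t)=W_t^{-1}$ and $T_t=E(W_t\mid H_t)$ one checks $E(\{h-T_t^{-1}E(hW_t\mid H_t)\}W_t\mid H_t)=0$, so every $a(V_t)\sigma_{t+1}\in\Lambda_t^\sigma$ decomposes uniquely and orthogonally as $\eta_h+g(H_t)W_t\sigma_{t+1}$, giving $\mathcal{C}^\perp=\Lambda_t^\bullet\oplus\{\eta_h:h\}\oplus\{g(H_t)W_t\sigma_{t+1}:g\}$. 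I would then show $\Lambda_t\cap\mathcal{C}^\perp=\tilde{\Gamma}_t^4\oplus\tilde{\Lambda}_t^5\oplus\tilde{\Lambda}_t^6=\Lambda_t^\bullet\oplus\{g(H_t)W_t\sigma_{t+1}:g\}$. The inclusion $\subseteq$ is immediate because each of $\tilde{\Gamma}_t^4,\tilde{\Lambda}_t^5,\tilde{\Lambda}_t^6$ has $\Lambda_t^\bullet$-component that is $H_t$-measurable and $V_{t-1}$-centered (here one uses that $Q_t=E\{U_{t+1}(\psi_0)\mid V_t\}-E\{U_{t+1}(\psi_0)\mid V_{t-1}\}$ is in fact $H_t$-measurable, an implication of Lemma \ref{lem:aux-1}) and $\sigma_{t+1}$-coefficient that is $H_t$-measurable; for $\supseteq$, given $C_0\in\Lambda_t^\bullet$ and a function $g$ of $H_t$, I would solve the matching equations for the parameters $a(V_{t-1}),A_t^\bullet,B_t^\bullet$ of a generic element of $\tilde{\Gamma}_t^4\oplus\tilde{\Lambda}_t^5\oplus\tilde{\Lambda}_t^6$, obtaining $A_t^\bullet=C_0-a(V_{t-1})Q_t$, a scalar linear equation for $a(V_{t-1})$ with nonzero coefficient $E(Q_t^2\mid V_{t-1})+T_t^\bullet$, and a forced $B_t^\bullet$ for which $E(B_t^\bullet\mid V_{t-1})=0$ holds by construction. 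Hence $\Lambda_t^\perp=\mathcal{C}^\perp\ominus(\Lambda_t\cap\mathcal{C}^\perp)=\{\eta_h:h\ \text{a function of}\ V_t\}$.

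With $\Lambda_t^\perp=\{\eta_h\}$ in hand, the formula follows by checking the defining property of the projection: $\eta_{R_t}\in\Lambda_t^\perp$ (take $h=R_t$), and $B-\eta_{R_t}\perp\eta_h$ for every $h$. For the orthogonality, evaluating $E[(B-\eta_{R_t})\eta_h]$ by conditioning first on $V_t$ turns $E(B\sigma_{t+1}\mid V_t)$ into $R_t$ and $E(\sigma_{t+1}^2\mid V_t)$ into $W_t^{-1}$, leaving $E[\{h-T_t^{-1}E(hW_t\mid H_t)\}W_t\,T_t^{-1}E(R_tW_t\mid H_t)]$, whose conditional expectation given $H_t$ vanishes by the same identity as above. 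Therefore $\Pi(B\mid\Lambda_t^\perp)=\eta_{R_t}=\{R_t-T_t^{-1}E(R_tW_t\mid H_t)\}W_t\sigma_{t+1}$, which is the claim.

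I expect the $\supseteq$ part of the second step—showing $\Lambda_t^\bullet\oplus\{g(H_t)W_t\sigma_{t+1}\}$ is exhausted by $\tilde{\Gamma}_t^4\oplus\tilde{\Lambda}_t^5\oplus\tilde{\Lambda}_t^6$—to be the main obstacle, since this is where all the structural facts enter: $E(\sigma_{t+1}^2\mid V_t)=W_t^{-1}$, $E(W_t^2\sigma_{t+1}^2\mid H_t)=T_t$, $E(S_t\mid V_t)=0$, $E(S_t\sigma_{t+1}\mid V_t)=-1$, and the $H_t$-measurability and $V_{t-1}$-centering of $Q_t$. An alternative that avoids characterizing $\Lambda_t^\perp$ is to compute $\Pi(B\mid\Lambda_t^\perp)=\Pi(B\mid\mathcal{C}^\perp)-\Pi(B\mid\tilde{\Gamma}_t^4\oplus\tilde{\Lambda}_t^5\oplus\tilde{\Lambda}_t^6)$ directly: one has $\Pi(B\mid\mathcal{C}^\perp)=\{E(B\mid H_t)-E(B\mid V_{t-1})\}+R_tW_t\sigma_{t+1}$, and projecting this successively onto $\tilde{\Lambda}_t^5$, $\tilde{\Gamma}_t^4$, $\tilde{\Lambda}_t^6$ via the auxiliary projection lemmas of Section \ref{sec:supporting-lemmas} (exactly as in the proof of Lemma \ref{lem:nuisance-space-orthogonalization}), and watching the $E(B\mid H_t)-E(B\mid V_{t-1})$ and $Q_t$ terms cancel, yields the same answer; this route is more mechanical but notation-heavy.
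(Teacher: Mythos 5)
Your proposal is correct, but it reaches the formula by a genuinely different route than the paper. The paper decomposes $B$ telescopically, observes that the pieces $B-E(B\mid\sigma_{t+1},V_t)$, $E(B\mid V_m)-E(B\mid H_m)$ and $E(B\mid H_m)-E(B\mid V_{m-1})$ ($m\le t-1$) already lie in $\Lambda_t$, and then computes the projections of the two surviving pieces by \emph{sequential} projection onto $\Lambda_t^{1,\perp},\Lambda_t^{2,\perp},\dots,\tilde\Lambda_t^{6,\perp}$ using the operator-adjoint machinery of Lemmas \ref{lem:aux-6-Lambda-tilde-5}--\ref{lem:aux-8-Lambda-tilde-6} (this is your ``mechanical but notation-heavy'' fallback, packaged as Lemmas \ref{lem:aux-9-projection-Lambda-t-part1} and \ref{lem:aux-9-projection-Lambda-t-part2}). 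You instead characterize $\Lambda_t^\perp$ \emph{first}: your $\{\eta_h\}$ is exactly the paper's $\{d(V_t)\sigma_{t+1}:E(d\mid H_t)=0\}$ of Lemma \ref{lem:Lambda-t-alternative-form}, so you are in effect proving that lemma independently and then deducing the projection formula from it by the defining property (membership of $\eta_{R_t}$ plus orthogonality of the residual), whereas the paper derives Lemma \ref{lem:Lambda-t-alternative-form} as a \emph{consequence} of the projection formula. Your order of deduction is arguably cleaner and avoids tracking the cancellations of the $E(B\mid H_t)-E(B\mid V_{t-1})$ and $Q_t$ terms; its price is the exhaustion argument $\tilde\Gamma_t^4\oplus\tilde\Lambda_t^5\oplus\tilde\Lambda_t^6=\Lambda_t^\bullet\oplus\{g(H_t)W_t\sigma_{t+1}\}$, which you correctly identify as the crux and correctly sketch: the $\subseteq$ direction uses the $H_t$-measurability and $V_{t-1}$-centering of $Q_t$ (from Lemma \ref{lem:aux-1}), and the $\supseteq$ direction reduces to a scalar equation for $a(V_{t-1})$ whose coefficient $E(Q_t^2\mid V_{t-1})+T_t^\bullet=W_{t,t-1}^{-1}$ is positive by Lemma \ref{lem:aux-comp-useful-result}. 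The final orthogonality check $E[(B-\eta_{R_t})\eta_h]=0$ via conditioning on $V_t$ and then $H_t$ is exactly right.
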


\begin{proof}
For any $B=b(V_{T+1})\in\mathcal{H}$, we have
\begin{align*}
B & =\{B-E(B\mid\sigma_{t+1},V_t)\}+\{E(B\mid\sigma_{t+1},V_t)-E(B\mid V_t)\}\\
 & ~~+\sum_{m=1}^t\{E(B\mid V_{m})-E(B\mid H_{m})\}+\sum_{m=1}^t\{E(B\mid H_{m})-E(B\mid V_{m-1})\}.
\end{align*}
Note that $B-E(B\mid\sigma_{t+1},V_t)\in\Lambda_t^{1}$, and for
all, $1\leq m\leq t$ $E(B\mid V_{m})-E(B\mid H_{m})\in\Gamma_{m}^{3}$
and $E(B\mid H_{m})-E(B\mid V_{m-1})\in\Lambda_{m}^{\bullet}$. Hence,
by Lemma \ref{lem:nuisance-space-orthogonalization} we have
\begin{equation}
\Pi(B\mid\Lambda_t^{\perp})=\Pi\{E(B\mid\sigma_{t+1},V_t)-E(B\mid V_t)\mid\Lambda_t^{\perp}\}+\Pi\{E(B\mid H_t)-E(B\mid V_{t-1})\mid\Lambda_t^{\perp}\}.\label{eq:lem-projection-Lambda-t-proofuse-1}
\end{equation}
By Lemma \ref{lem:aux-9-projection-Lambda-t-part1}, we have
\[
\Pi\{E(B\mid\sigma_{t+1},V_t)-E(B\mid V_t)\mid\Lambda_t^{\perp}\}=\{R_t-T_t^{-1}E(R_tW_t\mid H_t)\}W_t\sigma_{t+1},
\]
where $R_t=E(B\sigma_{t+1}\mid V_t)$. By Lemma \ref{lem:aux-9-projection-Lambda-t-part2},
we have $\Pi\{E(B\mid H_t)-E(B\mid V_{t-1})\mid\Lambda_t^{\perp}\}=0$.
Plugging those into (\ref{eq:lem-projection-Lambda-t-proofuse-1})
completes the proof.
\end{proof}
\begin{lemma}
\label{lem:Lambda-t-alternative-form}The orthogonal complement of
the nuisance tangent space, $\Lambda_t^{\perp}$, is
\[
\Lambda_t^{\perp}=\{d(V_t)\sigma_{t+1}:\text{ any }d(V_t)\in\mathbb{R}^{p}\text{ such that }E[d(V_t)\mid H_t]=0\}.
\]
\end{lemma}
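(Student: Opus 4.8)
The plan is to obtain $\Lambda_t^{\perp}$ directly from the projection formula of Lemma~\ref{lem:projection-Lambda-t}, proving the two set inclusions separately.

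\emph{The inclusion $\supseteq$.} First I would fix a function $d(V_t)$ with $E\{d(V_t)\mid H_t\}=0$ and set $B=d(V_t)\sigma_{t+1}$. Since the model constraint derived in the proof of Lemma~\ref{lem:nuisance-space-decomp} is $E(\sigma_{t+1}\mid V_t)=0$, we get $E(B)=E\{d(V_t)E(\sigma_{t+1}\mid V_t)\}=0$, so $B\in\mathcal{H}$ under the standing finite-variance regularity. Applying Lemma~\ref{lem:projection-Lambda-t}, and using $E(\sigma_{t+1}\mid V_t)=0$ once more, we have $R_t=E(B\sigma_{t+1}\mid V_t)=d(V_t)E(\sigma_{t+1}^2\mid V_t)=d(V_t)\var(\sigma_{t+1}\mid V_t)=d(V_t)W_t^{-1}$, hence
\begin{align*}
\Pi(B\mid\Lambda_t^{\perp}) &= \{R_t-T_t^{-1}E(R_tW_t\mid H_t)\}W_t\sigma_{t+1}\\
&= \{d(V_t)W_t^{-1}-T_t^{-1}E(d(V_t)\mid H_t)\}W_t\sigma_{t+1} = d(V_t)\sigma_{t+1} = B,
\end{align*}
where the third equality uses $E\{d(V_t)\mid H_t\}=0$. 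A Hilbert-space element fixed by the projection onto the closed subspace $\Lambda_t^{\perp}$ lies in $\Lambda_t^{\perp}$, so $d(V_t)\sigma_{t+1}\in\Lambda_t^{\perp}$.

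\emph{The inclusion $\subseteq$.} Conversely, every element of $\Lambda_t^{\perp}\subseteq\mathcal{H}$ equals its own projection onto $\Lambda_t^{\perp}$, so by Lemma~\ref{lem:projection-Lambda-t} it has the form $d(V_t)\sigma_{t+1}$ with $d(V_t)=\{R_t-T_t^{-1}E(R_tW_t\mid H_t)\}W_t$, a function of $V_t$, where $R_t=E(B\sigma_{t+1}\mid V_t)$ is the object associated to that element. It remains to verify the constraint; using $T_t=E(W_t\mid H_t)$,
\begin{align*}
E\{d(V_t)\mid H_t\} &= E(R_tW_t\mid H_t)-T_t^{-1}E(R_tW_t\mid H_t)\,E(W_t\mid H_t)\\
&= E(R_tW_t\mid H_t)-E(R_tW_t\mid H_t) = 0.
\end{align*}
Combining the two inclusions gives the stated characterization of $\Lambda_t^{\perp}$.

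\emph{Where the work lies.} This statement is essentially a corollary of Lemma~\ref{lem:projection-Lambda-t}, so I do not expect a genuine obstacle. The only points requiring care are (i) recognizing that $E(\sigma_{t+1}^2\mid V_t)=\var(\sigma_{t+1}\mid V_t)=W_t^{-1}$, which rests on the model constraint $E(\sigma_{t+1}\mid V_t)=0$; (ii) confirming that the candidate elements $d(V_t)\sigma_{t+1}$ genuinely belong to $\mathcal{H}$ (mean zero, finite variance) so that the projection calculus applies; and (iii) the dimension bookkeeping ensuring $R_t$, $d(V_t)$, and the scalars $W_t,T_t$ combine consistently. Each of these is routine once Lemma~\ref{lem:projection-Lambda-t} is available.
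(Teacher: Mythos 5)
Your proof is correct and follows essentially the same route as the paper: both directions of the set equality are read off from the projection formula of Lemma~\ref{lem:projection-Lambda-t}, using $R_t=d(V_t)W_t^{-1}$ for the forward inclusion and the fixed-point property of the projection for the reverse. If anything, your verification of the constraint in the $\subseteq$ direction is slightly cleaner than the paper's, since you compute $E\{d(V_t)\mid H_t\}=E(R_tW_t\mid H_t)-T_t^{-1}E(R_tW_t\mid H_t)E(W_t\mid H_t)=0$ directly, whereas the paper checks $E\{d(V_t)\sigma_{t+1}\mid H_t\}=0$, a condition that holds for any coefficient and so does not by itself pin down the stated restriction on $d(V_t)$.
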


\begin{proof}
Lemma \ref{lem:projection-Lambda-t} implies that
\[
\Lambda_t^{\perp}=\{[R_t-T_t^{-1}E(R_tW_t\mid H_t)]W_t\sigma_{t+1}:R_t=E(h\sigma_{t+1}\mid V_t),h\in\mathcal{H}\}.
\]
Denote by $\Lambda_t^{\perp,\text{conj}}=\{d(V_t)\sigma_{t+1}:\text{ any }d(V_t)\text{ such that }E[d(V_t)\mid H_t]=0\}$.
In the following we show $\Lambda_t^{\perp}=\Lambda_t^{\perp,\text{conj}}$.

First we show $\Lambda_t^{\perp}\subset\Lambda_t^{\perp,\text{conj}}$.
For any $h\in\mathcal{H}$, we have
\begin{align*}
 & E[\{R_t-T_t^{-1}E(R_tW_t\mid X_t)\}W_t\sigma_{t+1}\mid H_t]\\
= & E[\{R_t-T_t^{-1}E(R_tW_t\mid X_t)\}W_tE(\sigma_{t+1}\mid V_t)\mid H_t]=0.
\end{align*}
Hence $\Lambda_t^{\perp}\subset\Lambda_t^{\perp,\text{conj}}$.

Next we show $\Lambda_t^{\perp,\text{conj}}\subset\Lambda_t^{\perp}$.
For any $d(V_t)\sigma_{t+1}\in\Lambda_t^{\perp,\text{conj}}$,
i.e. for any $d(V_t)$ such that $E\{d(V_t)\mid H_t\}=0$, let
$h=d(V_t)\sigma_{t+1}\in\mathcal{H}$, and we have
\[
R_tW_t=E(h\sigma_{t+1}\mid V_t)W_t=d(V_t)E(\sigma_{t+1}^{2}\mid V_t)W_t=d(V_t),
\]
and so $E(R_tW_t\mid H_t)=0$. Therefore,
\[
\{R_t-T_t^{-1}E(R_tW_t\mid H_t)\}W_t\sigma_{t+1}=R_tW_t\sigma_{t+1}=d(V_t)\sigma_{t+1}.
\]
This implies that $d(V_t)\sigma_{t+1}\in\Lambda_t^{\perp}$, and
hence $\Lambda_t^{\perp,\text{conj}}\subset\Lambda_t^{\perp}$.
This completes the proof.
\end{proof}
%

%
\begin{lemma}
\label{lem:Lambda}The orthogonal complement of the nuisance tangent
space for model $\mathcal{M}$ defined in Lemma \ref{lem:semimodel-Mt}
is
\begin{align}
\Lambda^{\perp}=\{\sum_{t=1}^{T}d_t(V_t)\dot{U}_{t+1}(\psi_{0}):\text{ any }d_t(V_t)\in\mathbb{R}^{p}\text{ s.t. }E[d_t(V_t)\mid H_t]=0\}, \label{proofuse-semi-1500}
\end{align}
where $U_{t+1}(\psi)=Y_{t,1}\exp\{-\gamma(t+1,V_t;\psi_{0})\}$
and $\dot{U}_{t+1}(\psi)=U_{t+1}(\psi)-E\{U_{t+1}(\psi_{0})\mid H_t\}$
\end{lemma}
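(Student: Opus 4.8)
The plan is to reduce the statement to two facts already in hand: the intersection decomposition $\Lambda=\bigcap_{t=1}^{T}\Lambda_t$ from Lemma~\ref{lem:semimodel-Mt}, and the concrete description $\Lambda_t^{\perp}=\{d_t(V_t)\sigma_{t+1}:E[d_t(V_t)\mid H_t]=0\}$ from Lemma~\ref{lem:Lambda-t-alternative-form}. Applying the Hilbert-space identity $(\bigcap_t \Lambda_t)^{\perp}=\overline{\sum_{t}\Lambda_t^{\perp}}$ to the closed subspaces $\Lambda_t\subset\mathcal{H}$ gives $\Lambda^{\perp}=\overline{\sum_{t=1}^{T}\Lambda_t^{\perp}}$, so only two things remain: showing the finite sum is already closed (so the closure may be dropped), and identifying $\sigma_{t+1}$ evaluated at the truth with the blipped-down residual $\dot U_{t+1}(\psi_{0})$.

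For the first point, I would check that $\Lambda_1^{\perp},\dots,\Lambda_T^{\perp}$ are mutually orthogonal; since each is closed (being an orthogonal complement) and there are finitely many of them, their sum is then closed and is in fact a direct sum, matching the ``$\sum$'' written in \eqref{proofuse-semi-1500}. To verify orthogonality, fix $s<t$ and take $d_s(V_s)\sigma_{s+1}\in\Lambda_s^{\perp}$ and $d_t(V_t)\sigma_{t+1}\in\Lambda_t^{\perp}$. Because $\sigma_{s+1}$ is a function of $(Y_{s,1},V_s)$, hence of $V_{s+1}\subseteq V_t$, conditioning on $V_t$ leaves only the factor $E(\sigma_{t+1}\mid V_t)$, which is $0$ by the centering constraint built into $\sigma_{t+1}$; so the inner product vanishes.

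For the second point, I would argue that at the true parameter value the residual $\sigma_{t+1}=Y_{t,1}e^{-\gamma(t+1,H_t,A_t;\psi_{0})}-\beta_t(V_{t-1})-\{q_t^{*}(H_t)-\int q_t^{*}\,dF(z_t\mid V_{t-1})\}$ is forced, by the defining constraint $E(\sigma_{t+1}\mid H_t,A_t)=0$, to have its centering terms equal to $E\{U_{t+1}(\psi_{0})\mid H_t,A_t\}$. By Lemma~\ref{lem:aux-1} this conditional mean equals $E\{Y_{t,1}(\bar{A}_{t-1},0)\mid H_t,A_t\}$, which under the (weak) sequential ignorability hypothesis defining $\mathcal{M}$ does not depend on $A_t$ and equals $E\{Y_{t,1}(\bar{A}_{t-1},0)\mid H_t\}=E\{U_{t+1}(\psi_{0})\mid H_t\}$. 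Hence at the truth $\sigma_{t+1}=U_{t+1}(\psi_{0})-E\{U_{t+1}(\psi_{0})\mid H_t\}=\dot U_{t+1}(\psi_{0})$. Substituting this into each $\Lambda_t^{\perp}$ and summing over $t$ produces exactly the right-hand side of \eqref{proofuse-semi-1500}.

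I expect the main obstacle to be this last identification: one must be careful that the nuisance centering parameters, evaluated at the truth, collapse precisely to the conditional mean of $U_{t+1}(\psi_{0})$ given $V_t$, and --- crucially --- that this conditional mean is $A_t$-free, which is exactly where the weak sequential ignorability assumption enters. Once that is settled, the orthogonality bookkeeping needed to drop the closure is routine, since $\sigma_{s+1}$ is $V_t$-measurable whenever $s<t$.
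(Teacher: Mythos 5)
Your proof is correct and follows essentially the same route as the paper's: combine $\Lambda=\bigcap_{t}\Lambda_t$ from Lemma \ref{lem:semimodel-Mt} with the description of $\Lambda_t^{\perp}$ in Lemma \ref{lem:Lambda-t-alternative-form}, apply $(\bigcap_{t}\Lambda_t)^{\perp}=\sum_{t}\Lambda_t^{\perp}$, and identify $\sigma_{t+1}$ evaluated at the truth with $\dot U_{t+1}(\psi_{0})$ via Lemma \ref{lem:aux-1}. You are in fact slightly more careful than the paper on two points it glosses over at this stage---dropping the closure on $\sum_t\Lambda_t^{\perp}$ by checking mutual orthogonality of the $\Lambda_t^{\perp}$ (the paper only records that orthogonality later, inside the proof of Lemma \ref{lem:EIF}) and spelling out why the centering term in $\sigma_{t+1}$ is $A_t$-free at the truth.
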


\begin{proof}
Lemma \ref{lem:Lambda-t-alternative-form} implies that 
\begin{align*}
    \Lambda_t^{\perp}=\{d(V_t)[U_{t+1}(\psi_{0})-E\{U_{t+1}(\psi_{0})\mid H_t\}]:\text{ any }d(V_t)\in\mathbb{R}^{p}\text{ such that }E[d(V_t)\mid H_t]=0\}.
\end{align*}
Because $(\bigcap_{t=1}^{T}\Lambda_t)^{\perp}=\sum_{t=1}^{T}\Lambda_t^{\perp}$,
\eqref{proofuse-semi-1500} is an immediate implication of Lemma \ref{lem:semimodel-Mt}.
\end{proof}
\begin{lemma}\label{lem:EIF}
The efficient score $S_{\eff}(\psi_{0})$ is
\[
S_{\eff}(\psi_{0})=-\sum_{t=1}^{T}\bigg[E\bigg\{\frac{\partial U_{t+1}(\psi_{0})}{\partial\psi}\mid V_t\bigg\}-E\bigg\{\frac{\partial U_{t+1}(\psi_{0})}{\partial\psi}W_t\mid H_t\bigg\} E(W_t\mid H_t)^{-1}\bigg]W_t\dot{U}_{t+1}(\psi_{0}),
\]
where $W_t=\var\{U_{t+1}(\psi_{0})\mid V_t\}^{-1}$.
\end{lemma}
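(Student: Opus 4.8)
The plan is to identify $S_{\eff}(\psi_0)$ as the orthogonal projection of the ordinary score for $\psi$ onto $\Lambda^\perp$, the orthogonal complement of the nuisance tangent space, using the explicit description of $\Lambda^\perp$ already obtained in Lemma \ref{lem:Lambda}. That lemma gives $\Lambda^\perp = \sum_{t=1}^T \Lambda_t^\perp$ with $\Lambda_t^\perp = \{d_t(V_t)\dot U_{t+1}(\psi_0) : E[d_t(V_t)\mid H_t] = 0\}$. First I would record that, after identifying the residual $\sigma_{t+1}$ of Lemmas \ref{lem:nuisance-space-orthogonalization}--\ref{lem:Lambda-t-alternative-form} with $\dot U_{t+1}(\psi_0)$ (legitimate because Assumptions \ref{assump:additional-semi-2} and \ref{assump:additional-semi-1}, together with Assumption \ref{assumption:consistency}, force the blipped-down-outcome property $E\{U_{t+1}(\psi_0)\mid V_t\} = E\{U_{t+1}(\psi_0)\mid H_t\}$), the summands $\Lambda_t^\perp$ are mutually orthogonal: for $s<t$ one conditions on $V_t$ and uses that $d_s(V_s)\dot U_{s+1}(\psi_0)$ is $V_t$-measurable while $E\{\dot U_{t+1}(\psi_0)\mid V_t\} = 0$. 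Hence $S_{\eff}(\psi_0) = \sum_{t=1}^T \Pi\{S_\psi \mid \Lambda_t^\perp\}$, and it suffices to project componentwise.

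Next I would write down the score. From the likelihood factorization used in the proof of Lemma \ref{lem:nuisance-space-decomp}, $\psi$ enters only through the transformed residuals $\sigma_{t+1}$ (whose $\psi$-dependence runs solely through $U_{t+1}(\psi) = Y_{t,1}e^{-\gamma(t+1,V_t;\psi)}$) and through the change-of-variable Jacobian $e^{-\gamma(t+1,V_t;\psi)}$, so $S_\psi = \sum_{u=1}^T s_u$ with $s_u = S_u\,\partial U_{u+1}(\psi_0)/\partial\psi - \partial\gamma(u+1,V_u;\psi_0)/\partial\psi$, where $S_u = \partial\log f(\sigma_{u+1}\mid V_u)/\partial\sigma_{u+1}$. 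Two elementary facts do the bookkeeping: $E(s_u\mid V_u) = 0$ (differentiate in $\psi$ the identity $\int f(\sigma_{u+1}(\psi,y)\mid V_u)\,|\partial\sigma_{u+1}/\partial y|\,dy = 1$), and the integration-by-parts identity $E\{S_u g(\sigma_{u+1},V_u)\mid V_u\} = -E\{\partial g/\partial\sigma_{u+1}\mid V_u\}$ for smooth $g$, both of which are already exploited in the supporting lemmas (see Section \ref{sec:supporting-lemmas} and \citet{robins1994snmm}).

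For fixed $t$ I would then compute $\Pi\{S_\psi\mid\Lambda_t^\perp\}$ by minimizing $E\|S_\psi - d_t(V_t)\dot U_{t+1}(\psi_0)\|^2$ over $d_t(V_t)$ subject to $E[d_t(V_t)\mid H_t] = 0$. A pointwise Lagrange multiplier $\lambda(H_t)$ yields $d_t(V_t) = W_t\{E(S_\psi\dot U_{t+1}(\psi_0)\mid V_t) - \lambda(H_t)\}$, where $W_t = \var\{U_{t+1}(\psi_0)\mid V_t\}^{-1}$ (using $E\{\dot U_{t+1}(\psi_0)^2\mid V_t\} = \var\{U_{t+1}(\psi_0)\mid V_t\} = W_t^{-1}$), and the constraint forces $\lambda(H_t) = E\{W_t E(S_\psi\dot U_{t+1}(\psi_0)\mid V_t)\mid H_t\}E(W_t\mid H_t)^{-1}$. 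It then remains to simplify $E\{S_\psi\dot U_{t+1}(\psi_0)\mid V_t\}$: the cross terms $E\{s_u\dot U_{t+1}(\psi_0)\mid V_t\}$ vanish for $u\ne t$ (for $u<t$ because $s_u$ is $V_t$-measurable and $E\{\dot U_{t+1}(\psi_0)\mid V_t\}=0$; for $u>t$ because $\dot U_{t+1}(\psi_0)$ is $V_u$-measurable and $E(s_u\mid V_u)=0$), and for the surviving term, writing $U_{t+1}(\psi_0) = \dot U_{t+1}(\psi_0) + E\{U_{t+1}(\psi_0)\mid H_t\}$, noting $\partial U_{t+1}(\psi_0)/\partial\psi = -U_{t+1}(\psi_0)\,\partial\gamma(t+1,V_t;\psi_0)/\partial\psi$, and applying the integration-by-parts identity together with $E\{\dot U_{t+1}(\psi_0)\mid V_t\}=0$ collapses it to $E\{S_\psi\dot U_{t+1}(\psi_0)\mid V_t\} = -E\{\partial U_{t+1}(\psi_0)/\partial\psi\mid V_t\}$. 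Substituting into $d_t$ and $\lambda$, pulling the scalar $W_t$ through, and summing over $t$ gives exactly the stated $S_{\eff}(\psi_0)$, equivalently the $\rho_t$ representation invoked in the proof of Theorem \ref{thm:efficient-score}.

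The main obstacle is the final simplification $E\{S_\psi\dot U_{t+1}(\psi_0)\mid V_t\} = -E\{\partial U_{t+1}(\psi_0)/\partial\psi\mid V_t\}$: because $\psi$ sits simultaneously inside the conditional density of $\sigma_{t+1}$ given $V_t$ and inside the change of variables that defines $\sigma_{t+1}$, this does not follow by inspection but requires the integration-by-parts identity for $S_t$ combined with careful tracking of $V_t$- versus $H_t$-measurability. A subsidiary, more routine point is justifying the identification of $\dot U_{t+1}(\psi_0)$ with the residual $\sigma_{t+1}$ and the mutual orthogonality of the spaces $\Lambda_t^\perp$, which both rest on the blipped-down-outcome property $E\{U_{t+1}(\psi_0)\mid V_t\} = E\{U_{t+1}(\psi_0)\mid H_t\}$ implied by the model assumptions and (weak) sequential ignorability.
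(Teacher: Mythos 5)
Your proposal is correct and follows essentially the same route as the paper: orthogonally decompose $\Lambda^{\perp}=\bigoplus_{t=1}^{T}\Lambda_t^{\perp}$ using $E\{\dot{U}_{t+1}(\psi_{0})\mid V_t\}=0$, project $S_{\psi}$ componentwise onto each $\Lambda_t^{\perp}$ to obtain $\{R_t-T_t^{-1}E(R_tW_t\mid H_t)\}W_t\dot{U}_{t+1}(\psi_{0})$ with $R_t=E\{S_{\psi}\dot{U}_{t+1}(\psi_{0})\mid V_t\}$, and then identify $R_t=-E\{\partial U_{t+1}(\psi_{0})/\partial\psi\mid V_t\}$. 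The only cosmetic differences are that you re-derive the projection formula by a first-order-condition (Lagrange) argument where the paper invokes Lemma \ref{lem:projection-Lambda-t}, and you derive the information identity explicitly from the likelihood factorization and the integration-by-parts property of $S_t$ where the paper simply cites the generalized information equality of \citet{newey1990}.
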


\begin{proof}
By definition, the efficient score is the projection of the score
for $\psi$, $S_{\psi}$, onto $\Lambda^{\perp}$. For any $1\leq t<s\leq T$,
for any $d_t(V_t)\dot{U}_{t+1}(\psi_{0})\in\Lambda_t^{\perp}$
and $d_{s}(V_{s})\dot{U}_{s+1}(\psi_{0})\in\Lambda_{s}^{\perp}$,
their inner product is
\[
E\{d_t(V_t)\dot{U}_{t+1}(\psi_{0})d_{s}(V_{s})\dot{U}_{s+1}(\psi_{0})\}=E[d_t(V_t)\dot{U}_{t+1}(\psi_{0})d_{s}(V_{s})E\{\dot{U}_{s+1}(\psi_{0})\mid V_{s}\}]=0,
\]
where the last equality follows from Lemma \ref{lem:aux-1}. This
implies that $\Lambda_t^{\perp}\perp\Lambda_{s}^{\perp}$ for any
$1\leq t<s\leq T$. Therefore, $\Lambda^{\perp}=\bigoplus_{t=1}^{T}\Lambda_t^{\perp}$,
and $\Pi(S_{\psi}\mid\Lambda^{\perp})=\sum_{t=1}^{T}\Pi(S_{\psi}\mid\Lambda_t^{\perp})$.
By Lemma \ref{lem:projection-Lambda-t}, we have
\begin{align}
\Pi(S_{\psi}\mid\Lambda_t^{\perp}) & =\{R_t-T_t^{-1}E(R_tW_t\mid H_t)\}W_t\sigma_{t+1}\nonumber \\
 & =\{E(S_{\psi}\sigma_{t+1}\mid V_t)-E(S_{\psi}\sigma_{t+1}W_t\mid H_t)E(W_t\mid H_t)^{-1}\}W_t\sigma_{t+1}.\label{eq:lem-eif-proofuse-1}
\end{align}
We have $\sigma_{t+1}=U_{t+1}(\psi_{0})-E_{P}\{U_{t+1}(\psi_{0})\mid V_t\}=\dot{U}_{t+1}(\psi_{0})$
as in Lemma \ref{lem:aux-2-change-of-variable}, so $W_t=\var(\sigma_{t+1}\mid V_t)^{-1}=\var\{U_{t+1}(\psi_{0})\mid V_t\}^{-1}$.
By the generalized information equality \citep{newey1990}
\[
E(S_{\psi}\dot{U}_{t+1}(\psi_{0})\mid V_t)=-E\bigg\{\frac{\partial\dot{U}_{t+1}(\psi_{0})}{\partial\psi}\mid V_t\bigg\}=-E\bigg\{\frac{\partial U_{t+1}(\psi_{0})}{\partial\psi}\mid V_t\bigg\}.
\]
So (\ref{eq:lem-eif-proofuse-1}) becomes
\begin{align*}
\Pi(S_{\psi}\mid\Lambda_t^{\perp}) & =\{E(S_{\psi}\sigma_{t+1}\mid V_t)-E(S_{\psi}\sigma_{t+1}W_t\mid H_t)E(W_t\mid H_t)^{-1}\}W_t\sigma_{t+1}\\
 & =-\bigg[E\bigg\{\frac{\partial U_{t+1}(\psi_{0})}{\partial\psi}\mid V_t\bigg\}-E\bigg\{\frac{\partial U_{t+1}(\psi_{0})}{\partial\psi}W_t\mid H_t\bigg\} E(W_t\mid H_t)^{-1}\bigg]W_t\dot{U}_{t+1}(\psi_{0}).
\end{align*}
Thus, the form of $S_{\eff}(\psi_{0})$ follows from the fact
that $S_{\eff}(\psi_{0})=\Pi(S_{\psi}\mid\Lambda^{\perp})=\sum_{t=1}^{T}\Pi(S_{\psi}\mid\Lambda_t^{\perp})$.
\end{proof}

\section{Supporting lemmas used in Section \ref{appen:proof-efficient-score}} \label{sec:supporting-lemmas}
\begin{lemma}
\label{lem:aux-1}$E\{U_{t+1}(\psi_{0})\mid H_t,A_t\}=E\{U_{t+1}(\psi_{0})\mid H_t\}$.
\end{lemma}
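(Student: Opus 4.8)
The plan is to unwind the definition of $U_{t+1}(\psi_{0})$ and apply, in turn, consistency, the model assumption \ref{assump:additional-semi-1}, and weak sequential ignorability \ref{assump:additional-semi-2}. Recall that $U_{t+1}(\psi_{0}) = Y_{t,1}\exp\{-\gamma(t+1,\bar{Z}_t,\bar{A}_t;\psi_{0})\}$ and that $V_t = (H_t, A_t) = (\bar{A}_t,\bar{Z}_t)$, so the factor $\exp\{-\gamma(t+1,\bar{Z}_t,\bar{A}_t;\psi_{0})\}$ is a deterministic (i.e.\ $V_t$-measurable) function.

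First I would condition on $V_t = (H_t, A_t)$ and pull this deterministic factor out of the conditional expectation, giving $E\{U_{t+1}(\psi_{0})\mid H_t, A_t\} = \exp\{-\gamma(t+1,\bar{Z}_t,\bar{A}_t;\psi_{0})\}\,E\{Y_{t,1}\mid H_t, A_t\}$. Because $\Delta = 1$, consistency (Assumption \ref{assumption:consistency}) gives $Y_{t,1} = Y_{t,1}(\bar{A}_t)$, so $E\{Y_{t,1}\mid H_t, A_t\}$ equals $E\{Y_{t,1}(\bar{a}_t)\mid\bar{z}_t,\bar{a}_t\}$ evaluated at the observed $(\bar{z}_t,\bar{a}_t)$. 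Next I would invoke Assumption \ref{assump:additional-semi-1}, which yields $E\{Y_{t,1}(\bar{a}_t)\mid\bar{z}_t,\bar{a}_t\} = e^{\gamma(t+1,\bar{z}_t,\bar{a}_t;\psi_{0})}\,E\{Y_{t,1}(\bar{a}_{t-1},0)\mid\bar{z}_t,\bar{a}_t\}$; substituting this and cancelling the two exponential factors leaves $E\{U_{t+1}(\psi_{0})\mid H_t, A_t\} = E\{Y_{t,1}(\bar{A}_{t-1},0)\mid H_t, A_t\}$.

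Then I would apply weak sequential ignorability (Assumption \ref{assump:additional-semi-2}), which states exactly $E\{Y_{t,1}(\bar{A}_{t-1},0)\mid H_t, A_t\} = E\{Y_{t,1}(\bar{A}_{t-1},0)\mid H_t\}$; the right-hand side is $H_t$-measurable and in particular does not depend on $A_t$. To finish, I would take $E(\,\cdot\mid H_t)$ of both sides of the displayed identity and use the tower property: $E\{U_{t+1}(\psi_{0})\mid H_t\} = E[\,E\{U_{t+1}(\psi_{0})\mid H_t, A_t\}\mid H_t\,] = E\{Y_{t,1}(\bar{A}_{t-1},0)\mid H_t\}$. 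Hence $E\{U_{t+1}(\psi_{0})\mid H_t, A_t\}$ and $E\{U_{t+1}(\psi_{0})\mid H_t\}$ equal the same $H_t$-measurable quantity, which is the assertion.

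There is no genuine obstacle here; the argument is a short chain of substitutions. The only points requiring a little care are (i) verifying, via consistency and the fact that $\Delta = 1$ (so $Y_{t,1} = y(X_{t+1})$ depends on past treatment only through $\bar{A}_t$), that the observed $Y_{t,1}$ coincides with the potential outcome $Y_{t,1}(\bar{A}_t)$, so that the potential-outcome assumption \ref{assump:additional-semi-1} is applicable; and (ii) noting that $\gamma(t+1,\bar{Z}_t,\bar{A}_t;\psi_{0})$ is a function of $V_t$ and therefore factors out of the conditional expectation given $V_t$. This lemma is precisely the observed-data identity recorded in Remark \ref{rem:robust-m-C}, whose right-hand side is visibly free of $A_t$ once sequential ignorability is applied.
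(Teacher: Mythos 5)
Your proposal is correct and follows essentially the same route as the paper's proof: unwind $U_{t+1}(\psi_0)$, apply consistency to pass to $Y_{t,1}(\bar{A}_t)$, use the model assumption \eqref{eq:assump-model} to blip down to $Y_{t,1}(\bar{A}_{t-1},0)$ with the exponential factors cancelling at $\psi_0$, and finish with weak sequential ignorability. The only cosmetic difference is that the paper carries a general $\psi$ through the chain (so the exponentials cancel only at $\psi=\psi_0$) and simply asserts the conclusion, whereas you specialize to $\psi_0$ at the outset and close with an explicit tower-property step; both are fine.
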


\begin{proof}
We have
\begin{align*}
 & E\{H_{t+1}(\psi)\mid H_t,A_t\}\\
= & E\{Y_{t,1}\mid H_t,A_t\}\\
\text{(consistency)}= & E\{Y_{t,1}(\bar{A}_t)\mid H_t,A_t\}e^{-\gamma(t+1,\bar{Z}_t,\bar{A}_t;\psi)}\\
\text{(by (\ref{eq:assump-model}))}= & E\{Y_{t,1}(\bar{A}_{t-1},0)\mid H_t,A_t\}e^{\gamma(t+1,\bar{Z}_t,\bar{A}_t;\psi_{0})-\gamma(t+1,\bar{Z}_t,\bar{A}_t;\psi)}\\
\text{(sequential ignorability)}= & E\{Y_{t,1}(\bar{A}_{t-1},0)\mid H_t\}e^{\gamma(t+1,\bar{Z}_t,\bar{A}_t;\psi_{0})-\gamma(t+1,\bar{Z}_t,\bar{A}_t;\psi)}.
\end{align*}
Therefore $E\{U_{t+1}(\psi_{0})\mid H_t,A_t\}=E\{U_{t+1}(\psi_{0})\mid H_t\}$.
\end{proof}
\begin{lemma}
\label{lem:aux-2-change-of-variable}Let $\sigma_{t+1}$ be a random
variable that is defined on the same sample space as $V_{T+1}$. Consider
a tuple $(P^{'},q_t(H_t),\beta_t(H_{t-1},A_{t-1})),$ where
$P'$ is a probability distribution of $V_{T+1}\cup\sigma_{t+1}\backslash Y_{t,1}$,
$q_t$ is a (deterministic) function of $H_t$, and $\beta_t$
is a (deterministic) function of $H_{t-1},A_{t-1}$. Define $\mathcal{M}_t^{'}$
the collection of $(P^{'},q_t(H_t),\beta_t(H_{t-1},A_{t-1}))$
tuples such that positivity holds for $P^{'}$ and that
\begin{align}
E(\sigma_{t+1}\mid H_t,A_t) & =0,\label{eq:model-Mt-1}\\
q_t(H_t,A_t) & =q_t(H_t)\text{ is a function of \ensuremath{H_t}},\label{eq:model-Mt-2}\\
E(q_t\mid H_{t-1},A_{t-1}) & =0.\label{eq:model-Mt-3}
\end{align}
Then there is a 1-1 mapping $g$ between $\mathcal{M}_t$ and $\mathcal{M}_t^{'}$
given by:
\[
g:P\mapsto(P^{'},q_t(H_t),\beta_t(H_{t-1},A_{t-1}))
\]
where $P'$ is induced by $P$ and $\sigma_{t+1}=U_{t+1}(\psi_{0})-E_{P}\{U_{t+1}(\psi_{0})\mid H_t,A_t\}$,
$q_t(H_t,A_t)=E_{P}\{U_{t+1}(\psi_{0})\mid H_t,A_t\}-E_{P}\{U_{t+1}(\psi_{0})\mid H_{t-1},A_{t-1}\}$
and $,\beta_t(H_{t-1},A_{t-1})=E_{P}\{U_{t+1}(\psi_{0})\mid H_{t-1},A_{t-1}\}$.
The inverse mapping is
\[
g^{-1}:(P^{'},q_t(H_t),\beta_t(H_{t-1},A_{t-1}))\mapsto P
\]
where $P$ is induced by $P'$ and $Y_{t,1}=e^{\gamma(t+1,H_t,A_t;\psi_{0})}(\sigma_{t+1}+q_t+\beta_t)$.
\end{lemma}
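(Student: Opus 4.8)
The plan is to verify that the two maps $g$ and $g^{-1}$ displayed in the statement are mutually inverse and that each sends its source model into its target model; the only genuinely measure-theoretic ingredient is the conditional change of variables $Y_{t,1}\leftrightarrow\sigma_{t+1}$, which given $(H_t,A_t)$ is an affine bijection with Jacobian $\partial\sigma_{t+1}/\partial Y_{t,1}=e^{-\gamma(t+1,H_t,A_t;\psi_0)}$.

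First I would show $g$ is well defined, i.e.\ that for $P\in\mathcal{M}_t$ the triple $(P',q_t,\beta_t)$ lies in $\mathcal{M}_t'$. The centering \eqref{eq:model-Mt-1} is immediate since $\sigma_{t+1}=U_{t+1}(\psi_0)-E_P\{U_{t+1}(\psi_0)\mid H_t,A_t\}$ by construction. For \eqref{eq:model-Mt-2}, note that the constraints defining $\mathcal{M}_t$ are exactly the hypotheses of Lemma \ref{lem:aux-1}, so $E_P\{U_{t+1}(\psi_0)\mid H_t,A_t\}=E_P\{U_{t+1}(\psi_0)\mid H_t\}$; hence $q_t(H_t,A_t)=E_P\{U_{t+1}(\psi_0)\mid H_t\}-E_P\{U_{t+1}(\psi_0)\mid V_{t-1}\}$ depends on $H_t$ alone (recall $V_{t-1}=(H_{t-1},A_{t-1})$ is a function of $H_t$). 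Condition \eqref{eq:model-Mt-3} then follows from iterated expectation, $E_P(q_t\mid V_{t-1})=E_P\{U_{t+1}(\psi_0)\mid V_{t-1}\}-E_P\{U_{t+1}(\psi_0)\mid V_{t-1}\}=0$. Positivity of $P'$ transfers because replacing $Y_{t,1}$ by the affine image $\sigma_{t+1}$ leaves every conditional law $f(A_m\mid H_m)$ and $f(Z_m\mid V_{m-1})$ unchanged.

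Next I would show $g^{-1}$ is well defined. Starting from $(P',q_t,\beta_t)\in\mathcal{M}_t'$ and setting $Y_{t,1}=e^{\gamma(t+1,H_t,A_t;\psi_0)}(\sigma_{t+1}+q_t+\beta_t)$ gives $U_{t+1}(\psi_0)=\sigma_{t+1}+q_t+\beta_t$; since $q_t$ is $H_t$-measurable and $\beta_t$ is $V_{t-1}$-measurable, hence $H_t$-measurable, taking $E(\cdot\mid H_t,A_t)$ and using \eqref{eq:model-Mt-1} shows $E\{U_{t+1}(\psi_0)\mid H_t,A_t\}=q_t(H_t)+\beta_t(V_{t-1})$ is a function of $H_t$ only. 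Invoking Assumption \ref{assumption:consistency} to identify $Y_{t,1}$ with $Y_{t,1}(\bar A_t)$, this last statement is precisely equivalent to the $t$-specific constraints --- the $\gamma$-model and weak sequential ignorability --- that define $\mathcal{M}_t$ in Lemma \ref{lem:semimodel-Mt}; positivity again transfers, so $P\in\mathcal{M}_t$.

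Finally I would check $g^{-1}\circ g=\mathrm{id}$ and $g\circ g^{-1}=\mathrm{id}$, both of which reduce to the algebraic identity $U_{t+1}(\psi_0)=\sigma_{t+1}+q_t+\beta_t$. Going one way, the definitions of $\sigma_{t+1},q_t,\beta_t$ in $g$ sum to $U_{t+1}(\psi_0)=Y_{t,1}e^{-\gamma(t+1,H_t,A_t;\psi_0)}$, so $g^{-1}$ recovers the original $Y_{t,1}$ and leaves all other coordinates untouched. Going the other way, from $U_{t+1}(\psi_0)=\sigma_{t+1}+q_t+\beta_t$ and the three constraints of $\mathcal{M}_t'$, conditioning successively on $V_{t-1}$ and on $(H_t,A_t)$ recovers $\beta_t=E\{U_{t+1}(\psi_0)\mid V_{t-1}\}$, $q_t=E\{U_{t+1}(\psi_0)\mid H_t,A_t\}-E\{U_{t+1}(\psi_0)\mid V_{t-1}\}$ and $\sigma_{t+1}=U_{t+1}(\psi_0)-E\{U_{t+1}(\psi_0)\mid H_t,A_t\}$, matching the $g$ formulas. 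I expect the main obstacle to be not any single computation but the bookkeeping of the $\mathcal{M}_t\leftrightarrow\mathcal{M}_t'$ equivalence: one direction is handed to us by Lemma \ref{lem:aux-1}, but the reverse direction requires care that reconstructing $Y_{t,1}$ and reading off potential outcomes through consistency genuinely yields a law in $\mathcal{M}_t$ (not merely one with the correct observed-data moments), and that $\beta_t$ being a function of $V_{t-1}$ is exactly what makes ``$E\{U_{t+1}(\psi_0)\mid H_t,A_t\}$ is a function of $H_t$'' the right reduction; the Jacobian factor $e^{-\gamma}$ enters only at the density level and plays no role in the set-theoretic bijection.
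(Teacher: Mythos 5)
Your proposal is correct and follows essentially the same route as the paper's proof: the forward direction reduces to Lemma \ref{lem:aux-1} to show $q_t$ depends on $H_t$ alone, and the reverse direction reads the two defining constraints of $\mathcal{M}_t$ off the identity $E\{U_{t+1}(\psi_0)\mid H_t,A_t\}=q_t(H_t)+\beta_t(V_{t-1})$ (the paper makes this last step slightly more explicit by setting $A_t=0$ and then taking the ratio for general $a_t$). Your additional verification that $g$ and $g^{-1}$ are mutually inverse is a harmless elaboration that the paper leaves implicit.
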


\begin{proof}
First, we show that the $g(P)\in\mathcal{M}_t^{'}$. Let $q_t=E\{U_{t+1}(\psi_{0})\mid H_t,A_t\}-E\{U_{t+1}(\psi_{0})\mid H_{t-1},A_{t-1}\}$,
$\beta_t=E\{U_{t+1}(\psi_{0})\mid H_{t-1},A_{t-1}\}$, and $\sigma_{t+1}=Y_{t,1}-q_t-\beta_t$.
Let $P'$ be the probability distribution of $V_{T+1}\cup\sigma_{t+1}\backslash Y_{t,1}$
induced by $\sigma_{t+1}=Y_{t,1}e^{-\gamma(t+1,H_t,A_t;\psi_{0})}-q_t-\beta_t$
and $P$. Trivially we have $E(\sigma_{t+1}\mid H_t,A_t)=0$ and
$E(q_t\mid H_{t-1},A_{t-1})=0$. Because $P\in\mathcal{M}_t$,
Lemma \ref{lem:aux-1} implies $q_t=q_t(H_t)$. Therefore, $(P^{'},q_t,\beta_t)\in\mathcal{M}_t^{'}$.

Then we show that $g^{-1}\{P^{'},q_t(H_t),\beta_t(H_{t-1},A_{t-1})\}\in\mathcal{M}_t$.
We have
\begin{align*}
 & E\{Y_{t,1}e^{-\gamma(t+1,H_t,A_t;\psi_{0})}\mid H_t,A_t\}\\
= & E\{\sigma_{t+1}+q_t+\beta_t\mid H_t,A_t\}\\
= & 0+q_t(H_t)+\beta_t(H_{t-1},A_{t-1}).
\end{align*}
Set $A_t=0$, we have
weak sequential ignorability
\[
E\{Y_{t,1}(\bar{A}_{t-1},0)\mid H_t,A_t=0\}=E\{Y_{t,1}(\bar{A}_{t-1},0)\mid H_t\}.
\]
 Taking the ratio between $A_t=a_t$ and $A_t=0$, we have (\ref{eq:assump-model})
\[
\frac{E\{Y_{t,1}(\bar{A}_{t-1},a_t)\mid H_t,A_t\}}{E\{Y_{t,1}(\bar{A}_{t-1},0)\mid H_t,A_t\}}=e^{\gamma(t+1,H_t,a_t;\psi_{0})}.
\]
Therefore $P\in\mathcal{M}_t$. 
\end{proof}
\begin{lemma}
\label{lem:aux-3-orthogonal-Lambdas}$\Lambda_t^{1}$, $\{\Gamma_{m}^{3}\}_{1\le m\leq t}$,
$\{\Lambda_{m}^{\bullet}\}_{1\leq m\leq t-1}$ are orthogonal to each
other and orthogonal to the rest subspaces $\Lambda_t^{2},\Gamma_t^{4},\Lambda_t^{5},\Lambda_t^{6}$.
\end{lemma}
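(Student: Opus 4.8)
The plan is to verify orthogonality one pair of subspaces at a time, using the standard device: to show $\mathcal{A}\perp\mathcal{B}$ it suffices to take arbitrary $a\in\mathcal{A}$ and $b\in\mathcal{B}$ and show $E(ab^{T})=0$, which I obtain by conditioning on a $\sigma$-field with respect to which one factor is measurable while the other has conditional mean zero. Everything rests on three ingredients together with the nesting of the relevant $\sigma$-fields: (i) $V_1\subseteq\cdots\subseteq V_t$, $V_{m-1}\subseteq H_m\subseteq V_m$, and both $\sigma_{t+1}$ and $S_t$ are deterministic functions of $(V_t,Y_{t,1})$; (ii) the defining conditional-mean-zero constraints of the three singled-out families, namely $E(A_t^{1}\mid V_t,Y_{t,1})=0$ for $A_t^{1}\in\Lambda_t^{1}$, $E(A_m^{3}\mid H_m)=0$ for $A_m^{3}\in\Gamma_m^{3}$, and $E(A_m^{\bullet}\mid V_{m-1})=0$ with $A_m^{\bullet}$ being $H_m$-measurable for $A_m^{\bullet}\in\Lambda_m^{\bullet}$; and (iii) the identity $E(S_t\mid V_t)=0$, which is part of Lemma \ref{lem:aux-5-integrate-S}.

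First I would dispatch every pair involving $\Lambda_t^{1}$. By (i), every element of each of $\Gamma_m^{3}$ ($m\le t$), $\Lambda_m^{\bullet}$ ($m\le t-1$), $\Lambda_t^{2}$, $\Gamma_t^{4}$, $\Lambda_t^{5}$, and $\Lambda_t^{6}$ is a measurable function of $(V_t,Y_{t,1})$; hence its product with any $A_t^{1}\in\Lambda_t^{1}$ has expectation zero upon conditioning on $(V_t,Y_{t,1})$ and invoking (ii). This simultaneously establishes the mutual orthogonality of $\Lambda_t^{1}$ with the other two singled-out families and its orthogonality to the four ``rest'' subspaces.

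Next I would treat the pairs among the two indexed families. For $\Gamma_m^{3}\perp\Gamma_{m'}^{3}$ with $m<m'$, the factor $A_m^{3}$ is $V_m$-measurable, hence $H_{m'}$-measurable, and conditioning on $H_{m'}$ together with (ii) finishes it. For $\Lambda_m^{\bullet}\perp\Lambda_{m'}^{\bullet}$ with $m<m'$, the factor $A_m^{\bullet}$ is $H_m$-measurable, hence $V_{m'-1}$-measurable, and conditioning on $V_{m'-1}$ together with (ii) finishes it. For $\Gamma_m^{3}\perp\Lambda_{m'}^{\bullet}$ I would split on the indices: if $m<m'$, use that $A_m^{3}$ is $V_{m'-1}$-measurable and $E(A_{m'}^{\bullet}\mid V_{m'-1})=0$; if $m\ge m'$, use that $A_{m'}^{\bullet}$ is $H_m$-measurable and $E(A_m^{3}\mid H_m)=0$.

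Finally I would handle $\Gamma_m^{3}$ ($m\le t$) and $\Lambda_m^{\bullet}$ ($m\le t-1$) against the four ``rest'' subspaces; in every such pair the factor coming from $\Gamma_m^{3}$ or $\Lambda_m^{\bullet}$ is $V_t$-measurable. Against $\Lambda_t^{2}$, condition on $V_t$ and use $E(A_t^{2}\mid V_t)=0$. Against $\Lambda_t^{5}=\{S_t A_t^{\bullet}\}$ and $\Lambda_t^{6}=\{a(V_{t-1})S_t\}$, the non-$S_t$ factor ($A_t^{\bullet}$, respectively $a(V_{t-1})$) is $V_t$-measurable, so condition on $V_t$ and use (iii). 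Against $\Gamma_t^{4}=\{A_t^{\bullet}+S_t E(Q_t A_t^{\bullet}\mid V_{t-1})\}$, the $S_t E(Q_t A_t^{\bullet}\mid V_{t-1})$ summand is annihilated exactly as for $\Lambda_t^{6}$, while for the $A_t^{\bullet}$ summand (which is $H_t$-measurable with $E(A_t^{\bullet}\mid V_{t-1})=0$) I would, when the outer index is $<t$, use the $V_{t-1}$-measurability of $V_m$ or $H_m$ and condition on $V_{t-1}$, and when it equals $t$ (possible only for $\Gamma_t^{3}$) use $E(A_t^{3}\mid H_t)=0$ with the $H_t$-measurability of $A_t^{\bullet}$. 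Collecting these cases yields all the asserted orthogonalities. I expect no genuine obstacle here: the only content beyond iterated-expectation bookkeeping is the identity $E(S_t\mid V_t)=0$ and the observation that $\sigma_{t+1}$, hence $S_t$, is a function of $(V_t,Y_{t,1})$; the real risk is organizational, namely keeping the two dozen or so pairwise cases from degenerating into an unreadable list, which I would control by grouping them, as above, according to which conditional-mean-zero property annihilates the product.
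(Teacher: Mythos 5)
Your proposal is correct and follows essentially the same route as the paper's proof: a pairwise verification by iterated expectation, conditioning in each case on the $\sigma$-field ($(V_t,Y_{t,1})$, $V_t$, $H_k$, $V_{k-1}$, $H_t$, or $V_{t-1}$, as appropriate) that makes one factor measurable while the other has conditional mean zero, with the splitting of $\Gamma_t^{4}$ elements into their $A_t^{\bullet}$ and $S_t E(Q_t A_t^{\bullet}\mid V_{t-1})$ summands and the identity $E(S_t\mid V_t)=0$ from Lemma \ref{lem:aux-5-integrate-S} doing the same work in both arguments. Your grouping of the cases by which conditional-mean-zero property annihilates the product is only an organizational difference from the paper's case list.
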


\begin{proof}
Using the definition in Lemma \ref{lem:nuisance-space-decomp}, we
have the following. (We will repeatedly use the fact $E(S_t\mid V_t)=0$,
which is shown in Lemma \ref{lem:aux-5-integrate-S}.)
\begin{itemize}
\item $\Lambda_t^{1}\perp\Lambda_t^{2}$: $\forall A_t^{1}\in\Lambda_t^{1},A_t^{2}\in\Lambda_t^{2}$,
we have
\[
E(A_t^{1}A_t^{2})=E\{E(A_t^{1}A_t^{2}\mid V_t,Y_{t,1})\}=E\{A_t^{2}E(A_t^{1}\mid V_t,Y_{t,1})\}=0.
\]
Similarly, we can show $\Lambda_t^{1}\perp\Gamma_{m}^{3}$, $\Lambda_t^{1}\perp\Gamma_t^{4}$,
$\Lambda_t^{1}\perp\Lambda_{m}^{\bullet}$, $\Lambda_t^{1}\perp\Lambda_t^{5}$,
$\Lambda_t^{1}\perp\Lambda_t^{6}$.
\item $\Gamma_{m}^{3}\perp\Lambda_t^{2}$ for all $1\leq m\leq t$: $\forall A_{m}^{3}\in\Gamma_{m}^{3},A_t^{2}\in\Lambda_t^{2}$,
we have
\[
E(A_{m}^{3}A_t^{2})=E\{E(A_{m}^{3}A_t^{2}\mid V_t)\}=E\{A_{m}^{3}E(A_t^{2}\mid V_t)\}=0.
\]
\item $\Gamma_{m}^{3}\perp\Gamma_{k}^{3}$ for all $1\leq m<k\leq t$: $\forall A_{m}^{3}\in\Gamma_{m}^{3},A_{k}^{3}\in\Gamma_{k}^{3}$,
we have
\[
E(A_{m}^{3}A_{k}^{3})=E\{E(A_{m}^{3}A_{k}^{3}\mid H_{k})\}=E\{A_{m}^{3}E(A_{k}^{3}\mid H_{k})\}=0.
\]
\item $\Gamma_{m}^{3}\perp\Gamma_t^{4}$ for all $1\leq m\leq t-1$: $\forall A_{m}^{3}\in\Gamma_{m}^{3},A_t^{\bullet}+S_tE(Q_tA_t^{\bullet}\mid V_{t-1})\in\Gamma_t^{4}$,
we have
\begin{align*}
E[A_{m}^{3}\{A_t^{\bullet}+S_tE(Q_tA_t^{\bullet}\mid V_{t-1})\}] & =E(A_{m}^{3}A_t^{\bullet})+E\{A_{m}^{3}S_tE(Q_tA_t^{\bullet}\mid V_{t-1})\}\\
 & =E\{E(A_{m}^{3}A_t^{\bullet}\mid V_{t-1})\}+E[E\{A_{m}^{3}S_tE(Q_tA_t^{\bullet}\mid V_{t-1})\mid V_t\}]\\
 & =E\{A_{m}^{3}E(A_t^{\bullet}\mid V_{t-1})\}+E[A_{m}^{3}E(Q_tA_t^{\bullet}\mid V_{t-1})E(S_t\mid V_t)]\\
 & =0
\end{align*}
\item $\Gamma_t^{3}\perp\Gamma_t^{4}$: $\forall A_t^{3}\in\Gamma_t^{3},A_t^{\bullet}+S_tE(Q_tA_t^{\bullet}\mid V_{t-1})\in\Gamma_t^{4}$,
we have
\begin{align*}
E[A_t^{3}\{A_t^{\bullet}+S_tE(Q_tA_t^{\bullet}\mid V_{t-1})\}] & =E(A_t^{3}A_t^{\bullet})+E\{A_t^{3}S_tE(Q_tA_t^{\bullet}\mid V_{t-1})\}\\
 & =E\{E(A_t^{3}A_t^{\bullet}\mid H_t)\}+E[E\{A_t^{3}S_tE(Q_tA_t^{\bullet}\mid V_{t-1})\mid V_t\}]\\
 & =E\{A_t^{\bullet}E(A_t^{3}\mid H_t)\}+E[A_t^{3}E(Q_tA_t^{\bullet}\mid V_{t-1})E(S_t\mid V_t)]\\
 & =0
\end{align*}
\item $\Gamma_{m}^{3}\perp\Lambda_{k}^{\bullet}$ for all $1\leq m\leq t$
and $1\leq k\leq t-1$: $\forall A_{m}^{3}\in\Gamma_{m}^{3},A_{k}^{\bullet}\in\Lambda_{k}^{\bullet}$,
if $m<k$ we have
\[
E(A_{m}^{3}A_{k}^{\bullet})=E\{E(A_{m}^{3}A_{k}^{\bullet}\mid V_{k-1})\}=E\{A_{m}^{3}E(A_{k}^{\bullet}\mid V_{k-1})\}=0;
\]
if $m\geq k$ we have
\[
E(A_{m}^{3}A_{k}^{\bullet})=E\{E(A_{m}^{3}A_{k}^{\bullet}\mid H_{m})\}=E\{A_{k}^{\bullet}E(A_{m}^{3}\mid H_{m})\}=0.
\]
\item $\Gamma_{m}^{3}\perp\Lambda_t^{5}$ for all $1\leq m\leq t$: $\forall A_{m}^{3}\in\Gamma_{m}^{3},S_tA_t^{\bullet}\in\Lambda_t^{5}$,
we have
\[
E(A_{m}^{3}S_tA_t^{\bullet})=E\{E(A_{m}^{3}S_tA_t^{\bullet}\mid V_t)\}=E\{A_{m}^{3}A_t^{\bullet}E(S_t\mid V_t)\}=0.
\]
\item $\Gamma_{m}^{3}\perp\Lambda_t^{6}$ for all $1\leq m\leq t$: $\forall A_{m}^{3}\in\Gamma_{m}^{3},a(V_{t-1})S_t\in\Lambda_t^{6}$,
we have
\[
E\{A_{m}^{3}a(V_{t-1})S_t\}=E[E\{A_{m}^{3}a(V_{t-1})S_t\mid V_t\}]=E\{A_{m}^{3}a(V_{t-1})E(S_t\mid V_t)\}=0.
\]
\item $\Lambda_{m}^{\bullet}\perp\Lambda_t^{2}$ for all $1\leq m\leq t-1$:
$\forall A_{m}^{\bullet}\in\Lambda_{m}^{\bullet},A_t^{2}\in\Lambda_t^{2}$,
we have
\[
E(A_{m}^{\bullet}A_t^{2})=E\{E(A_{m}^{\bullet}A_t^{2}\mid V_t)\}=E\{A_{m}^{\bullet}E(A_t^{2}\mid V_t)\}=0.
\]
\item $\Lambda_{m}^{\bullet}\perp\Gamma_t^{4}$ for all $1\leq m\leq t-1$:
$\forall A_{m}^{\bullet}\in\Lambda_{m}^{\bullet},A_t^{\bullet}+S_tE(Q_tA_t^{\bullet}\mid V_{t-1})\in\Gamma_t^{4}$,
we have
\begin{align*}
E[A_{m}^{\bullet}\{A_t^{\bullet}+S_tE(Q_tA_t^{\bullet}\mid V_{t-1})\}] & =E(A_{m}^{\bullet}A_t^{\bullet})+E\{A_{m}^{\bullet}S_tE(Q_tA_t^{\bullet}\mid V_{t-1})\}\\
 & =E\{E(A_{m}^{\bullet}A_t^{\bullet}\mid V_{t-1})\}+E[E\{A_{m}^{\bullet}S_tE(Q_tA_t^{\bullet}\mid V_{t-1})\mid V_t\}]\\
 & =E\{A_{m}^{\bullet}E(A_t^{\bullet}\mid V_{t-1})\}+E\{A_{m}^{\bullet}E(Q_tA_t^{\bullet}\mid V_{t-1})E(S_t\mid V_t)\}\\
 & =0
\end{align*}
\item $\Lambda_{m}^{\bullet}\perp\Lambda_{k}^{\bullet}$ for all $1\leq m<k\leq t-1$:
$\forall A_{m}^{\bullet}\in\Lambda_{m}^{\bullet},A_{k}^{\bullet}\in\Lambda_{k}^{\bullet}$,
we have
\[
E(A_{m}^{\bullet}A_{k}^{\bullet})=E\{E(A_{m}^{\bullet}A_{k}^{\bullet}\mid V_{k-1})\}=E\{A_{m}^{\bullet}E(A_{k}^{\bullet}\mid V_{k-1})\}=0.
\]
\item $\Lambda_{m}^{\bullet}\perp\Lambda_t^{5}$ for all $1\leq m\leq t-1$:
$\forall A_{m}^{\bullet}\in\Lambda_{m}^{\bullet},S_tA_t^{\bullet}\in\Lambda_t^{5}$,
we have
\[
E(A_{m}^{\bullet}S_tA_t^{\bullet})=E\{E(A_{m}^{\bullet}S_tA_t^{\bullet}\mid V_t)\}=E\{A_{m}^{\bullet}A_t^{\bullet}E(S_t\mid V_t)\}=0.
\]
\item $\Lambda_{m}^{\bullet}\perp\Lambda_t^{6}$ for all $1\leq m\leq t-1$:
$\forall A_{m}^{\bullet}\in\Lambda_{m}^{\bullet},a(V_{t-1})S_t\in\Lambda_t^{6}$,
we have
\[
E\{A_{m}^{\bullet}a(V_{t-1})S_t\}=E[E\{A_{m}^{\bullet}a(V_{t-1})S_t\mid V_t\}]=E\{A_{m}^{\bullet}a(V_{t-1})E(S_t\mid V_t)\}=0.
\]
\end{itemize}
\end{proof}
\begin{lemma}[Projection onto $\tilde{\Lambda}_t^{2}$]
\label{lem:aux-4-projection-1}Let $\mathcal{G}$ be the Hilbert
space of all mean-zero finite-variance functions of $(X,Y)$, where
$(X,Y)$ follows some unknown distribution $P$ and the only restriction
is $E(X\mid Y)=0$. Let
\[
\Lambda=\{h(X,Y)\in\mathcal{G}:E(h\mid Y)=0,E(hX\mid Y)=0\},
\]
then we have $\Lambda=\{O(h):h\in\mathcal{G}\}=O(\mathcal{G})$, where
the operator $O=O_{2}\circ O_{1}$ with
\begin{align*}
O_{1}(h) & =h-E(h\mid Y),\\
O_{2}(h) & =h-E(hX\mid Y)\var(X\mid Y)^{-1}X.
\end{align*}
Both $O_{1}$ and $O_{2}$ are self-adjoint, i.e., $O_{1}^{*}=O_{1}$
and $O_{2}^{*}=O_{2}$. For any $h(X,Y)\in\mathcal{G}$, its projection
onto $\Lambda$ equals
\[
\Pi\{h(X,Y)\mid\Lambda\}=h-E(hX\mid Y)\var(X\mid Y)^{-1}X-E(h\mid Y).
\]
\end{lemma}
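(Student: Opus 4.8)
The plan is to recognize $O_1$ and $O_2$ as the residual operators $I-\Pi(\cdot\mid\mathcal{L}_1)$ and $I-\Pi(\cdot\mid\mathcal{L}_2)$ for two explicit, mutually orthogonal closed subspaces of $\mathcal{G}$, and then to read off every assertion of the lemma from standard Hilbert-space facts. I would take $\mathcal{L}_1=\{a(Y):a(Y)\in\mathcal{G}\}$, the mean-zero functions of $Y$ alone, and $\mathcal{L}_2=\{b(Y)^{T}X:b(Y)^{T}X\in\mathcal{G}\}$, the elements linear in $X$ with coefficient depending only on $Y$; note that every element of $\mathcal{L}_2$ has conditional mean zero given $Y$ because $E(X\mid Y)=0$, so indeed $\mathcal{L}_2\subset\mathcal{G}$.

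First I would check $\mathcal{L}_1\perp\mathcal{L}_2$, since for any $a(Y)$ and $b(Y)^{T}X$ we have $E\{a(Y)b(Y)^{T}X\}=E\{a(Y)b(Y)^{T}E(X\mid Y)\}=0$. Next I would compute the two projections by the usual pointwise least-squares argument: minimizing $E\{(h-a(Y))^{2}\}$ over $a$ gives $a(Y)=E(h\mid Y)$, so $\Pi(h\mid\mathcal{L}_1)=E(h\mid Y)$ and $O_1=I-\Pi(\cdot\mid\mathcal{L}_1)$; minimizing $E[\{h-b(Y)^{T}X\}^{2}]$ over $b$ gives the normal equation $E(hX\mid Y)=E(XX^{T}\mid Y)b(Y)$, and since $E(X\mid Y)=0$ we have $E(XX^{T}\mid Y)=\var(X\mid Y)$, hence $b(Y)=\var(X\mid Y)^{-1}E(hX\mid Y)$ and $\Pi(h\mid\mathcal{L}_2)=E(hX\mid Y)\var(X\mid Y)^{-1}X$, so $O_2=I-\Pi(\cdot\mid\mathcal{L}_2)$. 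Because an orthogonal projection onto a closed subspace is idempotent and self-adjoint, this gives $O_1^{*}=O_1$ and $O_2^{*}=O_2$ at once; the same identities also follow from the one-line computation that, conditioning on $Y$, the cross term in $\langle O_ih,g\rangle$ is symmetric in $h$ and $g$.

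Then I would assemble the conclusion. Because $\mathcal{L}_1\perp\mathcal{L}_2$, the direct sum $\mathcal{L}_1\oplus\mathcal{L}_2$ is closed with $\Pi(\cdot\mid\mathcal{L}_1\oplus\mathcal{L}_2)=\Pi(\cdot\mid\mathcal{L}_1)+\Pi(\cdot\mid\mathcal{L}_2)$, and $(\mathcal{L}_1\oplus\mathcal{L}_2)^{\perp}=\mathcal{L}_1^{\perp}\cap\mathcal{L}_2^{\perp}$. I would identify this intersection with $\Lambda$: $h\perp\mathcal{L}_1$ for all $a(Y)$ is equivalent to $E(h\mid Y)=0$, and $h\perp\mathcal{L}_2$ for all $b(Y)$ is equivalent to $E(hX\mid Y)=0$ (let $b(Y)$ run over $c(Y)$ times each coordinate direction), so $\mathcal{L}_1^{\perp}\cap\mathcal{L}_2^{\perp}=\{h\in\mathcal{G}:E(h\mid Y)=0,\ E(hX\mid Y)=0\}=\Lambda$. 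Hence $\Pi(h\mid\Lambda)=h-\Pi(h\mid\mathcal{L}_1)-\Pi(h\mid\mathcal{L}_2)=h-E(h\mid Y)-E(hX\mid Y)\var(X\mid Y)^{-1}X$, which is the stated formula. Finally I would verify, using $E\{E(h\mid Y)X\mid Y\}=E(h\mid Y)E(X\mid Y)=0$, that $O_2O_1h$ equals exactly this expression, so $O=O_2\circ O_1$ is this orthogonal projection and therefore $\Lambda=O(\mathcal{G})$, its range; the inclusion $\Lambda\subseteq O(\mathcal{G})$ is also immediate, since $h\in\Lambda$ forces $O_1h=h$ and then $O_2h=h$.

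The step I expect to be the only real work is the technical bookkeeping: being careful about vector/matrix dimensions when $X$ is not scalar, so that $E(hX\mid Y)$, $\var(X\mid Y)^{-1}$ and $X$ combine to a scalar and the normal equation is read correctly, and invoking closedness of $\mathcal{L}_1$ and $\mathcal{L}_2$ in $L^{2}(P)$ (or, equivalently, checking that the least-squares candidates lie in $\mathcal{G}$ with residual orthogonal to the subspace) to legitimize the projection formulas. Neither is deep: $\mathcal{L}_1$ is the usual conditional-expectation subspace, and closedness of $\mathcal{L}_2$ is a standard restricted-moment-model fact of the kind the surrounding lemmas already import from Tsiatis.
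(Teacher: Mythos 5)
Your proof is correct, but it takes a genuinely different route from the paper's. The paper treats $O=O_2\circ O_1$ as just some linear operator whose range is $\Lambda$: it verifies the two inclusions $O(\mathcal{G})\subset\Lambda$ and $\Lambda\subset O(\mathcal{G})$ by direct computation, establishes self-adjointness of $O_1$ and $O_2$ by expanding inner products, and then obtains the projection formula from the normal-equation characterization $O^{*}\{\Pi(h\mid\Lambda)\}=O^{*}(h)$, simplified using the fact that $\Pi(h\mid\Lambda)\in\Lambda$ kills two of the terms in $O^{*}$. You instead identify $O_1$ and $O_2$ as the residual operators $I-\Pi(\cdot\mid\mathcal{L}_1)$ and $I-\Pi(\cdot\mid\mathcal{L}_2)$ for the explicit orthogonal closed subspaces $\mathcal{L}_1$ (mean-zero functions of $Y$) and $\mathcal{L}_2$ (terms $b(Y)^{T}X$), recognize $\Lambda=(\mathcal{L}_1\oplus\mathcal{L}_2)^{\perp}$, and conclude that $O$ is precisely the orthogonal projection onto $\Lambda$ --- at which point self-adjointness, the projection formula, and $\Lambda=O(\mathcal{G})$ all fall out of standard Hilbert-space facts at once. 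Your argument is more geometric and arguably more illuminating (it explains \emph{why} $O$ computes the projection, rather than verifying it), at the cost of having to justify closedness of $\mathcal{L}_2$ and the closedness of the orthogonal direct sum; the paper's computational route avoids naming these subspaces but has to carry out the adjoint calculation by hand and invoke the normal-equation lemma. The only points you should make fully explicit in a final write-up are the ones you already flag: closedness of $\mathcal{L}_2$ in $L^2(P)$ (the restricted-moment-model fact, with $\var(X\mid Y)$ invertible as the lemma implicitly assumes) and the dimension bookkeeping when $h$ is $\mathbb{R}^p$-valued and $X$ is a vector, so that $E(hX\mid Y)\var(X\mid Y)^{-1}X$ is read with the correct shapes.
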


\begin{proof}
The first statment in the Lemma is that $\Lambda=\Lambda^{'}=O(\mathcal{G})$.
To show this, we first show $\Lambda^{'}\subset\Lambda$. For any
$h\in\mathcal{G}$, i.e. for any $O(h)\in\Lambda^{'}$, we have
\begin{align*}
E\{O(h)\mid Y\} & =E[O_{1}(h)-E\{O_{1}(h)X\mid Y\}\var(X\mid Y)^{-1}X\mid Y]\\
 & =E\{O_{1}(h)\mid Y\}-E\{O_{1}(h)X\mid Y\}\var(X\mid Y)^{-1}E(X\mid Y)\\
 & =0-0=0,
\end{align*}
and
\begin{align*}
E\{O(h)X\mid Y\} & =E[O_{1}(h)X-E\{O_{1}(h)X\mid Y\}\var(X\mid Y)^{-1}X^{2}\mid Y]\\
 & =E\{O_{1}(h)X\mid Y\}-E\{O_{1}(h)X\mid Y\}=0,
\end{align*}
so $O(h)\in\Lambda$ and $\Lambda^{'}\subset\Lambda$. Next, we show
$\Lambda\subset\Lambda^{'}$, i.e., for any $h\in\Lambda$, there
exists $g\in\mathcal{G}$ such that $O(g)=h$. We claim that 
\begin{align}
O(g) & =g-E(g\mid Y)-E[\{g-E(g\mid Y)\}X\mid Y]\var(X\mid Y)^{-1}X\nonumber \\
 & =g-E(g\mid Y)-E(gX\mid Y)\var(X\mid Y)^{-1}X+E(g\mid Y)E(X\mid Y)\var(X\mid Y)^{-1}X\label{eq:lem-aux-4-proofuse-1}
\end{align}
Because $h\in\Lambda$, we have $E(h\mid Y)=0$ and $E(hX\mid Y)=0$.
Therefore, let $g=h$ in (\ref{eq:lem-aux-4-proofuse-1}) and it becomes
$O(h)=h$, and thus $\Lambda\subset\Lambda^{'}$.

Next we show that $O_{1}$ and $O_{2}$ are both self-adjoint. For
any $h,g\in\mathcal{G}$, we have
\begin{align*}
<O_{1}(h),g> & =E[\{h-E(h\mid Y)\}g]=E(hg)-E\{E(h\mid Y)g\}\\
 & =E(hg)-E\{hE(g\mid Y)\}=E[h\{g-E(g\mid Y)\}],
\end{align*}
and
\begin{align*}
<O_{2}(h),g> & =E[\{h-E(hX\mid Y)\var(X\mid Y)^{-1}X\}g]\\
 & =E(hg)-E[E\{E(hX\mid Y)\var(X\mid Y)^{-1}Xg\mid Y\}]\\
 & =E(hg)-E[E(hX\mid Y)\var(X\mid Y)^{-1}E\{Xg\mid Y\}]\\
 & =E(hg)-E[E\{hX\var(X\mid Y)^{-1}E(Xg\mid Y)\mid Y\}]\\
 & =E[h\{g-X\var(X\mid Y)^{-1}E(Xg\mid Y)\}].
\end{align*}
Hence both $O_{1}$ and $O_{2}$ are both self-adjoint. 

The adjoint operator for $O$ is
\begin{align*}
O^{*}(h) & =O_{1}^{*}\circ O_{2}^{*}(h)\\
 & =h-E(hX\mid Y)\var(X\mid Y)^{-1}X-E\{h-E(hX\mid Y)\var(X\mid Y)^{-1}X\mid Y\}\\
 & =h-E(hX\mid Y)\var(X\mid Y)^{-1}X-E(h\mid Y).
\end{align*}
By a functional analysis result, for any $h\in\mathcal{G}$, its projection
$\Pi(h\mid\Lambda)$ satisfies $O^{*}\{\Pi(h\mid\Lambda)\}=O^{*}(h)$,
i.e.,
\begin{align}
 & \Pi(h\mid\Lambda)-E\{\Pi(h\mid\Lambda)X\mid Y\}\var(X\mid Y)^{-1}X-E\{\Pi(h\mid\Lambda)\mid Y\}\nonumber \\
= & h-E(hX\mid Y)\var(X\mid Y)^{-1}X-E(h\mid Y).\label{eq:lem-aux-4-proofuse-2}
\end{align}
Because $\Pi(h\mid\Lambda)\in\Lambda$, $E\{\Pi(h\mid\Lambda)X\mid Y\}=E\{\Pi(h\mid\Lambda)\mid Y\}=0$,
so (\ref{eq:lem-aux-4-proofuse-2}) yields
\[
\Pi(h\mid\Lambda)=h-E(hX\mid Y)\var(X\mid Y)^{-1}X-E(h\mid Y).
\]
This completes the proof.
\end{proof}
\begin{lemma}
\label{lem:aux-5-integrate-S}Consider a random variable $X$ with
$E(X)=0$. Let $S(x)=\partial\log p(x)/\partial x$ where $p(x)$
is the density of $X$. Then under regularity conditions, $E\{S(X)\}=0$
and $E\{S(X)X\}=-1$.
\end{lemma}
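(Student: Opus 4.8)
The plan is to treat this as the textbook fact that the score of a one-dimensional density integrates to zero, together with the companion identity $E\{S(X)X\}=-1$ that lies behind the information equality. Throughout, ``under regularity conditions'' will mean that $X$ has an absolutely continuous density $p$ with integrable derivative $p'$, that $p(x)\to 0$ as $x\to\pm\infty$, and that $xp(x)\to 0$ as $x\to\pm\infty$; these are precisely the conditions that license the two elementary steps below. Note that the hypothesis $E(X)=0$ is not actually used in the argument; it appears in the statement only because the lemma is invoked with $X$ equal to a conditionally mean-zero residual ($\sigma_{t+1}$ given $V_t$).

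For the first claim I would write, using $S(x)=p'(x)/p(x)$ and the fundamental theorem of calculus,
\begin{align*}
E\{S(X)\} = \int \frac{p'(x)}{p(x)}\,p(x)\,dx = \int_{-\infty}^{\infty} p'(x)\,dx = \lim_{x\to\infty}p(x) - \lim_{x\to-\infty}p(x) = 0 .
\end{align*}
For the second claim I would again cancel the $p(x)$ and then integrate by parts,
\begin{align*}
E\{S(X)X\} = \int_{-\infty}^{\infty} x\,p'(x)\,dx = \Big[\,x\,p(x)\,\Big]_{-\infty}^{\infty} - \int_{-\infty}^{\infty} p(x)\,dx = 0 - 1 = -1 ,
\end{align*}
where the boundary term vanishes by the assumed tail behavior and $\int p = 1$ because $p$ is a density.

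There is no genuine obstacle here; the only care needed is to state the regularity conditions cleanly enough to justify writing $\int p'(x)\,dx$ and the vanishing of $[xp(x)]_{-\infty}^{\infty}$. If one prefers to sidestep pointwise boundary hypotheses, an equivalent route is to differentiate the identities $\int p(x-\theta)\,dx \equiv 1$ and $\int x\,p(x-\theta)\,dx \equiv E(X)+\theta$ with respect to $\theta$ and evaluate at $\theta=0$: the score of the location family $p(\cdot-\theta)$ at $\theta=0$ equals $-S(x)$, so the first identity gives $E\{S(X)\}=0$ and the second gives $-E\{XS(X)\}=1$, i.e.\ $E\{XS(X)\}=-1$, now under dominated-convergence-type conditions permitting differentiation under the integral sign. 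I would present the direct calculation as the main argument and mention the location-family version only as an alternative.
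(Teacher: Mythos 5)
Your proof is correct. Your primary argument (cancel $p(x)$, then apply the fundamental theorem of calculus for the first identity and integrate by parts for the second, with the boundary terms killed by tail decay of $p$ and $xp$) is a mild variant of what the paper does: the paper instead writes $\partial p(x)/\partial x$ as $\partial p(x+u)/\partial u|_{u=0}$ and interchanges differentiation with integration over the location family $p(\cdot+u)$, obtaining $E\{S(X)\}=\frac{\partial}{\partial u}\{\int p(x+u)dx\}|_{u=0}=0$ and $E\{S(X)X\}=\frac{\partial}{\partial u}\{\int p(t)(t-u)dt\}|_{u=0}=-1$. This is exactly the ``alternative route'' you sketch at the end, so you have in effect covered both. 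The trade-off is the one you identify: your direct calculation needs pointwise boundary hypotheses ($p(x)\to0$ and $xp(x)\to0$), while the paper's version needs conditions permitting differentiation under the integral sign; both are subsumed under the lemma's unstated ``regularity conditions.'' Your observation that $E(X)=0$ is not actually needed is also right --- in the paper's computation it only serves to simplify $\int p(t)(t-u)dt$ to $-u$ before differentiating, and the derivative is $-1$ either way.
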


\begin{proof}
We have
\begin{align*}
E\{S(X)\} & =\int_{-\infty}^{\infty}p(x)\frac{\partial\log p(x)}{\partial x}dx=\int_{-\infty}^{\infty}\frac{\partial p(x)}{\partial x}dx\\
 & =\int_{-\infty}^{\infty}\frac{\partial p(x+u)}{\partial u}\Big|_{u=0}dx=\frac{\partial}{\partial u}\left\{ \int_{-\infty}^{\infty}p(x+u)dx\right\} \Big|_{u=0}=0
\end{align*}
and
\begin{align*}
E\{S(X)X\} & =\int_{-\infty}^{\infty}p(x)\frac{\partial\log p(x)}{\partial x}xdx=\int_{-\infty}^{\infty}\frac{\partial p(x)}{\partial x}xdx\\
 & =\int_{-\infty}^{\infty}\frac{\partial p(x+u)}{\partial u}x\Big|_{u=0}dx=\frac{\partial}{\partial u}\left\{ \int_{-\infty}^{\infty}p(x+u)xdx\right\} \Big|_{u=0}\\
 & =\frac{\partial}{\partial u}\left\{ \int_{-\infty}^{\infty}p(t)(t-u)dt\right\} \Big|_{u=0}=\frac{\partial}{\partial u}(-u)\Big|_{u=0}=-1.
\end{align*}
\end{proof}
\begin{lemma}[Projection onto $\tilde{\Lambda}_t^{5}$]
\label{lem:aux-6-Lambda-tilde-5}Consider $\tilde{\Lambda}_t^{5}=\{A_t^{\bullet}W_t\sigma_{t+1}:A_t^{\bullet}\in\Lambda_t^{\bullet}\}$
as defined in Lemma \ref{lem:nuisance-space-orthogonalization}. For
any mean zero function $h(\sigma_{t+1},V_t)\in\mathcal{H}$, its
projection onto $\tilde{\Lambda}_t^{5}$ is
\[
\Pi\{h(\sigma_{t+1},V_t)\mid\tilde{\Lambda}_t^{5}\}=\tilde{A}_t^{\bullet}W_t\sigma_{t+1},
\]
where $\tilde{A}_t^{\bullet}=-E(D_tT_t^{-1}\mid V_{t-1})(T_t^{\bullet}T_t)^{-1}+D_tT_t^{-1}$,
with $D_t=E\{h(\sigma_{t+1},V_t)W_t\sigma_{t+1}\mid H_t\}$
and $W_t,T_t,T_t^{\bullet}$ as defined in Lemma \ref{lem:nuisance-space-orthogonalization}.
\end{lemma}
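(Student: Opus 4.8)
The plan is to use the defining property of an orthogonal projection onto a linear subspace: $\Pi\{h(\sigma_{t+1},V_{t})\mid\tilde{\Lambda}_{t}^{5}\}$ is the unique element of $\tilde{\Lambda}_{t}^{5}$ whose difference from $h$ is orthogonal to all of $\tilde{\Lambda}_{t}^{5}$. Accordingly I would write the candidate as $\tilde{A}_{t}^{\bullet}W_{t}\sigma_{t+1}$ with an as-yet-undetermined $\tilde{A}_{t}^{\bullet}\in\Lambda_{t}^{\bullet}$, and impose, for every $A_{t}^{\bullet}\in\Lambda_{t}^{\bullet}$,
\[
E\big[\{h(\sigma_{t+1},V_{t})-\tilde{A}_{t}^{\bullet}W_{t}\sigma_{t+1}\}\,A_{t}^{\bullet}W_{t}\sigma_{t+1}\big]=0 .
\]
The whole proof then consists of solving this for $\tilde{A}_{t}^{\bullet}$ and checking that the resulting $\tilde{A}_{t}^{\bullet}$ lies in $\Lambda_{t}^{\bullet}$.

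First I would reduce both terms of the display by conditioning on $H_{t}$. Since $A_{t}^{\bullet}$ is a function of $H_{t}$, the first term equals $E(A_{t}^{\bullet}D_{t})$ directly from the definition $D_{t}=E\{h(\sigma_{t+1},V_{t})W_{t}\sigma_{t+1}\mid H_{t}\}$. For the second term I would condition first on $V_{t}$: using $E(\sigma_{t+1}\mid V_{t})=0$ and $\var(\sigma_{t+1}\mid V_{t})=W_{t}^{-1}$ gives $E(W_{t}^{2}\sigma_{t+1}^{2}\mid V_{t})=W_{t}$, hence $E(W_{t}^{2}\sigma_{t+1}^{2}\mid H_{t})=E(W_{t}\mid H_{t})=T_{t}$ (the relevant moment identity is Lemma~\ref{lem:aux-comp-useful-result}), so the second term equals $E(\tilde{A}_{t}^{\bullet}A_{t}^{\bullet}T_{t})$. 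The orthogonality condition therefore collapses to $E[A_{t}^{\bullet}(D_{t}-\tilde{A}_{t}^{\bullet}T_{t})]=0$ for all $A_{t}^{\bullet}\in\Lambda_{t}^{\bullet}$.

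Next I would invoke the fact that $\Lambda_{t}^{\bullet}$ is exactly the space of square-integrable functions of $H_{t}$ with conditional mean zero given $V_{t-1}$, so its orthogonal complement inside $L^{2}(H_{t})$ is the space of functions of $V_{t-1}$. Hence the last condition holds for all $A_{t}^{\bullet}\in\Lambda_{t}^{\bullet}$ if and only if $D_{t}-\tilde{A}_{t}^{\bullet}T_{t}=c(V_{t-1})$ for some function $c$ of $V_{t-1}$, i.e.\ $\tilde{A}_{t}^{\bullet}=D_{t}T_{t}^{-1}-c(V_{t-1})T_{t}^{-1}$. The function $c$ is then pinned down by the membership requirement $E(\tilde{A}_{t}^{\bullet}\mid V_{t-1})=0$: taking $E(\cdot\mid V_{t-1})$ of this expression and using $T_{t}^{\bullet}=E(T_{t}^{-1}\mid V_{t-1})$ yields $c(V_{t-1})=E(D_{t}T_{t}^{-1}\mid V_{t-1})(T_{t}^{\bullet})^{-1}$, whence $\tilde{A}_{t}^{\bullet}=-E(D_{t}T_{t}^{-1}\mid V_{t-1})(T_{t}^{\bullet}T_{t})^{-1}+D_{t}T_{t}^{-1}$, the claimed formula. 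To finish I would note that this $\tilde{A}_{t}^{\bullet}$ is genuinely a function of $H_{t}$ (as $D_{t},T_{t}$ are functions of $H_{t}$ while $T_{t}^{\bullet}$ and $E(D_{t}T_{t}^{-1}\mid V_{t-1})$ are functions of $V_{t-1}$) and satisfies $E(\tilde{A}_{t}^{\bullet}\mid V_{t-1})=0$ by construction, so $\tilde{A}_{t}^{\bullet}W_{t}\sigma_{t+1}\in\tilde{\Lambda}_{t}^{5}$; uniqueness of the projection then closes the argument.

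The only step beyond routine conditioning bookkeeping is the moment identity $E(W_{t}^{2}\sigma_{t+1}^{2}\mid H_{t})=T_{t}$, which is where the definition $W_{t}=\var(\sigma_{t+1}\mid V_{t})^{-1}$ and the nesting $V_{t-1}\subset H_{t}\subset V_{t}$ are essential; I would make sure Lemma~\ref{lem:aux-comp-useful-result} is phrased to deliver exactly this (together with the companion fact $E(T_{t}^{-1}W_{t}^{2}\sigma_{t+1}^{2}\mid H_{t})$-type reductions used elsewhere). Everything else is the standard projection-characterization computation plus the well-known description of the orthogonal complement of $\Lambda_{t}^{\bullet}$ inside $L^{2}(H_{t})$, so I do not expect any real difficulty there.
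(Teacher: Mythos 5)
Your proposal is correct and is essentially the paper's argument: the paper characterizes the projection via the normal equation $O_{1}^{*}O_{2}^{*}(\tilde{A}_{t}^{\bullet}W_{t}\sigma_{t+1})=O_{1}^{*}O_{2}^{*}(h)$ for the self-adjoint operators whose composed image is $\tilde{\Lambda}_{t}^{5}$, which is exactly your orthogonality condition written out, and both arguments then reduce (via $E(W_{t}^{2}\sigma_{t+1}^{2}\mid H_{t})=T_{t}$, Lemma \ref{lem:aux-comp-useful-result}) to $\tilde{A}_{t}^{\bullet}T_{t}-D_{t}$ being a function of $V_{t-1}$ that is pinned down by $E(\tilde{A}_{t}^{\bullet}\mid V_{t-1})=0$. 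Your phrasing in terms of the orthogonal complement of $\Lambda_{t}^{\bullet}$ inside $L^{2}(H_{t})$ is a cosmetic rather than substantive difference.
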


\begin{proof}
We can express $\tilde{\Lambda}_t^{5}$ as the image of the operator
$O_{2}\circ O_{1}$: $\tilde{\Lambda}_t^{5}=O_{2}\circ O_{1}(\mathcal{H})$,
where for any $g(V_{T+1})\in\mathcal{H}$ define $O_{1}(g)=E(g\mid H_t)-E(g\mid V_{t-1})$
and $O_{2}(g)=gW_t\sigma_{t+1}$. It follows that both $O_{1}$
and $O_{2}$ are self-adjoint. Suppose $\Pi\{h(\sigma_{t+1},V_t)\mid\tilde{\Lambda}_t^{5}\}=\tilde{A}_t^{\bullet}W_t\sigma_{t+1}$
for some $\tilde{A}_t^{\bullet}(H_t)\in\Lambda_t^{\bullet}$.
By a functional analysis result, we have $O_{1}^{*}\circ O_{2}^{*}(\tilde{A}_t^{\bullet}W_t\sigma_{t+1})=O_{1}^{*}\circ O_{2}^{*}\{h(\sigma_{t+1},V_t)\}$,
i.e.,
\begin{equation}
E(\tilde{A}_t^{\bullet}W_t^{2}\sigma_{t+1}^{2}\mid H_t)-E(\tilde{A}_t^{\bullet}W_t^{2}\sigma_{t+1}^{2}\mid V_{t-1})=E(hW_t\sigma_{t+1}\mid H_t)-E(hW_t\sigma_{t+1}\mid V_{t-1}).\label{eq:lem-aux-6-proofuse-1}
\end{equation}
By Lemma \ref{lem:aux-comp-useful-result}, $E(\tilde{A}_t^{\bullet}W_t^{2}\sigma_{t+1}^{2}\mid H_t)=\tilde{A}_t^{\bullet}T_t$
and $E(\tilde{A}_t^{\bullet}W_t^{2}\sigma_{t+1}^{2}\mid V_{t-1})=E\{E(\tilde{A}_t^{\bullet}W_t^{2}\sigma_{t+1}^{2}\mid H_t)\mid V_{t-1}\}=E(\tilde{A}_t^{\bullet}T_t\mid V_{t-1})$.
So by the definition of $D_t$, (\ref{eq:lem-aux-6-proofuse-1})
becomes $\tilde{A}_t^{\bullet}T_t-E(\tilde{A}_t^{\bullet}T_t\mid V_{t-1})=D_t-E(D_t\mid V_{t-1})$,
i.e.,
\begin{equation}
\tilde{A}_t^{\bullet}=T_t^{-1}\{E(\tilde{A}_t^{\bullet}T_t\mid V_{t-1})+D_t-E(D_t\mid V_{t-1})\}.\label{eq:lem-aux-6-proofuse-2}
\end{equation}
Because $\tilde{A}_t^{\bullet}\in\Lambda_t^{\bullet}$, $E(\tilde{A}_t^{\bullet}\mid V_{t-1})=0$,
so taking $E(\cdot\mid V_{t-1})$ on both sides of (\ref{eq:lem-aux-6-proofuse-2})
gives
\[
0=E(T_t^{-1}\mid V_{t-1})E(\tilde{A}_t^{\bullet}T_t\mid V_{t-1})+E(T_t^{-1}D_t\mid V_{t-1})-E(T_t^{-1}\mid V_{t-1})E(D_t\mid V_{t-1}).
\]
By the definition of $T_t^{\bullet}$ this becomes
\begin{equation}
E(\tilde{A}_t^{\bullet}T_t\mid V_{t-1})=E(D_t\mid V_{t-1})-(T_t^{\bullet})^{-1}E(T_t^{-1}D_t\mid V_{t-1}).\label{eq:lem-aux-6-proofuse-3}
\end{equation}
Plugging (\ref{eq:lem-aux-6-proofuse-3}) into (\ref{eq:lem-aux-6-proofuse-2})
yields
\begin{align*}
\tilde{A}_t^{\bullet} & =T_t^{-1}\{E(D_t\mid V_{t-1})-(T_t^{\bullet})^{-1}E(T_t^{-1}D_t\mid V_{t-1})+D_t-E(D_t\mid V_{t-1})\}\\
 & =-(T_tT_t^{\bullet})^{-1}E(T_t^{-1}D_t\mid V_{t-1})+T_t^{-1}D_t.
\end{align*}
This completes the proof.
\end{proof}
\begin{lemma}[Projection onto $\tilde{\Gamma}_t^{4}$]
\label{lem:aux-7-Gamma-tilde-4}Consider $\tilde{\Gamma}_t^{4}=\{A_t^{\bullet}-E(Q_tA_t^{\bullet}\mid V_{t-1})(T_t^{\bullet})^{-1}T_t^{-1}W_t\sigma_{t+1}:A_t^{\bullet}\in\Lambda_t^{\bullet}\}$
as defined in Lemma \ref{lem:nuisance-space-orthogonalization}. For
any $h(\sigma_{t+1},V_t)\in\mathcal{H}$, its projection onto $\tilde{\Gamma}_t^{4}$
is
\begin{equation}
\Pi\{h(\sigma_{t+1},V_t)\mid\tilde{\Gamma}_t^{4}\}=O_{3}^{*}O_{4}^{*}(h)-E\{O_{3}^{*}O_{4}^{*}(h)Q_t\mid V_{t-1}\}W_{t,t-1}\epsilon_t,\label{eq:lem-aux-7-statement-1}
\end{equation}
where
\begin{align*}
\epsilon_t & =T_t^{-1}W_t\sigma_{t+1}+Q_t,\\
W_{t,t-1} & =\var(\epsilon_t\mid V_{t-1})^{-1},\\
O_{3}(h) & =E(h\mid H_t)-E(h\mid V_{t-1}),\\
O_{3}^{*}(h) & =O_{3}(h),\\
O_{4}(h) & =h-E(hQ_t\mid V_{t-1})(T_t^{\bullet})^{-1}T_t^{-1}W_t\sigma_{t+1},\\
O_{4}^{*}(h) & =h-E\{h(T_t^{\bullet})^{-1}T_t^{-1}W_t\sigma_{t+1}\mid V_{t-1}\}Q_t,
\end{align*}
and
\begin{equation}
O_{3}^{*}O_{4}^{*}(h)=E(h\mid H_t)-E(h\mid V_{t-1})-E\{h(T_t^{\bullet})^{-1}T_t^{-1}W_t\sigma_{t+1}\mid V_{t-1}\}Q_t.\label{eq:lem-aux-7-statement-2}
\end{equation}
In particular, if $h=h(V_t)$, then
\begin{equation}
\Pi\{h(V_t)\mid\tilde{\Gamma}_t^{4}\}=E(h\mid H_t)-E(h\mid V_{t-1})-E(hQ_t\mid V_{t-1})W_{t,t-1}\epsilon_t;\label{eq:lem-aux-7-statement-3}
\end{equation}
if $h=h(V_{t-1})$, then $\Pi\{h(V_{t-1})\mid\tilde{\Gamma}_t^{4}\}=0$.
Here, $W_t,T_t,T_t^{\bullet}$ are defined in Lemma \ref{lem:nuisance-space-orthogonalization}.
\end{lemma}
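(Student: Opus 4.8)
The plan is to realize $\tilde{\Gamma}_t^{4}$ as the range of a composition of two self-adjoint operators on $\mathcal{H}$ and then apply the same functional-analysis characterization of projections that is used in the proofs of Lemmas \ref{lem:aux-4-projection-1} and \ref{lem:aux-6-Lambda-tilde-5}.

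First I would note that $\tilde{\Gamma}_t^{4}=(O_{4}\circ O_{3})(\mathcal{H})$: the operator $O_{3}(g)=E(g\mid H_t)-E(g\mid V_{t-1})$ maps $\mathcal{H}$ onto $\Lambda_t^{\bullet}$ and fixes every element of $\Lambda_t^{\bullet}$ (since $E(A_t^{\bullet}\mid V_{t-1})=0$), while $O_{4}(A_t^{\bullet})=A_t^{\bullet}-E(A_t^{\bullet}Q_t\mid V_{t-1})(T_t^{\bullet})^{-1}T_t^{-1}W_t\sigma_{t+1}$ is by definition the map carrying $\Lambda_t^{\bullet}$ onto $\tilde{\Gamma}_t^{4}$. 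I would then check by iterated expectation that $O_{3}$ and $O_{4}$ are self-adjoint with the stated adjoints (for $O_{4}$ one pulls the $V_{t-1}$-measurable factor $E(gQ_t\mid V_{t-1})(T_t^{\bullet})^{-1}$ out of the inner product and relabels), and compute the composite to obtain \eqref{eq:lem-aux-7-statement-2}; the only facts needed here are that $Q_t=E(Q_t\mid H_t)$ is $H_t$-measurable (Lemma \ref{lem:aux-1}), that $E(Q_t\mid V_{t-1})=0$, and that $E(W_t\sigma_{t+1}\mid H_t)=0$, which kill the cross terms.

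Next, since $\tilde{\Gamma}_t^{4}$ is the range of $K=O_{4}\circ O_{3}$, the projection $\Pi(h\mid\tilde{\Gamma}_t^{4})$ is the unique element of $\tilde{\Gamma}_t^{4}$ with $K^{*}\{h-\Pi(h\mid\tilde{\Gamma}_t^{4})\}=0$, i.e.\ $O_{3}^{*}O_{4}^{*}\{\Pi(h\mid\tilde{\Gamma}_t^{4})\}=O_{3}^{*}O_{4}^{*}(h)$. Writing a generic member of $\tilde{\Gamma}_t^{4}$ as $A_t^{\bullet}-E(A_t^{\bullet}Q_t\mid V_{t-1})(T_t^{\bullet})^{-1}T_t^{-1}W_t\sigma_{t+1}$ with $A_t^{\bullet}\in\Lambda_t^{\bullet}$ and applying \eqref{eq:lem-aux-7-statement-2} to it, I expect the computation (using $E(W_t\sigma_{t+1}\mid H_t)=0$ together with the moment identities $E(W_t^{2}\sigma_{t+1}^{2}\mid H_t)=T_t$ and $E(W_t\sigma_{t+1}^{2}\mid V_t)=1$ from Lemma \ref{lem:aux-comp-useful-result}) to collapse to $O_{3}^{*}O_{4}^{*}(\cdot)=A_t^{\bullet}+(T_t^{\bullet})^{-1}E(A_t^{\bullet}Q_t\mid V_{t-1})Q_t$. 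Setting this equal to the ($H_t$-measurable) function $O_{3}^{*}O_{4}^{*}(h)$, taking $E(\,\cdot\,Q_t\mid V_{t-1})$ of both sides, and using the identity $W_{t,t-1}^{-1}=E(Q_t^{2}\mid V_{t-1})+T_t^{\bullet}$ (which I would take from Lemma \ref{lem:aux-comp-useful-result}; it is the same identity invoked at the end of the proof of Lemma \ref{lem:nuisance-space-orthogonalization}) gives $E(A_t^{\bullet}Q_t\mid V_{t-1})(T_t^{\bullet})^{-1}=W_{t,t-1}E\{O_{3}^{*}O_{4}^{*}(h)Q_t\mid V_{t-1}\}$ and hence $A_t^{\bullet}=O_{3}^{*}O_{4}^{*}(h)-W_{t,t-1}E\{O_{3}^{*}O_{4}^{*}(h)Q_t\mid V_{t-1}\}Q_t$. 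Recombining the $Q_t$ and $T_t^{-1}W_t\sigma_{t+1}$ pieces through $\epsilon_t=T_t^{-1}W_t\sigma_{t+1}+Q_t$ then yields exactly \eqref{eq:lem-aux-7-statement-1}.

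Finally, the two special cases follow by specializing \eqref{eq:lem-aux-7-statement-2}: if $h=h(V_t)$ then $E\{h(T_t^{\bullet})^{-1}T_t^{-1}W_t\sigma_{t+1}\mid V_{t-1}\}=0$ because $E(\sigma_{t+1}\mid V_t)=0$, so $O_{3}^{*}O_{4}^{*}(h)=E(h\mid H_t)-E(h\mid V_{t-1})$ and, using $Q_t=E(Q_t\mid H_t)$, $E\{O_{3}^{*}O_{4}^{*}(h)Q_t\mid V_{t-1}\}=E(hQ_t\mid V_{t-1})$, giving \eqref{eq:lem-aux-7-statement-3}; if $h=h(V_{t-1})$ then $E(h\mid H_t)=E(h\mid V_{t-1})$ forces $O_{3}^{*}O_{4}^{*}(h)=0$ and the projection is $0$. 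The step I expect to be most delicate is not conceptual but bookkeeping: keeping track of the measurability of each factor ($Q_t,T_t$ are functions of $H_t$; $W_t,\sigma_{t+1}$ of $V_t$; $T_t^{\bullet},W_{t,t-1}$ of $V_{t-1}$) so that the conditional-expectation identities of Lemma \ref{lem:aux-comp-useful-result} are applied in the correct order and every extra term cancels in pairs.
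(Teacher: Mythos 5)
Your proposal follows essentially the same route as the paper's proof: realize $\tilde{\Gamma}_t^{4}$ as the range of $O_{4}\circ O_{3}$, compute the adjoints, apply $O_{3}^{*}O_{4}^{*}$ to a generic element $A_t^{\bullet}-E(Q_tA_t^{\bullet}\mid V_{t-1})(T_t^{\bullet})^{-1}T_t^{-1}W_t\sigma_{t+1}$ to reduce the normal equation to $A_t^{\bullet}+(T_t^{\bullet})^{-1}E(A_t^{\bullet}Q_t\mid V_{t-1})Q_t=O_{3}^{*}O_{4}^{*}(h)$, solve by taking $E(\cdot\,Q_t\mid V_{t-1})$ with the identity $W_{t,t-1}^{-1}=T_t^{\bullet}+\var(Q_t\mid V_{t-1})$, and recombine via $\epsilon_t$. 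The argument and the special cases are correct and match the paper's proof step for step.
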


\begin{proof}
By definition it is straightforward that $\tilde{\Gamma}_t^{4}=O_{4}\circ O_{3}(\mathcal{H})$
and that $O_{3}^{*}=O_{3}$. To derive $O_{4}^{*}$, for any $h,g\in\mathcal{H}$
we have
\begin{align*}
<O_{4}(h),g> & =E[\{h-E(hQ_t\mid V_{t-1})(T_t^{\bullet})^{-1}T_t^{-1}W_t\sigma_{t+1}\}g]\\
 & =E(hg)-E\{E(hQ_t\mid V_{t-1})(T_t^{\bullet})^{-1}T_t^{-1}W_t\sigma_{t+1}g\}\\
 & =E(hg)-E[hQ_tE\{(T_t^{\bullet})^{-1}T_t^{-1}W_t\sigma_{t+1}g\mid V_{t-1}\}]\\
 & =E(h[g-Q_tE\{(T_t^{\bullet})^{-1}T_t^{-1}W_t\sigma_{t+1}g\mid V_{t-1}\}]),
\end{align*}
so $O_{4}^{*}(h)=h-E\{h(T_t^{\bullet})^{-1}T_t^{-1}W_t\sigma_{t+1}\mid V_{t-1}\}Q_t$.
Using the fact that $Q_t=Q_t(H_t)$ and $E(Q_t\mid V_{t-1})=0$,
we have
\begin{align*}
O_{3}^{*}O_{4}^{*}(h) & =E\{O_{4}^{*}(h)\mid H_t\}-E\{O_{4}^{*}(h)\mid V_{t-1}\}\\
 & =E\{h-E\{h(T_t^{\bullet})^{-1}T_t^{-1}W_t\sigma_{t+1}\mid V_{t-1}\}Q_t\mid H_t\}\\
 & ~~-E\{h-E\{h(T_t^{\bullet})^{-1}T_t^{-1}W_t\sigma_{t+1}\mid V_{t-1}\}Q_t\mid V_{t-1}\}\\
 & =E(h\mid H_t)-E(h\mid V_{t-1})-E\{h(T_t^{\bullet})^{-1}T_t^{-1}W_t\sigma_{t+1}\mid V_{t-1}\}Q_t.
\end{align*}

To derive the projection, for a given $h(\sigma_{t+1},V_t)\in\mathcal{H}$,
suppose
\begin{equation}
\Pi\{h(\sigma_{t+1},V_t)\mid\tilde{\Gamma}_t^{4}\}=A_t^{\bullet}-E(Q_tA_t^{\bullet}\mid V_{t-1})(T_t^{\bullet})^{-1}T_t^{-1}W_t\sigma_{t+1}\equiv h_{p}\label{eq:lem-aux-7-proofuse-0.5}
\end{equation}
for some $A_t^{\bullet}\in\Lambda_t^{\bullet}$. We calculate
a few terms:
\begin{align*}
E(h_{p}\mid H_t) & =E\{A_t^{\bullet}-E(Q_tA_t^{\bullet}\mid V_{t-1})(T_t^{\bullet})^{-1}T_t^{-1}W_t\sigma_{t+1}\mid H_t\}\\
 & =A_t^{\bullet}-E\{E(Q_tA_t^{\bullet}\mid V_{t-1})(T_t^{\bullet})^{-1}T_t^{-1}W_tE(\sigma_{t+1}\mid V_t)\mid H_t\}\\
 & =A_t^{\bullet},\\
E(h_{p}\mid V_{t-1}) & =E\{E(h_{p}\mid H_t)\mid V_{t-1}\}=E(A_t^{\bullet}\mid V_{t-1})=0,
\end{align*}
and
\begin{align}
 & E\{h_{p}(T_t^{\bullet})^{-1}T_t^{-1}W_t\sigma_{t+1}\mid V_{t-1}\}\nonumber \\
= & E\{A_t^{\bullet}(T_t^{\bullet})^{-1}T_t^{-1}W_t\sigma_{t+1}\mid V_{t-1}\} -E\{E(Q_tA_t^{\bullet}\mid V_{t-1})(T_t^{\bullet})^{-2}T_t^{-2}W_t^{2}\sigma_{t+1}^{2}\mid V_{t-1}\}\nonumber \\
= & E\{A_t^{\bullet}(T_t^{\bullet})^{-1}T_t^{-1}W_tE(\sigma_{t+1}\mid H_t)\mid V_{t-1}\} -E(Q_tA_t^{\bullet}\mid V_{t-1})(T_t^{\bullet})^{-2}E\{T_t^{-2}E(W_t^{2}\sigma_{t+1}^{2}\mid H_t)\mid V_{t-1}\}\nonumber \\
= & 0-E(Q_tA_t^{\bullet}\mid V_{t-1})(T_t^{\bullet})^{-2}E(T_t^{-2}T_t\mid V_{t-1})\nonumber \\
= & -E(Q_tA_t^{\bullet}\mid V_{t-1})(T_t^{\bullet})^{-1},\label{eq:lem-aux-7-proofuse-1}
\end{align}
where the second to last equality in (\ref{eq:lem-aux-7-proofuse-1})
follows from (\ref{lem:aux-comp-useful-result}). Plugging them into
(\ref{eq:lem-aux-7-statement-2}) yields
\begin{align*}
O_{3}^{*}O_{4}^{*}(h_{p}) & =E(h_{p}\mid H_t)-E(h_{p}\mid V_{t-1})-E\{h_{p}(T_t^{\bullet})^{-1}T_t^{-1}W_t\sigma_{t+1}\mid V_{t-1}\}Q_t\\
 & =A_t^{\bullet}+E(Q_tA_t^{\bullet}\mid V_{t-1})(T_t^{\bullet})^{-1}Q_t.
\end{align*}
A functional analysis result implies that $O_{3}^{*}O_{4}^{*}\{h(\sigma_{t+1},V_t)\}=O_{3}^{*}O_{4}^{*}(h_{p})$,
i.e., 
\begin{equation}
O_{3}^{*}O_{4}^{*}(h)=A_t^{\bullet}+E(Q_tA_t^{\bullet}\mid V_{t-1})(T_t^{\bullet})^{-1}Q_t.\label{eq:lem-aux-7-proofuse-2}
\end{equation}
Multiply both sides by $Q_t$ then take $E(\cdot\mid V_{t-1})$,
(\ref{eq:lem-aux-7-proofuse-2}) becomes
\begin{equation}
E\{O_{3}^{*}O_{4}^{*}(h)Q_t\mid V_{t-1}\}=E(A_t^{\bullet}Q_t\mid V_{t-1})+E(A_t^{\bullet}Q_t\mid V_{t-1})(T_t^{\bullet})^{-1}E(Q_t^{2}\mid V_{t-1}).\label{eq:lem-aux-7-proofuse-3}
\end{equation}
Noting that $E(Q_t\mid V_{t-1})=0$, (\ref{eq:lem-aux-7-proofuse-3})
implies that
\[
E(A_t^{\bullet}Q_t\mid V_{t-1})=\frac{E\{O_{3}^{*}O_{4}^{*}(h)Q_t\mid V_{t-1}\}}{1+(T_t^{\bullet})^{-1}\var(Q_t\mid V_{t-1})}.
\]
Plugging into (\ref{eq:lem-aux-7-proofuse-2}) and we have
\[
A_t^{\bullet}=O_{3}^{*}O_{4}^{*}(h)-\frac{E\{O_{3}^{*}O_{4}^{*}(h)Q_t\mid V_{t-1}\}Q_t}{T_t^{\bullet}+\var(Q_t\mid V_{t-1})}.
\]
This implies that
\begin{align*}
E(Q_tA_t^{\bullet}\mid V_{t-1}) & =E\{Q_tO_{3}^{*}O_{4}^{*}(h)\mid V_{t-1}\}-\frac{E\{O_{3}^{*}O_{4}^{*}(h)Q_t\mid V_{t-1}\}E(Q_t^{2}\mid V_{t-1})}{T_t^{\bullet}+\var(Q_t\mid V_{t-1})}\\
 & =\frac{E\{O_{3}^{*}O_{4}^{*}(h)Q_t\mid V_{t-1}\}T_t^{\bullet}}{T_t^{\bullet}+\var(Q_t\mid V_{t-1})}.
\end{align*}
Therefore, by the definition of $h_{p}$ in (\ref{eq:lem-aux-7-proofuse-0.5})
we have
\begin{align*}
h_{p} & =O_{3}^{*}O_{4}^{*}(h)-\frac{E\{O_{3}^{*}O_{4}^{*}(h)Q_t\mid V_{t-1}\}Q_t}{T_t^{\bullet}+\var(Q_t\mid V_{t-1})}\\
 & ~~-\frac{E\{O_{3}^{*}O_{4}^{*}(h)Q_t\mid V_{t-1}\}T_t^{-1}W_t\sigma_{t+1}}{T_t^{\bullet}+\var(Q_t\mid V_{t-1})}\\
 & =O_{3}^{*}O_{4}^{*}(h)-E\{O_{3}^{*}O_{4}^{*}(h)Q_t\mid V_{t-1}\}\frac{T_t^{-1}W_t\sigma_{t+1}+Q_t}{\var(T_t^{-1}W_t\sigma_{t+1}+Q_t\mid V_{t-1})}\\
 & =O_{3}^{*}O_{4}^{*}(h)-E\{O_{3}^{*}O_{4}^{*}(h)Q_t\mid V_{t-1}\}W_{t,t-1}\epsilon_t,
\end{align*}
where the second to last equality follows from Lemma \ref{lem:aux-comp-useful-result}.
This proves (\ref{eq:lem-aux-7-statement-1}).

If $h=h(V_t)$, then
\begin{align*}
 & E\{h(V_t)(T_t^{\bullet})^{-1}T_t^{-1}W_t\sigma_{t+1}\mid V_{t-1}\}Q_t\\
= & E\{h(V_t)(T_t^{\bullet})^{-1}T_t^{-1}W_tE(\sigma_{t+1}\mid V_t)\mid V_{t-1}\}Q_t=0,
\end{align*}
so $O_{3}^{*}O_{4}^{*}(h)=E(h\mid H_t)-E(h\mid V_{t-1})$ and 
\begin{align*}
 & E\{O_{3}^{*}O_{4}^{*}(h)Q_t\mid V_{t-1}\}\\
= & E\{E(h\mid H_t)Q_t\mid V_{t-1}\}-E\{E(h\mid V_{t-1})Q_t\mid V_{t-1}\}\\
= & E(hQ_t\mid V_{t-1}).
\end{align*}
This proves (\ref{eq:lem-aux-7-statement-3}). If $h=h(V_{t-1})$,
then $O_{3}^{*}O_{4}^{*}(h)=0$ and hence $\Pi\{h(V_{t-1})\mid\tilde{\Gamma}_t^{4}\}=0$.
This completes the proof.
\end{proof}
\begin{lemma}[Projection onto $\tilde{\Lambda}_t^{6}$]
\label{lem:aux-8-Lambda-tilde-6}For any $h\in\mathcal{H}$, $\Pi(h\mid\tilde{\Lambda}_t^{6})=E(h\epsilon_t\mid V_{t-1})W_{t,t-1}\epsilon_t$,
where $\epsilon_t=T_t^{-1}W_t\sigma_{t+1}+Q_t$ and $W_{t,t-1}=\var(\epsilon_t\mid V_{t-1})^{-1}$.
\end{lemma}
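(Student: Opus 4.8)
The plan is to prove this by a direct application of the projection theorem in $\mathcal{H}$, in the same spirit as Lemma~\ref{lem:aux-4-projection-1}. Under the moment regularity conditions in force, $\tilde{\Lambda}_t^6 = \{a(V_{t-1})\epsilon_t : a(V_{t-1})\text{ arbitrary}\}$ is a closed linear subspace of $\mathcal{H}$ (it is the image of $L^2(V_{t-1})$ under multiplication by $\epsilon_t$, with closedness following once $\var(\epsilon_t\mid V_{t-1})$ is bounded away from $0$ and $\infty$). Hence it suffices to exhibit a candidate $\hat h = \hat a(V_{t-1})\epsilon_t\in\tilde{\Lambda}_t^6$ and verify that the residual $h-\hat h$ is orthogonal to every $a(V_{t-1})\epsilon_t$; the projection theorem then identifies $\hat h$ with $\Pi(h\mid\tilde{\Lambda}_t^6)$.

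The first key step is to record that $E(\epsilon_t\mid V_{t-1})=0$. Writing $\epsilon_t = T_t^{-1}W_t\sigma_{t+1} + Q_t$, note that $T_t^{-1}W_t$ is $V_t$-measurable (since $W_t = \var(\sigma_{t+1}\mid V_t)^{-1}$ and $T_t = E(W_t\mid H_t)$ with $H_t\subset V_t$), so by iterated expectations and the model constraint $E(\sigma_{t+1}\mid V_t)=0$ from Lemma~\ref{lem:aux-2-change-of-variable}, the first summand integrates to $0$ given $V_{t-1}$; the second integrates to $0$ given $V_{t-1}$ because $Q_t = E\{U_{t+1}(\psi_{0})\mid V_t\} - E\{U_{t+1}(\psi_{0})\mid V_{t-1}\}$ telescopes. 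Consequently $\var(\epsilon_t\mid V_{t-1}) = E(\epsilon_t^2\mid V_{t-1})$, so $W_{t,t-1} = E(\epsilon_t^2\mid V_{t-1})^{-1}$. The second key step is the orthogonality computation: for an arbitrary $a(V_{t-1})$, conditioning on $V_{t-1}$ and using that $a$ and $\hat a$ are $V_{t-1}$-measurable gives
\[
E\bigl[\{h - \hat a(V_{t-1})\epsilon_t\}\,a(V_{t-1})\epsilon_t\bigr]
= E\bigl[a(V_{t-1})\{E(h\epsilon_t\mid V_{t-1}) - \hat a(V_{t-1})E(\epsilon_t^2\mid V_{t-1})\}\bigr].
\]
Taking $\hat a(V_{t-1}) = E(h\epsilon_t\mid V_{t-1})\,E(\epsilon_t^2\mid V_{t-1})^{-1} = E(h\epsilon_t\mid V_{t-1})\,W_{t,t-1}$ — which is indeed a function of $V_{t-1}$, so that $\hat h\in\tilde{\Lambda}_t^6$, and which has mean zero since $E(\hat a\epsilon_t) = E\{\hat a\,E(\epsilon_t\mid V_{t-1})\} = 0$ — makes the bracket vanish for every $a$, yielding $\Pi(h\mid\tilde{\Lambda}_t^6) = E(h\epsilon_t\mid V_{t-1})W_{t,t-1}\epsilon_t$.

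There is no substantive obstacle here: this is just the ``regression of $h$ on $\{a(V_{t-1})\epsilon_t\}$,'' and the only points requiring care are (i) the closedness of $\tilde{\Lambda}_t^6$, which is where the unstated moment/regularity conditions enter, and (ii) the identity $E(\epsilon_t\mid V_{t-1})=0$, since it is what collapses $\var(\epsilon_t\mid V_{t-1})$ to the second moment and thereby matches the definition of $W_{t,t-1}$; that identity in turn rests on $E(\sigma_{t+1}\mid V_t)=0$ and the telescoping definition of $Q_t$. One could alternatively phrase the argument in the operator language of Lemma~\ref{lem:aux-4-projection-1}, writing $\tilde{\Lambda}_t^6 = O(\mathcal{H})$ with $O(g) = E(g\mid V_{t-1})\epsilon_t$ and solving $O^{*}\{\Pi(h\mid\tilde{\Lambda}_t^6)\} = O^{*}(h)$, but the direct verification above is shorter and self-contained.
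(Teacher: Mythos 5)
Your proposal is correct and follows essentially the same route as the paper's proof: exhibit the candidate $E(h\epsilon_t\mid V_{t-1})W_{t,t-1}\epsilon_t$ as an element of $\tilde{\Lambda}_t^{6}$ and verify orthogonality of the residual against every $a(V_{t-1})\epsilon_t$ by conditioning on $V_{t-1}$ and using $E(\epsilon_t^{2}\mid V_{t-1})=W_{t,t-1}^{-1}$. Your added remarks on closedness, on $E(\epsilon_t\mid V_{t-1})=0$, and on the alternative operator formulation mirror points the paper makes in Lemma \ref{lem:aux-comp-useful-result} and the Remark following the lemma.
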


\begin{proof}
Because $\tilde{\Lambda}_t^{6}=\{a(V_{t-1})\epsilon_t:a(V_{t-1})\text{ is any function}\in\mathbb{R}^{p}\}$,
we have $E(h\epsilon_t\mid V_{t-1})W_{t,t-1}\epsilon_t\in\tilde{\Lambda}_t^{6}$.
So it suffices to show that for any $a(V_{t-1})\epsilon_t\in\tilde{\Lambda}_t^{6}$,
$h-E(h\epsilon_t\mid V_{t-1})W_{t,t-1}\epsilon_t\perp a(V_{t-1})\epsilon_t$.
Because $E(\epsilon_t^{2}\mid V_{t-1})=W_{t,t-1}^{-1}$, we have
\begin{align*}
E\{E(h\epsilon_t\mid V_{t-1})W_{t,t-1}\epsilon_t^{2}a(V_{t-1})\} & =E\{E(h\epsilon_t\mid V_{t-1})W_{t,t-1}E(\epsilon_t^{2}\mid V_{t-1})a(V_{t-1})\}\\
 & =E\{E(h\epsilon_t\mid V_{t-1})a(V_{t-1})\}\\
 & =E\{h\epsilon_ta(V_{t-1})\},
\end{align*}
thus
\[
E[\{h-E(h\epsilon_t\mid V_{t-1})W_{t,t-1}\epsilon_t\}a(V_{t-1})\epsilon_t]=0.
\]
This completes the proof.
\end{proof}
\begin{remark*}
$\tilde{\Lambda}_t^{6}$ may not be the image of a linear operator
on $\mathcal{H}$. So instead of directly deriving the adjoing operator
for $\tilde{\Lambda}_t^{6}$, the form of the projection in Lemma
\ref{lem:aux-8-Lambda-tilde-6} is obtained by first considering the
projection onto the following subspace of $\tilde{\Lambda}_t^{6}$:
\[
\hat{\Lambda}_t^{6}=O_{5}(\mathcal{H})=\{E(h\mid V_{t-1})\epsilon_t:h\in\mathcal{H}\}.
\]
The adjoint operator $O_{5}^{*}$ for $O_{5}$ can be derived as follows.
For any $h,g\in\mathcal{H}$,
\begin{align*}
<O_{5}(h),g> & =E\{E(h\mid V_{t-1})\epsilon_tg\}\\
 & =E[E\{E(h\mid V_{t-1})\epsilon_tg\mid V_{t-1}\}]\\
 & =E\{E(h\mid V_{t-1})E(\epsilon_tg\mid V_{t-1})\}\\
 & =E\{hE(\epsilon_tg\mid V_{t-1})\},
\end{align*}
so $O_{5}^{*}(g)=E(\epsilon_tg\mid V_{t-1})$. Now, suppose for
a given $h\in\mathcal{H}$, $\Pi(h\mid\hat{\Lambda}_t^{6})=h_{p}(V_{t-1})\epsilon_t$
for some $h_{p}(V_{t-1})$ satisfying $E(h_{p})=0$. By a functional
analysis result we have
\[
O_{5}^{*}\{h_{p}(V_{t-1})\epsilon_t\}=O_{5}^{*}(h),
\]
i.e.,
\[
E\{\epsilon_t^{2}h_{p}(V_{t-1})\mid V_{t-1}\}=E(\epsilon_th\mid V_{t-1}).
\]
Since $E(\epsilon_t^{2}\mid V_{t-1})=W_{t,t-1}^{-1}$, the above
display implies $h_{p}(V_{t-1})=W_{t,t-1}E(\epsilon_th\mid V_{t-1})$,
and thus $\Pi(h\mid\hat{\Lambda}_t^{6})=E(\epsilon_th\mid V_{t-1})W_{t,t-1}\epsilon_t$.
Having obtained this, we then verified by definition that it is also
the projection onto $\tilde{\Lambda}_t^{6}$ in the proof of Lemma
\ref{lem:aux-8-Lambda-tilde-6}.
\end{remark*}
\begin{lemma}[Projection of $h(\sigma_{t+1},V_t)$ onto $\Lambda_t^{\perp}$]
\label{lem:aux-9-projection-Lambda-t-part1}For any $B=b(V_{T+1})\in\mathcal{H}$,
let $h(\sigma_{t+1},V_t)=E(B\mid\sigma_{t+1},V_t)-E(B\mid V_t)$.
Then the projection of $h(\sigma_{t+1},V_t)$ onto $\Lambda_t^{\perp}$
is
\[
\Pi\{h(\sigma_{t+1},V_t)\mid\Lambda_t^{\perp}\}=\{R_t-T_t^{-1}E(R_tW_t\mid X_t)\}W_t\sigma_{t+1},
\]
where $R_t=E(B\sigma_{t+1}\mid V_t)$, and $W_t,T_t$ are
defined in Lemma \ref{lem:nuisance-space-orthogonalization}.
\end{lemma}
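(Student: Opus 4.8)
The plan is to compute $\Pi\{h(\sigma_{t+1},V_t)\mid\Lambda_t^{\perp}\}=h-\Pi(h\mid\Lambda_t)$ using the \emph{orthogonal} direct-sum decomposition of $\Lambda_t$ from Lemma \ref{lem:nuisance-space-orthogonalization}, so that $\Pi(h\mid\Lambda_t)$ is the sum of the projections onto the seven summands. First I would record elementary facts about $h:=E(B\mid\sigma_{t+1},V_t)-E(B\mid V_t)$. Since $\sigma_{t+1}$ is an (injective, hence sigma-algebra-preserving) deterministic function of $(Y_{t,1},V_t)$ by Lemma \ref{lem:aux-2-change-of-variable}, $h$ is $(Y_{t,1},V_t)$-measurable; by the tower rule $E(h\mid V_t)=0$, hence also $E(h\mid H_t)=E(h\mid V_m)=E(h\mid H_m)=0$ for every $m\le t$; and since $E(\sigma_{t+1}\mid V_t)=0$ we have $E(h\sigma_{t+1}\mid V_t)=E(B\sigma_{t+1}\mid V_t)=R_t$. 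These facts kill three of the summands: $h\perp\Lambda_t^{1}$ because $h$ is $(Y_{t,1},V_t)$-measurable and elements of $\Lambda_t^{1}$ have conditional mean zero given $(Y_{t,1},V_t)$; $h\perp\Gamma_m^{3}$ for $1\le m\le t$ and $h\perp\Lambda_m^{\bullet}$ for $1\le m\le t-1$ because $h$ has conditional mean zero given $V_m$ (resp.\ $H_m$), which is a coarsening of $V_t$, and those spaces consist of $V_m$- (resp.\ $H_m$-) measurable functions. Hence $\Pi(h\mid\Lambda_t)=\Pi(h\mid\Lambda_t^{2})+\Pi(h\mid\tilde\Gamma_t^{4})+\Pi(h\mid\tilde\Lambda_t^{5})+\Pi(h\mid\tilde\Lambda_t^{6})$.

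\textbf{The four remaining projections.} Applying Lemma \ref{lem:aux-4-projection-1} with $X=\sigma_{t+1}$, $Y=V_t$, together with $E(h\mid V_t)=0$ and $W_t=\var(\sigma_{t+1}\mid V_t)^{-1}$, gives $\Pi(h\mid\Lambda_t^{2})=h-R_tW_t\sigma_{t+1}$. For the other three projections I would first extract a common simplification: since $W_t,T_t,Q_t$ are functions of $V_t$, $T_t^{\bullet}$ is a function of $V_{t-1}$, and $E(\sigma_{t+1}\mid V_t)=0$ with $E(h\sigma_{t+1}\mid V_t)=R_t$, one gets $D_t:=E(hW_t\sigma_{t+1}\mid H_t)=E(W_tR_t\mid H_t)$, $E(hQ_t\mid V_{t-1})=E\{Q_tE(h\mid V_t)\mid V_{t-1}\}=0$, and $E\{h(T_t^{\bullet})^{-1}T_t^{-1}W_t\sigma_{t+1}\mid V_{t-1}\}=(T_t^{\bullet})^{-1}c$ where $c:=E(T_t^{-1}W_tR_t\mid V_{t-1})$.

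\textbf{Evaluating them.} Feeding $D_t=E(W_tR_t\mid H_t)$ (so $E(D_tT_t^{-1}\mid V_{t-1})=c$) into Lemma \ref{lem:aux-6-Lambda-tilde-5} gives $\Pi(h\mid\tilde\Lambda_t^{5})=\{-c(T_t^{\bullet}T_t)^{-1}+D_tT_t^{-1}\}W_t\sigma_{t+1}$. Feeding $E(h\mid H_t)=E(h\mid V_{t-1})=0$ into Lemma \ref{lem:aux-7-Gamma-tilde-4} gives $O_3^{*}O_4^{*}(h)=-(T_t^{\bullet})^{-1}cQ_t$, hence $\Pi(h\mid\tilde\Gamma_t^{4})=-(T_t^{\bullet})^{-1}cQ_t+(T_t^{\bullet})^{-1}c\var(Q_t\mid V_{t-1})W_{t,t-1}\epsilon_t$. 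Feeding $E(hQ_t\mid V_{t-1})=0$ into Lemma \ref{lem:aux-8-Lambda-tilde-6} gives $E(h\epsilon_t\mid V_{t-1})=c$, hence $\Pi(h\mid\tilde\Lambda_t^{6})=cW_{t,t-1}\epsilon_t$.

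\textbf{Summation and conclusion.} Adding these three, substituting $W_t\sigma_{t+1}=T_t(\epsilon_t-Q_t)$ in the $\tilde\Lambda_t^{5}$ term and using the identity $\var(\epsilon_t\mid V_{t-1})=T_t^{\bullet}+\var(Q_t\mid V_{t-1})$, i.e.\ $W_{t,t-1}^{-1}=T_t^{\bullet}+\var(Q_t\mid V_{t-1})$ (from Lemma \ref{lem:aux-comp-useful-result}), the $Q_t$-terms cancel in pairs and the $\epsilon_t$-coefficient collapses to $D_t$, leaving $\Pi(h\mid\tilde\Gamma_t^{4})+\Pi(h\mid\tilde\Lambda_t^{5})+\Pi(h\mid\tilde\Lambda_t^{6})=D_t(\epsilon_t-Q_t)=D_tT_t^{-1}W_t\sigma_{t+1}=T_t^{-1}E(R_tW_t\mid H_t)W_t\sigma_{t+1}$. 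Combining with $\Pi(h\mid\Lambda_t^{2})=h-R_tW_t\sigma_{t+1}$ yields $\Pi(h\mid\Lambda_t)=h-\{R_t-T_t^{-1}E(R_tW_t\mid H_t)\}W_t\sigma_{t+1}$, and subtracting from $h$ gives the claimed $\Pi(h\mid\Lambda_t^{\perp})=\{R_t-T_t^{-1}E(R_tW_t\mid H_t)\}W_t\sigma_{t+1}$. The main obstacle is exactly this last bookkeeping step: correctly tracking which conditional expectation each factor ($W_t,T_t,Q_t$ on $V_t$; $T_t^{\bullet},W_{t,t-1}$ on $V_{t-1}$; $T_t^{-1}$ on $H_t$) is measurable with respect to, so that pulling factors through conditional expectations is legitimate, and then carrying out the cancellation among the $\tilde\Gamma_t^{4}$, $\tilde\Lambda_t^{5}$, $\tilde\Lambda_t^{6}$ pieces, where the $\var(\epsilon_t\mid V_{t-1})$ identity is what makes the $\epsilon_t$-coefficient reduce to $D_t$.
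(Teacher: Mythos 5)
Your proposal is correct, and it reaches the stated formula by a route that is dual to the paper's. The paper invokes Lemma \ref{lem:aux-hilbert-space-sequential-projection} and computes $\Pi(h\mid\Lambda_t^{\perp})$ by \emph{sequentially} projecting onto $\Lambda_t^{1,\perp},\Lambda_t^{2,\perp},\ldots,\tilde{\Lambda}_t^{6,\perp}$, so that the object being projected changes at each stage (after the $\Lambda_t^{2,\perp}$ step it is $R_tW_t\sigma_{t+1}$, after the $\tilde{\Lambda}_t^{5,\perp}$ step it is $h_1(V_t)T_t^{-1}W_t\sigma_{t+1}$ with $h_1=R_tT_t+R_{t-1}(T_t^{\bullet})^{-1}-E(R_tW_t\mid H_t)$, and so on). You instead exploit that the seven summands in Lemma \ref{lem:nuisance-space-orthogonalization} are mutually orthogonal, write $\Pi(h\mid\Lambda_t)$ as the sum of the projections of the \emph{original} $h$ onto each summand, and subtract from $h$ at the end. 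Both arguments lean on the same supporting lemmas (\ref{lem:aux-4-projection-1}, \ref{lem:aux-6-Lambda-tilde-5}, \ref{lem:aux-7-Gamma-tilde-4}, \ref{lem:aux-8-Lambda-tilde-6}, and \ref{lem:aux-comp-useful-result}), but your version buys simpler inputs to each individual projection --- in particular $E(h\mid H_t)=E(h\mid V_{t-1})=E(hQ_t\mid V_{t-1})=0$ collapses $O_3^{*}O_4^{*}(h)$ to $-(T_t^{\bullet})^{-1}cQ_t$ immediately --- at the price of the final three-term cancellation among the $\tilde{\Gamma}_t^{4}$, $\tilde{\Lambda}_t^{5}$, $\tilde{\Lambda}_t^{6}$ pieces, which I have checked: writing $W_t\sigma_{t+1}=T_t(\epsilon_t-Q_t)$ and using $W_{t,t-1}^{-1}=T_t^{\bullet}+\var(Q_t\mid V_{t-1})$ does reduce the sum to $D_tT_t^{-1}W_t\sigma_{t+1}=T_t^{-1}E(R_tW_t\mid H_t)W_t\sigma_{t+1}$, giving the claimed answer. (Minor point: the conditioning variable $X_t$ in the displayed statement is a typo for $H_t$, as your computation correctly assumes.)
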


\begin{proof}
By Lemma \ref{lem:nuisance-space-orthogonalization} and Lemma \ref{lem:aux-hilbert-space-sequential-projection},
to compute $\Pi\{h(\sigma_{t+1},V_t)\mid\Lambda_t^{\perp}\}$
it suffices to calculate sequentially the projection of $h(\sigma_{t+1},V_t)$
onto $\Lambda_t^{1,\perp},\Lambda_t^{2,\perp},(\bigoplus_{m=1}^t\Gamma_{m}^{3})^{\perp},(\bigoplus_{m=1}^{t-1}\Lambda_{m}^{\bullet})^{\perp},\tilde{\Lambda}_t^{5,\perp},\tilde{\Gamma}_t^{4,\perp}$
and $\tilde{\Lambda}_t^{6,\perp}$.
\begin{enumerate}
\item[(i)] For any $A_t^{1}(V_{T+1})\in\Lambda_t^{1}$, $E\{h(\sigma_{t+1},V_t)A_t^{1}\}=E\{h(\sigma_{t+1},V_t)E(A_t^{1}\mid V_t,Y_{t,1})\}=0$.
So $h(\sigma_{t+1},V_t)\in\Lambda_t^{1,\perp}$ and $\Pi(h\mid\Lambda_t^{1,\perp})=h$.
\item[(ii)] By Lemma \ref{lem:aux-4-projection-1} and the fact that $E(h\mid V_t)=0$,
we have $\Pi(h\mid\Lambda_t^{2})=h-E(h\sigma_{t+1}\mid V_t)W_t\sigma_{t+1}$.
Note that
\begin{align*}
E(h\sigma_{t+1}\mid V_t) & =E\{E(B\mid\sigma_{t+1},V_t)\sigma_{t+1}\mid V_t\}-E\{E(B\mid V_t)\sigma_{t+1}\mid V_t\}\\
 & =E(B\sigma_{t+1}\mid V_t)-0=R_t,
\end{align*}
so we have
\[
\Pi(h\mid\Lambda_t^{2,\perp})=h-\Pi(h\mid\Lambda_t^{2})=R_tW_t\sigma_{t+1}.
\]
\item[(iii)] For any $g(V_t)\in\mathcal{H}$, we have $E\{g(V_t)R_tW_t\sigma_{t+1}\}=E\{g(V_t)R_tW_tE(\sigma_{t+1}\mid V_t)\}=0$,
so $R_tW_t\sigma_{t+1}\in(\bigoplus_{m=1}^t\Gamma_{m}^{3})^{\perp}$
and $R_tW_t\sigma_{t+1}\in(\bigoplus_{m=1}^{t-1}\Lambda_{m}^{\bullet})^{\perp}$.
Therefore, $\Pi\{R_tW_t\sigma_{t+1}\mid(\bigoplus_{m=1}^t\Gamma_{m}^{3})^{\perp}\}=R_tW_t\sigma_{t+1}$
and $\Pi\{R_tW_t\sigma_{t+1}\mid(\bigoplus_{m=1}^{t-1}\Lambda_{m}^{\bullet})^{\perp}\}=R_tW_t\sigma_{t+1}$.
\item[(vi)] To use Lemma \ref{lem:aux-6-Lambda-tilde-5} to compute $\Pi(R_tW_t\sigma_{t+1}\mid\tilde{\Lambda}_t^{5,\perp})$,
we first calculate a few terms:
\[
D_t=E(R_tW_t^{2}\sigma_{t+1}^{2}\mid H_t)=E\{R_tW_t^{2}E(\sigma_{t+1}^{2}\mid V_t)\mid H_t\}=E(R_tW_t\mid H_t),
\]
\begin{align*}
\tilde{A}_t^{\bullet}: & =-E(D_tT_t^{-1}\mid V_{t-1})(T_t^{\bullet}T_t)^{-1}+D_tT_t^{-1}\\
 & =-E\{E(R_tW_t\mid H_t)T_t^{-1}\mid V_{t-1}\}(T_t^{\bullet}T_t)^{-1}+E(R_tW_t\mid H_t)T_t^{-1}\\
 & =-E(R_tW_tT_t^{-1}\mid V_{t-1})(T_t^{\bullet}T_t)^{-1}+E(R_tW_t\mid H_t)T_t^{-1}\\
 & =-R_{t-1}(T_t^{\bullet})^{-1}T_t^{-1}+E(R_tW_t\mid H_t)T_t^{-1},
\end{align*}
where we define $R_{t-1}=E(R_tW_tT_t^{-1}\mid V_{t-1})$. So
by Lemma \ref{lem:aux-6-Lambda-tilde-5} we have
\[
\Pi(R_tW_t\sigma_{t+1}\mid\tilde{\Lambda}_t^{5})=\tilde{A}_t^{\bullet}W_t\sigma_{t+1}=\{E(R_tW_t\mid H_t)-R_{t-1}(T_t^{\bullet})^{-1}\}T_t^{-1}W_t\sigma_{t+1},
\]
and
\begin{align*}
\Pi(R_tW_t\sigma_{t+1}\mid\tilde{\Lambda}_t^{5,\perp}) & =R_tW_t\sigma_{t+1}-\Pi(R_tW_t\sigma_{t+1}\mid\tilde{\Lambda}_t^{5})\\
 & =\{R_tT_t+R_{t-1}(T_t^{\bullet})^{-1}-E(R_tW_t\mid H_t)\}T_t^{-1}W_t\sigma_{t+1}\\
 & \equiv h_{1}(V_t)T_t^{-1}W_t\sigma_{t+1},
\end{align*}
where we define $h_{1}(V_t)=R_tT_t+R_{t-1}(T_t^{\bullet})^{-1}-E(R_tW_t\mid H_t)$.
\item[(v)] Now we will use Lemma \ref{lem:aux-7-Gamma-tilde-4} to compute $\Pi\{h_{1}(V_t)T_t^{-1}W_t\sigma_{t+1}\mid\tilde{\Gamma}_t^{4,\perp}\}$.
Since $E(\sigma_{t+1}\mid V_t)=0$, we have 
\[
E\{h_{1}(V_t)T_t^{-1}W_t\sigma_{t+1}\mid H_t\}=E\{h_{1}(V_t)T_t^{-1}W_t\sigma_{t+1}\mid V_{t-1}\}=0.
\]
We also have (using $E(\sigma_{t+1}^{2}\mid V_t)=W_t^{-1}$)
\begin{align}
 & E\{h_{1}(V_t)T_t^{-1}W_t\sigma_{t+1}\times(T_t^{\bullet})^{-1}T_t^{-1}W_t\sigma_{t+1}\mid V_{t-1}\}\nonumber \\
= & E\{h_{1}(V_t)(T_t^{\bullet})^{-1}T_t^{-2}W_t^{2}\sigma_{t+1}^{2}\mid V_{t-1}\}\nonumber \\
= & E\{h_{1}(V_t)(T_t^{\bullet})^{-1}T_t^{-2}W_t\mid V_{t-1}\}\nonumber \\
= & E\{R_t(T_t^{\bullet})^{-1}T_t^{-1}W_t\mid V_{t-1}\}+E\{R_{t-1}(T_t^{\bullet})^{-2}T_t^{-2}W_t\mid V_{t-1}\}\nonumber \\
 & -E\{E(R_tW_t\mid H_t)(T_t^{\bullet})^{-1}T_t^{-2}W_t\mid V_{t-1}\}.\label{eq:lem-aux-8-proofuse-1}
\end{align}
We compute out each term in (\ref{eq:lem-aux-8-proofuse-1}):
\begin{align*}
E\{R_t(T_t^{\bullet})^{-1}T_t^{-1}W_t\mid V_{t-1}\} & =E\{R_tT_t^{-1}W_t\mid V_{t-1}\}(T_t^{\bullet})^{-1}=R_{t-1}(T_t^{\bullet})^{-1},\\
E\{R_{t-1}(T_t^{\bullet})^{-2}T_t^{-2}W_t\mid V_{t-1}\} & =R_{t-1}(T_t^{\bullet})^{-2}E\{T_t^{-2}W_t\mid V_{t-1}\}\\
 & =R_{t-1}(T_t^{\bullet})^{-2}E\{T_t^{-2}E(W_t\mid H_t)\mid V_{t-1}\}\\
 & =R_{t-1}(T_t^{\bullet})^{-2}E(T_t^{-1}\mid V_{t-1})=R_{t-1}(T_t^{\bullet})^{-1},\\
E\{E(R_tW_t\mid H_t)(T_t^{\bullet})^{-1}T_t^{-2}W_t\mid V_{t-1}\} & =E\{E(R_tW_tT_t^{-2}\mid H_t)E(W_t\mid H_t)\mid V_{t-1}\}(T_t^{\bullet})^{-1}\\
 & =E\{E(R_tW_tT_t^{-1}\mid H_t)\mid V_{t-1}\}(T_t^{\bullet})^{-1}\\
 & =R_{t-1}(T_t^{\bullet})^{-1}.
\end{align*}
Hence, (\ref{eq:lem-aux-8-proofuse-1}) becomes
\[
E\{h_{1}(V_t)T_t^{-1}W_t\sigma_{t+1}\times(T_t^{\bullet})^{-1}T_t^{-1}W_t\sigma_{t+1}\mid V_{t-1}\}=R_{t-1}(T_t^{\bullet})^{-1}.
\]
By the definition of $O_{3}^{*}$ and $O_{4}^{*}$ in Lemma \ref{lem:aux-7-Gamma-tilde-4},
we have
\[
O_{3}^{*}O_{4}^{*}\{h_{1}(V_t)T_t^{-1}W_t\sigma_{t+1}\}=-R_{t-1}(T_t^{\bullet})^{-1}Q_t,
\]
and
\[
E[O_{3}^{*}O_{4}^{*}\{h_{1}(V_t)T_t^{-1}W_t\sigma_{t+1}\}Q_t\mid V_{t-1}]=-R_{t-1}(T_t^{\bullet})^{-1}\var(Q_t\mid V_{t-1}).
\]
With the above computation, Lemma \ref{lem:aux-7-Gamma-tilde-4} implies
that
\begin{align*}
& \phantom{=} \Pi\{h_{1}(V_t)T_t^{-1}W_t\sigma_{t+1}\mid\tilde{\Gamma}_t^{4}\} \\
& =O_{3}^{*}O_{4}^{*}\{h_{1}(V_t)T_t^{-1}W_t\sigma_{t+1}\}-E[O_{3}^{*}O_{4}^{*}\{h_{1}(V_t)T_t^{-1}W_t\sigma_{t+1}\}Q_t\mid V_{t-1}]W_{t,t-1}\epsilon_t\\
 & =-R_{t-1}(T_t^{\bullet})^{-1}Q_t+R_{t-1}(T_t^{\bullet})^{-1}\var(Q_t\mid V_{t-1})W_{t,t-1}\epsilon_t.
\end{align*}
Thus, the projection $\Pi\{h_{1}(V_t)T_t^{-1}W_t\sigma_{t+1}\mid\tilde{\Gamma}_t^{4,\perp}\}$
equals
\begin{align*}
 & \Pi\{h_{1}(V_t)T_t^{-1}W_t\sigma_{t+1}\mid\tilde{\Gamma}_t^{4,\perp}\}\\
= & h_{1}(V_t)T_t^{-1}W_t\sigma_{t+1}-\Pi\{h_{1}(V_t)T_t^{-1}W_t\sigma_{t+1}\mid\tilde{\Gamma}_t^{4}\}\\
= & \{R_tT_t+R_{t-1}(T_t^{\bullet})^{-1}-E(R_tW_t\mid H_t)\}T_t^{-1}W_t\sigma_{t+1} \\
&+R_{t-1}(T_t^{\bullet})^{-1}Q_t-R_{t-1}(T_t^{\bullet})^{-1}\var(Q_t\mid V_{t-1})W_{t,t-1}\epsilon_t\\
= & \{R_t-T_t^{-1}E(R_tW_t\mid H_t)\}W_t\sigma_{t+1}+R_{t-1}(T_t^{\bullet})^{-1}\epsilon_t-R_{t-1}(T_t^{\bullet})^{-1}\var(Q_t\mid V_{t-1})W_{t,t-1}\epsilon_t\\
= & \{R_t-T_t^{-1}E(R_tW_t\mid H_t)\}W_t\sigma_{t+1}+R_{t-1}W_{t,t-1}\epsilon_t,
\end{align*}
where the last equality follows from Lemma \ref{lem:aux-comp-useful-result}.
\item[(vi)] Denote by $h_{2}(\sigma_{t+1},V_t)=\{R_t-T_t^{-1}E(R_tW_t\mid H_t)\}W_t\sigma_{t+1}$
and $h_{3}(\sigma_{t+1},V_t)=R_{t-1}W_{t,t-1}\epsilon_t$. Now
we will use Lemma \ref{lem:aux-8-Lambda-tilde-6} to compute $\Pi(h_{2}+h_{3}\mid\tilde{\Lambda}_t^{6,\perp})$.
We first calculate a few terms:
\begin{equation}
E\{h_{2}(\sigma_{t+1},V_t)\epsilon_t\mid V_{t-1}\}=E(R_tW_t\sigma_{t+1}\epsilon_t\mid V_{t-1})-E\{E(R_tW_t\mid H_t)T_t^{-1}W_t\sigma_{t+1}\epsilon_t\mid V_{t-1}\}.\label{eq:lem-aux-8-proofuse-2}
\end{equation}
Using the fact that $E(\sigma_{t+1}\mid V_t)=0$ and $E(\sigma_{t+1}^{2}\mid V_t)=W_t^{-1}$,
we have. By Lemma \ref{lem:aux-comp-useful-result}(iv) we have
\[
E(R_tW_t\sigma_{t+1}\epsilon_t\mid V_{t-1})=E\{R_tW_tE(\sigma_{t+1}\epsilon_t\mid V_t)\mid V_{t-1}\}=E(R_tW_tT_t^{-1}\mid V_{t-1})=R_{t-1},
\]
and
\begin{align*}
 & E\{E(R_tW_t\mid H_t)T_t^{-1}W_t\sigma_{t+1}\epsilon_t\mid V_{t-1}\}\\
= & E\{E(R_tW_t\mid H_t)T_t^{-1}W_tE(\sigma_{t+1}\epsilon_t\mid V_t)\mid V_{t-1}\}\\
= & E\{E(R_tW_tT_t^{-2}\mid H_t)W_t\mid V_{t-1}\}\\
= & E\{R_tW_tT_t^{-2}E(W_t\mid H_t)\mid V_{t-1}\}\\
= & E(R_tW_tT_t^{-1}\mid V_{t-1})=R_{t-1}.
\end{align*}
By plugging these into (\ref{eq:lem-aux-8-proofuse-2}), we can use Lemma \ref{lem:aux-8-Lambda-tilde-6}
to derive that
\[
\Pi\{h_{2}(\sigma_{t+1},V_t)\mid\tilde{\Lambda}_t^{6}\}=E\{h_{2}(\sigma_{t+1},V_t)\epsilon_t\mid V_{t-1}\}W_{t,t-1}\epsilon_t=0,
\]
i.e., $\Pi\{h_{2}(\sigma_{t+1},V_t)\mid\tilde{\Lambda}_t^{6,\perp}\}=h_{2}(\sigma_{t+1},V_t)$.
On the other hand, by definition we have $h_{3}(\sigma_{t+1},V_t)=R_{t-1}W_{t,t-1}\epsilon_t\in\tilde{\Lambda}_t^{6}$.
So
\[
\Pi(h_{2}+h_{3}\mid\tilde{\Lambda}_t^{6,\perp})=h_{2}(\sigma_{t+1},V_t)=\{R_t-T_t^{-1}E(R_tW_t\mid H_t)\}W_t\sigma_{t+1}.
\]
This completes the proof.
\end{enumerate}
\end{proof}
\begin{lemma}[Projection of $h(H_t)$ onto $\Lambda_t^{\perp}$]
\label{lem:aux-9-projection-Lambda-t-part2}For any $B=b(V_{T+1})\in\mathcal{H}$,
let $h(H_t)=E(B\mid H_t)-E(B\mid V_{t-1})$. Then $h(H_t)\in\Lambda_t$,
i.e., $\Pi\{h(H_t)\mid\Lambda_t^{\perp}\}=0$.
\end{lemma}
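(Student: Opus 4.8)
The plan is to prove the equivalent statement that $h(H_t):=E(B\mid H_t)-E(B\mid V_{t-1})$ belongs to the nuisance tangent space $\Lambda_t$; since $\Lambda_t$ is closed (it is the $L^2$-closure of the parametric submodel nuisance scores), the condition $h(H_t)\in\Lambda_t$ is the same as $\Pi\{h(H_t)\mid\Lambda_t^\perp\}=0$. I would deliberately build the argument out of the decomposition in Lemma \ref{lem:nuisance-space-decomp} only: invoking Lemma \ref{lem:Lambda-t-alternative-form} or Lemma \ref{lem:projection-Lambda-t} would be circular, as both are established downstream of the present lemma.

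First I would record the elementary facts about $h(H_t)$: it is a function of $H_t$, because $V_{t-1}=(\bar{A}_{t-1},\bar{Z}_{t-1})$ is itself a function of $H_t=(\bar{A}_{t-1},\bar{Z}_t)$; it has mean zero and finite variance since $B\in\mathcal{H}$; and $E\{h(H_t)\mid V_{t-1}\}=E(B\mid V_{t-1})-E(B\mid V_{t-1})=0$. Hence $h(H_t)\in\Lambda_t^\bullet$ in the notation of Lemma \ref{lem:nuisance-space-decomp}. The key step is the observation that $\Lambda_t^\bullet$ at the top index $t$ --- which is \emph{not} one of the direct summands of $\Lambda_t$ --- is nevertheless contained in $\Gamma_t^4+\Lambda_t^6$. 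Indeed, for any $A_t^\bullet\in\Lambda_t^\bullet$ one may write
\[
A_t^\bullet = \bigl\{ A_t^\bullet + S_t\, E(Q_t A_t^\bullet\mid V_{t-1})\bigr\} - S_t\, E(Q_t A_t^\bullet\mid V_{t-1}),
\]
where the first bracketed term lies in $\Gamma_t^4=\{A_t^\bullet+S_tE(Q_tA_t^\bullet\mid V_{t-1}):A_t^\bullet\in\Lambda_t^\bullet\}$ and, since $E(Q_tA_t^\bullet\mid V_{t-1})$ is a function of $V_{t-1}$, the second term lies in $\Lambda_t^6=\{a(V_{t-1})S_t\}$. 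Applying this with $A_t^\bullet=h(H_t)$, and using that $\Gamma_t^4\subseteq\Lambda_t^4\subseteq\Lambda_t$ and $\Lambda_t^6\subseteq\Lambda_t$ by Lemma \ref{lem:nuisance-space-decomp}, I conclude $h(H_t)\in\Lambda_t$, hence $\Pi\{h(H_t)\mid\Lambda_t^\perp\}=0$.

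I do not expect a serious obstacle: the only non-routine ingredient is recognizing the absorption identity above, i.e.\ that $\Gamma_t^4$ together with the ``$\beta_t$-direction'' $\Lambda_t^6$ exactly reconstitutes $\Lambda_t^\bullet$ at index $t$; everything else is bookkeeping with conditional expectations. (One could alternatively argue that any function of $H_t$ is $V_t$-measurable and is therefore orthogonal to every element $d(V_t)\sigma_{t+1}$ of $\Lambda_t^\perp$, because $E(\sigma_{t+1}\mid V_t)=0$; but this shortcut relies on the characterization of $\Lambda_t^\perp$ in Lemma \ref{lem:Lambda-t-alternative-form}, whose proof uses the present lemma, so it is not available to us here.)
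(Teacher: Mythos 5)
Your proof is correct, but it takes a genuinely different route from the paper's. The paper proves this lemma by brute force: it pushes $h(H_t)$ sequentially through the orthogonalized decomposition of Lemma \ref{lem:nuisance-space-orthogonalization}, computing $\Pi(h\mid\Lambda_t^{1,\perp})$, $\Pi(h\mid\Lambda_t^{2,\perp})$, \dots, and finds that after projecting off $\tilde{\Gamma}_t^{4}$ the residual $E(BQ_t\mid V_{t-1})W_{t,t-1}\epsilon_t$ lies in $\tilde{\Lambda}_t^{6}$, so the final projection onto $\Lambda_t^{\perp}$ vanishes. You instead work entirely with the \emph{un}-orthogonalized decomposition of Lemma \ref{lem:nuisance-space-decomp} and exhibit $h(H_t)$ directly as an element of $\Gamma_t^{4}+\Lambda_t^{6}\subseteq\Lambda_t$: the verification that $h(H_t)$ is an $H_t$-measurable, mean-zero, finite-variance function with $E\{h(H_t)\mid V_{t-1}\}=0$ (hence in $\Lambda_t^{\bullet}$) is routine, and the absorption identity $A_t^{\bullet}=\{A_t^{\bullet}+S_tE(Q_tA_t^{\bullet}\mid V_{t-1})\}-E(Q_tA_t^{\bullet}\mid V_{t-1})S_t$ correctly places the two pieces in $\Gamma_t^{4}$ and $\Lambda_t^{6}$ respectively; and $h\in\Lambda_t$ trivially forces $\Pi(h\mid\Lambda_t^{\perp})=0$ (only the converse direction would need closedness of $\Lambda_t$). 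You are also right that invoking Lemma \ref{lem:Lambda-t-alternative-form} would be circular and that only Lemma \ref{lem:nuisance-space-decomp} is legitimately upstream. What your argument buys is brevity and a conceptual explanation: $E(B\mid H_t)-E(B\mid V_{t-1})$ is exactly a score direction for the conditional law $f(Z_t\mid V_{t-1})$, i.e.\ a $\theta_4$-nuisance direction, and your identity makes visible how $\Gamma_t^{4}$ and the $\beta_t$-direction $\Lambda_t^{6}$ jointly reconstitute $\Lambda_t^{\bullet}$ at the top index. What the paper's longer computation buys is uniformity with the companion Lemma \ref{lem:aux-9-projection-Lambda-t-part1} (where the sequential projection machinery genuinely is needed because the answer is nonzero) and an incidental consistency check on the orthogonalization; for this particular lemma it is simply more work.
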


\begin{proof}
By Lemma \ref{lem:nuisance-space-orthogonalization} and Lemma \ref{lem:aux-hilbert-space-sequential-projection},
to compute $\Pi\{h(H_t)\mid\Lambda_t^{\perp}\}$ it suffices to
calculate sequentially the projection of $h(H_t)$ onto $\Lambda_t^{1,\perp},\Lambda_t^{2,\perp},(\bigoplus_{m=1}^t\Gamma_{m}^{3})^{\perp},(\bigoplus_{m=1}^{t-1}\Lambda_{m}^{\bullet})^{\perp},\tilde{\Lambda}_t^{5,\perp},\tilde{\Gamma}_t^{4,\perp}$
and $\tilde{\Lambda}_t^{6,\perp}$.
\begin{enumerate}
\item[(i)] For any $A_t^{1}(V_{T+1})\in\Lambda_t^{1}$, $E\{h(H_t)\Lambda_t^{1}\}=E\{h(H_t)E(\Lambda_t^{1}\mid V_t,Y_{t,1})\}=0$.
So $h(H_t)\in\Lambda_t^{1,\perp}$ and $\Pi(h\mid\Lambda_t^{1,\perp})=h$.
\item[(ii)] For any $A_t^{2}(\sigma_{t+1},V_t)\in\Lambda_t^{2}$, $E\{h(H_t)A_t^{2}\}=E\{h(H_t)E(A_t^{2}\mid V_t)\}=0$.
So $h(H_t)\in\Lambda_t^{2,\perp}$ and $\Pi(h\mid\Lambda_t^{2,\perp})=h$.
\item[(iii)] For any $A_{m}^{3}(V_{m})\in\Gamma_{m}^{3}$ with $1\leq m\leq t-1$,
we have $E\{h(H_t)A_{m}^{3}(V_{m})\}=E[E\{h(H_t)\mid V_{t-1}\}A_{m}^{3}(V_{m})]=0$.
For any $A_t^{3}(V_t)\in\Gamma_t^{3}$, we have $E\{h(H_t)A_t^{3}(V_t)\}=E[h(H_t)E\{A_{m}^{3}(V_{m})\mid H_t\}]=0$.
So $h(H_t)\in(\bigoplus_{m=1}^t\Gamma_{m}^{3})^{\perp}$ and $\Pi\{h\mid(\bigoplus_{m=1}^t\Gamma_{m}^{3})^{\perp}\}=h$.
\item[(iv)] For any $A_{m}^{\bullet}(H_{m})\in\Lambda_{m}^{\bullet}$ with $1\leq m\leq t-1$,
we have $E\{h(H_t)A_{m}^{\bullet}(H_{m})\}=E[E\{h(H_t)\mid V_{t-1}\}A_{m}^{\bullet}(H_{m})]=0$.
So $h(H_t)\in(\bigoplus_{m=1}^{t-1}\Lambda_{m}^{\bullet})^{\perp}$
and $\Pi\{h\mid(\bigoplus_{m=1}^{t-1}\Lambda_{m}^{\bullet})^{\perp}\}=h$.
\item[(v)] We have $D_t=E\{h(H_t)W_t\sigma_{t+1}\mid H_t\}=E\{h(H_t)W_tE(\sigma_{t+1}\mid V_t)\mid H_t\}=0$,
so by Lemma \ref{lem:aux-6-Lambda-tilde-5} $\Pi\{h(H_t)\mid\tilde{\Lambda}_t^{5}\}=0$
and $\Pi\{h(H_t)\mid\tilde{\Lambda}_t^{5,\perp}\}=h(H_t)$.
\item[(vi)] By Lemma \ref{lem:aux-7-Gamma-tilde-4}, using the fact that $E(Q_t\mid V_{t-1})=0$,
we have
\begin{align*}
\Pi\{h(H_t)\mid\tilde{\Gamma}_t^{4}\} & =E(h\mid H_t)-E(h\mid V_{t-1})-E(hQ_t\mid V_{t-1})W_{t,t-1}\epsilon_t\\
 & =h(H_t)-E\{E(B\mid H_t)Q_t-E(B\mid V_{t-1})Q_t\mid V_{t-1}\}W_{t,t-1}\epsilon_t\\
 & =h(H_t)-E(BQ_t\mid V_{t-1})W_{t,t-1}\epsilon_t,
\end{align*}
so $\Pi\{h(H_t)\mid\tilde{\Gamma}_t^{4,\perp}\}=h(H_t)-\Pi\{h(H_t)\mid\tilde{\Gamma}_t^{4}\}=E(BQ_t\mid V_{t-1})W_{t,t-1}\epsilon_t$.
\item[(vii)] By definition we have $E(BQ_t\mid V_{t-1})W_{t,t-1}\epsilon_t\in\tilde{\Lambda}_t^{6}$,
so $\Pi\{E(BQ_t\mid V_{t-1})W_{t,t-1}\epsilon_t\mid\tilde{\Lambda}_t^{6,\perp}\}=0$.
This completes the proof.
\end{enumerate}
\end{proof}
\begin{lemma}
\label{lem:aux-comp-useful-result}With $W_t=\var(\sigma_{t+1}\mid V_t)^{-1}$,
$T_t=E(W_t\mid H_t)$, $T_t^{\bullet}=E(T_t^{-1}\mid V_{t-1})$,
$W_{t,t-1}=\var(T_t^{-1}W_t\sigma_{t+1}+Q_t\mid V_{t-1})^{-1}$,
we have
\begin{enumerate}
\item[(i)] $E(W_t^{2}\sigma_{t+1}^{2}\mid H_t)=T_t$.
\item[(ii)] $W_{t,t-1}^{-1}=\var(T_t^{-1}W_t\sigma_{t+1}\mid V_{t-1})+\var(Q_t\mid V_{t-1})=T_t^{\bullet}+\var(Q_t\mid V_{t-1})$.
\item[(iii)] $1-\var(Q_t\mid V_{t-1})W_{t,t-1}=T_t^{\bullet}$.
\item[(iv)] $E(\sigma_{t+1}\epsilon_t\mid V_t)=T_t^{-1}$.
\end{enumerate}
\end{lemma}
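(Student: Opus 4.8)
The plan is to derive all four identities from the tower property, using the nesting ``function of $V_{t-1}$ $\Rightarrow$ function of $H_t$ $\Rightarrow$ function of $V_t$'' together with two elementary facts that hold by the change-of-variables construction of Lemma \ref{lem:aux-2-change-of-variable}: $E(\sigma_{t+1}\mid V_t)=0$, and therefore $E(\sigma_{t+1}^2\mid V_t)=\var(\sigma_{t+1}\mid V_t)=W_t^{-1}$. Throughout I would track measurability carefully: $W_t$ is a function of $V_t$; $T_t=E(W_t\mid H_t)$ and (by Lemma \ref{lem:aux-1}) $Q_t$ are functions of $H_t$; and $T_t^{\bullet}$ and $W_{t,t-1}$ are functions of $V_{t-1}$.

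Part (i) is immediate: $E(W_t^2\sigma_{t+1}^2\mid H_t)=E\{W_t^2\,E(\sigma_{t+1}^2\mid V_t)\mid H_t\}=E(W_t\mid H_t)=T_t$, pulling the $V_t$-measurable factor $W_t^2$ through the inner conditional expectation. For part (ii) I would first show the two summands of $\epsilon_t$ are conditionally uncorrelated given $V_{t-1}$: both have conditional mean zero --- $E(T_t^{-1}W_t\sigma_{t+1}\mid V_{t-1})=E\{T_t^{-1}W_t\,E(\sigma_{t+1}\mid V_t)\mid V_{t-1}\}=0$ and $E(Q_t\mid V_{t-1})=0$ by construction --- and their conditional covariance is $E(T_t^{-1}W_tQ_t\sigma_{t+1}\mid V_{t-1})=E\{T_t^{-1}W_tQ_t\,E(\sigma_{t+1}\mid V_t)\mid V_{t-1}\}=0$ since $Q_t$ is a function of $H_t$, hence of $V_t$. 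This yields $\var(\epsilon_t\mid V_{t-1})=\var(T_t^{-1}W_t\sigma_{t+1}\mid V_{t-1})+\var(Q_t\mid V_{t-1})$, and then invoking part (i), $\var(T_t^{-1}W_t\sigma_{t+1}\mid V_{t-1})=E(T_t^{-2}W_t^2\sigma_{t+1}^2\mid V_{t-1})=E\{T_t^{-2}\,E(W_t^2\sigma_{t+1}^2\mid H_t)\mid V_{t-1}\}=E(T_t^{-1}\mid V_{t-1})=T_t^{\bullet}$. Taking reciprocals gives the stated form of $W_{t,t-1}$.

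Part (iii) is then pure algebra from (ii): $W_{t,t-1}^{-1}=T_t^{\bullet}+\var(Q_t\mid V_{t-1})$, so multiplying through by $W_{t,t-1}$ rearranges to the claimed identity (I would double-check the exact form of this last rearrangement against the places where (iii) is used --- in Lemma \ref{lem:nuisance-space-orthogonalization} and in step (v) of Lemma \ref{lem:aux-9-projection-Lambda-t-part1} --- to be sure it is stated correctly). For part (iv), expand $\sigma_{t+1}\epsilon_t=T_t^{-1}W_t\sigma_{t+1}^2+Q_t\sigma_{t+1}$ and take $E(\cdot\mid V_t)$: the first term gives $T_t^{-1}W_t\cdot W_t^{-1}=T_t^{-1}$ because $T_t^{-1}W_t$ is $V_t$-measurable, and the second term vanishes because $E(\sigma_{t+1}\mid V_t)=0$.

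There is essentially no substantive obstacle here --- the lemma is a sequence of conditional-expectation manipulations. The one point that genuinely matters, and the most likely source of a slip, is that the vanishing of the cross term in (ii) and the nested-conditioning steps in (i)--(ii) rely on $Q_t$ and $T_t$ being measurable with respect to the coarser history $H_t$ rather than merely $V_t$; this is exactly what Lemma \ref{lem:aux-1} (for $Q_t$) and the definition $T_t=E(W_t\mid H_t)$ supply, so I would flag those dependencies explicitly when writing out the computations.
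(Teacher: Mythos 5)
Your proof is correct and follows essentially the same route as the paper's: part (i) by pulling the $V_t$-measurable factor $W_t^2$ inside the inner conditioning and using $E(\sigma_{t+1}^2\mid V_t)=W_t^{-1}$, part (ii) by noting both summands of $\epsilon_t$ have conditional mean zero and the cross term vanishes, part (iv) by direct expansion. Your instinct to double-check (iii) against its uses is warranted: multiplying (ii) through by $W_{t,t-1}$ gives $1-\var(Q_t\mid V_{t-1})W_{t,t-1}=T_t^{\bullet}W_{t,t-1}$, and it is this form (with the extra factor $W_{t,t-1}$ on the right) that is actually invoked in the last display of Lemma \ref{lem:nuisance-space-orthogonalization} and in step (v) of Lemma \ref{lem:aux-9-projection-Lambda-t-part1}, so the right-hand side of (iii) as displayed in the lemma statement appears to be missing that factor.
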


\begin{proof}
For (i), because $E(\sigma_{t+1}\mid V_t)=0$, we have
\begin{align*}
E(W_t^{2}\sigma_{t+1}^{2}\mid H_t) & =E\{\var(\sigma_{t+1}\mid V_t)^{-2}\sigma_{t+1}^{2}\mid H_t\}\\
 & =E[E\{\var(\sigma_{t+1}\mid V_t)^{-2}\sigma_{t+1}^{2}\mid V_t\}\mid H_t]\\
 & =E\{\var(\sigma_{t+1}\mid V_t)^{-1}\mid H_t\}=T_t.
\end{align*}

For (ii), we have

\begin{align*}
 & \var(T_t^{-1}W_t\sigma_{t+1}+Q_t\mid V_{t-1})=E\{(T_t^{-1}W_t\sigma_{t+1}+Q_t)^{2}\mid V_{t-1}\}\\
= & E(T_t^{-2}W_t^{2}\sigma_{t+1}^{2}\mid V_{t-1})+E(Q_t^{2}\mid V_{t-1})\\
= & E(T_t^{-1}\mid V_{t-1})+\var(Q_t\mid V_{t-1})=T_t^{\bullet}+\var(Q_t\mid V_{t-1}).
\end{align*}

(iii) is an immediate implication of (ii).

For (iv), we have
\begin{align*}
E(\sigma_{t+1}\epsilon_t\mid V_t) & =E\{\sigma_{t+1}(T_t^{-1}W_t\sigma_{t+1}+Q_t)\mid V_t\}\\
 & =T_t^{-1}W_tE(\sigma_{t+1}^{2}\mid V_t)+E(\sigma_{t+1}Q_t\mid V_t)\\
 & =T_t^{-1}W_tW_t^{-1}+0\\
 & =T_t^{-1}.
\end{align*}

This completes the proof.
\end{proof}
\begin{lemma}
\label{lem:aux-hilbert-space-sequential-projection}Suppose $\Lambda_{1}$
and $\Lambda_{2}$ are two subspaces of the Hilbert space $\mathcal{H}$,
and they are orthogonal to each other. Then for any $h\in\mathcal{H}$,
we have
\[
\Pi\{h\mid(\Lambda_{1}\oplus\Lambda_{2})^{\perp}\}=\Pi\{\Pi(h\mid\Lambda_{1}^{\perp})\mid\Lambda_{2}^{\perp}\}.
\]
\end{lemma}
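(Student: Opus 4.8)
The plan is to reduce the claim to three elementary facts about orthogonal projections in a Hilbert space and to the additivity of the projection onto a sum of mutually orthogonal subspaces. Throughout I treat $\Lambda_1$ and $\Lambda_2$ as closed subspaces; if they are not, replace them by their closures, which changes none of the projections appearing in the statement. In the application this is automatic, since the nuisance tangent spaces are defined as $L^2$-closures of sets of parametric-submodel scores.

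First I would record the facts I need. (i) For any closed subspace $S\subseteq\mathcal{H}$ and any $h\in\mathcal{H}$, $h=\Pi(h\mid S)+\Pi(h\mid S^{\perp})$. (ii) Because $\Lambda_1\perp\Lambda_2$, the algebraic sum $\Lambda_1\oplus\Lambda_2$ is again a closed subspace, and the projection onto it is additive: $\Pi(h\mid\Lambda_1\oplus\Lambda_2)=\Pi(h\mid\Lambda_1)+\Pi(h\mid\Lambda_2)$. (iii) Again because $\Lambda_1\perp\Lambda_2$, every element of $\Lambda_1$ lies in $\Lambda_2^{\perp}$, so $\Pi(g\mid\Lambda_2)=0$ whenever $g\in\Lambda_1$.

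Next I would evaluate the left-hand side. By (i) applied to $S=\Lambda_1\oplus\Lambda_2$ and then (ii),
\[
\Pi\{h\mid(\Lambda_1\oplus\Lambda_2)^{\perp}\}=h-\Pi(h\mid\Lambda_1\oplus\Lambda_2)=h-\Pi(h\mid\Lambda_1)-\Pi(h\mid\Lambda_2).
\]
Then I would evaluate the right-hand side. Writing $h_1:=\Pi(h\mid\Lambda_1^{\perp})=h-\Pi(h\mid\Lambda_1)$ by (i), and applying (i) once more with $S=\Lambda_2$,
\[
\Pi(h_1\mid\Lambda_2^{\perp})=h_1-\Pi(h_1\mid\Lambda_2)=h-\Pi(h\mid\Lambda_1)-\Pi\big(h-\Pi(h\mid\Lambda_1)\,\big|\,\Lambda_2\big).
\]
By linearity of the projection and fact (iii) applied to $g=\Pi(h\mid\Lambda_1)\in\Lambda_1$, the last term equals $\Pi(h\mid\Lambda_2)-0=\Pi(h\mid\Lambda_2)$, so $\Pi(h_1\mid\Lambda_2^{\perp})=h-\Pi(h\mid\Lambda_1)-\Pi(h\mid\Lambda_2)$, which coincides with the left-hand side.

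There is no substantive obstacle here; the only point deserving a word of care is the one flagged above, namely that orthogonal projection is defined only onto closed subspaces and that $(\Lambda_1\oplus\Lambda_2)^{\perp}=\Lambda_1^{\perp}\cap\Lambda_2^{\perp}$ with $\Lambda_1\oplus\Lambda_2$ closed. Both are guaranteed once $\Lambda_1$ and $\Lambda_2$ are closed and mutually orthogonal, which holds for all the tangent-space decompositions used in Section~\ref{appen:proof-efficient-score}.
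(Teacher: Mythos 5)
Your proof is correct. The paper itself gives no argument for this lemma---it simply cites it as a standard Hilbert space fact from Akhiezer and Glazman---and your derivation is the standard complete proof of that fact, including the appropriate care about closedness of the subspaces and of their orthogonal sum, so there is nothing to add or correct.
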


\begin{proof}
This is a standard Hilbert space result. See, for example, \citet{akhiezer2013theory}.
\end{proof}

\end{document}